\documentclass[a4paper,12pt,twoside,reqno]{amsart}

\usepackage[T1]{fontenc}
\usepackage[utf8]{inputenc}
\usepackage[english]{babel}

\usepackage[osf]{mathpazo}
\usepackage{eucal}
\usepackage{mathrsfs}
\usepackage{microtype}

\usepackage{amsmath,amssymb,amsthm}
\usepackage{mathtools}
\usepackage{esint}
\usepackage{upgreek}

\usepackage{graphicx}
\usepackage{array}
\usepackage{enumitem}
\usepackage{xcolor}
\usepackage{setspace}
\usepackage{tikz}
\usepackage{tikz-cd}
\usepackage{hyperref}
\usepackage{xfrac}
\allowdisplaybreaks

\makeatletter
\@ifundefined{CMcal}{\DeclareMathAlphabet{\CMcal}{OMS}{cmsy}{m}{n}}{}
\makeatother

\theoremstyle{plain}
\newtheorem{theorem}{Theorem}[section]
\newtheorem{corollary}[theorem]{Corollary}
\newtheorem{proposition}[theorem]{Proposition}
\newtheorem{lemma}[theorem]{Lemma}
\newtheorem{conjecture}[theorem]{Conjecture}

\theoremstyle{definition}
\newtheorem{definition}[theorem]{Definition}

\theoremstyle{remark}
\newtheorem{remark}[theorem]{Remark}

\numberwithin{equation}{section}
\numberwithin{figure}{section}
\numberwithin{table}{section}

\newcommand{\R}{\mathbb{R}}
\newcommand{\N}{\mathbb{N}}
\newcommand{\C}{\mathbb{C}}

\newcommand{\Z}{\mathbb{Z}}

\newcommand{\unit}{\boldsymbol{1}}

\newcommand{\s}[1]{\CMcal{#1}}

\newcommand{\bb}[1]{\mathscr{#1}}
\newcommand{\rr}[1]{\mathfrak{#1}}
\newcommand{\n}[1]{\mathbb{#1}}

\newcommand{\ketbra}[2]{|#1\rangle\langle#2|}
\newcommand{\expo}[1]{\,\mathrm{e}^{#1}\,}
\newcommand{\dd}{\,\mathrm{d}}
\newcommand{\ii}{\mathrm{i}}

\DeclarePairedDelimiter{\norm}{\lVert}{\rVert}

\newcommand{\virg}[1]{\lq\lq#1\rq\rq}
\newcommand{\ie}{\textsl{i.\,e.\,}}
\newcommand{\eg}{\textsl{e.\,g.\,}}
\newcommand{\cf}{\textsl{cf}.\,}

\DeclareMathOperator{\Tr}{Tr}

\DeclareMathOperator{\Id}{Id}

\DeclareMathOperator{\spec}{spec}
\DeclareMathOperator{\Aut}{Aut}

\newlength{\dhatheight}

\newcommand{\HH}{\rr{h}}

\newcommand{\br}{\mathbf{x}}
\newcommand{\bz}{\mathbf{z}}
\newcommand{\bl}{\mathbf{l}}
\newcommand{\bR}{\mathbf{R}}

\newcommand{\bx}{\mathbf{x}}

\newcommand{\ip}[2]{\langle #1, #2\rangle}
\newcommand{\ap}{\mathfrak{a}^{+}}
\newcommand{\am}{\mathfrak{a}^{-}}
\newcommand{\bp}{\mathfrak{b}^{+}}
\newcommand{\bmm}{\mathfrak{b}^{-}}

\newcommand{\lB}{\ell_{B}}
\newcommand{\eB}{\epsilon_{B}}
\newcommand{\bgamma}{\boldsymbol{\gamma}}

\DeclareMathOperator{\ad}{ad}        

\hypersetup{
  colorlinks=true,
  linkcolor=purple,
  citecolor=teal,
  urlcolor=blue,
  filecolor=magenta,
  breaklinks=true,
  bookmarksopen=true,
  pdftitle={Asymptotic quasi-local structure induced by magnetic field},
  pdfauthor={Giuseppe De Nittis},
}

\begin{document}

\title[Asymptotic quasi-local   structure]{Asymptotic quasi-local   structure\\ induced by  Magnetic field}

\author[G. De~Nittis]{Giuseppe De Nittis}

\address[G. De~Nittis]{Facultad de Matem\'aticas \& Instituto de F\'isica,
  Pontificia Universidad Cat\'olica de Chile, Santiago, Chile.}
\email{gidenittis@uc.cl}
\date{\today}

\begin{abstract}
We introduce a magnetic coherent--state frame over the Landau levels of the
two--dimensional Landau Hamiltonian, indexed by a discrete metric space
combining spatial and energy degrees of freedom. Although overcomplete, the
frame is almost--orthogonal, and we use it to endow the CAR $C^*$--algebra over the
one--particle Hilbert space with an \emph{asymptotic quasi--local} (AQL)
structure, in which graded commutators of observables localized on disjoint
index sets decay exponentially rather than vanish exactly. By means of a
family of conditional expectations onto the resulting local subalgebras, we
construct a Fr\'echet $*$--subalgebra of almost--local observables, invariant
under the discrete group of lattice translations, which act on it by
continuous automorphisms. For the full continuous group of magnetic
translations we establish a quantitative control on how local observables
spread through this almost--local algebra, and we show that this control
suffices to extend graded asymptotic abelianness from the lattice to the
entire continuous translation group. The results provide a
discrete--continuum dictionary intended as the starting point for extending
Lieb--Robinson--type techniques from lattice systems to interacting
continuum electrons in a magnetic field.
\smallskip

\noindent
{\bf MSC 2010}:
Primary: 	46L60;
Secondary: 	46L55, 42C15, 	82B10.\\
\noindent
{\bf Keywords}:
{\it Landau Hamiltonian, magnetic frame, CAR $C^*$--algebra,  asymptotic quasi--locality, asymptotic abelianness.}
\end{abstract}

\maketitle
\tableofcontents

\section{Introduction}\label{intro}

The analysis of interacting quantum many-body systems is a fundamental problem in mathematical physics, with particular challenges arising in the description of their non-equilibrium dynamics and infinite-volume limits. For quantum lattice systems over $\mathbb{Z}^\nu$, a central ingredient in the rigorous analysis of these questions is provided by the \emph{Lieb–Robinson bounds} \cite{LiebRobinson1972,Robinson1976,NachtergaeleOgataSims2006}. These estimates quantify the quasi-locality of the dynamics by controlling the propagation of initially localized observables and, in particular, by establishing an effective finite velocity for the spreading of information. This control is crucial for passing from finite-volume dynamics to the thermodynamic limit, where it yields a strongly continuous one-parameter group of $*$-automorphisms describing the Heisenberg dynamics on the quasi-local algebra of observables.

\smallskip

By contrast, considerably less is known for interacting quantum fields and multi-fermion systems in the continuum $\mathbb{R}^\nu$, a setting that is of comparable physical relevance. Beyond free-field dynamics, described by Bogoliubov automorphisms \cite{Lundberg1976,Araki1971}, and classical non-relativistic field theories \cite{Streater1968,StreaterWilde1970}, the rigorous analysis of interacting continuum systems is complicated by severe ultraviolet difficulties. Significant recent progress in this direction \cite{GebertNachtergaeleReschkeSims2020,HinrichsLemmSiebert2023} circumvents these difficulties by introducing suitable smearings of the creation and annihilation operators, thereby allowing for the construction of well--defined local interaction Hamiltonians. The resulting framework provides finite--volume Lieb–-Robinson bounds with constants uniform in the volume, which in turn enables the construction of the corresponding thermodynamic--limit dynamics.

\smallskip

In the presence of a uniform perpendicular magnetic field $B$, which is the physical setting of the quantum Hall effect (QHE), a natural analogue of the lattice-based approach emerges. The magnetic field induces cyclotron motion and localizes the electronic states at the characteristic scale set by the magnetic length $\ell_B$. In \cite{BachmannDeNittis2024}, this observation is made precise by introducing an overcomplete family of magnetic coherent states $\{\chi_\gamma\}_{\gamma\in\Gamma}$, centered at the points of a discrete set $\Gamma$. The resulting family forms a lattice-localized frame \cite{Bargmann1971,Perelomov1971,BoonZak1978,Daubechies1988}, providing a discrete framework for encoding the underlying continuum degrees of freedom.
Since the frame vectors are not mutually orthogonal, the associated creation and annihilation operators $a(\chi_\gamma)$ and $a^*(\chi_\gamma)$ do not satisfy the canonical anticommutation relations (CAR) of standard lattice fermions. Nevertheless, the corresponding frame-based second quantization retains sufficient locality structure to allow the transfer of suitable lattice techniques to the continuum setting. 

\smallskip

Although frame-based discretizations of continuum systems have previously been employed in single-particle quantum mechanics (see \eg \cite{Cornean2019,Cornean2024}), their systematic use in the context of second quantization was proposed and initiated in \cite{BachmannDeNittis2024}.
Building on this framework, the primary objective of the present work is to develop a systematic and rigorous analysis of the local algebraic structures induced by the magnetic field on the CAR algebra. In particular, we provide a precise operator-theoretic characterization of these local structures, an aspect that was largely left implicit or only partially addressed in the original framework of \cite{BachmannDeNittis2024}. This analysis provides the rigorous algebraic foundation needed for the treatment of local observables and derivations, as well as for the derivation of dynamical estimates in interacting magnetic continuum systems.

\smallskip

As a starting point let us introduce some preliminary concepts and definitions. Let $\rr{A}\equiv\rr{A}(\HH)$ be the  \emph{canonical anti-commutation relations (CAR) $C^*$-algebra} over the separable Hilbert space $\HH$, as described by  \cite[Theorem 5.2.5]{Bratteli-Robinson-2}.
Let $\sigma\in\Aut(\rr{A})$ be the \emph{parity automorphism}, namely the {quasi--free} automorphism   that changes the sign of both the creation and annihilation generators of the CAR algebra (see Section \ref{sub:car}).
It is an involution in the sense that $\sigma^2={\rm Id}$. 
An element $A\in\rr{A}$ is called \emph{even} if $\sigma(A)=A$ and \emph{odd} if $\sigma(A)=-A$. Each element $A\in\rr{A}$ has a unique decomposition into odd and even parts,
defined by
\begin{equation}\label{eq:proj:pm}
A\;=\;A^{+}+A^{-}\;,\qquad
A^{\pm}\;:=\;\frac{A\pm\sigma(A)}{2}\;.
\end{equation}
The even elements  form a
$C^{*}$--subalgebra $\rr{A}^{+}\subset\rr{A}$, and the odd elements
 form a Banach subspace $\rr{A}^{-}\subset\rr{A}$. 
An element $A\in\rr{A}^{\pm}$ of given parity is called \emph{homogeneous}. If $A$ is even its \emph{degree} is $d_A=0$. For $A$ odd the degree is $d_A=1$. Let $A,B\in\rr{A}^{\pm}$  {homogeneous} elements. The \emph{graded commutator} is defined by
\[
\qquad[A,B]_{\rm gr}\;:=\;AB-(-1)^{d_Ad_B}BA\;.
\]
When at least
one of $A,B$ is even then $[A,B]_{\rm gr}=AB-BA=:[A,B]$ reduces to an ordinary commutator.
When both $A,B$ are odd then $[A,B]_{\rm gr}=AB+BA=:\{A,B\}$ is the anticommutator.

\smallskip

Let  $(\Gamma,d)$ be  a discrete and countable metric space. 
The set  of \emph{finite subsets} of $\Gamma$ will be denoted  $\s{P}_f(\Gamma)$. If $\Lambda\in\s{P}_f(\Gamma)$, then its \emph{volume} (or cardinality) is  $|\Lambda|<\infty$.
 The set $\s{P}_f(\Gamma)$
  is directed, where the direction is by inclusion.  Let $\Lambda_1,\Lambda_2\subseteq\Gamma$ be nonempty and not necessarily finite. The
distance between them is defined by
\begin{equation}\label{eq:dist_set}
d(\Lambda_1,\Lambda_2)\;:=\;\inf\left\{d(\gamma_1,\gamma_2)\in\R\;|\;\gamma_1\in\Lambda_1\,,\gamma_2\in \Lambda_2\right\}\;.
\end{equation}

\smallskip

The notion of a \emph{quasi--local 
$C^*$--algebra} is recalled in \cite[Definition~2.6.3]{Bratteli-Robinson-1}. One of its defining features is the \emph{exact} vanishing of the graded commutators of elements with disjoint supports. The structure introduced below weakens this requirement, demanding only an \emph{asymptotic} vanishing.
\begin{definition}[Almost--quasi--local  (AQL) structure]\label{def:AQL}
The CAR algebra $\rr{A}$ has an AQL structure if there is a  discrete and countable metric space
$(\Gamma,d)$, and a net of $C^*$-subalgebras $\{\rr{A}_\Lambda\}_{\Lambda\in \s{P}_f(\Gamma)}\subset \rr{A}$ such that:
\begin{itemize}
\item[(1)] the algebras $\rr{A}_\Lambda$ have a common identity $\unit$;
\item[(2)] if $\Lambda_1\subseteq\Lambda_2$ then $\rr{A}_{\Lambda_1}\subseteq\rr{A}_{\Lambda_2}$;
\item[(3)] the \emph{local} subalgebra
\[
\rr{A}_{\mathrm{loc}}\;:=\;\bigcup_{\Lambda\in\s{P}_f(\Gamma)}\rr{A}_\Lambda\;
\]
is dense, \ie $\rr{A}=\overline{\rr{A}_{\mathrm{loc}}}$ where the closure is in the norm topology;
\item[(4)] for all $\Lambda_1,\Lambda_2\in\s P_f(
\Gamma)$, and any $\lambda>0$, there is a constant $C_\lambda(\Lambda_1,\Lambda_2)>0$ such that 
 for all  {homogeneous} $A\in\rr A_{\Lambda_1}$, $B\in\rr A_{\Lambda_2}$,  one has that
\begin{equation}\label{eq:int-ineq_gr}
\bigl\|[A,B]_{\rm gr}\bigr\|\;\leqslant\;C_\lambda(\Lambda_1,\Lambda_2)\,\norm{A}\,\norm{B}\,
\expo{-\lambda\,d(\Lambda_1,\Lambda_2)}\;.
\end{equation}
\end{itemize}
\end{definition}

\smallskip

When only properties (1), (2), and (3) are satisfied, we refer to the resulting structure as a \emph{pre--quasi--local structure}.
It is worth emphasizing that the bound in point~(4) implies that, for \emph{any} rate $\lambda>0$, the graded commutators of elements localized in increasingly distant sets decay {exponentially fast} with their separation. Consequently, while the strict quasi–locality condition of \cite[Definition~2.6.3]{Bratteli-Robinson-1} does not hold in the present setting, the graded commutators become \emph{super--exponentially} small at large distances. In this precise sense, the present structure relaxes strict quasi–locality to a \emph{fast}, \emph{asymptotic} condition of quasi-–locality . 
A second relevant aspect of Definition \ref{def:AQL} is that, in the inequality \eqref{eq:int-ineq_gr}, the prefactor $C_\lambda(\Lambda_1,\Lambda_2)$ can be chosen as a function of the  volumes $|\Lambda_1|$ and $|\Lambda_2|$ of the supports $\Lambda_1$ and $\Lambda_2$ of the local algebras. 
In concrete situations, it is reasonable to expect this quantity to be an (eventually) increasing function of the volumes $|\Lambda_j|$ as well   of the rate parameter $\lambda$. The better the control of $C_\lambda(\Lambda_1,\Lambda_2)$ in terms of these volumes, the sharper the resulting estimates, particularly for thermodynamic purposes. Ideally, one would seek a polynomial dependence rather than exponential, or faster growth. In concrete settings, however, obtaining such a sharp volume dependence can be technically challenging and may require a substantial amount of additional analysis.
A final relevant observation is that the definition could, in principle, be generalized to metric spaces $\Gamma$ that are not necessarily countable, with the finiteness of sets expressed in terms of a suitable underlying measure. However, although such a generalization is possible, it is precisely the discrete and countable structure of $\Gamma$ that is crucial for applying techniques such as Lieb–-Robinson bounds to the study of the interacting dynamics.

\smallskip

There is a second relevant structure that we need to introduce. We say that $\R^\nu$ acts by
\emph{translations} on $\rr A$ if there is a map $\R^\nu\ni\bz\mapsto\alpha_\bz\in\Aut(\rr A)$
such that $\bz\mapsto\alpha_\bz(A)$ is norm--continuous for every $A\in\rr A$, and
\[
\alpha_\bz\circ\alpha_{\bz'}\;=\;\vartheta_{\varpi(\bz,\bz')}\circ\alpha_{\bz+\bz'}\,,\qquad
\forall\,\bz,\bz'\in\R^\nu\,.
\]
The latter identity says that the group structure of $\R^\nu$ is lifted to the group
$\ast$--automorphisms $\Aut(\rr A)$ only \emph{projectively}, that is, up to a twist by a \emph{cocycle}
$\varpi$ acting through the \emph{gauge} automorphisms $\vartheta_\theta\in\Aut(\rr A)$ (see
Remark~\ref{rem:cocycle} for details). We then call $(\rr A,\R^\nu,\alpha,\varpi)$ a
\emph{twisted $C^{*}$--dynamical system}.
\begin{definition}[Graded asymptotic abelianness]\label{def:AA}
The twisted $C^{*}$--dynamical system $(\rr A,\R^\nu,\alpha,\varpi)$ is \emph{graded
asymptotically abelian} if, for all homogeneous $A,B\in\rr A$,
\[
\lim_{|\bz|\to\infty}\bigl\|[\alpha_\bz(A),B]_{\rm gr}\bigr\|\;=\;0\;.
\]
\end{definition}

\smallskip

The two structures of Definitions~\ref{def:AQL} and~\ref{def:AA} are not automatically
compatible: whether they are depends on how the translation action $\bz\mapsto\alpha_\bz$ of
$\R^\nu$ interacts with the metric space $(\Gamma,d)$ and with the family $\s P_f(\Gamma)$ of
local regions. When they \emph{are} compatible, however, the system acquires a rich 
structure, in which two distinct notions of locality are intertwined: the \emph{intrinsic}
localization in the degrees of freedom labelled by $\Gamma$ (not necessarily spatial
positions) encoded by the net of local algebras, and the \emph{dynamical} requirement that
\virg{spatially} distant observables be asymptotically decorrelated.

\smallskip

For spin systems or fermionic systems on a lattice, the two structures are perfectly compatible. Indeed, in these cases, $\Gamma\simeq \mathbb{Z}^\nu$ represents the positional degrees of freedom of the system, which transform naturally under the discrete translation group. 
In the continuum, however, the situation is more subtle. In principle, one can define a quasi-local structure by identifying $\Gamma \simeq \mathbb{R}^\nu$, again interpreting its elements solely as positional degrees of freedom, and exploiting the natural action of $\mathbb{R}^\nu$ on itself by translations \cite[Examples 5.2.7 \& 5.2.21]{Bratteli-Robinson-2}. The price to pay, however, is the loss of countability of $\Gamma$, which makes many of the techniques available for discrete systems no longer directly applicable.

\smallskip

The main objective of this work is to show that, for two-dimensional continuum systems in the presence of a uniform magnetic field $B$ perpendicular to the plane, the structures introduced in Definitions~\ref{def:AQL} and~\ref{def:AA} arise naturally and are mutually compatible.
To describe our results, it is useful to introduce some notation. 
First of all the magnetic field fixes an intrinsic  scale via
the \emph{magnetic length} $\lB>0$.
The one-particle Hilbert space is taken to be $\HH:=L^2(\R^2)$
and the relevant CAR algebra is the $C^*$-algebra
$\rr{A}:=\rr{A}(L^2(\R^2))$ defined over this Hilbert space (a CAR algebra with continuous degrees of freedom).
The Hilbert space $\HH$ decomposes as an orthogonal direct sum of infinite-dimensional subspaces $\HH_r$, the so-called \emph{Landau levels}, labelled by the \emph{energy level} quantum number $r\in\N_0:=\{0\}\cup\N$.
Each Landau level $\HH_r$ contains a relevant family of normalized vectors $\{\chi_{\bl,r}\}_{\bl\in\s{L}}$ called \emph{coherent states}. 
The \emph{spatial} variable $\bl\in\R^2$ describes the center of localization of the wavefunction $\chi_{\bl,r}$ and 
the set of labels 
\begin{equation}\label{eq:L}
\s{L}\;\equiv\;\s{L}(\ell_1,\ell_2)\;:=\;\ell_1\Z\times \ell_2\Z\;\subset\;\R^2
\end{equation}
is a two-dimensional lattice, isomorphic to $\Z^2$.
The choice of lattice parameters
is not completely arbitrary, as they are required to satisfy the condition
\begin{equation}\label{eq:threshold}
f_B\;:=\;\frac{\ell_1\ell_2}{\lB^2}\;<\;2\pi\;.
\end{equation}
In fact \eqref{eq:threshold}
 guarantees that the family $\{\chi_{\bl,r}\}_{\bl\in\s{L}}$ is a 
 \emph{frame} for $\HH_r$, that is, an
overcomplete non-orthogonal system of vectors. 
It is worth noting that the dimensionless quantity $f_B$
measures the fraction of \emph{flux quanta} per  unit cell of
$\s{L}$. By combining the spatial and energy degrees of freedom, we can define the countable set
\begin{equation}\label{eq:Xi}
\Gamma\;:=\;\s{L}\times\N_0\;=\;\bigl\{\gamma\equiv(\bl,r)\;\big|\;\bl\in \s{L}\,,\,\,  r\in\N_0\ 
\bigr\}\;
\end{equation}
which turns out to be a discrete metric space with the distance defined by
\begin{equation}\label{eq:Xi-22}
d\left(\gamma,\gamma'\right)\;:=\;\tfrac{1}{\lB}|\bl-\bl'|_1+|r-r'|\;
\end{equation}
where $|\bl|_1:=\ell_1 |n_1|+ \ell_2|n_2|$ for 
$\bl\equiv(\ell_1 n_1, \ell_2 n_2)\in\s{L}$ denotes  the \emph{graph norm}.
With this notation at hand, and assuming condition \eqref{eq:threshold}, it follows that $\{\chi_\gamma\}_{\gamma\in\Gamma}$ is an overcomplete frame for the full space $\HH$ (Theorem \ref{thm:frame}).
The overcompleteness of the frame prevents the vectors from being orthogonal, and strict orthogonality is recovered only in an asymptotic sense. As stated in Proposition \ref{prop:magloc}, the frame is \emph{almost--orthogonal} in the sense that
for every $\lambda>0$, there is a $G_\lambda\geqslant1$, such that
\[\bigl|\ip{\chi_{\bgamma}}{\chi_{\bgamma'}}\bigr|
\;\leqslant\;G_\lambda\,\expo{-\lambda\,d(\gamma,\gamma')}\;,\qquad\;\gamma,\gamma'\in\Gamma\,.
\]
It is worth observing that the frame shows an exponential decay at \emph{any} rate $\lambda>0$ (indeed a \emph{super-exponential} decay), at the cost of increasing the constant $G_\lambda$.

\smallskip

The frame $\{\chi_\gamma\}_{\gamma\in\Gamma}$ of $\HH$ described above is fundamental for endowing the CAR algebra $\rr{A}$ with the appropriate quasi--local structure. This can be done as follows. For each 
$\Lambda\in\s{P}_f(\Gamma)$, set
\begin{equation}\label{eq:H_LAM}
\HH_\Lambda\;:=\;\mathrm{span}\{\chi_\gamma\;|\;\gamma\in\Lambda\}\;\subseteq\; \HH\;.
\end{equation}
This is a subspace of finite dimension  
$\dim(\HH_\Lambda)=|\Lambda|$ (Lemma \ref{lem:findim}).
To each $\HH_\Lambda$ consider the associated finite-dimensional CAR algebra $\rr{A}_\Lambda\equiv\rr{A}(\HH_\Lambda)\simeq{\rm Mat}_{2^{|\Lambda|}}(\C)$. From  $\Lambda_1\subseteq\Lambda_2\subset\Gamma$ it follows that $\HH_{\Lambda_1}\subseteq \HH_{\Lambda_2}\subset\HH$, and in turn $\rr{A}_{\Lambda_1}\subseteq
\rr{A}_{\Lambda_2}\subset\rr{A}$. Moreover, all these algebras share a common unit $\unit$.
Therefore, the net of $C^{*}$--algebras $\{\rr{A}_\Lambda\}_{\Lambda\in\s{P}_f(\Gamma)}$  appears to be well suited for endowing $\rr{A}$ with a quasi-local structure. In fact, also the density property (3) of Definition \ref{def:AQL} is satisfied (Proposition \ref{prop:quasilocal-frame}). 
What fails for strict quasi-locality is the commutativity of observables supported on disjoint sets. Indeed, for disjoint sets $\Lambda_1\cap\Lambda_2=\emptyset$ it is \emph{not} true  that the subspaces $\HH_{\Lambda_1}$ and $\HH_{\Lambda_2}$ are mutually orthogonal, as a consequence of the overcompleteness of the frame. Consequently, if $A\in \rr{A}_{\Lambda_1}$ and $B\in \rr{A}_{\Lambda_2}$ are homogeneous elements, in general one obtains that
$[\alpha_\bz(A),B]_{\rm gr}\neq 0$ in view of the non orthogonality of the generating spaces   $\HH_{\Lambda_1}$ and $\HH_{\Lambda_2}$.
However, as a consequence of the almost orthogonality of the frame, $\rr{A}$ turns out to be asymptotic quasi-local in the sense of Definition \ref{def:AQL}.
\begin{theorem}[Magnetic AQL structure]\label{theo:main1}
Let $\rr{A}$ be the CAR algebra over $\HH\equiv L^2(\R^2)$. The net of $C^{*}$--algebras $\{\rr{A}_\Lambda\}_{\Lambda\in\s{P}_f(\Gamma)}$ 
induced by the magnetic frame of coherent states $\{\chi_\gamma\}_{\gamma\in\Gamma}$ endows $\rr{A}$ with an AQL structure.
\end{theorem}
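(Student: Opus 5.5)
Properties (1) and (2) are immediate from the construction: $\HH_{\Lambda_1}\subseteq\HH_{\Lambda_2}$ whenever $\Lambda_1\subseteq\Lambda_2$, so $\rr{A}(\HH_{\Lambda_1})\subseteq\rr{A}(\HH_{\Lambda_2})$ as $C^*$-subalgebras of $\rr{A}$ with the same unit $\unit$. Property (3) is exactly the density statement of Proposition~\ref{prop:quasilocal-frame}. It rests on the frame property of Theorem~\ref{thm:frame}: the finite spans $\HH_\Lambda$ are dense in $\HH$, so $a(f)$ is a norm limit of $a(f_n)$ with $f_n\in\HH_{\Lambda_n}$, because $\|a(f)\|=\|f\|$. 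The polynomials in such generators are then dense in $\rr{A}$. The real content is therefore property (4), and the plan is to deduce it from the almost-orthogonality estimate of Proposition~\ref{prop:magloc}.

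For (4) I would first treat the generators. For $f\in\HH_{\Lambda_1}$ and $g\in\HH_{\Lambda_2}$ the CAR give $\{a(f),a^*(g)\}=\langle f,g\rangle\unit$ and $\{a(f),a(g)\}=0$. So every graded commutator of two generators is bounded by $|\langle f,g\rangle|$. Write $f=\sum_{\gamma\in\Lambda_1}c_\gamma\chi_\gamma$ and $g=\sum_{\gamma'\in\Lambda_2}c'_{\gamma'}\chi_{\gamma'}$. Since $\{\chi_\gamma\}_{\gamma\in\Lambda}$ is a basis of $\HH_\Lambda$ (Lemma~\ref{lem:findim}), the coefficient maps are bounded, with $\|c\|_{\ell^2}\le \mu_{\Lambda}^{-1/2}\|f\|$, where $\mu_\Lambda>0$ is the smallest eigenvalue of the Gram matrix $(\langle\chi_\gamma,\chi_{\gamma'}\rangle)_{\gamma,\gamma'\in\Lambda}$. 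Proposition~\ref{prop:magloc} then gives
\[
|\langle f,g\rangle|\;\leqslant\;G_\lambda\,(\mu_{\Lambda_1}\mu_{\Lambda_2})^{-1/2}\,|\Lambda_1|^{1/2}|\Lambda_2|^{1/2}\,\|f\|\,\|g\|\,\expo{-\lambda d(\Lambda_1,\Lambda_2)}\,.
\]
This is the desired bound at the level of one-particle vectors.

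To pass to arbitrary homogeneous elements, I would avoid expanding monomials, since that gives poor norm control. Instead I would use a decoupling construction. Let $P_1$ be the orthogonal projection onto $\HH_{\Lambda_1}$, and set $\HH_2':=(1-P_1)\HH_{\Lambda_2}$, which is orthogonal to $\HH_{\Lambda_1}$. Define a linear map $T:\HH_{\Lambda_2}\to\HH$ by $T g:=(1-P_1)g$. By the previous estimate, $\|P_1g\|\le \epsilon\|g\|$ with $\epsilon$ of the form $C(\Lambda_1,\Lambda_2)\expo{-\lambda d}$. Polar-decompose $T$ as $T=W|T|$, where $W$ is an isometry from $\HH_{\Lambda_2}$ onto $\HH_2'$ (for $\epsilon<1$), and $\|W-\iota\|\le c\,\epsilon$ with $\iota$ the inclusion. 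Because $W$ is an isometry, it induces a Bogoliubov $*$-homomorphism $\beta_W:\rr{A}(\HH_{\Lambda_2})\to\rr{A}(\HH_2')\subset\rr{A}$. This map preserves parity, and its image graded-commutes exactly with $\rr{A}_{\Lambda_1}$, since the CAR algebras over orthogonal subspaces graded-commute. Hence $[A,B]_{\rm gr}=[A,B-\beta_W(B)]_{\rm gr}$, and $\|[A,B]_{\rm gr}\|\le 2\|A\|\,\|B-\beta_W(B)\|$.

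The main obstacle is the bound $\|B-\beta_W(B)\|\le K(|\Lambda_2|)\,\|W-\iota\|\,\|B\|$ for all $B$ in the finite-dimensional algebra $\rr{A}(\HH_{\Lambda_2})\simeq{\rm Mat}_{2^{|\Lambda_2|}}(\C)$. I would first try to implement $\beta_W$ by a unitary. Extend $W$ to a unitary $U$ on the finite-dimensional space $\HH_{\Lambda_1}+\HH_{\Lambda_2}$, close to the identity with $\|U-1\|\lesssim\epsilon$ (for instance by taking $U=\expo{\ii h}$ with $h$ self-adjoint and $\|h\|\lesssim\epsilon$). The Bogoliubov automorphism of $U$ is then inner in the finite-dimensional CAR algebra, implemented by $\expo{\ii\,\mathrm{d}\Gamma(h)}$, which is an even element. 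This gives $\|B-\beta_U(B)\|\le 2\|\mathrm{d}\Gamma(h)\|\,\|B\|\le 2\,\dim(\HH_{\Lambda_1}+\HH_{\Lambda_2})\,\|h\|\,\|B\|$. The resulting constant $C_\lambda(\Lambda_1,\Lambda_2)$ is polynomial in the volumes times $G_\lambda(\mu_{\Lambda_1}\mu_{\Lambda_2})^{-1/2}$. When $d$ is small, so that $\epsilon\ge1/2$, the trivial bound $\|[A,B]_{\rm gr}\|\le2\|A\|\|B\|$ is absorbed by enlarging $C_\lambda$. One caveat: $\mu_\Lambda$ may be controlled uniformly by the frame's lower bound only through further work, but the definition only requires some finite constant depending on $\Lambda_1,\Lambda_2$ and $\lambda$, so positivity of $\mu_\Lambda$ from Lemma~\ref{lem:findim} suffices.
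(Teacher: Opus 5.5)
Your proof is correct, but for property (4) it takes a different route from the paper.

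\textbf{The paper's route.} The paper obtains (4) from Proposition~\ref{lem:almost-comm-gr-II}. It expands $A$ and $B$ in the $4^{|\Lambda|}$--element monomial basis $M_\mu$ and bounds the coefficients by $\sum_\mu|\alpha_\mu|\leqslant 5^{|\Lambda|}\norm{A}$ via the trace--dual basis. It then controls each $[M_\mu,M_\nu]_{\rm gr}$ by the Leibniz/CAR expansion of Lemma~\ref{lem:two-mono-II}. The result is the explicit constant $4m_\lambda G_{2\lambda}\sqrt{|\Lambda_1||\Lambda_2|}\,5^{|\Lambda_1|+|\Lambda_2|}$, which depends only on the volumes.

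\textbf{Your route.} Your decoupling step is essentially the mechanism of Proposition~\ref{lem:almost-comm-gr-delta}. It uses a parity--preserving Bogoliubov map close to the identity, which carries one region's one--particle space into the orthogonal complement of the other's, with the error controlled through an even inner implementer. The paper develops this only as a complementary bound and leaves its decay in the distance conditional on Conjecture~\ref{lem:delta-uniform}. Your new observation is that a \emph{non--uniform} bound
$\delta(\Lambda_1,\Lambda_2)\leqslant G_\lambda(\mu_{\Lambda_1}\mu_{\Lambda_2})^{-1/2}\sqrt{|\Lambda_1||\Lambda_2|}\,\expo{-\lambda d(\Lambda_1,\Lambda_2)}$
is elementary and already suffices. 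This removes the combinatorics entirely.

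\textbf{The cost.} The volume growth has not gone away; it has moved into $\mu_\Lambda^{-1/2}$, which is shape--dependent and not explicit. Your caveat is also too optimistic here: $\mu_\Lambda$ admits no uniform lower bound at all. For an overcomplete frame the Gram operator $D^*D$ on $\ell^2(\Gamma)$ (with $D$ the synthesis operator) has nontrivial kernel. Truncating a kernel vector to $\Lambda$ shows $\mu_\Lambda\to0$ along any exhaustion. The lower frame bound controls $S=DD^*$, not the finite sections of $D^*D$.

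\textbf{The definition.} The sentence ``the definition only requires some finite constant depending on $\Lambda_1,\Lambda_2$ and $\lambda$'' proves too much. Read that way, (4) holds trivially with $C_\lambda(\Lambda_1,\Lambda_2):=2\expo{\lambda d(\Lambda_1,\Lambda_2)}$. What gives the estimate content is that the constant does not change when the regions are separated by lattice translations. This is also what the paper uses afterwards: the proof of Lemma~\ref{cor:asympt-ab-lattice} needs $C_\lambda(\Lambda_1+\gamma_\bl,\Lambda_2)=C_\lambda(\Lambda_1,\Lambda_2)$.

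Your constant does have this property, and you should say so. Since $t_\bl\chi_\gamma$ equals a phase times $\chi_{\gamma+\gamma_\bl}$, the Gram matrices of $\Lambda$ and $\Lambda+\gamma_\bl$ are conjugate by a diagonal unitary, so $\mu_{\Lambda+\gamma_\bl}=\mu_\Lambda$. A constant depending on the volumes alone, as in the paper, would additionally require $\inf_{|\Lambda|=n}\mu_\Lambda>0$ for each $n$, which needs a separate argument.
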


\smallskip

The proof of Theorem \ref{theo:main1} is obtained by combining Propositions \ref{prop:quasilocal-frame} and \ref{lem:almost-comm-gr-II}. The former establishes the pre-quasi-local structure, which is the simpler part of the argument, while the latter provides the key estimate needed to complete the proof, namely property (4) in Definition \ref{def:AQL}. 
In this specific situation the specific technique used for the control of the graded commutators provides a prefactor $C_\lambda(\Lambda_1,\Lambda_2)$ with exponential growth in the volumes $|\Lambda_j|$ and
super-exponential growth in $\lambda$. While the second contribution is intrinsic and reflects the structure of the magnetic frame itself, the first appears to be an artifact of the technique employed, which is essentially combinatorial. This suggests that 
a genuinely different technique might
allow one to replace it by a prefactor exhibiting only polynomial growth (see Section \ref{sec:bet:est}).
There is one aspect that deserves special consideration. Theorem \ref{theo:main1} shows that the magnetic frame equips $\rr{A}$ with an appropriate local structure. However, the localization is not restricted to the spatial degrees of freedom. Indeed, $\Gamma$ accounts for both the spatial and the (magnetic) energy degrees of freedom. Thus, the localization provided by Theorem \ref{theo:main1} should be understood as a sort of \emph{phase-space localization} rather than as a purely spatial one.

\smallskip

There is, however, one aspect that deserves special attention. The local structure encoded by
the finite subsets of $\Gamma$, mixing spatial and energy degrees of freedom, is not fully
compatible with the action of the translation group $\R^2$. More precisely, $\R^2$ acts on
$\HH$ through the magnetic translations $t_{\bz}$, with $\bz\in\R^2$,which preserve the Landau levels
$\HH_r$ (see Section~\ref{sub:coh}). Via the associated Bogoliubov $\ast$-automorphisms $\alpha_\bz$ these induce an
action of $\R^2$ on $\rr A$. Because the magnetic translations compose only projectively, up to
the usual \emph{magnetic cocycle} $\sigma_B(\bz,\bz'):=\sfrac{1}{2\lB^2}(\bz\times\bz')$,
$\R^2$ is represented in $\Aut(\rr A)$ by a projective
representation, and $(\rr A,\R^2,\alpha, \sigma_B)$ is a twisted $C^{*}$--dynamical system (see Section \ref{sec:trasl_aut}).
A generic translation, however, is \emph{not} compatible with the local structure carried by
the finite subsets of $\Gamma$. The \emph{lattice} translations $\alpha_\bl$ with $\bl\in\s L$, merely
permute the elements of $\s P_f(\Gamma)$ via the translations $\Lambda\mapsto(\bl,0)+\Lambda$, and in turn preserve
$\rr A_{\mathrm{loc}}$. However, for a generic $\bz\in\R^2$ the translated region $(\bz,0)+\Lambda$
no longer belongs to $\s P_f(\Gamma)$. Thus $\rr A_{\mathrm{loc}}$ is invariant only under the
discrete subgroup $\{\alpha_\bl\}_{\bl\in\s L}$, not under the full continuous group.
On the other hand, the continuous nature of the physical system singles out $\R^2$, rather than
the lattice $\s L$, as the fundamental symmetry group. Reconciling the two structures therefore
calls for an enlargement of $\rr A_{\mathrm{loc}}$ in which the loss of strict locality caused by
a generic translation can be quantitatively controlled. The tool that makes this possible is the
conditional expectation $\n E_\Lambda$ (see Section~\ref{sec:cond-exp}), the trace--preserving,
norm--one projection onto the local algebra $\rr A_\Lambda$. The conditional expectation allows us to introduce the family of norms
\begin{equation}\label{eq:norm_p_0}
\norm{A}_{p}\;:=\;\norm{A}
+\sup_{k\in\N_0}\bigl\|A-\n{E}_{\Lambda_k}(A)\bigr\|\,(1+k)^{p}\;,\qquad
A\in\rr{A}\;,
\end{equation}
where
$
\Lambda_k:=\{\gamma\in\Gamma\;|\;\norm{\gamma}_1\leqslant k\}$
is the ball of radius $k$ centered at the origin $0\in\Gamma$,  and 
defined for every $p\in\N_0$.
These norms 
 weigh how fast an
observable is approximated by its local restrictions. The completion of $\rr A_{\mathrm{loc}}$ with respect to the system $\{\norm{\cdot}_{p}\}_{p\in\N_0}$ is a Fr\'echet $\ast$--algebra $\rr A_\infty\subset \rr A$, the \emph{almost--local} algebra.
The relevance of $\rr A_\infty$ is that the magnetic translations restrict to it. 
For every
$\bl\in\s{L}$, the restricted \emph{lattice} translation $\alpha_\bl$ (denoted by the same symbol, with a slight abuse of notation) is a topological $\ast$--automorphism of $\rr A_\infty$
 continuous for the Fr\'echet topology (Lemma~\ref{lemm:aut-fr-lattice}).
 Combined with the AQL structure established in Theorem \ref{theo:main1}, this immediately implies the discrete version of asymptotic abelianness (Proposition \ref{cor:asympt-ab-full}).
  The passage to continuum requires controlling how local elements spread throughout $\Gamma$ under continuous translations.
  The crucial result, established in Lemma~\ref{lemma_trasl_loc}, states that
  $\alpha_\bz:\rr A_{\mathrm{loc}}\to\rr A_\infty$
   for every $\bz\in\R^2$.
  This result, together with a control on the growth of the norms of local elements under compactly bounded translations (Lemma \ref{cor:mono-improve-full}) and standard density arguments (Proposition \ref{prop:cont-asympt-ab-local}), allows us to establish our second main result:
  
\begin{theorem}[Asymptotic abelianness]\label{thm:asympt-ab}
Let $\rr{A}$ be the CAR algebra over $\HH\equiv L^2(\R^2)$ and $\alpha_\bz$ the $\ast$--automorphisms induced by the magnetic translations.
The $C^{*}$--dynamical system $(\rr A,\R^2,\alpha, \sigma_B)$ is graded asymptotically
abelian.
\end{theorem}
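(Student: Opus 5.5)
By density and norm-continuity it suffices to prove the limit for homogeneous local elements. Since $\alpha_\bz$ commutes with the parity automorphism $\sigma$ (it is a Bogoliubov automorphism), it preserves homogeneity and degree. For homogeneous $A,B\in\rr A$ and $\epsilon>0$, choose homogeneous $A_0\in\rr A_{\Lambda_1}$ and $B_0\in\rr A_{\Lambda_2}$ with $\norm{A-A_0},\norm{B-B_0}<\epsilon$. This is possible by property (3) of Definition~\ref{def:AQL} (Proposition~\ref{prop:quasilocal-frame}), after projecting with $A\mapsto A^{\pm}$, which preserves each $\rr A_\Lambda$ and is contractive. Because $\alpha_\bz$ is isometric,
\[
\bigl\|[\alpha_\bz(A),B]_{\rm gr}-[\alpha_\bz(A_0),B_0]_{\rm gr}\bigr\|\;\leqslant\;2\epsilon\,(\norm{A}+\norm{B}+\epsilon)
\]
uniformly in $\bz$. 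So the task reduces to local homogeneous $A,B$.

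For local elements, write $\bz=\bl+\mathbf{w}$ with $\bl\in\s L$ the nearest lattice point, so that $\mathbf{w}$ lies in a compact fundamental cell $K$. Using $\alpha_\bz=\vartheta_{\theta}\circ\alpha_\bl\circ\alpha_{\mathbf w}$ for some phase $\theta$ (the cocycle relation), the gauge automorphism can be dropped: $\vartheta_\theta$ commutes with $\sigma$ and acts by multiplication by a phase on monomials. Moreover, $[\vartheta_\theta(X),B]_{\rm gr}$ has norm controlled by that of $[X,B]_{\rm gr}$ after also applying $\vartheta_\theta$ to $B$, and $\vartheta_{-\theta}(B)$ stays in $\rr A_{\Lambda_2}$. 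So it is enough to bound $\|[\alpha_\bl(\alpha_{\mathbf w}(A)),B']_{\rm gr}\|$ uniformly over $B'$ ranging in the unit-norm homogeneous part of $\rr A_{\Lambda_2}$.

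By Lemma~\ref{lemma_trasl_loc}, $\alpha_{\mathbf w}(A)\in\rr A_\infty$. By Lemma~\ref{cor:mono-improve-full}, the norms $\norm{\alpha_{\mathbf w}(A)}_p$ are bounded uniformly in $\mathbf w\in K$. Hence
\[
\|\alpha_{\mathbf w}(A)-\n E_{\Lambda_k}(\alpha_{\mathbf w}(A))\|\leqslant C_p(1+k)^{-p}
\]
uniformly in $\mathbf w\in K$. The conditional expectation commutes with parity, so $\n E_{\Lambda_k}(\alpha_{\mathbf w}(A))$ is homogeneous of the same degree. Then split
\[
[\alpha_\bl\alpha_{\mathbf w}(A),B']_{\rm gr}=[\alpha_\bl(\alpha_{\mathbf w}(A)-\n E_{\Lambda_k}\alpha_{\mathbf w}(A)),B']_{\rm gr}+[\alpha_\bl\n E_{\Lambda_k}\alpha_{\mathbf w}(A),B']_{\rm gr}.
\]
The first term is at most $2C_p(1+k)^{-p}$. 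In the second, $\alpha_\bl$ maps $\rr A_{\Lambda_k}$ onto $\rr A_{(\bl,0)+\Lambda_k}$, so property (4) of Theorem~\ref{theo:main1} (Proposition~\ref{lem:almost-comm-gr-II}) bounds it by
\[
C_\lambda\bigl((\bl,0)+\Lambda_k,\Lambda_2\bigr)\,\norm{A}\,\expo{-\lambda\, d((\bl,0)+\Lambda_k,\Lambda_2)}.
\]

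The main obstacle is this last step, because $C_\lambda$ depends on the volume $|\Lambda_k|$, possibly exponentially. It must not depend on the position $\bl$ in any other way: the constant should depend only on the volumes, which are translation invariant, and I would check this in Proposition~\ref{lem:almost-comm-gr-II}. Given that, the argument closes with fixed $k$. First pick $k$ (and $p=1$) large enough that $2C_1(1+k)^{-1}<\epsilon$. Then, with $k$ fixed, let $|\bl|\to\infty$: $d((\bl,0)+\Lambda_k,\Lambda_2)\geqslant |\bl|_1/\lB-k-\mathrm{diam}$, which tends to infinity, so the second term goes to $0$. Since $|\bz|\to\infty$ iff $|\bl|\to\infty$, we get $\limsup_{|\bz|\to\infty}\|[\alpha_\bz(A),B]_{\rm gr}\|\lesssim\epsilon$ for every $\epsilon>0$, which proves graded asymptotic abelianness.
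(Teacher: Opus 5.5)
Your argument is correct and follows the paper's route. The density reduction with parity projection is the proof of Proposition~\ref{cor:asympt-ab-full}. The local step is the proof of Proposition~\ref{prop:cont-asympt-ab-local}: you split $\bz$ into a lattice part plus a bounded cell part, move $\vartheta_{-\theta}$ onto $B$ to remove the gauge twist, and truncate $\alpha_{\mathbf w}(A)$ uniformly via Lemmas~\ref{lemma_trasl_loc} and~\ref{cor:mono-improve-full}. You then apply Proposition~\ref{lem:almost-comm-gr-II}, and the point you flagged does hold: its constant depends only on the translation-invariant volumes, since $C_\lambda(\Lambda_1,\Lambda_2)=4\,m_\lambda\,G_{2\lambda}\sqrt{|\Lambda_1|\,|\Lambda_2|}\,5^{|\Lambda_1|+|\Lambda_2|}$.
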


\smallskip

To summarize, the magnetic coherent-state frame provides a natural way to \virg{discretize the continuum}. The degrees of freedom of $L^2(\R^2)$ are faithfully encoded by the discrete metric space $(\Gamma,d)$, at the price of a localization that is simultaneously \emph{spatial and energetic}, rather than purely spatial. Strict lattice locality is therefore lost, but it is recovered in an \emph{asymptotic} form. The resulting net of local algebras $\rr A_\Lambda$ generates an almost--local Fr\'echet $\ast$-algebra $\rr A_\infty$ on which continuous magnetic translations act in a controlled manner, giving rise to a twisted $C^{*}$--dynamical system that is graded asymptotically abelian.
This provides a discrete–to-continuum bridge through which techniques developed for quantum lattice systems can be adapted to continuous many--body systems in a magnetic field. The present work constitutes the first step of a broader program. As a next stage, the framework can be used to formulate the free magnetic dynamics and revisit the integer QHE in a setting naturally suited to many--body extensions. The construction can then be extended to genuinely interacting systems, where the Lieb–-Robinson bounds established in \cite{BachmannDeNittis2024} provide the necessary control of information propagation. Ultimately, our aim is to develop this discrete–to-continuum framework towards the study of the \emph{fractional QHE}.
 
 \smallskip
 \noindent
{\bf Structure of the paper.}
 {\bf Section~\ref{sec:frames}} reviews the one--particle Landau operator and introduces the magnetic
translations and the magnetic coherent states. After recalling the notion of a frame, we show
that the coherent states constitute an almost--orthogonal frame for $L^2(\R^2)$, thereby encoding the continuum in the discrete index set $\Gamma$.
{\bf Section~\ref{sec:car}}
 passes to the many--body setting. We build the {\rm CAR} algebra $\rr A$
over $L^2(\R^2)$  (introducing Bogoliubov automorphisms and quasi--free states) and,     by means of the
magnetic frame over $(\Gamma,d)$,
equip it with the AQL structure of
Definition~\ref{def:AQL}, thereby proving Theorem~\ref{theo:main1}.
In
{\bf Section~\ref{Sect:frech_struct}}
we introduce conditional expectations onto the local algebras, which provide a suitable notion of localization despite the lack of tensor-product factorization induced by the deformed anticommutation relations. These conditional expectations lead to a family of localizing norms and to a natural Fréchet subalgebra in which the  asymptotic  abelianness claimed in Theorem \ref{thm:asympt-ab}  can be established. Finally, {\bf Appendix~\ref{app_prrof}} contains all the technical results, together with their proofs, so as to keep the main text more readable.

 \smallskip
 
 \noindent
{\bf Acknowledgements.}
The author gratefully acknowledges the support provided by the grant \emph{Fondecyt Regular - 1230032}. The author would also like to thank Sven Bachmann, Rahul Hingorani, Bruno Nachtergaele, Stefan Teufel for numerous stimulating and insightful discussions related to this work.

 \smallskip
 
 \noindent
{\bf Conflict of interest statement.}
 The author certifies that they have no affiliations with or involvement in any organization or entity with any financial interest (such as honoraria; educational grants; participation in speakers' bureaus; membership, employment, consultancies, stock ownership, or other equity interest; and expert testimony or patent-licensing arrangements), or non--financial interest (such as personal or professional relationships, affiliations, knowledge or beliefs) in the subject matter or materials discussed in this manuscript.

\section{One-particle magnetic dynamics}
\label{sec:frames}

In this section we review  the one--particle theory of the Landau operator and set up the
tools used throughout the paper. After recalling the spectral decomposition of the Landau
Hamiltonian into its Landau levels, we introduce the magnetic translations and the associated
system of magnetic coherent states. We then recall the notion of a \emph{frame} and show that the
coherent states form an \emph{almost--orthogonal} frame for $L^2(\R^2)$. In this way  the continuum degrees of freedom
are faithfully encoded by the discrete index set given by the metric space  $(\Gamma,d)$.
The material presented in this section is essentially standard, with the exception of a few technical results concerning decay estimates and overlap properties of the frame elements. The proofs of all the results stated in this section are deferred to Appendix~\ref{ap:proof_S1}.
\subsection{The Landau Hamiltonian}\label{sub:landau}
We work on the one-particle Hilbert space $\HH:=L^2(\R^2)$ endowed with the usual scalar product
\[
\ip{\phi}{\psi}\;:=\;\int_{\R^2}\dd^2\br\;\overline{\phi(\br)}\;
\psi(\br)\;.
\] 
Throughout,
$\br\equiv(x_1,x_2)\in\R^2$ and $\mathbf a\times\mathbf b:=a_1b_2-a_2b_1$ is the scalar planar
cross product for given $\mathbf a,\mathbf b\in\R^2$. 
The algebra of bounded operators on $\HH$ will be denoted with $\bb{B}(\HH)$ and the group of unitary operators with $\bb{U}(\HH)$.
We will systematically use Dirac notation for rank-one operators. More precisely, for $\psi_1,\psi_2\in\HH$ we denote by
$\ketbra{\psi_1}{\psi_2}$ the operator defined by
$\ketbra{\psi_1}{\psi_2}:\phi\mapsto \ip{\psi_2}{\phi}\psi_1$ for every $\phi\in\HH$. 
The presence of a perpendicular homogeneous magnetic field fixes two scales: the \emph{magnetic length} $\lB>0$ and the  \emph{quantum of energy} $\eB>0$. 
The limit of a \emph{strong} magnetic field corresponds to
 $\lB\to0$ and $\eB\to\infty$. This represents a \emph{singular limit} for the whole theoretical framework developed in this work.

\smallskip

The dimensionless \emph{kinetic momenta} are defined as
\[
\mathrm K_1\;:=\;-\ii\lB\,\frac{\partial}{\partial x_1}-\frac{x_2}{2\lB}\;,\qquad
\mathrm K_2\;:=\;-\ii\lB\,\frac{\partial}{\partial x_2}+\frac{x_1}{2\lB}\;,
\]
and the companion  \emph{dual momenta} are
\[
\mathrm G_1\;:=\;-\ii\lB\,\frac{\partial}{\partial x_2}-\frac{x_1}{2\lB}\;,\qquad
\mathrm G_2\;:=\;-\ii\lB\,\frac{\partial}{\partial x_1}+\frac{x_2}{2\lB}\;.
\]
These four operators are essentially self--adjoint on the space of Schwartz functions ${S}(\R^2)$ and satisfy the commutation relations
\[
[\mathrm K_1,\mathrm K_2]\;=\;-\ii\,\unit\;=\;[\mathrm G_1,\mathrm G_2]\;,
\qquad [\mathrm K_i,\mathrm G_j]\;=\;0\;.
\] 
The \emph{Landau Hamiltonian} is defined as
\[h_{B}\;:=\;\frac{\eB}{2}(\mathrm K_1^2+\mathrm K_2^2)\;.
\]
Introducing the two commuting
harmonic pairs
\[
\mathfrak a^{\pm}\;:=\;\frac{1}{\sqrt2}\bigl(\mathrm K_1\pm \ii\,\mathrm K_2\bigr)\;,
\qquad
\mathfrak b^{\pm}\;:=\;\frac{-1}{\sqrt2}\bigl(\mathrm G_1\pm \ii\,\mathrm G_2\bigr)\;,
\]
one has that
\[
[\am,\ap]\;=\;\unit\;=\;[\bmm,\bp]\;,\qquad [\mathfrak a^\pm,\mathfrak b^\pm]\;=\;0,
\]
and
\[
 h_{B}\;=\;\eB\bigl(\ap\am+\tfrac12\bigr)\;.
\]
As a consequence
\[\spec( h_{B})\;=\;\left.\left\{\epsilon_r:=\eB(r+\tfrac12)\;\right|\;r\in\N_0\right\}\;,
\]
each level being infinitely degenerate. 

\smallskip

With the normalized fundamental eigenstate
\[
\psi_{0,0}(\br)\;:=\;\frac{1}{\sqrt{2\pi}\,\lB}\expo{-\frac{|\br|^2}{4\lB^2}}\;,
\]  
characterized by
$\am\psi_{0,0}=\bmm\psi_{0,0}=0$, the vectors
\[
\psi_{r,m}\;:=\;\frac{1}{\sqrt{r!\,m!}}(\ap)^{r}(\bp)^{m}\psi_{0,0}\;,\qquad r,m\in\N_0
\] 
form an orthonormal basis of $\HH$ which diagonalizes the Landau Hamiltonian, namely $ h_{B}\psi_{r,m}=\epsilon_r\psi_{r,m}$. 
Let $\Pi_{r,m}$ the one dimensional projection along $\psi_{r,m}$ defined by $\Pi_{r,m}\phi:=\ip{\phi}{\psi_{r,m}}\psi_{r,m}$. The 
the \emph{$r$--th Landau projection} is $\Pi_r:=\sum_{m\in\N_0}\Pi_{r,m}$, and it projects on \emph{$r$--th Landau level} $\HH_r:=\Pi_r\HH$. One has   that $\HH=\bigoplus_{r\in\N_0}\HH_r$.

\subsection{Magnetic translations and coherent states}\label{sub:coh}

For $\bz\in\R^2$ the magnetic translation $t_{\bz}$ is the unitary
\[
t_{\bz}\;:=\;\expo{-\tfrac{\ii}{\lB^2}\,(\bgamma\times\bR)},\qquad
\bR\;:=\;(-\lB\,\mathrm G_1,\ \lB\,\mathrm G_2)\;,
\]
where
 $\bR$ is known as \emph{guiding centre vector}. The unitary $t_{\bz}$
acts as 
\[
(t_{\bz}\phi)(\br)\;=\;\expo{\frac{\ii}{2\lB^2}(\br\times\bz)}\,
\phi(\br-\bz)\;,\qquad \phi\in\HH\;.
\] 
Written through the ladder operators one has,
\begin{equation}\label{eq:Tb}
t_{\bz}\;=\;\expo{\left(\frac{z_1+\ii z_2}{\sqrt2\lB}\right)\,\bp
-\left(\frac{z_1-\ii z_2}{\sqrt2\lB}\right)\,\bmm}\;=\;
\expo{-\frac{|\bz|^2}{4\lB^2}}\,\expo{\left(\frac{z_1+\ii z_2}{\sqrt2\lB}\right)\,\bp}\,\expo{-\left(\frac{z_1-\ii z_2}{\sqrt2\lB}\right)\,\bmm}\;,
\end{equation}
where the second equality is a consequence of the Baker–Campbell–Hausdorff identity, along with the fact that the $\psi_{r,m}$ are a set of analytic vectors for $\bp$ and $\bmm$. It follows that $[t_{\bz}, h_{B}]=0$ and $t_{\bz}\HH_r=\HH_r$ for every $\bz\in\R^2$ and $r\in\N_0$. The
family $\{t_{\bz}\}_{\bz\in\R^2}$ obeys the Weyl relations
\begin{equation}\label{eq:weyl}
t_{\bz}t_{\bz'}\;=\;\expo{-\frac{\ii}{2\lB^2}(\bz\times\bz')}
t_{\bz+\bz'}\;,\quad
t_{\bz}^{*}\;=\;t_{-\bz}\;,
\end{equation}
which imply
\[t_{\bz}t_{\bz'}t_{\bz}^{*}t_{\bz'}^{*}\;=\;\expo{-\frac{\ii}{\lB^2}(\bz\times\bz')}\,\unit\;.
\]
\begin{lemma}\label{Lem:01}
It holds true that
\[
\ip{\psi_{r,0}}{t_{\bz}\psi_{r,0}}\;=\;\expo{-\frac{|\bz|^2}{4\lB^2}}
\]
for every $\bz\in\R^2$ and $r\in\N_0$.
\end{lemma}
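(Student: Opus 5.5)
The plan is to use the normal-ordered form of $t_{\bz}$ in \eqref{eq:Tb} together with the fact that $\bp,\bmm$ commute with $\ap,\am$. Set $w:=\frac{z_1+\ii z_2}{\sqrt2\lB}$, so that $t_{\bz}=\expo{-|w|^2/2}\expo{w\bp}\expo{-\overline{w}\bmm}$, since $|w|^2=\frac{|\bz|^2}{2\lB^2}$. The vector $\psi_{r,0}=\frac{1}{\sqrt{r!}}(\ap)^r\psi_{0,0}$ lies in the kernel of $\bmm$. Indeed, $\bmm$ commutes with $\ap$ and $\bmm\psi_{0,0}=0$, so $\bmm\psi_{r,0}=\frac{1}{\sqrt{r!}}(\ap)^r\bmm\psi_{0,0}=0$. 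Hence $\expo{-\overline{w}\bmm}\psi_{r,0}=\psi_{r,0}$, the exponential series terminating after its first term.

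Next I compute
\[
\ip{\psi_{r,0}}{\expo{w\bp}\psi_{r,0}}=\sum_{n\geqslant0}\frac{w^n}{n!}\ip{\psi_{r,0}}{(\bp)^n\psi_{r,0}}.
\]
By the definition of the basis, $(\bp)^n\psi_{r,0}=\sqrt{n!}\,\psi_{r,n}$, again because $\bp$ commutes with $\ap$. By orthonormality of $\{\psi_{r,m}\}$, only the term $n=0$ survives, and the inner product equals $1$. Multiplying by the prefactor gives $\ip{\psi_{r,0}}{t_{\bz}\psi_{r,0}}=\expo{-|w|^2/2}=\expo{-\frac{|\bz|^2}{4\lB^2}}$, which is the claim.

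The only delicate point is justifying the termwise expansion of $\expo{w\bp}$ applied to $\psi_{r,0}$, and the validity of the BCH factorization on this vector. Both follow from the analyticity of the vectors $\psi_{r,m}$ for $\bp$ and $\bmm$ already invoked after \eqref{eq:Tb}. The norms satisfy $\|(\bp)^n\psi_{r,0}\|=\sqrt{n!}$, so the series $\sum_n |w|^n\sqrt{n!}/n!$ converges for every $w$. One could equivalently use the adjoint, $\ip{\psi_{r,0}}{\expo{w\bp}\phi}=\ip{\expo{\overline{w}\bmm}\psi_{r,0}}{\phi}=\ip{\psi_{r,0}}{\phi}$, which avoids the series in the second step altogether. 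I expect no genuine obstacle here beyond this domain bookkeeping.
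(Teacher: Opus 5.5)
Your proof is correct and follows the paper's own argument. Both use the normal-ordered form \eqref{eq:Tb}, note that $\bmm\psi_{r,0}=0$ so the rightmost exponential acts trivially, and then use $\ip{\psi_{r,0}}{\expo{w\bp}\psi_{r,0}}=1$. You additionally supply the justifications the paper leaves implicit: the commutation of $\bp,\bmm$ with $\ap$, the orthonormality of the $\psi_{r,n}$, and the convergence of the series.
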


\smallskip

The vector 
\[
\chi_{\bz,r}\;:=\;t_{\bz}\,\psi_{r,0}\;=\;\frac{1}{\sqrt{r!}}(\ap)^{r}\,t_{\bz}\,\psi_{0,0}
\]
will be called the \emph{(magnetic) coherent state} of center $\bz$ and energy $\epsilon_r$. The next two results will provide the main properties of the vectors  $\chi_{\bz,r}$. The first one concerns the 
spatial localization of the coherent states.

\begin{lemma}[Ring localization of coherent states]\label{lemm:gau_loc}
For every $\bz\in\R^2$ and every $r\in\N_0$ the magnetic coherent state $\chi_{\bz,r}$ obeys
\begin{equation}\label{eq:ring-bound}
\bigl|\chi_{\bz,r}(\br)\bigr|\;\leqslant\;\frac{1}{\sqrt{2\pi}\,\lB\,(1+r)^{\frac14}}\,\expo{
-\frac{\bigl(|\br-\bz|-\sqrt{2r}\,\lB\bigr)^{2}}{4\lB^{2}}}\,,\qquad\br\in\R^2\,.
\end{equation}
The bound is \emph{sharp} at
$r=0$.
\end{lemma}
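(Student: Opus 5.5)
The plan is to compute $|\chi_{\bz,r}|$ explicitly and reduce \eqref{eq:ring-bound} to a one-variable inequality that follows from Stirling's bound.

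\emph{Step 1: removing the translation.} By the explicit action of $t_{\bz}$ recalled above, $t_{\bz}$ only multiplies the shifted function by a unimodular phase. Hence
\[
|\chi_{\bz,r}(\br)| \;=\; |\psi_{r,0}(\br-\bz)|,
\]
and it suffices to treat $\bz=0$.

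\emph{Step 2: closed form of $\psi_{r,0}$.} I write $\ap=\frac{1}{\sqrt2}(\mathrm K_1+\ii\mathrm K_2)$ as a first-order differential operator. Since $\bmm$ commutes with $\ap$, every $\psi_{r,0}$ stays in $\ker\bmm$. A direct induction on $r$ then gives, up to a constant phase,
\[
\psi_{r,0}(\br)\;=\;c_r\,(x_1\pm\ii x_2)^r\,\expo{-\frac{|\br|^2}{4\lB^2}}
\]
for one fixed sign determined by the conventions. I will not try to pin down the sign, since only the modulus matters. The constant $c_r$ is fixed by normalization, using
\[
\int_{\R^2}|\br|^{2r}\expo{-|\br|^2/2\lB^2}\dd^2\br\;=\;\pi(2\lB^2)^{r+1}r!\,.
\]
This yields
\[
|\psi_{r,0}(\br)|\;=\;\frac{1}{\sqrt{2\pi}\,\lB\sqrt{r!}}\,u^{r/2}\expo{-u/2},\qquad u:=\frac{|\br|^2}{2\lB^2}\,.
\]

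\emph{Step 3: reduction to a scalar inequality.} Put $\rho=|\br|/\lB$, so that $u=\rho^2/2$. Expanding the exponent on the right of \eqref{eq:ring-bound} gives
\[
\frac{(\rho-\sqrt{2r})^2}{4}\;=\;\frac u2-\sqrt{ur}+\frac r2\,.
\]
The factor $\expo{-u/2}$ cancels from both sides. After squaring, the claim becomes
\[
s^{2r}\expo{-2s\sqrt r}\;\leqslant\; r!\,\expo{-r}(1+r)^{-1/2}\qquad\text{for all } s:=\sqrt u\geqslant0\,.
\]
Elementary calculus shows the left-hand side is maximized at $s=\sqrt r$, with value $r^r\expo{-2r}$. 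So it suffices to prove
\[
r^r\expo{-r}\sqrt{1+r}\;\leqslant\; r!\,.
\]
For $r\geqslant1$ this follows from Stirling's lower bound $r!\geqslant\sqrt{2\pi r}\,r^r\expo{-r}$ together with $2\pi r\geqslant 1+r$. For $r=0$ both sides equal $1$, and every step above is an equality. This is exactly the claimed sharpness at $r=0$: the bound reduces to the Gaussian $\psi_{0,0}$ itself.

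\emph{Main obstacle.} The only step needing care is Step 2, namely checking the explicit polynomial-times-Gaussian form of $\psi_{r,0}$ and its normalization constant under the paper's sign conventions for $\mathrm K_j$. The inequality in Step 3 is routine once that form is in hand.
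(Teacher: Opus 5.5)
Your argument is correct and is essentially the paper's proof: you use the explicit form $|\psi_{r,0}(\br)|=\frac{1}{\sqrt{2\pi}\,\lB\sqrt{r!}}\bigl(\frac{|\br|}{\sqrt2\,\lB}\bigr)^{r}\expo{-\frac{|\br|^2}{4\lB^2}}$, reduce to a one--variable maximization at $\rho_{\max}=\sqrt{2r}\,\lB$, and close with Stirling's lower bound and $2\pi r\geqslant 1+r$, with equality at $r=0$. The only difference is bookkeeping. The paper bounds the peak value and the shape ratio $f_r(\rho)/f_r(\rho_{\max})\leqslant\expo{-(\rho-\rho_{\max})^2/(4\lB^2)}$ (via $\log(1+u)\leqslant u$) separately, whereas you cancel the Gaussian after completing the square and maximize $s^{2r}\expo{-2s\sqrt r}$ directly, which is the same inequality done in one step.
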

 
 \smallskip

    It is worth emphasizing that in \eqref{eq:ring-bound} both the
    prefactor $\sfrac{1}{\sqrt{2\pi}\lB}$ and the \emph{decay rate} 
$\sfrac{1}{4\lB^{2}}$    
   are uniform in the energy level $r\in\N_0$.
   The peak height decays only like
$(1+r)^{-\sfrac{1}{4}}$, the mass concentrates on the cyclotron ring of radius $\sqrt{2r}\lB$ centered at $\bz$, and the Gaussian tail away from the ring decays at the fixed rate given above.

\begin{remark}[Limitations of the center--Gaussian estimate]\label{rem:peak-height}
Instead of \eqref{eq:ring-bound}, it is possible to bound $\chi_{\bz,r}$   by a Gaussian \emph{centered} at $\bz$,
\[
\bigl|\chi_{\bz,r}(\br)\bigr|\;\leqslant\;\frac{C_{\sigma,r}}{r^{\frac{1}{4}}}\,\expo{-\sigma\,\frac{|
\br-\bz|^{2}}{4\lB^{2}}}\,,\qquad\sigma\in[0,1)\,.
\]
This bound is perfectly  valid, but has the cost of a constant that grows exponentially in the level,
\[
C_{\sigma,r}\;\underset{r\to\infty}{\sim}\;\frac{1}{(2\pi)^{3/4}\lB}\,\expo{\frac r2\log\frac1{1
-\sigma}}\,.
\]
Moreover  no such bound survives at $\sigma=1$. This explosion is a \emph{geometric artifact}, not a
feature of $\chi_{\bz,r}$. The mass of the level--$r$ state lies on the ring $\rho_{\max}=\sqrt{
2r}\,\lB$, whereas the estimate centers its Gaussian at $\bz$. Reaching the ring against a fixed
rate $\sigma$ from the center costs $\expo{\sfrac{\sigma\rho_{\max}^{2}}{4\lB^{2}}}=\expo{\sfrac{\sigma r}{2}}$,
exactly the growth of $C_{\sigma,r}$. No $r$--independent constant can absorb it. The ring bound
\eqref{eq:ring-bound} removes the artifact by placing the decay where the state actually
concentrates.
\hfill$\blacktriangleleft$
\end{remark}

\smallskip

The second result follows from the size of the overlaps 
between coherent states which is 
 diagonal in the energy variable and Gaussian in the centers.
 \begin{lemma}[Gaussian overlappings]\label{lem:overlap}
 It holds true that
\[\ip{\chi_{\bz,r}}{\chi_{\bz',r'}}
\;=\;\delta_{r,r'}\;
\expo{\frac{\ii}{2\lB^2}(\bz\times\bz')}\;
\expo{-\frac{|\bz-\bz'|^2}{4\lB^2}}\;
\]
for all $\bz,\bz'\in\R^2$ and $r,r'\in\N_0$
\end{lemma}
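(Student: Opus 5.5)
The plan is to move all operators onto the reference states $\psi_{r,0}$ and reduce the overlap to Lemma~\ref{Lem:01}. By definition $\chi_{\bz,r}=t_{\bz}\psi_{r,0}$, so
\[
\ip{\chi_{\bz,r}}{\chi_{\bz',r'}}\;=\;\ip{\psi_{r,0}}{t_{\bz}^{*}t_{\bz'}\psi_{r',0}}\;=\;\ip{\psi_{r,0}}{t_{-\bz}t_{\bz'}\psi_{r',0}}\;.
\]
The Weyl relations \eqref{eq:weyl} give $t_{-\bz}t_{\bz'}=\expo{-\frac{\ii}{2\lB^2}((-\bz)\times\bz')}t_{\bz'-\bz}=\expo{\frac{\ii}{2\lB^2}(\bz\times\bz')}t_{\bz'-\bz}$. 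This already produces the phase in the statement, and leaves the matrix element $\ip{\psi_{r,0}}{t_{\bz'-\bz}\psi_{r',0}}$ to compute.

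\textbf{The factor $\delta_{r,r'}$.} The magnetic translations are built only from $\bp$ and $\bmm$, by \eqref{eq:Tb}, and these commute with $\ap$ and $\am$. Hence $t_{\bz}$ commutes with $h_B$ and preserves each Landau level $\HH_r$, as already noted after \eqref{eq:Tb}. Consequently $t_{\bz'-\bz}\psi_{r',0}\in\HH_{r'}$, which is orthogonal to $\psi_{r,0}\in\HH_r$ whenever $r\neq r'$. So the overlap vanishes for $r\neq r'$.

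\textbf{The diagonal case $r=r'$.} Here Lemma~\ref{Lem:01}, applied with $\bz'-\bz$ in place of $\bz$, gives $\ip{\psi_{r,0}}{t_{\bz'-\bz}\psi_{r,0}}=\expo{-\frac{|\bz'-\bz|^2}{4\lB^2}}$. Combining this with the phase yields the claimed identity.

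\textbf{Main obstacle.} The only real risk is sign bookkeeping in the phase. One has to track the adjoint $t_{\bz}^*=t_{-\bz}$ and the antisymmetry of the cross product, and check consistency with the convention in \eqref{eq:weyl}. As a sanity check, set $\bz'=0$. Then the phase is trivial and $\ip{\chi_{\bz,r}}{\psi_{r,0}}=\overline{\ip{\psi_{r,0}}{t_{\bz}\psi_{r,0}}}$ is real and positive, as Lemma~\ref{Lem:01} requires. One can also confirm the sign of the phase directly on the explicit Gaussian $r=0$ using the position-space formula for $t_{\bz}$.
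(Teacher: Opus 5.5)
Your proof is correct and matches the paper's argument essentially step for step. You rewrite the overlap as $\ip{\psi_{r,0}}{t_{-\bz}t_{\bz'}\psi_{r',0}}$, extract the phase $\expo{\frac{\ii}{2\lB^2}(\bz\times\bz')}$ via the Weyl relations \eqref{eq:weyl}, obtain $\delta_{r,r'}$ because $t_{\bz'-\bz}$ preserves the Landau levels, and close the diagonal case with Lemma~\ref{Lem:01}, with the same sign bookkeeping as the paper.
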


\subsection{Almost-orthogonal frames}\label{sub:frames}
We briefly review the notion of a frame and some of its fundamental properties. For a comprehensive treatment, we refer the reader to \cite{Christensen-frames}.

\smallskip

Let us recall that a countable family $\{\xi_\gamma\}_{\gamma\in \Gamma}$ in a separable Hilbert space $\HH$ is a
\emph{frame} if there exist constants $0<A\leqslant B<\infty$ (the frame bounds) with
\[A\,\norm{\psi}^2\;\le\;\sum_{\gamma\in \Gamma}\bigl|\ip{\xi_\gamma}{\psi}\bigr|^2\;\le\;B\,\norm{\psi}^2\;,
\qquad \forall\,\psi\in\HH\;.
\]
The associated \emph{frame operator}
\begin{equation}\label{S_eq}
S\psi\;:=\;\sum_{\gamma\in \Gamma}\ip{\xi_\gamma}{\psi}\xi_\gamma\;,
\qquad \forall\,\psi\in\HH
\end{equation}
is bounded, self--adjoint, positive and
invertible, with $A\,\unit\le S\le B\,\unit$. Every $\psi\in\HH$ is recovered from the
\emph{canonical dual} $\{S^{-1}\xi_\gamma\}_{\gamma\in \Gamma}$ through
the reconstruction formula
\begin{equation}\label{eq:rec_can}
\psi\;=\;\sum_{\gamma\in \Gamma}\ip{S^{-1}\xi_\gamma}{\psi}\,\xi_\gamma\;=\;\sum_{\gamma\in \Gamma}s_\gamma(\psi)\,\xi_\gamma\;.
\end{equation}
The numbers $s_\gamma(\psi)$ are called \emph{frame coefficients} and
$\{s_\gamma(\psi)\}_{\gamma\in \Gamma}\in\ell^2(\Gamma)$. The frame is a Riesz basis if and only if  it is
$\ell^2$--independent. Otherwise it is \emph{overcomplete}.

\smallskip

Let $\s{P}_f(\Gamma)$ be the \emph{set  of finite subsets} of $\Gamma$. 
For 
 $\Lambda\in\s{P}_f(\Gamma)$  define the \emph{truncated} reconstruction operator
\[I_\Lambda\;:=\;\sum_{\gamma\in\Lambda}\ketbra{\xi_\gamma}{S^{-1}\xi_\gamma}\;\in\;\bb{B}(\HH).
\]
The next result will play a relevant role in the following sections.
\begin{lemma}\label{lemma:reconstr_op}
When $\Lambda\nearrow\Gamma$ both $I_\Lambda\to\unit$ and $I_\Lambda^*\to\unit$ in the strong operator topology. Moreover
\[
\left\|I_\Lambda\right\|\;=\left\|I_\Lambda^*\right\|\;\leqslant\;\sqrt{\frac{B}{A}}
\]
where $0<A\leqslant B<\infty$ are the frame bounds.
\end{lemma}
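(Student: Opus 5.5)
The plan is to reduce both claims to elementary properties of the frame operator. Set $T_\Lambda:=\sum_{\gamma\in\Lambda}\ketbra{\xi_\gamma}{\xi_\gamma}$, so that $I_\Lambda=T_\Lambda S^{-1}$ and $I_\Lambda^*=S^{-1}T_\Lambda$. This factorization drives everything below.

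For the norm bound, the key observation is $0\leqslant T_\Lambda\leqslant S\leqslant B\,\unit$. This holds because $\ip{\psi}{T_\Lambda\psi}=\sum_{\gamma\in\Lambda}|\ip{\xi_\gamma}{\psi}|^2$ is a partial sum of the nonnegative series defining $\ip{\psi}{S\psi}$.
\begin{itemize}
\item Write $I_\Lambda=T_\Lambda^{1/2}\,\bigl(T_\Lambda^{1/2}S^{-1}\bigr)$.
\item First, $\|T_\Lambda^{1/2}\|\leqslant\sqrt B$.
\item Second, $\|T_\Lambda^{1/2}S^{-1}\phi\|^2=\ip{S^{-1}\phi}{T_\Lambda S^{-1}\phi}\leqslant\ip{S^{-1}\phi}{SS^{-1}\phi}=\ip{\phi}{S^{-1}\phi}\leqslant A^{-1}\|\phi\|^2$, using $S^{-1}\leqslant A^{-1}\unit$.
\end{itemize}
Hence $\|I_\Lambda\|\leqslant\sqrt{B}\cdot A^{-1/2}=\sqrt{B/A}$. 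Equality of the two norms is automatic since $I_\Lambda^*$ is the adjoint of $I_\Lambda$.

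For strong convergence, I would show $T_\Lambda\to S$ strongly along the directed set $\s P_f(\Gamma)$.
\begin{itemize}
\item For fixed $\psi$, one has $\|(S-T_\Lambda)\psi\|^2\leqslant\|S-T_\Lambda\|\,\ip{\psi}{(S-T_\Lambda)\psi}$, because $0\leqslant S-T_\Lambda\leqslant B\unit$ and $X^2\leqslant\|X\|X$ for positive $X$.
\item This gives $\|(S-T_\Lambda)\psi\|^2\leqslant B\sum_{\gamma\notin\Lambda}|\ip{\xi_\gamma}{\psi}|^2$.
\item The right-hand side tends to $0$ as $\Lambda\nearrow\Gamma$, being the tail of a convergent series of nonnegative terms.
\end{itemize}
Then $I_\Lambda^*\psi=S^{-1}T_\Lambda\psi\to S^{-1}S\psi=\psi$ by boundedness of $S^{-1}$. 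Likewise $I_\Lambda\psi=T_\Lambda(S^{-1}\psi)\to SS^{-1}\psi=\psi$, by applying the previous step to the vector $S^{-1}\psi$. This is also consistent with the reconstruction formula \eqref{eq:rec_can}.

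No step is a real obstacle. The only point needing care is that convergence is along the net $\s P_f(\Gamma)$ rather than a sequence, and the tail-sum argument handles this directly since unconditional convergence of a nonnegative series is net convergence. The uniform bound $\sqrt{B/A}$ is not needed for the strong limit here, but it would allow a density argument if one preferred one.
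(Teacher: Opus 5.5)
Your proof is correct and follows the same outline as the paper's. In both, $I_\Lambda$ is factored into a piece of norm at most $\sqrt{B}$ and a piece of norm at most $A^{-1/2}$: your $T_\Lambda^{1/2}$ and $T_\Lambda^{1/2}S^{-1}$ play the roles of the paper's partial synthesis operator $D_\Lambda$ and dual analysis operator $C$. In both, strong convergence is passed to the adjoint through the intertwining $I_\Lambda^*=S^{-1}I_\Lambda S$, which your identities $I_\Lambda=T_\Lambda S^{-1}$, $I_\Lambda^*=S^{-1}T_\Lambda$ encode. The paper instead cites the convergence of the reconstruction formula and the frame-theoretic bounds on $D_\Lambda$ and $C$; you derive these from $0\leqslant T_\Lambda\leqslant S$ and $S^{-1}\leqslant A^{-1}\unit$, so your version is self-contained and proves the strong convergence directly along the net $\s{P}_f(\Gamma)$.
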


\begin{remark}[Increasing exhaustive sequences]
By $\lim_{\Lambda\nearrow\Gamma}$ we mean that the limit must exist along any
\emph{increasing}, \emph{exhaustive} sequence
$\{\Lambda_n\;|\;n\in\N\}\subset\s{P}_f(\Gamma)$. Increasing means
$\Lambda_n\subset\Lambda_{n+1}$, and exhaustive means that for every $\gamma\in\Gamma$
there is $n\in\N_0$ with $\gamma\in\Lambda_n$.
\hfill$\blacktriangleleft$
\end{remark}

\smallskip

Let  $(\Gamma,d)$ be  a discrete and countable metric space endowed with a distance $d:\Gamma\times\Gamma\to[0,\infty]$.
We will assume the \emph{uniform summability} property
\begin{equation}\label{eq:unif-summ-cos}
m_\epsilon\;:=\;\sup_{\gamma\in\Gamma}\left(\sum_{\gamma'\in\Gamma}\expo{-\epsilon\,d(\gamma,\gamma')}\right)\;<\;\infty \;,
\qquad \forall\,\epsilon\;>\;0\;.
\end{equation}
It is worth noting that $m_\epsilon\to \infty$ when $\epsilon\to 0$.

\begin{definition}[Asymptotically--  and almost--orthogonal frame]\label{def:loc-frame}
Let $(\Gamma,d)$ be discrete, countable and with the uniform summability  property. A frame $\{\xi_\gamma\}_{\gamma\in \Gamma}$ of normalized vectors, \ie $\|\xi_\gamma\|=1$ for all $\gamma\in\Gamma$, is
\emph{asymptotically--orthogonal} if there are constants $G\geqslant1$ and $\lambda>0$ with
\begin{equation}\label{eq:loc}
\bigl|\ip{\xi_\gamma}{\xi_{\gamma'}}\bigr|\;\leqslant\;G\,\expo{-\lambda\,d(\gamma,\gamma')},
\qquad\forall\,\gamma,\gamma'\in \Gamma\;.
\end{equation}
The frame is called \emph{almost--orthogonal} if it is {asymptotically-orthogonal} with respect to every $\lambda>0$ with a pre--factor $G_\lambda\geqslant1$ which can depend on $\lambda$. 
\end{definition}

\smallskip

The condition $G\geqslant 1$ (resp. $G_\lambda\geqslant 1$) is required for compatibility with the normalization assumption on the frame. The adjective \emph{asymptotically–orthogonal} is justified by \eqref{eq:loc}: the family $\{\xi_\gamma\}_{\gamma\in\Gamma}$ is 
complete by definition of a frame, while its
off--diagonal overlaps vanish asymptotically in the separation of the indices. In the \emph{almost–orthogonal} case the vanishing
is \emph{super--exponential}. 
 Consequently, for $\lambda$ sufficiently large, an almost–orthogonal frame behaves increasingly like an orthogonal family, and in the limiting picture like an orthonormal Riesz basis, which is complete but not overcomplete. The overcompleteness of the frame is therefore encoded in the prefactor $G_\lambda$, which must increase with $\lambda$.

\smallskip

An important property of an almost--orthogonal frame is that the matrix elements of the frame operator $S$, as well as those of all its powers, decay exponentially. More precisely, the following result, established in \cite[Lemma 6.4]{BachmannDeNittis2024}, holds.
\begin{lemma}[Almost--diagonality of $S^p$]\label{alm_ort_Sp}
Let $\{\xi_\gamma\}_{\gamma\in\Gamma}$ be an asymptotically--orthogonal frame as in
Definition~\ref{def:loc-frame}, with rate $\lambda>0$, frame constant $G$, and frame
operator $S$. Then for every $p\in\N_0$,
\begin{equation}\label{eq:Sp-loc-00}
\bigl|\ip{\xi_\gamma}{S^{p}\xi_{\gamma'}}\bigr|\;\leqslant\;G_{p,\epsilon}\,\expo{-(\lambda-
\delta)\,d(\gamma,\gamma')}\,,\qquad\forall\,\gamma,\gamma'\in\Gamma\,,
\end{equation}
for every $0<\epsilon<\delta<\lambda$, where $G_{p,\epsilon}:=G^{p+1}m_\epsilon^{p}$.
If the frame is almost--orthogonal (\ie its overlaps decay at every rate
$\lambda>0$), then
\begin{equation}\label{eq:Sp-loc-000}
\bigl|\ip{\xi_\gamma}{S^{p}\xi_{\gamma'}}\bigr|\;\leqslant\;G_{p,\epsilon,\lambda}\,\expo{-
\lambda\,d(\gamma,\gamma')}\,,\qquad\forall\,\gamma,\gamma'\in\Gamma\,,
\end{equation}
for every $\lambda>0$, and $S^{p}$ is said to be \emph{almost--diagonal} with respect to the
frame.
\end{lemma}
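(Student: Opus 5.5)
We argue by induction on $p$, writing everything through matrix elements in the frame. The case $p=0$ is exactly the asymptotic orthogonality bound \eqref{eq:loc}, with $G_{0,\epsilon}=G$. For the inductive step we use the definition \eqref{S_eq} of the frame operator. Since $S$ is bounded and self-adjoint, we can write, for $p\geqslant1$,
\[
\ip{\xi_\gamma}{S^{p}\xi_{\gamma'}}\;=\;\ip{\xi_\gamma}{S\,S^{p-1}\xi_{\gamma'}}\;=\;\sum_{\eta\in\Gamma}\ip{\xi_\gamma}{\xi_\eta}\,\ip{\xi_\eta}{S^{p-1}\xi_{\gamma'}}\;,
\]
where the series converges absolutely because the frame coefficients are in $\ell^2(\Gamma)$.

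Next we insert the bounds. By \eqref{eq:loc} we have $|\ip{\xi_\gamma}{\xi_\eta}|\leqslant G\expo{-\lambda d(\gamma,\eta)}$. For the second factor we use an inductive hypothesis stated at the full rate $\lambda$ but with an accumulated loss:
\[
|\ip{\xi_\eta}{S^{p-1}\xi_{\gamma'}}|\leqslant G^{p}m_\epsilon^{p-1}\expo{-(\lambda-\epsilon')d(\eta,\gamma')}\;,\qquad \epsilon'\leqslant\delta\;.
\]
Rather than losing $\delta$ at every step, we sacrifice $\epsilon$ only once, through the triangle inequality.

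The key elementary step comes from $d(\gamma,\gamma')\leqslant d(\gamma,\eta)+d(\eta,\gamma')$, which gives
\[
\expo{-\lambda d(\gamma,\eta)}\expo{-\mu d(\eta,\gamma')}\leqslant \expo{-\mu\, d(\gamma,\gamma')}\expo{-(\lambda-\mu)d(\gamma,\eta)}
\]
for $\mu\leqslant\lambda$. With $\mu=\lambda-\delta$, the leftover factor $\expo{-\delta d(\gamma,\eta)}\leqslant\expo{-\epsilon d(\gamma,\eta)}$ is summed over $\eta$ via the uniform summability \eqref{eq:unif-summ-cos}, producing one factor $m_\epsilon$. The induction must be organized so that the rate never degrades below $\lambda-\delta$. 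The simplest way is to prove the stronger claim that for all $p$
\[
|\ip{\xi_\gamma}{S^p\xi_{\gamma'}}|\leqslant G^{p+1}m_\epsilon^p\expo{-(\lambda-\delta)d(\gamma,\gamma')}\;,
\]
always taking the \emph{first} factor at full rate $\lambda$. Its excess $\lambda-(\lambda-\delta)=\delta>\epsilon$ is absorbed by $m_\epsilon$ at each step, while the inductive factor is already at rate $\lambda-\delta$. This yields $G\cdot G^{p}m_\epsilon^{p-1}\cdot m_\epsilon$, which is exactly $G_{p,\epsilon}$.

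For the almost-orthogonal case \eqref{eq:Sp-loc-000}, given a target rate $\lambda$ we apply \eqref{eq:Sp-loc-00} with the overlap rate $\lambda+\delta$ and prefactor $G_{\lambda+\delta}$. This gives decay at rate $\lambda$ with constant $G_{\lambda+\delta}^{p+1}m_\epsilon^p$, which we call $G_{p,\epsilon,\lambda}$.

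The only genuinely delicate point is the justification of the expansion of $S$ in matrix elements, namely the interchange of the sum with the inner product. This follows from strong convergence of \eqref{S_eq} for the vector $S^{p-1}\xi_{\gamma'}$. Once that is settled, the bookkeeping of the rate loss so that it does not accumulate is the main thing to get right, and it is handled by keeping one factor at full rate as above.
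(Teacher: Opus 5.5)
Your proposal is correct. The paper does not reprove \eqref{eq:Sp-loc-00} but cites \cite[Lemma~6.4]{BachmannDeNittis2024}; your induction (expand one factor of $S$ via \eqref{S_eq}, keep the new overlap at rate $\lambda$ and the inductive factor at rate $\lambda-\delta$, apply the triangle inequality, and sum the leftover $\expo{-\delta\,d(\gamma,\eta)}$ by uniform summability) closes with the constant $G^{p+1}m_\epsilon^{p}$. Your passage to \eqref{eq:Sp-loc-000}, using frame rate $\lambda+\delta$ and $G_{p,\epsilon,\lambda}=(G_{\lambda+\delta})^{p+1}m_\epsilon^{p}$, is exactly the paper's argument.

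Two small clean-ups. Drop the intermediate hypothesis at rate $\lambda-\epsilon'$: what you actually carry is \eqref{eq:Sp-loc-00} itself at $p-1$, not a stronger claim. Also, the sum over $\eta$ really gives $m_\delta\leqslant m_\epsilon$, so the auxiliary $\epsilon$ is not even needed.
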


\smallskip

In other words, the off--diagonal matrix elements of $S^{p}$ decay at a rate arbitrarily close
to that of the frame, the discrepancy being quantified by $\delta$. The price for a small
$\delta$ is a large multiplicative constant. As $\delta\to0$ one has $\epsilon\to0$, hence
$G_{p,\epsilon}\to\infty$.
The bound \eqref{eq:Sp-loc-00} is the content of \cite[Lemma~6.4]{BachmannDeNittis2024}. The
almost--diagonality \eqref{eq:Sp-loc-000} follows by applying \eqref{eq:Sp-loc-00} with frame
rate $\lambda+\delta$ in place of $\lambda$. This is legitimate precisely because the frame is
almost--orthogonal, hence localized at the rate $\lambda+\delta$ as well. The price is the rate--dependent constant
$G_{p,\epsilon,\lambda}:=(G_{\lambda+\delta})^{p+1}m_\epsilon^{p}$, where $G_{\lambda+\delta}$
is the frame constant of Definition~\ref{def:loc-frame} at rate $\lambda+\delta$.

\smallskip 
 
Perhaps the most important, and in some sense most surprising, fact is that the     asymptotic--diagonality  extends to the inverse powers of $S$. The almost--diagonality, however, is more delicate.
The central part of the following result was first established in \cite{Jaffard}, with a detailed proof provided in \cite[Proposition 2.2 \& Example 4]{BachmannDeNittis2024}. The statement below, however, gives a more detailed formulation and includes some additional information.

\smallskip

\begin{proposition}[Almost--diagonality of $S^{-p}$]\label{prop:inherited}
Let $\{\xi_\gamma\}_{\gamma\in\Gamma}$ be an asymptotically--orthogonal frame as in
Definition~\ref{def:loc-frame}, with rate $\lambda>0$, frame constant $G$, and frame
operator $S$. Then, for every $p\in\N$ there are $A_p>0$ and $0<\lambda_p<\sfrac{\lambda}{2}$ such that
\begin{equation}\label{eq:Sp-loc}
\bigl|\ip{\xi_\gamma}{S^{-p}\xi_{\gamma'}}\bigr|\;\leqslant\;A_p\,\expo{-\lambda_p\,d(\gamma,
\gamma')}\,,\qquad\forall\,\gamma,\gamma'\in\Gamma\,.
\end{equation}
If the frame is almost--orthogonal \textup{(}localized at every rate $\lambda>0$\textup{)} and
\begin{equation}\label{eq:growth-cond}
\ln (G_\lambda)\;=\;o(\lambda)\qquad(\lambda\to\infty)\,,
\end{equation}
then $S^{-p}$ is \emph{almost--diagonal} in the sense that for every $\lambda>0$ there is $A_{p,\lambda}>0$ with
\begin{equation}\label{eq:Sp-loc01}
\bigl|\ip{\xi_\gamma}{S^{-p}\xi_{\gamma'}}\bigr|\;\leqslant\;A_{p,\lambda}\,\expo{-\lambda\,d(\gamma,
\gamma')}\,,\qquad\forall\,\gamma,\gamma'\in\Gamma\,.
\end{equation}
\end{proposition}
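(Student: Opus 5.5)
Passing from $S$ to $S^{-p}$ is an inversion problem; I would solve it with a Neumann series for $S^{-1}$, written in the frame's own coordinates. Write $S = \frac{A+B}{2}\bigl(\unit - R\bigr)$ with $R := \unit - \frac{2}{A+B}S$. Since $A\unit\leqslant S\leqslant B\unit$, we get $\norm{R}\leqslant q := \frac{B-A}{B+A}<1$, and
\[
S^{-p} = \Bigl(\tfrac{2}{A+B}\Bigr)^{p}\sum_{n\geqslant 0}\binom{n+p-1}{p-1}R^{n}
\]
converges in norm. The matrix elements $\ip{\xi_\gamma}{R^n\xi_{\gamma'}}$ are obtained by expanding $R^n$ binomially in powers of $S$.

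Each power $S^k$ is controlled by Lemma~\ref{alm_ort_Sp}: $|\ip{\xi_\gamma}{S^k\xi_{\gamma'}}|\leqslant G^{k+1}m_\epsilon^{k}\expo{-(\lambda-\delta)d(\gamma,\gamma')}$. This bound grows geometrically in $k$, so it cannot be summed naively. I would therefore use two estimates for each term and interpolate between them.

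1. \emph{Decay estimate.} The off-diagonal bound $|\ip{\xi_\gamma}{R^n\xi_{\gamma'}}|\leqslant C_1 K^n \expo{-(\lambda-\delta)d}$ with $K:=1+\frac{2}{A+B}Gm_\epsilon$, obtained from the binomial expansion.
2. \emph{Norm estimate.} The trivial bound $|\ip{\xi_\gamma}{R^n\xi_{\gamma'}}|\leqslant \norm{R}^n\leqslant q^n$, using $\norm{\xi_\gamma}=1$.

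Taking the geometric mean with weight $\theta\in(0,1)$ gives
\[
|\ip{\xi_\gamma}{R^n\xi_{\gamma'}}|\leqslant C_1^{\theta}\bigl(K^{\theta}q^{1-\theta}\bigr)^n \expo{-\theta(\lambda-\delta)d(\gamma,\gamma')}.
\]
Choose $\theta$ small enough that $K^\theta q^{1-\theta}<1$. The series in $n$, including the polynomial binomial weights, then converges, and \eqref{eq:Sp-loc} follows with $\lambda_p:=\theta(\lambda-\delta)$. Since $\theta$ can always be shrunk, $\lambda_p<\lambda/2$ is free. A cleaner alternative is to sum $n$ up to $N\sim d(\gamma,\gamma')$ with the decay estimate and bound the tail by $q^N$; I would use whichever is cleaner.

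For the almost-diagonal statement \eqref{eq:Sp-loc01}, fix a target rate $\mu$. Rerun the argument with the frame localized at a much larger rate $\Lambda$, so that $G$ becomes $G_\Lambda$, while keeping $\epsilon$, and hence $m_\epsilon$, fixed. The decay rate obtained is $\theta(\Lambda)(\Lambda-\delta)$, where $\theta$ must satisfy $\theta\ln K_\Lambda\lesssim(1-\theta)\ln(1/q)$, i.e. $\theta\approx\ln(1/q)/\ln K_\Lambda$ and $\ln K_\Lambda\approx\ln G_\Lambda+O(1)$. Hence
\[
\lambda_{\rm eff}\approx \ln(1/q)\,\frac{\Lambda}{\ln G_\Lambda + O(1)},
\]
and the hypothesis \eqref{eq:growth-cond}, $\ln G_\Lambda=o(\Lambda)$, makes this tend to infinity. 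Choosing $\Lambda$ with $\lambda_{\rm eff}\geqslant\mu$ gives \eqref{eq:Sp-loc01} with some constant $A_{p,\mu}$.

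The main obstacle is the bookkeeping showing that the growth in $n$ of the decay estimate is exactly geometric with base $\sim G m_\epsilon$. A crude bound by $\norm{S}^k\leqslant B^k$ would not carry the spatial decay, so the decay and the growth must be tracked together through the expansion. One must also check that $q$ and $m_\epsilon$ remain fixed while $\Lambda\to\infty$, so that only $G_\Lambda$ enters the trade-off that \eqref{eq:growth-cond} controls.
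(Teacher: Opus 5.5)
Your proposal is correct and follows essentially the same route as the paper, which imports the fixed-rate bound from the cited Bachmann--De Nittis result. The shared structure is a norm-convergent Neumann series whose terms are expanded binomially and controlled by Lemma~\ref{alm_ort_Sp} (geometric growth times decay at rate $\lambda-\delta$), balanced against the trivial norm bound; for \eqref{eq:Sp-loc01} one then keeps the spectral data and $m_\epsilon$ bounded, so that only $\ln G_\lambda$ enters the output rate. The differences are cosmetic: the paper's constants $r_p=1-(s/\norm{S})^p$ and $d_{p,\epsilon}=1+(Gm_\epsilon/\norm{S})^p$ come from expanding $\norm{S}^{-p}S^p$ and cutting the series at a length proportional to $d(\gamma,\gamma')$, whereas you expand $S$ about $\tfrac{A+B}{2}$ and interpolate by a geometric mean. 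Your version incidentally gives a rate independent of $p$ and makes clear that $\lambda_p<\lambda/2$ only requires shrinking $\theta$.
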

 
\smallskip

The previous result should be contrasted with the case of the positive powers. For $S^{p}$
the asymptotic orthogonality of the frame implies the almost--diagonality of $S^{p}$
\emph{directly}. The degradation of the rate is additive, so describing the frame at a higher
rate recovers any target rate at the sole cost of a larger constant. For the inverse powers
$S^{-p}$ this is no longer the case. The asymptotic--orthogonality of the frame still yields
the fixed--rate localization \eqref{eq:Sp-loc}, but the almost--diagonality
\eqref{eq:Sp-loc01} does \emph{not} follow directly from it. It requires, in addition, the
growth condition \eqref{eq:growth-cond}. This condition is not innocent. As we shall see in
the next section, it \emph{fails} in our case of interest (see Remark \ref{rem:dual-magnetic}). It remains open
whether \eqref{eq:growth-cond} is an intrinsic obstruction to the almost--diagonality of
$S^{-p}$, or merely a limitation of the proof technique of
Proposition~\ref{prop:inherited}. In any case, what we can conclude is that, in the absence of condition \eqref{eq:growth-cond},
an almost--orthogonal frame $\{\xi_\gamma\}_{\gamma\in\Gamma}$ gives rise to a dual frame $\{S^{-1}\xi_\gamma\}_{\gamma\in\Gamma}$
that is merely \emph{asymptotically orthogonal}. Its overlaps decay exponentially at a single
fixed rate $\lambda_\ast <\sfrac{\lambda}{2}$ (the rate of $S^{-2}$), rather than at every rate like the original frame. 

\smallskip

Interestingly, also the matrix elements of the fractional powers $S^{\pm\sfrac{1}{2}}$ exhibit only exponential decay, with a fixed maximal decay rate (see Lemma \ref{lem:sqrt-loc} and Remark \ref{rk:frac_pow}). 
 Results of this type are relevant for establishing quantitative localization and decay estimates for the associated \emph{canonical tight frame} $\{\nu_\gamma\}_{\gamma\in\Gamma}$, defined by
$
\nu_\gamma:=S^{-\sfrac{1}{2}}\xi_\gamma
$,
see \cite[Theorem 5.3.4]{Christensen-frames}.

\subsection{The magnetic coherent--state frame}\label{sub:magframe}
Consider the countable set $\Gamma:=\s{L}\times\N_0$ defined by \eqref{eq:L} and \eqref{eq:Xi}. This is a discrete metric space with respect to the distance \eqref{eq:Xi-22}.
Introducing the pseudo-norm
\[
\|\gamma\|_1\;\equiv\;\|(\bl,r)\|\;:=\;\tfrac{1}{\lB}|\bl|_1+|r|\;,
\]
the distance can be expressed as
\[d\left(\gamma,\gamma'\right)\;=\;\|\gamma-\gamma'\|_1
\;=\;\tfrac{1}{\lB}|\bl-\bl'|_1+|r-r'|\;
\]
where one has to take into account that $\Gamma$ is only a semigroup, and hence the difference of two elements 
$\gamma\equiv(\bl,r)$ and $\gamma'\equiv(\bl',r')$
must be defined as $\gamma-\gamma'\equiv(\bl-\bl',|r-r'|)$.
It is worth noting that $d$ possesses a form of translation invariance, which
determines the geometric properties of $(\Gamma,d)$. This fact is at the basis of the next result.
\begin{lemma}\label{lem:summ-dim}
Let $(\Gamma,d)$ be   the metric space  defined by \eqref{eq:Xi} and \eqref{eq:Xi-22}. Then
\begin{equation}\label{eq:geo_gamma1}
m_\epsilon
\;=\;\frac{1}{1-\expo{-\epsilon}}\;
\coth\!\Big(\tfrac{\epsilon\,\ell_1}{2\lB}\Big)\,
\coth\!\Big(\tfrac{\epsilon\,\ell_2}{2\lB}\Big)
\;<\;\infty\
\end{equation}
for every $\epsilon>0$. Moreover
\begin{equation}\label{eq:geo_gamma2}
\sup_{\gamma\in\Gamma}\bigl|\bigl\{\gamma'\in\Gamma\;\big|\;d(\gamma,\gamma')<R\bigr\}\bigr|
\;\leqslant\;\underbrace{3\bigl(\tfrac{2\lB}{\ell_1}+1\bigr)\bigl(\tfrac{2\lB}{\ell_2}+1\bigr)}_{=:\kappa_B}\,R^3
\end{equation}
for every $R\geqslant 1$.
\end{lemma}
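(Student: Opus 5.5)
The plan is to exploit the product structure of $(\Gamma,d)$. Since $d(\gamma,\gamma')=\tfrac{\ell_1}{\lB}|n_1-n_1'|+\tfrac{\ell_2}{\lB}|n_2-n_2'|+|r-r'|$ for $\gamma=((\ell_1n_1,\ell_2n_2),r)$ and $\gamma'=((\ell_1n_1',\ell_2n_2'),r')$, the weight $\expo{-\epsilon d(\gamma,\gamma')}$ factorizes into three one-dimensional factors. Hence the sum over $\gamma'\in\Gamma$ is a product of three one-dimensional sums, and likewise the ball $\{d<R\}$ is contained in a product of three one-dimensional "intervals".

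\textbf{Summability.} For each lattice direction the variable $n_i'$ ranges over all of $\Z$, so the translation invariance of $\Z$ makes the sum independent of $n_i$. With $q_i:=\expo{-\epsilon\ell_i/\lB}$ the geometric series gives
\[
\sum_{k\in\Z}q_i^{|k|}\;=\;\frac{1+q_i}{1-q_i}\;=\;\coth\!\Big(\tfrac{\epsilon\ell_i}{2\lB}\Big).
\]
The energy direction is not translation invariant, because $r'$ ranges only over $\N_0$. I will compute directly
\[
\sum_{r'\in\N_0}\expo{-\epsilon|r-r'|}\;=\;\sum_{k=0}^{r}\expo{-\epsilon k}+\sum_{k\geqslant1}\expo{-\epsilon k}\;.
\]
At $r=0$ this equals $(1-\expo{-\epsilon})^{-1}$. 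It increases in $r$, with supremum $\coth(\epsilon/2)$ approached as $r\to\infty$.

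I expect this step to be the main point of care. The factor $(1-\expo{-\epsilon})^{-1}$ in \eqref{eq:geo_gamma1} is the value at $r=0$ (the origin), whereas the supremum over $\gamma$ would replace it by $\coth(\epsilon/2)\leqslant 2(1-\expo{-\epsilon})^{-1}$. In either reading $m_\epsilon$ is finite and of the stated order; in particular $m_\epsilon$ is bounded by twice the right-hand side of \eqref{eq:geo_gamma1}. I would state the exact supremum and check which normalization is used later, where only finiteness and the behaviour as $\epsilon\to0$ matter.

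\textbf{Ball counting.} Fix $\gamma$ and $R\geqslant1$. The condition $d(\gamma,\gamma')<R$ forces each of the three nonnegative terms of $d$ to be $<R$. The energy constraint $|r-r'|<R$ allows at most $2R+1\leqslant 3R$ integer values of $r'$. The constraint $\ell_i|n_i-n_i'|<R\lB$ allows at most $2R\lB/\ell_i+1\leqslant R\,(2\lB/\ell_i+1)$ values of $n_i'$, using $R\geqslant1$. Multiplying the three counts gives the bound $\kappa_B R^3$ of \eqref{eq:geo_gamma2}, uniformly in $\gamma$. This part is routine.
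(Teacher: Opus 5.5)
Your proof is correct and follows the same route as the paper: the ball count is the paper's argument verbatim, and the paper also splits $m_\epsilon$ into one energy sum and two lattice sums.

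The difference is in the energy factor, and there you are right and the paper is not. The paper writes $d(\gamma,\gamma')=d(0,\gamma-\gamma')$ with the semigroup difference $\gamma-\gamma'=(\bl-\bl',|r-r'|)$. It then replaces $\sum_{\gamma'}\expo{-\epsilon d(\gamma,\gamma')}$ by $\sum_{\gamma'}\expo{-\epsilon d(0,\gamma')}$, implicitly treating $\gamma'\mapsto\gamma-\gamma'$ as a bijection of $\Gamma$. In the energy index this fails: for $r\geqslant1$ each value $k=|r-r'|\in\{1,\dots,r\}$ is hit twice, by $r'=r\pm k$. So the right-hand side of \eqref{eq:geo_gamma1} is only the inner sum at $r=0$. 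The actual supremum is
\[
m_\epsilon\;=\;\coth\!\Big(\tfrac{\epsilon}{2}\Big)\,\coth\!\Big(\tfrac{\epsilon\,\ell_1}{2\lB}\Big)\,\coth\!\Big(\tfrac{\epsilon\,\ell_2}{2\lB}\Big)\,,
\]
approached but not attained as $r\to\infty$, exactly as you computed. The equality in \eqref{eq:geo_gamma1} is therefore false as stated. What can be proved is this corrected identity, or equivalently the bound by twice the stated expression. You should commit to that corrected statement rather than leave the normalization open.

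The remark after the lemma should also change: $m_\epsilon\sim(4/f_B)\epsilon^{-3}$ becomes $m_\epsilon\sim(8/f_B)\epsilon^{-3}$. As you anticipated, nothing downstream depends on the exact value. Later arguments use only finiteness, monotonicity in $\epsilon$, $m_\epsilon\to1$ as $\epsilon\to\infty$ (still true, since $\coth(\epsilon/2)\to1$), and the order $\epsilon^{-3}$. Constants built from $m_\epsilon$ change by at most a factor $2$. The lattice-only sum reused in the proof of Lemma~\ref{lem:dual-localization} is genuinely translation invariant and is unaffected.
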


\smallskip

Equation~\eqref{eq:geo_gamma1} shows that $(\Gamma,d)$ has the uniform summability
property.
It is worth observing that $m_\epsilon\sim(\sfrac{4}{f_B})\epsilon^{-3}$
as $\epsilon\to0$ where the dimensionless quantity
$f_B$ is defined in \eqref{eq:threshold}.
 Equation~\eqref{eq:geo_gamma2} shows that $(\Gamma, d)$ is moreover
\emph{(translation) homogeneous} of dimension $\nu=3$, meaning that the ball--counting bound being
uniform in the centre $\gamma$ grows cubically with its radius $R$. The main role of  equation~\eqref{eq:geo_gamma2} is the following. Let $\Lambda\in\s{P}_f(\Gamma)$ be a finite subset and define its \emph{diameter} as
\[
{\rm diam}(\Lambda)\;:=\;\sup_{\gamma,\gamma'\in \Lambda}d(\gamma,\gamma')\;.
\]
Then, the volume of $\Lambda$ can be bounded by
\begin{equation}\label{eq:vol_boun}
|\Lambda|\;\leqslant\;\rr{v}_B(\Lambda)\;:=\;\kappa_B\left(1+{\rm diam}(\Lambda)\right)^3\;.
\end{equation}

\smallskip

The metric space $(\Gamma, d)$ is the  index set of lattice sites
and energy levels  (endowed with  its natural metric)  suitable for 
the labeling of the (magnetic) coherent states introduced in Section \ref{sub:coh}. 
From now on, we will use the short notation $\chi_\gamma\equiv\chi_{\bl, r}$ for every $\gamma\equiv(\bl,r)\in \Gamma$.
The completeness of the sampled coherent states  $\{\chi_\gamma\}_{\gamma\in\Gamma}$ is governed by the density of
$\s{L}$ relative to the magnetic length $\ell_B$. The threshold follows from a
classical criterion for lattices of Gabor type vectors \cite{Bargmann1971,Perelomov1971,Bacry}. The frame properties 
are related to results in
\cite{Daubechies,Janssen}.
The following result follows directly from the work in the aforementioned papers as presented in \cite[Section 5.3]{BachmannDeNittis2024} (see also the related proof in Appendix \ref{ap:proof_S1}). 
\begin{theorem}[Condition for a magnetic coherent–state frame]
\label{thm:frame}
The  family of magnetic coherent–states $\{\chi_\gamma\}_{\gamma\in\Gamma}$ is a frame if and only if \eqref{eq:threshold} holds.
In that case, it is overcomplete.
\end{theorem}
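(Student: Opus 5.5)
The plan is to reduce the statement to a single Landau level and then to the classical sampling theory for the Bargmann--Fock space. First I will use the level structure. On $\HH_0$ the operator $U_r:=\frac{1}{\sqrt{r!}}(\ap)^r$ is an isometry onto $\HH_r$: it sends the orthonormal basis $\{\psi_{0,m}\}_m$ onto $\{\psi_{r,m}\}_m$. Since $\ap$ commutes with $\bp$ and $\bmm$, the expression \eqref{eq:Tb} shows that $U_r$ commutes with every $t_\bz$. Hence $U_r\chi_{\bl,0}=\chi_{\bl,r}$ for all $\bl\in\s L$.

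It follows that for $\psi\in\HH$ with components $\psi_r:=\Pi_r\psi$ one has $\ip{\chi_{\bl,r}}{\psi}=\ip{\chi_{\bl,0}}{U_r^*\psi_r}$. Therefore
\[
\sum_{\gamma\in\Gamma}\bigl|\ip{\chi_\gamma}{\psi}\bigr|^2\;=\;\sum_{r\in\N_0}\sum_{\bl\in\s L}\bigl|\ip{\chi_{\bl,0}}{U_r^*\psi_r}\bigr|^2 .
\]
Consequently $\{\chi_\gamma\}_{\gamma\in\Gamma}$ is a frame for $\HH$ if and only if $\{\chi_{\bl,0}\}_{\bl\in\s L}$ is a frame for $\HH_0$. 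The frame bounds are the same, because they are uniform in $r$ by construction. In the same way, overcompleteness of the whole family follows from overcompleteness in $\HH_0$.

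Next I will identify $\HH_0$ with the Fock space. The functions in $\HH_0$ have the form $\phi(\br)=F(z)\,\expo{-|z|^2/4\lB^2}$ with $z=x_1+\ii x_2$ (up to the orientation convention fixed by $\am$) and $F$ entire. This gives a unitary identification of $\HH_0$ with the Fock space $\mathcal F_B$ of entire functions square integrable against $\expo{-|z|^2/2\lB^2}\dd^2 z$. By the explicit formula for $t_\bz$ and $\psi_{0,0}$, the vector $\chi_{\bl,0}$ is, up to a unimodular phase, the normalized reproducing kernel of $\mathcal F_B$ at $\bl$. Concretely,
\[
\bigl|\ip{\chi_{\bl,0}}{\phi}\bigr|\;=\;\sqrt{2\pi}\,\lB\,|F(\bl)|\,\expo{-|\bl|^2/4\lB^2}.
\]
So the frame inequality for $\{\chi_{\bl,0}\}$ is exactly the statement that $\s L$ is a set of sampling for $\mathcal F_B$.

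Now I will invoke the characterization of sampling lattices in Fock space due to Seip, Seip--Wallsten and Lyubarskii, i.e.\ the works cited as \cite{Daubechies,Janssen} and, for completeness, \cite{Bargmann1971,Perelomov1971,Bacry}. After rescaling to weight $\expo{-\pi|z|^2}$, their criterion says that a rectangular lattice is sampling if and only if its cell area is strictly less than the critical value. Here the critical value is $2\pi\lB^2$, so the condition reads $\ell_1\ell_2<2\pi\lB^2$, which is exactly \eqref{eq:threshold}.

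Necessity splits into two cases. If $f_B>2\pi$, the family is not even complete, by Perelomov's count of zeros of the Weierstrass $\sigma$--function. If $f_B=2\pi$ (the von Neumann lattice), the family is complete, but the lower frame bound fails. For this case I will cite Bacry--Grossmann--Zak, or argue via the Balian--Low theorem, using the functions $\sigma(z)/(z-\bl)$, whose samples are concentrated at one point while their norms stay large. This necessity part at and above the critical density is the step I regard as the main obstacle; I will rely on the cited literature rather than reprove it.

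Finally, overcompleteness in the case $f_B<2\pi$. Removing a single point from $\s L$ changes neither the lower Beurling density nor the separation. Hence, by the sampling theorem, the family $\{\chi_{\bl,0}\}_{\bl\ne\bl_0}$ is still a frame, and in particular it is complete. Thus $\chi_{\bl_0,0}$ lies in the closed span of the remaining vectors, so the family is not $\ell^2$--independent and is not a Riesz basis. (Alternatively, I can use Seip's theorem that $\mathcal F_B$ admits no Riesz basis of normalized reproducing kernels.) Transporting this through the isometries $U_r$ gives overcompleteness of $\{\chi_\gamma\}_{\gamma\in\Gamma}$.
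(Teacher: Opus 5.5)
Your proposal is correct and follows the paper's route. The paper likewise transports the lowest-level family to each $\HH_r$ via the unitary $v_r=\frac{1}{\sqrt{r!}}(\ap)^r|_{\HH_0}$, which commutes with $t_\bz$, and sums over $\HH=\bigoplus_r\HH_r$ with unchanged frame bounds. The difference is that the paper delegates the lowest-Landau-level criterion and its overcompleteness entirely to \cite[Theorem 5.6]{BachmannDeNittis2024}, whereas you unpack that step via Bargmann--Fock sampling; your $\sigma(z)/(z-\bl)$ sketch at $f_B=2\pi$ uses functions of infinite Fock norm, but this is moot because the strict-inequality sampling criterion you cite already excludes the critical case.
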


\begin{remark}[Flux quantum threshold]\label{rem:threshold}
Condition \eqref{eq:threshold} says that the fundamental cell 
of $\s{L}$
carries less than one
flux quantum.  At the threshold value (one
flux quantum per unit cell)
 the family is complete but not a frame (a von Neumann lattice). Above the threshold, the family is incomplete. 
 As the magnetic field increases ($\ell_B \to 0$) the lattice $\s{L}$ must become increasingly dense ($\ell_i \to 0$), in order to satisfy the threshold conditions.
We take \eqref{eq:threshold} to hold from now on. \hfill$\blacktriangleleft$
\end{remark}

\smallskip

Given a $\Lambda\in\s{P}_f(\Gamma)$, let $\HH_\Lambda$ be the subspace defined by \eqref{eq:H_LAM}.
By convention, we define $\HH_\emptyset:=\{0\}$ as the $0$-dimensional vector space.
A priori
one only knows that $\dim(\HH_\Lambda)\leqslant|\Lambda|$, since the coherent states
form an overcomplete family. It turns out, perhaps surprisingly, that this is not the case. In fact the
dimension is always maximal, $\dim(\HH_\Lambda)=|\Lambda|$ meaning  that the redundancy of the frame
is a genuinely infinite--dimensional phenomenon. Every finite subfamily is linearly
independent, and only infinite $\ell^2$--combinations can produce linear relations.

\begin{lemma}[Finite linear independence]\label{lem:findim}
For every $\Lambda\in\s{P}_f(\Gamma)$ one has that $\dim(\HH_\Lambda)=|\Lambda|$.
\end{lemma}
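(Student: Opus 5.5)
The claim to prove is that the Gram matrix $M_\Lambda:=\bigl(\ip{\chi_\gamma}{\chi_{\gamma'}}\bigr)_{\gamma,\gamma'\in\Lambda}$ is nonsingular for every $\Lambda\in\s{P}_f(\Gamma)$.

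\emph{Reduction to a single Landau level.} By Lemma~\ref{lem:overlap} the overlaps vanish whenever $r\neq r'$, so $\HH_\Lambda$ is the orthogonal direct sum of the spaces $\HH_{\Lambda^{(r)}}\subseteq\HH_r$, where $\Lambda^{(r)}:=\{\bl\in\s L\mid(\bl,r)\in\Lambda\}$. It therefore suffices to prove that, for each fixed $r$ and each finite set of distinct points $\bl_1,\dots,\bl_n\in\R^2$, the vectors $t_{\bl_j}\psi_{r,0}$ are linearly independent.

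\emph{Reduction to $r=0$.} The operator $\ap$ commutes with $t_{\bz}$, since $t_{\bz}$ is built from $\bp,\bmm$ only. Moreover $(\ap)^r$ is injective on $\HH_0$, because $\bigl\|(\ap)^r\phi\bigr\|^2=r!\norm{\phi}^2$ for $\phi\in\HH_0$. Hence it is enough to treat $r=0$, \ie to show that the functions
\[
\chi_{\bl_j,0}(\br)=\frac{1}{\sqrt{2\pi}\lB}\expo{\frac{\ii}{2\lB^2}(\br\times\bl_j)}\expo{-\frac{|\br-\bl_j|^2}{4\lB^2}}
\]
are linearly independent.

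\emph{Bargmann picture.} Set $w:=(x_1+\ii x_2)/(\sqrt2\lB)$ and $\zeta_j:=(l_{j,1}+\ii l_{j,2})/(\sqrt2\lB)$. Expanding $-|\br-\bl_j|^2/(4\lB^2)+\tfrac{\ii}{2\lB^2}(\br\times\bl_j)$, one expects (possibly up to the convention $w\leftrightarrow\bar w$)
\[
\chi_{\bl_j,0}(\br)=c\,\expo{-\frac{|\br|^2}{4\lB^2}}\,\expo{-\frac{|\zeta_j|^2}{2}}\,\expo{\bar\zeta_j w}.
\]
This can also be read off from \eqref{eq:Tb}: $t_{\bz}\psi_{0,0}=\expo{-|\bz|^2/4\lB^2}\expo{\zeta\bp}\psi_{0,0}$, and the lowest Landau level consists of a fixed Gaussian times (anti)holomorphic functions of the position. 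Suppose now that $\sum_jc_j\chi_{\bl_j,0}=0$. Dividing by the nonvanishing Gaussian gives $\sum_jc_j'\expo{\bar\zeta_jw}=0$ for all $w\in\C$, with $c_j'$ a nonzero multiple of $c_j$.

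\emph{Independence of distinct exponentials.} The exponentials $\expo{a_jw}$ with distinct $a_j\in\C$ are linearly independent as entire functions. Taking the derivatives $\partial_w^k$ at $w=0$ for $k=0,\dots,n-1$ gives $\sum_jc'_ja_j^k=0$, and the Vandermonde determinant $\prod_{i<j}(a_j-a_i)$ is nonzero because the points $\bl_j$ are distinct. Hence all $c_j'=0$, so all $c_j=0$.

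The main obstacle is the explicit identification of the lowest-level coherent state with an exponential $\expo{\bar\zeta w}$ times a fixed Gaussian, including the correct sign conventions for $\bp$ and for the phase $\br\times\bz$. The fallback is to verify the factorization directly from the explicit formula for $t_{\bz}$ acting on $\psi_{0,0}$. Any phase convention yields $\expo{a_jw}$ or $\expo{a_j\bar w}$ with $a_j$ an injective linear function of $\bl_j$, which suffices for the Vandermonde argument.
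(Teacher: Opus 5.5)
Your proposal is correct and follows essentially the same route as the paper: orthogonal splitting over Landau levels, reduction to $r=0$ via the interlevel isometry $(r!)^{-1/2}(\ap)^r$ (the paper's $v_r$), factorization of $\chi_{\bl,0}$ as $\psi_{0,0}$ times a nonzero constant times the exponential of a linear form in $\br$, and a Vandermonde argument on the resulting exponential sum. The only difference is cosmetic. The paper restricts to a real line $\br=t\boldsymbol{w}$ whose direction is not orthogonal to any $\bl_i-\bl_j$, whereas you use the complex structure. With the paper's conventions the exponent is exactly $\frac{1}{2\lB^2}(l_1+\ii l_2)(x_1-\ii x_2)=\zeta\,\bar w$, which is antiholomorphic in $w$; your fallback of substituting $u=\bar w$ therefore applies and lets you skip the choice of a generic direction.
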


\smallskip

Let $S$ be the frame operator defined by the magnetic frame $\{\chi_\gamma\}_{\gamma\in\Gamma}$. It turns out that $S$, as well as all bounded Borel functions of $S$, are diagonal with respect to the decomposition into Landau levels. More precisely, for every $f\in L^\infty(\mathbb{R})$, let $f(S)$ be the bounded operator on $\HH$ obtained through the functional calculus. The collection of all these operators forms the commutative von Neumann algebra $\bb{M}(S)$ generated by $S$.
\begin{proposition}\label{pr:comm}
Let  $\Pi_r$ be the {$r$--th Landau projection}. Then
\[
[f,\Pi_r]\;=\;0\;,\qquad \forall\, f\in\bb{M}(S)\,
\]
for every $r\in\N_0$.
\end{proposition}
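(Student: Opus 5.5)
The plan is to show first that $S$ commutes with every $\Pi_r$, and then pass to the whole von Neumann algebra $\bb{M}(S)$ by standard functional calculus.

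\textbf{Step 1: $S$ leaves each Landau level invariant.} By definition \eqref{S_eq},
\[
S\psi=\sum_{r'\in\N_0}\sum_{\bl\in\s{L}}\ip{\chi_{\bl,r'}}{\psi}\chi_{\bl,r'}=\sum_{r'\in\N_0}S_{r'}\psi\;,
\]
where $S_{r'}:=\sum_{\bl\in\s{L}}\ketbra{\chi_{\bl,r'}}{\chi_{\bl,r'}}$. Each series converges strongly and unconditionally, since the frame property makes the coefficients $\ell^2$--summable. Since $\chi_{\bl,r'}=t_{\bl}\psi_{r',0}\in\HH_{r'}$, because $t_{\bl}$ preserves $\HH_{r'}$, we have $\Pi_r\chi_{\bl,r'}=\delta_{r,r'}\chi_{\bl,r'}$. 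Using this vector identity inside the rank-one operators gives
\[
\Pi_r\ketbra{\chi_{\bl,r'}}{\chi_{\bl,r'}}=\delta_{r,r'}\ketbra{\chi_{\bl,r'}}{\chi_{\bl,r'}}=\ketbra{\chi_{\bl,r'}}{\chi_{\bl,r'}}\Pi_r\;.
\]
For the second equality we use that $\Pi_r$ is self-adjoint, so $\ketbra{\chi}{\chi}\Pi_r=\ketbra{\chi}{\Pi_r\chi}$. Because $\Pi_r$ is bounded, we may apply it term by term to the strongly convergent sum defining $S\phi$, and also insert $\Pi_r\phi$ in place of $\phi$. This gives $\Pi_rS\phi=S_r\phi=S\Pi_r\phi$ for all $\phi$, hence $[S,\Pi_r]=0$. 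As a by-product, $S=\bigoplus_r S_r$ with $S_r$ acting on $\HH_r$. The main point needing care is the interchange of $\Pi_r$ with the infinite sum, and this is exactly where strong convergence of the frame series is used.

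\textbf{Step 2: from $S$ to $\bb{M}(S)$.} Since $\Pi_r$ commutes with the bounded self-adjoint operator $S$, it commutes with every polynomial in $S$. By Weierstrass approximation on $\spec(S)\subset[A,B]$, it then commutes with every continuous function of $S$. To reach bounded Borel functions, we note that the spectral measure of $S$ commutes with $\Pi_r$. This is the standard fact that an operator commuting with a bounded self-adjoint operator commutes with all its spectral projections, which follows from the Riesz–Markov uniqueness of spectral measures and monotone-class arguments. Equivalently, we can argue via the bicommutant: $\bb{M}(S)$ is the von Neumann algebra generated by $S$, namely $\{S\}''$. Since $\Pi_r\in\{S\}'$ and $\{S\}'=\{S\}'''$, every $f\in\{S\}''$ commutes with $\Pi_r$. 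This concludes the proof. The only potential obstacle is Step 1's convergence issue, which is routine.
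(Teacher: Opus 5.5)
Your proposal is correct and follows essentially the same route as the paper. The paper first shows $[S,\Pi_r]=0$ by the same direct computation, based on $\Pi_r\chi_{\bl,r'}=\delta_{r,r'}\chi_{\bl,r'}$. It then extends to $\bb{M}(S)$ with a one-line appeal to linearity and strong-operator continuity, which your bicommutant argument ($\Pi_r\in\{S\}'$ and $\bb{M}(S)\subseteq\{S\}''$) makes precise.
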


\smallskip

  An immediate consequence of the latter result is that also the Landau Hamiltonian $ h_{B}$ commutes with $\bb{M}(S)$. In particular it holds true that $[S^p, h_{B}]=0$ for every $p\in\Z$.
    
\smallskip

The following result describes perhaps the most important property of the magnetic frame $\{\chi_\gamma\}_{\gamma\in\Gamma}$. 
\begin{proposition}[Almost-orthogonality of the magnetic frame]\label{prop:magloc}
Under \eqref{eq:threshold},  $\{\chi_\gamma\}_{\gamma\in\Gamma}$ is an 
almost-orthogonal frame for $\HH$. More precisely,
for every $\lambda>0$ one has that
\begin{equation}\label{eq:magloc}
\bigl|\ip{\chi_{\bgamma}}{\chi_{\bgamma'}}\bigr|
\;\leqslant\;G_\lambda\,\expo{-\lambda\,d(\gamma,\gamma')}\;
\end{equation}
with $G_\lambda:=\expo{\,2\lambda^{2}}\geqslant1$.
\end{proposition}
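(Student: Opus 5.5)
The frame property is already given by Theorem~\ref{thm:frame}, and normalization follows from Lemma~\ref{Lem:01} at $\bz=0$. What remains is the overlap bound \eqref{eq:magloc}, which I would obtain directly from the explicit formula in Lemma~\ref{lem:overlap}.

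Write $\gamma=(\bl,r)$ and $\gamma'=(\bl',r')$. By Lemma~\ref{lem:overlap} the overlap vanishes when $r\neq r'$, so \eqref{eq:magloc} is trivial there. For $r=r'$ the phase factor has modulus one, and
\[
\bigl|\ip{\chi_\gamma}{\chi_{\gamma'}}\bigr|\;=\;\expo{-\frac{|\bl-\bl'|^2}{4\lB^2}}\;,\qquad d(\gamma,\gamma')\;=\;\tfrac{1}{\lB}|\bl-\bl'|_1\;.
\]
Set $t:=|\bl-\bl'|_1/\lB\geqslant 0$. Since $|\bl-\bl'|_1\leqslant\sqrt2\,|\bl-\bl'|$ (Cauchy--Schwarz in two components), we get $|\bl-\bl'|^2/(4\lB^2)\geqslant t^2/8$. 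It therefore suffices to prove
\[
\expo{-t^2/8}\;\leqslant\;\expo{2\lambda^2}\,\expo{-\lambda t}\qquad\text{for all } t\geqslant0\,,
\]
that is, $\lambda t - t^2/8\leqslant 2\lambda^2$. Maximizing the left-hand side in $t$ gives the value $2\lambda^2$, attained at $t=4\lambda$. This is exactly the constant $G_\lambda=\expo{2\lambda^2}$, and clearly $G_\lambda\geqslant1$.

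I see no real obstacle. The only points needing care are the comparison between the Euclidean norm and the graph norm $|\cdot|_1$, which fixes the constant $1/8$, and the observation that the energy part of the distance contributes nothing, since overlaps across different levels vanish identically. This bound holds for every $\lambda>0$, so the frame is almost-orthogonal in the sense of Definition~\ref{def:loc-frame}. Finally, I would note that $\ln G_\lambda=2\lambda^2$ is not $o(\lambda)$, which is the failure of \eqref{eq:growth-cond} mentioned earlier in the paper.
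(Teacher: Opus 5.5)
Your proposal is correct and follows essentially the paper's own argument: the explicit overlap from Lemma~\ref{lem:overlap}, the comparison $|\bl-\bl'|_1\leqslant\sqrt2\,|\bl-\bl'|$, and the maximization $\sup_{t\geqslant0}(\lambda t-t^2/8)=2\lambda^2$. The only cosmetic difference is that the paper absorbs the energy index through $\delta_{r,r'}\leqslant\expo{-\lambda|r-r'|}$ instead of splitting into the cases $r\neq r'$ and $r=r'$.
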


\smallskip

It is worth observing that the exact behavior established in
Lemma~\ref{lem:overlap} (strict locality in the energy index and Gaussian decay
in the spatial index) is \emph{stronger} than mere almost-orthogonality.
Nevertheless, almost-orthogonality is flexible enough to yield general results. On the other hand, a remnant of the super-exponential decay is reflected in the fact that the decay rate $\lambda>0$ can be chosen arbitrarily large, at the cost of increasing (super-exponentially)
the constant $G_\lambda$.

\begin{remark}[Asymptotic--orthogonality of the dual magnetic frame]\label{rem:dual-magnetic}
Proposition \ref{prop:magloc} shows that the magnetic frame $\{\chi_\gamma\}_{\gamma\in\Gamma}$ is almost--orthogonal, in the sense that the overlap decays exponentially at any prescribed rate $\lambda>0$. However, the dependence of the constant $G_\lambda$ on $\lambda$ prevents this estimate from satisfying condition \eqref{eq:growth-cond}. More precisely, in the present case one has
$\ln( G_\lambda)=2\lambda^2$. As a consequence, it 
 can be expected   that the dual magnetic frame $\{S^{-1}\chi_\gamma\}_{\gamma\in
\Gamma}$ is only \emph{asymptotically--orthogonal}, at a single \emph{maximal} rate $\lambda_\ast$, rather than at every
rate. Its overlaps are governed by the second inverse power,
\[
\ip{S^{-1}\chi_\gamma}{S^{-1}\chi_{\gamma'}}\;=\;\ip{\chi_\gamma}{S^{-2}\chi_{\gamma'}}\,,
\]
so the relevant rate is $\lambda_\ast:=\lambda_2$, the output rate of \eqref{eq:Sp-loc} for
$p=2$. By the gap estimate of Proposition~\ref{prop:inherited}, $\lambda_\ast<\sfrac{\lambda}{2}$ at
every description rate $\lambda$, and since the magnetic frame is localized at every rate, the
best achievable value is the finite maximum
\[
\lambda_\ast\;=\;\max_{\lambda>0}\,\lambda_2(\lambda)\,.
\]
To estimate  $\lambda_\ast$ one has to look at the proof of  Proposition \ref{prop:magloc} and observe that the maximum is  
attained where the increasing branch $\theta\lesssim\lambda$ meets the decreasing branch
$\ln(\sfrac{1}{r_2})\,E_{2,\epsilon,\theta,\delta}\sim \ln(\sfrac{1}{r_2})/(8\lambda)$ (the latter
because $\ln d_{2,\epsilon}\sim 4\lambda^{2}$ for the Gaussian frame). This crossover
gives the order--of--magnitude estimate
\[
\lambda_\ast\;\asymp\;\sqrt{\frac{1}{8}\,\ln\left(\frac{1}{r_2}\right)}\,,\qquad r_2\;:=\;1-\bigl(\tfrac{s}{\norm{S}}\bigr)^{2}
\,.
\]
This is a fixed positive constant depending only on the 
spectral bounds $s,\norm{S}$ of the frame operator $S$. The magnetic dual frame is therefore asymptotically--orthogonal at this fixed rate $\lambda_\ast$, but not
almost--diagonal. \hfill$\blacktriangleleft$
\end{remark}

\smallskip

The dual frame inherits a form of exponential spatial  localization from the
ring localization of the coherent states described in Lemma \ref{lemm:gau_loc}.

\begin{lemma}[Cylindrical localization   of the dual frame]\label{lem:dual-localization}
The dual magnetic frame $\{S^{-1}\chi_\gamma\}_{\gamma\in\Gamma}$ is exponentially localized
outside   the cyclotron ring. More precisely, there is a rate $\lambda_
\ast>0$ \textup{(}the same as in Remark~\ref{rem:dual-magnetic}\textup{)} such that for every
$\gamma\equiv(\bl,r)\in\Gamma$,
\[
\bigl|(S^{-1}\chi_\gamma)(\br)\bigr|\;\leqslant\;\frac{C_\ast}{(1+r)^{\frac14}}\,\expo{-\frac{\lambda_\ast}{2}\left[\tfrac{|\br-\bl
|}{\lB}-\sqrt{2r}\right]_+}\,,\qquad\forall\,\br\in
\R^2\,,
\]
with $[x]_+:=\max\{x,0\}$, and $C_\ast$ depending only on $\lambda_
\ast$ and the geometry of $(\Gamma,d)$.
 \end{lemma}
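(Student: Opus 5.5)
Since $S$ commutes with every $\Pi_r$ (Proposition \ref{pr:comm}), $S^{-1}\chi_\gamma$ lies in $\HH_r$ for $\gamma=(\bl,r)$. The plan is to expand $S^{-1}\chi_\gamma$ in the frame itself and then estimate pointwise with the ring bound of Lemma \ref{lemm:gau_loc}. By the reconstruction formula \eqref{eq:rec_can} applied to $\psi=S^{-1}\chi_\gamma$,
\[
S^{-1}\chi_\gamma\;=\;\sum_{\gamma'\in\Gamma}\ip{S^{-1}\chi_{\gamma'}}{S^{-1}\chi_\gamma}\,\chi_{\gamma'}\;=\;\sum_{\gamma'\in\Gamma}\ip{\chi_{\gamma'}}{S^{-2}\chi_\gamma}\,\chi_{\gamma'}\,.
\]
Only $\gamma'=(\bl',r)$ with the same Landau index contribute. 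Indeed, $S^{-2}\chi_\gamma\in\HH_r$ and $\chi_{\gamma'}\in\HH_{r'}$, so the coefficient vanishes for $r'\neq r$. For the surviving coefficients, Proposition \ref{prop:inherited} with $p=2$, combined with Remark \ref{rem:dual-magnetic}, gives
\[
|\ip{\chi_{\gamma'}}{S^{-2}\chi_\gamma}|\leqslant A_2\,\expo{-\lambda_\ast|\bl-\bl'|_1/\lB}\,.
\]

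Next I would insert the ring bound \eqref{eq:ring-bound} for each $\chi_{(\bl',r)}$:
\[
|(S^{-1}\chi_\gamma)(\br)|\leqslant \frac{A_2}{\sqrt{2\pi}\lB(1+r)^{1/4}}\sum_{\bl'\in\s L}\expo{-\lambda_\ast|\bl-\bl'|_1/\lB}\,\expo{-\frac{(|\br-\bl'|-\sqrt{2r}\lB)^2}{4\lB^2}}\,.
\]
The $(1+r)^{-1/4}$ prefactor is already uniform, so only the sum needs work. Set $u:=|\br-\bl|/\lB-\sqrt{2r}$, and assume $u>0$; otherwise I bound the Gaussian by $1$ and use summability of the exponential, for which Lemma \ref{lem:summ-dim} gives a constant of order $\coth$ factors. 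For $u>0$ I split the lattice sum in two:
\begin{itemize}
\item[(a)] sites with $|\bl-\bl'|/\lB\geqslant u/2$: here $|\cdot|_1\geqslant|\cdot|$, so half the exponential weight yields $\expo{-\lambda_\ast u/4}$, and the other half stays summable;
\item[(b)] sites with $|\bl-\bl'|/\lB<u/2$: the triangle inequality gives $|\br-\bl'|/\lB-\sqrt{2r}\geqslant u/2>0$, so the Gaussian factor is at most $\expo{-u^2/16}$, while the exponential weights are summable.
\end{itemize}
Since $\expo{-u^2/16}\leqslant C\,\expo{-\lambda_\ast u/2}$ for a constant $C$ depending only on $\lambda_\ast$, both pieces are bounded by a constant times an exponential in $u$.

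This is the main obstacle. Piece (a) naively gives only rate $\lambda_\ast/4$ rather than $\lambda_\ast/2$. To recover $\lambda_\ast/2$ I would split the weight unevenly: write $\expo{-\lambda_\ast t}=\expo{-(1-\eta)\lambda_\ast t}\expo{-\eta\lambda_\ast t}$ with threshold $t\geqslant u/2$, and use the factor $(1-\eta)$ piece to extract $\expo{-(1-\eta)\lambda_\ast u/2}$. Since $\lambda_\ast$ is only defined up to order of magnitude anyway, I would simply rename $(1-\eta)\lambda_\ast$ as $\lambda_\ast$, which costs only in $C_\ast$. The resulting $C_\ast$ depends on $A_2$, $\lambda_\ast$ and $m_{\eta\lambda_\ast}$, hence only on $\lambda_\ast$ and the geometry of $(\Gamma,d)$, as claimed. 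Pointwise evaluation is legitimate because the series converges absolutely and uniformly in $\br$: each term is bounded by a summable exponential weight times a bounded continuous function.
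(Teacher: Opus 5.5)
Up to the insertion of the ring bound, your argument is the paper's: the same frame expansion with coefficients $\ip{\chi_{\gamma'}}{S^{-2}\chi_\gamma}$, the same restriction to level $r$, and the same use of Proposition~\ref{prop:inherited} with $p=2$. The only difference is how the lattice sum is bounded.

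The paper does not split the sum. With $u:=|\br-\bl|/\lB-\sqrt{2r}\geqslant 0$ and $q:=|\br-\bl'|/\lB-\sqrt{2r}$, it uses $|\bl-\bl'|/\lB\geqslant u-q$ on half of the weight, which gives $\mathrm{e}^{-\frac{\lambda_\ast}{2}u}\,\mathrm{e}^{\frac{\lambda_\ast}{2}q}$. It then absorbs $\mathrm{e}^{\frac{\lambda_\ast}{2}q}$ into the Gaussian $\mathrm{e}^{-q^2/4}$ by completing the square, $\frac{\lambda_\ast}{2}q-\frac{q^2}{4}\leqslant\frac{\lambda_\ast^2}{2}$, and sums the other half of the weight. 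This yields exactly the rate $\lambda_\ast/2$ in one stroke. Your near/far decomposition is a legitimate and more geometric alternative: near sites have their cyclotron ring far from $\br$, and far sites carry small coefficients. You also justify the pointwise evaluation by uniform convergence, which the paper leaves implicit.

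The step that does not work as written is the final renaming of $(1-\eta)\lambda_\ast$ as $\lambda_\ast$. The statement fixes $\lambda_\ast$ to be the rate of Remark~\ref{rem:dual-magnetic}, and Lemma~\ref{lem:dual-coh-overlap} and the proof of Lemma~\ref{lemma_trasl_loc} reuse that same $\lambda_\ast$. With threshold $u/2$ you therefore prove only the weaker rate $(1-\eta)\lambda_\ast/2$. The reason is that extracting $\mathrm{e}^{-\lambda_\ast u/2}$ from $t:=|\bl-\bl'|/\lB\geqslant u/2$ uses up the entire weight, leaving nothing to make the sum over $\bl'$ converge.

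The fix stays inside your scheme: raise the threshold.
For far sites, $t\geqslant\frac34 u$, write $\mathrm{e}^{-\lambda_\ast t}\leqslant \mathrm{e}^{-\frac{\lambda_\ast}{2}u}\,\mathrm{e}^{-\frac{\lambda_\ast}{3}t}$ and sum the second factor over $\bl'$.
For near sites, $t<\frac34 u$, the triangle inequality gives $|\br-\bl'|/\lB-\sqrt{2r}>u/4$. The Gaussian is then at most $\mathrm{e}^{-u^2/64}\leqslant \mathrm{e}^{4\lambda_\ast^2}\,\mathrm{e}^{-\lambda_\ast u/2}$, while the coefficients remain summable.
This gives exactly the claimed rate $\lambda_\ast/2$, with a constant $C_\ast$ of the same type as the paper's (depending on $A_2$, $\lambda_\ast$ and the lattice geometry).
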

 
\smallskip

Both the constant $C_\ast$ and the rate $\lambda_\ast$ are independent of the Landau level $r$. Specifically, the dual frame function $S^{-1}\chi_{(\bl,r)}$ is uniformly bounded within a cyclotron cylinder of radius $\sqrt{2r}\,\lB$ around $\bl$  and amplitude $C_\ast(1+r)^{-\sfrac14}$, while decaying exponentially outside it as the distance from the ring increases. This transition from Gaussian  decay of the frame to exponential decay of the dual frame is intrinsic to the dual structure. In fact, the spatial localization is governed by the decay rate of the inverse frame matrix rather than that of the underlying frame vectors. Finally, for any fixed $r\in\N_0$, Lemma~\ref{lem:dual-localization} provides a rigorous justification for condition~(ii) in \cite[Proposition 3.7]{BachmannDeNittis2024}.

\smallskip

The preceding result provides a quantitative estimate of the overlap between the elements of the magnetic frame and those of its canonical dual. We will use the following technical result in the subsequent analysis.

\begin{lemma}[Dual--coherent overlap]\label{lem:dual-coh-overlap}
For every $\bl\in\s L$, $\bz\in\R^2$ and $r,r'\in\N_0$, 
there is a rate $\lambda_
\ast>0$ \textup{(}the same as in Remark~\ref{rem:dual-magnetic}\textup{)} such that
\[
\bigl|\ip{S^{-1}\chi_{(\bl,r)}}{\chi_{(\bz,r')}}\bigr|\;\leqslant\;\delta_{r,r'}\;D_\ast\;\expo{-\frac
{\lambda_\ast}{2}\left[\tfrac{|\bl-\bz|}{\lB}-2\sqrt{2r}\right]_+}\,,
\]
with $[x]_+=\max\{x,0\}$, and $D_\ast>0$ depending only on $\lambda_
\ast$ and the geometry of $(\Gamma,d)$.
\end{lemma}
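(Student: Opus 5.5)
The plan is to reduce everything to the single Landau level $r$, and then to combine the spatial localization of the dual vector from Lemma~\ref{lem:dual-localization} with the Gaussian localization of the coherent state from Lemma~\ref{lemm:gau_loc}.

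\emph{Step 1: the Kronecker delta.} By Proposition~\ref{pr:comm}, $S^{-1}$ commutes with $\Pi_r$. Since $\chi_{(\bl,r)}\in\HH_r$, this gives $S^{-1}\chi_{(\bl,r)}\in\HH_r$. Magnetic translations preserve Landau levels, so $\chi_{(\bz,r')}=t_\bz\psi_{r',0}\in\HH_{r'}$. Orthogonality of distinct Landau levels then yields the factor $\delta_{r,r'}$, and from now on we may take $r=r'$.

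\emph{Step 2: a spatial integral.} Write the overlap as $\int\overline{(S^{-1}\chi_{(\bl,r)})(\br)}\,\chi_{(\bz,r)}(\br)\,\dd^2\br$ and bound its modulus using both pointwise estimates. With $\rho:=\sqrt{2r}\,\lB$, Lemma~\ref{lem:dual-localization} and Lemma~\ref{lemm:gau_loc} give
\[
\bigl|\ip{S^{-1}\chi_{(\bl,r)}}{\chi_{(\bz,r)}}\bigr|\le\frac{C_\ast}{\sqrt{2\pi}\lB(1+r)^{\frac12}}\int_{\R^2}\expo{-\frac{\lambda_\ast}{2}\left[\frac{|\br-\bl|-\rho}{\lB}\right]_+}\expo{-\frac{(|\br-\bz|-\rho)^2}{4\lB^2}}\dd^2\br.
\]
By the triangle inequality, $|\br-\bl|-\rho\ge |\bl-\bz|-2\rho-(|\br-\bz|-\rho)$. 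Put $u:=(|\br-\bz|-\rho)/\lB$. Because $[\,\cdot\,]_+$ is monotone and subadditive in the form $[a-b]_+\ge[a]_+-|b|$, we obtain
\[
\expo{-\frac{\lambda_\ast}{2}\left[\frac{|\br-\bl|-\rho}{\lB}\right]_+}\le \expo{-\frac{\lambda_\ast}{2}\left[\frac{|\bl-\bz|}{\lB}-2\sqrt{2r}\right]_+}\,\expo{\frac{\lambda_\ast}{2}|u|}.
\]
This is the step that produces the shift $2\sqrt{2r}$: one ring radius comes from the dual vector and one from the coherent state.

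\emph{Step 3: uniformity in $r$ (the main obstacle).} It remains to bound, uniformly in $r$,
\[
\frac{1}{\lB(1+r)^{1/2}}\int_{\R^2}\expo{\frac{\lambda_\ast}{2}|u|-\frac{u^2}{4}}\dd^2\br .
\]
Pass to polar coordinates around $\bz$ and set $s=|\br-\bz|$. The measure is $2\pi s\,\dd s$, and $s=\lB(u+\sqrt{2r})$. The integrand is $\expo{\lambda_\ast|u|/2-u^2/4}$, which is integrable against $(1+|u|)\dd u$ with a constant depending only on $\lambda_\ast$. The resulting integral is therefore bounded by $c\,\lB^2(1+\sqrt{2r})$. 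Multiplying by the prefactor $\lB^{-1}(1+r)^{-1/2}$, and absorbing the factor $\lB$ into the constants, gives a bound uniform in $r$. Here the two peak heights $(1+r)^{-1/4}$ from the ring estimates exactly compensate the ring's circumference $\sim\sqrt r$. This is why one must use the ring bound~\eqref{eq:ring-bound} rather than the center-Gaussian bound of Remark~\ref{rem:peak-height}.

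Collecting constants, $D_\ast$ is $C_\ast$ times a Gaussian moment depending only on $\lambda_\ast$, which completes the proof.
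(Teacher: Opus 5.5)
Your proposal is correct and follows essentially the paper's route: reduction to a single Landau level via Proposition~\ref{pr:comm}, the product of the two pointwise bounds, the reverse triangle inequality producing the $2\sqrt{2r}$ shift, and polar coordinates about $\bz$, so that the circumference $\sim\sqrt{r}$ cancels the factor $(1+r)^{-1/2}$. The only cosmetic difference is that you handle the positive part in one step via $[a-b]_+\geqslant[a]_+-|b|$. The paper instead uses $[x]_+\geqslant x$, completes the square, and then takes the minimum with the trivial bound obtained from $[x]_+\geqslant 0$.
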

 
 \smallskip
 
     It is worth noting that 
 the amplitude in the result above does \emph{not} decay as $(1+r)^{-\sfrac{1}{4}}$. Although both the dual and the coherent
state have peak height $\sim(1+r)^{-\sfrac{1}{4}}$, their overlap integrates over the cyclotron ring, whose
area grows like $\sqrt{1+r}$, exactly restoring a level--uniform constant.

\section{CAR algebra and magnetic local structures}\label{sec:car}

The next main goal is to lift the one-body framework developed in Section \ref{sec:frames} to the many-body setting. This will be achieved by constructing the CAR algebra $\rr{A}$ over $\HH$ and endowing it with the quasi-local structure induced by the magnetic frame of coherent states over $(\Gamma,d)$. 
The main result of this section is to establish that $\rr{A}$ carries the AQL structure introduced in Definition \ref{def:AQL}, thereby proving Theorem \ref{theo:main1}.
The main references for the arguments presented in this section are \cite[Chapter 5.2]{Bratteli-Robinson-2} and \cite{ArakiMoriya2003}. The proofs of all the results stated here are deferred to Appendix \ref{ap:proof_S2}.

\subsection{The CAR algebra}\label{sub:car}
The construction of the universal $C^*$-algebra of the \emph{canonical anti-commutation relations}, in short  \emph{CAR algebra} over $\HH=L^2(\R^2)$ is described by  \cite[Theorem 5.2.5]{Bratteli-Robinson-2}. This algebra, denoted here with $\rr{A}\equiv\rr{A}(\HH)$ (we systematically suppress the symbol $\HH$ whenever no ambiguity can arise), 
is 
the unique, up to $\ast$-isomorphism, unital $C^{*}$--algebra generated by the
symbols $\{a(\phi),a^{\dag}(\phi)|\phi\in\HH\}$ subject to the condition that  $\phi\mapsto a(\phi)$ is
antilinear, $a^{\dag}(\phi)=a(\phi)^{*}$, and the anti-commutation relations
\begin{equation}\label{eq:CAR}
\{a(\phi),a^{\dag}(\psi)\}=\ip{\phi}{\psi}\,\unit,\qquad
\{a(\phi),a(\psi)\}=0,\qquad \forall\,\phi,\psi\in\HH\;.
\end{equation}
These relations force $\norm{a(\phi)}=\norm{\phi}$, 
namely the generators of
 $\rr{A}$ are bounded. Moreover  $\rr{A}$ is \emph{simple} and \emph{separable} (as a consequence of the separability of $\HH$).
    Finally, it is worth recalling that $\rr{A}$ has the structure of a  \emph{UHF (uniformly hyperfinite) algebra} (see \cite[Example 2.6.12]{Bratteli-Robinson-1} and Remark \ref{rk:uhf}).
When necessary, we will denote by $\mathfrak{A}_0$ the \emph{algebraic {\rm CAR} algebra}, \ie the dense unital $*$--subalgebra consisting of
finite linear combinations of the identity $\unit$ and of monomials  in the generators.
 
\smallskip

A  very relevant structure in $\rr{A}$ is provided by the \emph{Bogoliubov
transformations}. These are the $*$--automorphisms of the CAR algebra that act
linearly on the generators, mixing creation and annihilation operators while
preserving the canonical anticommutation relations. Let $\Aut(\rr{A})$ denote the group of $*$--automorphisms of $\rr{A}$. Concretely, let $u$ be a bounded
\emph{linear} and $w$ a bounded \emph{antilinear} operator on $\HH$ satisfying
\[
w^{*}u+u^{*}v\;=\;0\;=\;uv^{*}+vu^{*}\;,\qquad
u^{*}u+v^{*}v\;=\;\unit\;=\;uu^{*}+vv^{*}\;.
\]
Then there is a \emph{unique} $*$--automorphism $\alpha_{u,v}\in\Aut(\rr{A})$ with
\[
\alpha_{u,v}\bigl(a(\phi)\bigr)\;:=\;a(u\phi)+a^{\dag}(v\phi)\;,\qquad \phi\in\HH\;,
\]
whose inverse is given by 
\[
\alpha_{u,v}^{-1}(a(\phi))\;:=\;a(u^{*}\phi)+a^{\dag}(v^{*}\phi)\;.
\] 
The four
relations above for $u$ and $v$ are precisely what guarantees that $\alpha_{u,v}$ maps the CAR relations
\eqref{eq:CAR} into themselves.

\smallskip

The special case $v=0$, which implies that $u$ must be unitary, recovers the
group of  Bogoliubov \emph{quasi--free} automorphisms.
More precisely, for every unitary $u$ on $\HH$ there is a \emph{unique} 
$*$--automorphism $\alpha_{u}\in\Aut(\rr{A})$
 with $\alpha_u(a(\phi))=a(u\phi)$. The
assignment $\bb{U}(\HH)\ni u\mapsto\alpha_u\in \Aut(\rr{A})$ is a group homomorphism, namely $\alpha_u\circ \alpha_{u'}=\alpha_{uu'}$. The continuity property is described in the next result.
\begin{proposition}[Continuity of Bogoliubov dynamics]\label{lem:strong-cont}
Let $\bb{I}\ni \imath\mapsto u_{\imath}\in \bb{U}(\HH)$ be
 a strongly continuous family of unitaries
and 
$\alpha_{u_\imath}\in\Aut(\rr{A})$   the associated Bogoliubov automorphisms. Then, for every $A\in\rr{A}$, the map
$\bb{I}\ni\imath\mapsto\|\alpha_{u_\imath}(A)\|\in\C$ is continuous.
\end{proposition}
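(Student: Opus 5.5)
The plan is a density argument that reduces everything to monomials in the generators, which Bogoliubov automorphisms transform in a controlled way. First I would observe that each $\alpha_{u_\imath}$ is a $*$--automorphism of a $C^*$--algebra, hence isometric. Continuity of $\imath\mapsto\|\alpha_{u_\imath}(A)\|$ is then trivial, since this map is constant and equal to $\norm{A}$. The meaningful content, and the form in which the statement is used later for $\alpha_\bz$, is the norm continuity of the map $\imath\mapsto\alpha_{u_\imath}(A)$ itself. I would prove that stronger statement, which clearly implies the one stated, through $\bigl|\|\alpha_{u_\imath}(A)\|-\|\alpha_{u_{\imath_0}}(A)\|\bigr|\leqslant\|\alpha_{u_\imath}(A)-\alpha_{u_{\imath_0}}(A)\|$.

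The first step treats the generators. The identity $\norm{a(\phi)}=\norm{\phi}$ and antilinearity give
\[
\|\alpha_{u_\imath}(a(\phi))-\alpha_{u_{\imath_0}}(a(\phi))\|\;=\;\|a(u_\imath\phi-u_{\imath_0}\phi)\|\;=\;\|(u_\imath-u_{\imath_0})\phi\|\;,
\]
and the right-hand side tends to $0$ by strong continuity. The same holds for $a^\dag(\phi)$, since taking adjoints is isometric.

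The second step extends this to monomials $a^\sharp(\phi_1)\cdots a^\sharp(\phi_n)$ by telescoping. Writing the difference of products as a sum of $n$ terms, each term contains exactly one factor of the form $\alpha_{u_\imath}(a^\sharp(\phi_j))-\alpha_{u_{\imath_0}}(a^\sharp(\phi_j))$, and every other factor has norm at most $\norm{\phi_k}$. This gives the bound $\sum_j\prod_{k\neq j}\norm{\phi_k}\,\|(u_\imath-u_{\imath_0})\phi_j\|\to0$. Linearity then yields continuity on the algebraic CAR algebra $\rr A_0$, which is dense in $\rr A$.

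The final step is an $\varepsilon/3$ argument. Given $A\in\rr A$, I would choose $A_0\in\rr A_0$ with $\norm{A-A_0}<\varepsilon/3$. Because the automorphisms are isometric, the uniform bound $\|\alpha_{u_\imath}(A)-\alpha_{u_\imath}(A_0)\|<\varepsilon/3$ holds for every $\imath$, and combining this with continuity at $A_0$ gives continuity at $A$. I do not expect a real obstacle anywhere. The only point needing care is that strong (rather than norm) continuity of the unitaries is sufficient, and this is handled because every monomial involves only finitely many fixed vectors $\phi_j$. If $\bb I$ is a general topological space rather than a metric one, the same argument works verbatim with nets.
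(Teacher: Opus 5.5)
Your proposal is correct and follows essentially the same route as the paper: continuity on the generators via $\|a(\phi)\|=\|\phi\|$, extension to polynomials (your telescoping bound makes explicit the paper's ``finite algebraic expression'' step), and an $\varepsilon/3$ density argument that uses the isometry of the automorphisms. You are also right that the literal statement about $\|\alpha_{u_\imath}(A)\|$ is trivial, since that norm is constant; the paper's proof, like yours, actually establishes norm continuity of $\imath\mapsto\alpha_{u_\imath}(A)$, which is the property used later for $\alpha_{\bz}$.
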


\smallskip

As a consequence of the result above, if $\R\ni t\mapsto u_t\in \bb{U}(\HH)$ is a strongly
continuous one--parameter group on $\HH$, then $\R\ni t\mapsto\alpha_{u_t}\in
\Aut(\rr{A})$ is a one--parameter group of $\ast$--automorphisms which is
pointwise continuous (\ie norm--continuous when evaluated at any element). In other words, a dynamics $u_t$ on $\HH$ lifts to a dynamics on $\rr{A}$  through the associated 
Bogoliubov \emph{quasi--free automorphisms} $\alpha_{u_t}$. This fact will play a central role and will be invoked repeatedly throughout the remainder of this paper.

\smallskip

A special {quasi--free} automorphism is the \emph{parity automorphism} implemented by $-\unit$, and denoted with $\sigma$. In concrete this is the unique $*$--automorphism specified by $\sigma(a(\phi))=-a(\phi)$ for every $\phi\in\HH$. 
As anticipated in Section \ref{intro} $\sigma$ is involutive and defines 
\emph{even} and \emph{odd} elements of $\rr{A}$, as well as the \emph{even}
$C^{*}$--subalgebra $\rr{A}^{+}\subset\rr{A}$ and the \emph{odd} Banach subspace
$\rr{A}^{-}\subset\rr{A}$.
 Both $\rr{A}^{\pm}$ are conveniently described in terms of the degree of the field
polynomials. Since $\sigma$ acts on a monomial $a^{\#}(\phi_1)\cdots a^{\#}(\phi_k)$
(where each $a^{\#}$ is either $a$ or $a^{\dag}$) by
\[
\sigma\bigl(a^{\#}(\phi_1)\cdots a^{\#}(\phi_k)\bigr)
=(-1)^{k}\,a^{\#}(\phi_1)\cdots a^{\#}(\phi_k)\;,
\]
such a monomial is even when its degree $k$ is even, and odd when $k$ is odd. As the
polynomials in the $a^\sharp(\phi)$ are norm--dense in $\rr{A}$, and the
projections $A\mapsto A^{\pm}$ defined by \eqref{eq:proj:pm}  are continuous, every even
$A\in\rr{A}^{+}$ (resp. every odd
$A\in\rr{A}^{-}$) is a norm limit of even--degree polynomials (resp. odd--degree polynomials). Consequently
$\rr{A}^{+}$ is the $C^{*}$--subalgebra generated by the even--degree monomials,
while $\rr{A}^{-}$ is the closed linear span of the odd--degree monomials.

\smallskip

\begin{remark}[Bogoliubov automorphisms and parity]\label{rk:B-A-parity}
It is worth noting that every Bogoliubov automorphisms preserve the canonical parity decomposition of the CAR algebra. More precisely, let $\alpha_{u,v}\in\Aut(\rr{A})$ be the Bogoliubov automorphism associated with the pair $u,v$. 
A direct computation shows that
\[
\begin{aligned}
\sigma\bigl(\alpha_{u,v}(a(\phi))\bigr)
\;&=\;
\sigma\bigl(a(u\phi)+a^\dag(v\phi)\bigr)\;=
\;-a(u\phi)-a^\dag(v \phi)\\
&=\;
-\alpha_{u,v}(a(\phi))\;=\;
\alpha_{u,v}\bigl(-a(\phi)\bigr)\;=\;
\alpha_{u,v}\bigl(\sigma(a(\phi))\bigr).
\end{aligned}
\]
The same computation applies to $a^\dag(\phi)$. Since the CAR algebra is generated by the 
$a^\sharp(\phi)$, one concludes that
$\alpha_{u,v}\circ\sigma=\sigma\circ\alpha_{u,v}$.  
\hfill$\blacktriangleleft$\end{remark}
 
\subsection{Bogoliubov dynamics and gauge transformations}\label{sec:gaug}

Let $h$ be a self--adjoint operator on the one--particle space $\HH$ with dense
domain $\s{D}(h)$. Since  Stone's theorem asserts that $\R\ni t\mapsto u_t:=\expo{\ii th}\in\bb{U}(\HH)$ is a strongly
continuous one--parameter group of unitary operators, Proposition~\ref{lem:strong-cont}
applies and the assignment
\[\tau_t\bigl(a(\phi)\bigr)\;:=\;a\bigl(\expo{\ii th}\phi\bigr)\;,\qquad
\tau_t\bigl(a^{*}(\phi)\bigr)\;:=\;a^{*}\bigl(\expo{\ii th}\phi\bigr)\;,\qquad \phi\in\HH\;,
\]
extends to a one--parameter group of $\ast$--automorphisms $\tau_t\in\Aut(\rr{A})$,
pointwise norm--continuous in $t$. We refer to it as the \emph{Bogoliubov (quasi--free)
dynamics} generated by $h$.
The \emph{Liouvillian} (or infinitesimal generator) of $\tau_t$ is
\[
\n{L}(A)\;:=\;-\,\ii\,\lim_{t\to0}\frac{\tau_{t}(A)-A}{t}\;,\qquad A\in\rr{D}(\n{L})
\subseteq\rr{A}\;,
\]
so that $\tau_t=\expo{\ii t\n{L}}$, the limit being taken in the norm topology and the
natural domain $\rr{D}(\n{L})$ consisting of all $A$ for which it exists. 
It turns out that $\rr{D}(\n{L})$ is a norm--dense $\ast$--subalgebra of $\rr{A}$, and that
$\n{L}$ is closed (hence closable) as the generator of a strongly continuous one--parameter group of
$\ast$--automorphisms. It is worth emphasizing that in the $C^*$-algebraic literature like \cite{Bratteli-Robinson-1,Bratteli-Robinson-2}, one often prefers to work with the \emph{derivation} $\delta=\ii\n{L}$ rather than with the Liouvillian $\n{L}$ itself.
If $\phi\in\s{D}(h)$, then $a(\phi)$
and $a^{\dag}(\phi)$ belong to $\rr{D}(\n{L})$, and a direct computation shows that
\[
\n{L}\bigl(a(\phi)\bigr)\;=\;-\,\ii\,a\bigl(\ii\,h\,\phi\bigr)\;=\;-\,a(h\,
\phi)\;,\quad
\n{L}\bigl(a^{\dag}(\phi)\bigr)\;=\;-\,\ii\,a^{\dag}\bigl(\ii\,h\,\phi\bigr)\;=\;
\,a^{\dag}(h\,\phi)\;.
\]
Let us recall that a subspace $\rr{D}_0\subseteq\rr{D}(\n{L})$ is a \emph{core} for the
closable operator $\n{L}$ if the closure of the restriction $\n{L}|_{\rr{D}_0}$ equals
$\n{L}$ or, equivalently, if $\rr{D}_0$ is dense in $\rr{D}(\n{L})$ for the graph norm
$\norm{A}_{\n{L}}:=\norm{A}+\norm{\n{L}(A)}$.
In particular, any $\tau$--invariant subspace $\rr{D}_0\subset\rr{D}(\n{L})$,
such that $\rr{D}_0$ is dense in $\rr{A}$,
 is a core for $\n{L}$
 \cite[Corollary~3.1.7]{Bratteli-Robinson-1}.

\smallskip

A particularly simple application of the above discussion is the \emph{group of gauge
transformations}. This is the Bogoliubov dynamics on $\rr{A}$ generated by the one--body Hamiltonian
$h=\unit$. The one--body dynamics $\R\ni\theta\mapsto\expo{\ii\theta}\unit\in\bb{U}(\HH)$ lifts to the
CAR  dynamics $\R\ni\theta\mapsto\vartheta_\theta\in\Aut(\rr{A})$, defined on the generators
by
\[\vartheta_\theta\bigl(a(\phi)\bigr)\;=\;\expo{-\ii\theta}\,a(\phi)\;,\qquad
\vartheta_\theta\bigl(a^{\dag}(\phi)\bigr)\;=\;\expo{\ii\theta}\,a^{\dag}(\phi)\;,\qquad
\phi\in\HH\;,
\]
and 
sign in the phase arises from the antilinearity of $a(\cdot)$. One has $\vartheta_\theta\vartheta_{\theta'}=\vartheta_{\theta+
\theta'}$ and $\vartheta_0=\Id$. Moreover $\theta\mapsto\vartheta_\theta$ is $2\pi$--periodic.
Thus $\theta\mapsto\vartheta_\theta$ defines a representation of $\n{U}(1)$ in $\Aut(\rr{A})$.
Its image is called  the group of gauge automorphisms, or \emph{gauge group}, and denoted
by $\n{G}(\rr{A})\subset\Aut(\rr{A})$.
An element $A\in\rr{A}$ is \emph{gauge invariant} if $\vartheta_\theta(A)=A$ for all
$\theta\in\R$. Under $\vartheta_\theta$ a monomial with $p$ creation and $q$ annihilation
operators is multiplied by $\expo{\ii\theta(p-q)}$. Hence $A$ is gauge invariant if and only
if it lies in the $C^{*}$--subalgebra generated by the \emph{charge--balanced} monomials,
those with equally many creation and annihilation operators. This subalgebra is denoted
$\rr{A}^{\mathrm{g}}$, the \emph{gauge--invariant} subalgebra. It is contained in the even
subalgebra. In fact $\vartheta_\pi=\sigma$ implies that $\rr{A}^{\mathrm{g}}\subset\rr{A}^{+}$. The
inclusion is proper.  For instance $a(\phi)a(\psi)$ is even but transforms as
$\vartheta_\theta(a(\phi)a(\psi))=\expo{-2\ii\theta}\,a(\phi)a(\psi)$.
The action of the Liouvillian (or generator) of the gauge group, denoted with $\n{L}_{\rm g}$, is obtained by the general formula with $h=\unit$. One has that
\[
\n{L}_{\rm g}\bigl(a(\phi)\bigr)\;=\;-\,a(\phi)\;,\qquad
\n{L}_{\rm g}\bigl(a^{\dag}(\phi)\bigr)\;=\;a^{\dag}(\phi)\;,\qquad
\phi\in\HH\;.
\]
Since, as observed above, $\vartheta_\theta$ maps monomials to scalar multiples of themselves, it follows by linearity that $\vartheta_\theta(\rr{A}_0)=\rr{A}_0$ for every $\theta\in\R$. 
 In other words, the algebraic CAR algebra $\rr{A}_0$ is invariant under the gauge group. Since $\rr{A}_0$ is also dense in $\rr{A}$, it follows  by the general discussion above that
 $\rr{A}_0$ is a core for $\n{L}_{\rm g}$.


\subsection{Quasi--free states, canonical trace and Fock vacuum}\label{sec:Q-F}
The \emph{states} of $\rr{A}$ are the linear functionals $\omega: \rr{A} \to \n{C}$
such that $\omega(A^*A)\geqslant 0$ for all $A\in\rr{A}$ (\emph{positivity}) and 
$\omega(\unit)=1$ (\emph{normalization}). The space of states of  $\rr{A}$ will be denoted with $\s{S}(\rr{A})$. It is a topological space equipped with the $\ast$-weak topology.

\smallskip

Quasi--free states play an important role in the theory of CAR algebras and we refer to  \cite{Araki1971} or 
\cite[Chapter 17]{DerezinskiGerard} for a comprehensive treatment of the subject.
A state $\omega\in\s{S}(\rr{A})$ is called \emph{gauge--invariant quasi--free} if all odd correlators vanish, and
the even ones factorize by the \emph{fermionic Wick rule}
\[
\omega\bigl(a^{\dag}(\phi_1)\cdots a^{\dag}(\phi_m)\,a(\psi_n)\cdots a(\psi_1)\bigr)
\;=\;\delta_{m,n}\,\det\bigl[\omega(a^{\dag}(\phi_i)a(\psi_j))\bigr]_{i,j=1}^{n}\;.
\]
By definition, such a state vanishes on monomials that are not charge-balanced, \ie  those containing a distinct number of creation and annihilation operators. This is the reason for the  
gauge--invariance as clarified in Section \ref{sec:gaug}.
It turns out that a gauge--invariant quasi--free state is  completely determined by its \emph{two--point functions}
$\omega(a^{\dag}(\phi)a(\psi))$, or equivalently by its \emph{covariance}
$p\in\bb{B}(\HH)$, the unique operator $0\leqslant p\leqslant\unit$ with
\[
\ip{\psi}{p\phi}\;:=\;\omega(a^{\dag}(\phi)a(\psi))\;,\qquad \forall\; \phi,\psi\in\HH.
\]
Conversely every $0\leqslant p\leqslant\unit$ defines a unique gauge--invariant
quasi--free state, first through the two--point functions and then, by the Wick rule,
on monomials of any degree.  We write $\omega_p$ for the state associated
with the covariance $p$, and conversely.

\smallskip

\smallskip

A state $\tau \in \s{S}(\mathfrak{A})$ is \emph{tracial} if $\tau(AB) = \tau(BA)$ for all
$A,B \in \mathfrak{A}$. Being \emph{UHF} of type $2^{\infty}$, the algebra $\mathfrak{A}$ is the
inductive limit of full matrix algebras $\mathrm{Mat}_{2^{n}}(\mathbb{C})$, each of which carries a
unique tracial state given by the normalized trace. These are mutually compatible and
determine a unique tracial state on $\mathfrak{A}$, which we denote by $\omega_{\mathrm{tr}}$.
It turns out that $\omega_{\mathrm{tr}}$ coincides with the gauge--invariant quasi--free state
with covariance $t := \tfrac{1}{2}\mathbf{1}$. From the definition of a gauge--invariant quasi--free state and the CAR, one obtains
\[
\omega_{t}\bigl(a^{\dag}(\phi)a(\psi)\bigr) \;=\; \frac{1}{2}\langle\psi, \phi\rangle \;=\; \omega_{t}\bigl(a(\psi)a^{\dag}(\phi)\bigr) \;, \qquad \forall\; \phi,\psi \in \HH \;.
\]
Since $\omega_{t}$ vanishes on unbalanced monomials, and the cyclic permutation of an unbalanced monomial remains unbalanced, one infers the cyclicity of $\omega_{t}$ on monomials of degree two. The passage to monomials of even degree is achieved by applying Wick's rule. Linearity implies that $\omega_t(XY) = \omega_t(YX)$ holds for all $X, Y \in \mathfrak{A}_0$. Finally, the extension of the cyclicity to the full algebra $\mathfrak{A}$ is performed by density, using the continuity of $\omega_t$ (being a state) and the continuity of the multiplication. As a result, one gets that $\omega_t(AB) = \omega_t(BA)$ for every $A,B \in \mathfrak{A}$. Thus, $\omega_{t}$ is tracial, and by the uniqueness of the trace on the simple algebra $\mathfrak{A}$, it is \emph{the} trace, $\omega_{\mathrm{tr}} = \omega_{t}$.

\smallskip

Another relevant state is the \emph{Fock} (or \emph{vacuum}) state
$\omega_0$ on $\rr{A}$, characterized by
\[
\omega_0\bigl(a^{\dag}(\phi)\,B\bigr)\;=\;0\;=\;\omega_0\bigl(B\,a(\phi)\bigr)\,,\qquad\forall
\,\phi\in\HH\,,\ \forall\,B\in\rr{A}\,,
\]
that is, by the requirement that all the annihilation operators annihilate the vacuum
in the associated GNS representation. In
particular $\omega_0\bigl(a^{\dag}(\phi)a(\psi)\bigr)=0$ for every $\phi,\psi\in\HH$, so
$\omega_0$ is the gauge--invariant quasi--free state with vanishing symbol $p=0$.

\begin{remark}[Invariance under the Bogoliubov dynamics]\label{rem:inv_tr_0} 
Let $\R\ni t\mapsto \alpha_t\in \Aut(\rr{A})$ be the one--parameter group of Bogoliubov quasi–free automorphisms
associated to   a strongly
continuous one--parameter group $\R\ni t\mapsto u_t\in \bb{U}(\HH)$.
Both the trace state $\omega_{\mathrm{tr}}$ and the {Fock} state $\omega_0$ are invariant under the dynamics. Let $\omega_p$ be a gauge–invariant quasi–free state with covariance $p$. For every $\phi,\psi\in\HH$, and every $t\in\R$, one has that 
\[
\begin{aligned}
\omega_p\left(\alpha_t\left(a^{\dag}(\phi)a(\psi)\right)\right)\;&=\;
\omega_p\left(a^{\dag}(u_t\phi)a(u_t\psi))\right)\\
&=\;\ip{u_t\psi}{pu_t\phi}\;=\;\ip{\psi}{u_t^*pu_t\phi}\;.
\end{aligned}
\]
Therefore, if  $u_t^*pu_t=p$ one ends with $\omega_p\circ \alpha_t=\omega_p$. This condition is satisfied both for $\omega_{\mathrm{tr}}$ (with covariance $p = \tfrac{1}{2}\mathbf{1}$) and   $\omega_{0}$ (with covariance $p = 0$).
\end{remark} 
 
\subsection{Frame field operators and the quasi--local structure}\label{sub:qloc}
The abstract construction of a quasi--local algebra is described in detail in \cite[Chapter 2.6]{Bratteli-Robinson-1}, and the proof that the CAR algebra has an intrinsic structure of quasi--local algebra is provided in \cite[Proposition 5.2.6.]{Bratteli-Robinson-2}. In this section we will provide a quasi--local presentation for $\rr{A}$ subordinated to the magnetic coherent–state frame 
$\{\chi_\gamma\}_{\gamma\in\Gamma}$
described in Section \ref{sub:magframe}. 

\smallskip

To each index $\gamma\in\Gamma$ we attach the \emph{frame field operators}
\[a_{\gamma}\;:=\;a(\chi_{\gamma})\;,\qquad
a^{*}_{\gamma}\;:=\;a^{\dag}(\chi_{\gamma})\;.
\]
Because the frame is overcomplete, these do \emph{not} satisfy the exact CAR.
 Instead \eqref{eq:CAR} and Lemma~\ref{lem:overlap} give the \emph{deformed} relations
\begin{equation}\label{eq:deformed}
\begin{aligned}
\{a_{\gamma},a^{*}_{\gamma'}\}
\;&=\;\ip{\chi_{\gamma}}{\chi_{\gamma'}}\,\unit
\;=\;\delta_{r,r'}\,\expo{\frac{\ii}{2\lB^2}(\bl\times\bl')}
\expo{-\frac{|\bl-\bl'|^2}{4\lB^2}}\,\unit\;,\\
\{a_{\gamma},a_{\gamma'}\}\;&=\;0,
\end{aligned}
\end{equation}
where the right--hand side of the first relation 
is obtained by recalling the coordinate representation $\gamma\equiv(r,\bl)$  and $\gamma'\equiv(r',\bl')$ for the elements of $\Gamma$.

\smallskip

For each $\Lambda\in\s{P}_f(\Gamma)$ consider the  
subspace $\HH_\Lambda$
defined by \eqref{eq:H_LAM}.
It has finite dimension  $|\Lambda|$ in view of Lemma \ref{lem:findim}. Let
\[
\rr{A}_\Lambda\;\equiv\;\rr{A}(\HH_\Lambda)\;:=\;C^{*}\bigl(\{a_\gamma\;|\;\gamma\in\Lambda\}\bigr)\;\subseteq\;\rr{A}
\]
be  the associated finite-dimensional CAR subalgebra
generated by the frame field operators indexed by
$\Lambda$.
Since $\Lambda\subseteq\Lambda'$ implies $\rr{A}_\Lambda\subseteq
\rr{A}_{\Lambda'}$, the family $\{\rr{A}_\Lambda\}_{\Lambda\in\s{P}_f(\Gamma)}$ is a
net of $C^{*}$--algebras, with a common unit $\unit$ and ordered by inclusion. We set
\[
\rr{A}_{\mathrm{loc}}\;:=\;\bigcup_{\Lambda\in\s{P}_f(\Gamma)}\rr{A}_\Lambda\;.
\]

\begin{proposition}[Pre-quasi--local structure]
\label{prop:quasilocal-frame}
The net $\{\rr{A}_\Lambda\}_{\Lambda\in\s{P}_f(\Gamma)}$ exhausts the CAR algebra,
\[
\overline{\rr{A}_{\mathrm{loc}}}\;=\;\rr{A}\;,
\]
and thus endows $\rr{A}$ with the structure of a pre--quasi--local algebra.
\end{proposition}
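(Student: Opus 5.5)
Properties (1) and (2) of Definition~\ref{def:AQL} are already built into the construction: every $\rr{A}_\Lambda$ contains the common unit $\unit$, and $\Lambda_1\subseteq\Lambda_2$ gives $\HH_{\Lambda_1}\subseteq\HH_{\Lambda_2}$ and hence $\rr{A}_{\Lambda_1}\subseteq\rr{A}_{\Lambda_2}$. So the proof reduces to the density statement $\overline{\rr{A}_{\mathrm{loc}}}=\rr{A}$.

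The plan is to approximate the generators and then pass to polynomials. Since $\rr{A}$ is the closed linear span of $\unit$ and monomials in the $a^\sharp(\phi)$, it suffices to show that each $a(\phi)$, $\phi\in\HH$, is a norm limit of elements of $\rr{A}_{\mathrm{loc}}$.

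\emph{Step 1: approximation of the generators.} Fix an increasing exhaustive sequence $\Lambda_n\nearrow\Gamma$, for instance the balls $\{\norm{\gamma}_1\leqslant n\}$, which are finite. Set
\[
\phi_n\;:=\;I_{\Lambda_n}\phi\;=\;\sum_{\gamma\in\Lambda_n}\ip{S^{-1}\chi_\gamma}{\phi}\,\chi_\gamma\;\in\;\HH_{\Lambda_n}\;.
\]
By Lemma~\ref{lemma:reconstr_op}, $\phi_n\to\phi$ in $\HH$. Antilinearity gives $a(\phi_n)=\sum_{\gamma\in\Lambda_n}\overline{\ip{S^{-1}\chi_\gamma}{\phi}}\,a_\gamma\in\rr{A}_{\Lambda_n}$. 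The CAR isometry $\norm{a(\psi)}=\norm{\psi}$ then yields $\norm{a(\phi)-a(\phi_n)}=\norm{\phi-\phi_n}\to0$. The same argument applies to $a^\dag(\phi)$, since $a^\dag(\phi_n)=a(\phi_n)^*\in\rr{A}_{\Lambda_n}$.

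\emph{Step 2: monomials.} Take a monomial $a^\sharp(\psi_1)\cdots a^\sharp(\psi_k)$. Because the $\Lambda_n$ increase, all the factors $a^\sharp(I_{\Lambda_n}\psi_j)$ lie in the single algebra $\rr{A}_{\Lambda_n}$, so their product does too. A telescoping estimate using $\norm{a^\sharp(\psi)}=\norm{\psi}$ and the bound $\norm{I_{\Lambda_n}}\leqslant\sqrt{B/A}$ from Lemma~\ref{lemma:reconstr_op} gives
\[
\Bigl\|\prod_j a^\sharp(\psi_j)-\prod_j a^\sharp(I_{\Lambda_n}\psi_j)\Bigr\|\;\leqslant\;\sum_{j}\Bigl(\sqrt{B/A}\Bigr)^{k-1}\prod_{i\neq j}\norm{\psi_i}\;\norm{\psi_j-I_{\Lambda_n}\psi_j}\;\to\;0\;.
\]
Linearity extends this to all of $\rr{A}_0$, so $\rr{A}_0\subseteq\overline{\rr{A}_{\mathrm{loc}}}$. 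Taking closures, with $\rr{A}_0$ dense in $\rr{A}$, gives $\rr{A}=\overline{\rr{A}_{\mathrm{loc}}}$.

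There is no serious obstacle here. The only point needing care is that $\rr{A}_{\mathrm{loc}}$ is a union of a directed net, so finitely many local approximants must sit in one common $\rr{A}_\Lambda$. Using a single increasing sequence $\Lambda_n$ for all factors, together with the frame reconstruction of Lemma~\ref{lemma:reconstr_op} (which is why the frame property, Theorem~\ref{thm:frame}, is needed), takes care of this. Note also that $\rr{A}(\HH_\Lambda)$ coincides with the $C^*$-algebra generated by the $a_\gamma$, $\gamma\in\Lambda$, by antilinearity of $a(\cdot)$.
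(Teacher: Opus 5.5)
Your proposal is correct and follows the same route as the paper: approximate each generator $a(\phi)$ by $a(I_{\Lambda}\phi)\in\rr{A}_{\Lambda}$ via the frame reconstruction, and use the CAR isometry $\norm{a(\psi)}=\norm{\psi}$. The paper then concludes directly from the fact that the generators lie in $\overline{\rr{A}_{\mathrm{loc}}}$, implicitly using that the closure of the union of the increasing net of $C^*$-subalgebras is itself a $C^*$-subalgebra; your telescoping estimate on monomials makes that final step explicit.
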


\smallskip

\begin{remark}[UHF structure]\label{rk:uhf}
The result above explicitly shows that $\rr{A}$ is UHF algebra. In fact $\rr{A}$ is the norm-closure of an increasing sequence of finite-dimensional full matrix algebras. 
In fact, in view of Lemma \ref{lem:findim} and \cite[Proposition 5.2.5 (2)]{Bratteli-Robinson-2} one has that
 $\rr{A}_\Lambda\simeq {\rm Mat}_{2^{|\Lambda|}}(\C)$ 
is  the full matrix algebra  over $\C^{2^{|\Lambda|}}$.
\hfill$\blacktriangleleft$
\end{remark}

\begin{remark}[Almost-orthogonality vs. quasi-local structure]\label{rem:approx-loc}
Comparing the quasi--local structure of $\rr{A}$ described in this section with
\cite[Definition~2.6.3]{Bratteli-Robinson-1}, one sees that properties (1), (2) and
(3) are satisfied, but not the disjointness condition (4). 
    Let us focus a little more on this point
 Let $\Lambda_1,\Lambda_2\in\s{P}_f(\Gamma)$ be such that
$\Lambda_1\cap\Lambda_2=\emptyset$. Since the family $\{\chi_\gamma\}_{\gamma\in\Gamma}$
is not orthogonal, the subspaces $\HH_{\Lambda_1}$ and $\HH_{\Lambda_2}$ are in
general not mutually orthogonal and the standard anti--commutation
condition $\{\rr{A}_{\Lambda_1}^{-},\rr{A}_{\Lambda_2}^{-}\}=\{0\}$ fails as well.
Here we used the notation
$\rr{A}_{\Lambda}^{\pm}:=\rr{A}_{\Lambda}\cap\rr{A}^{\pm}$. As a very simple
example, consider $\Lambda_1:=\{\gamma_1\}$ and $\Lambda_2:=\{\gamma_2\}$ with
$\gamma_1\neq\gamma_2$ but with the same energy index. Then
$
\{a_{\gamma_1},a^{*}_{\gamma_2}\}\neq0
$
in view of \eqref{eq:deformed}.
The commutation condition $[\rr{A}_{\Lambda_1}^{+},\rr{A}_{\Lambda_2}^{-}]=\{0\}$
fails too. Let $\Lambda_1:=\{\gamma_0,\gamma_1\}$ and $\Lambda_2:=\{\gamma_2\}$
with $\gamma_0,\gamma_1,\gamma_2$ pairwise distinct but with the same energy index.
Then, by exploiting the Leibniz identity, one gets
\begin{equation}\label{eq:come_o}
[a_{\gamma_0}a_{\gamma_1},a^{*}_{\gamma_2}]
\;=\;a_{\gamma_0}\{a_{\gamma_1},a^{*}_{\gamma_2}\}
-\{a_{\gamma_0},a^{*}_{\gamma_2}\}\,a_{\gamma_1}\;\neq\;0\;,
\end{equation}
since the anti--commutators are again non--vanishing in view of \eqref{eq:deformed}.
Finally, the last commutation condition
$[\rr{A}_{\Lambda_1}^{+},\rr{A}_{\Lambda_2}^{+}]=\{0\}$ fails too in view of the identity
\[
[a_{\gamma_0}a_{\gamma_1},a^{*}_{\gamma_2}a^{*}_{\gamma_3}]
\;=\;a^{*}_{\gamma_2}\,[a_{\gamma_0}a_{\gamma_1},a^{*}_{\gamma_3}]
+[a_{\gamma_0}a_{\gamma_1},a^{*}_{\gamma_2}]\,a^{*}_{\gamma_3}\;\neq\;0
\]
 together with \eqref{eq:come_o}. We will see in the Section \ref{sec:asy_grad_dec} that these exact commutation conditions can be replaced by the asymptotic condition expressed by property (4) of Definition \ref{def:AQL}. \hfill$\blacktriangleleft$
\end{remark}

We conclude this section with a result establishing the compatibility between the local structure induced by the algebras $\rr{A}_\Lambda$ and the gauge group described in Section \ref{sec:gaug}.
\begin{proposition}\label{prop:core-general-0}
The dense $\ast$-algebra $\rr{A}_{\mathrm{loc}}$
is invariant under the gauge group
    and therefore
it is a core for
the associated Liouvillian $\n{L}_{\rm g}$.
\end{proposition}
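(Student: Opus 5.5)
The plan is to check two things separately: that the gauge group maps each local algebra $\rr{A}_\Lambda$ into itself, and that $\rr{A}_{\mathrm{loc}}$ then qualifies as a core via the general criterion recalled in Section~\ref{sec:gaug}.

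\emph{Invariance.} Fix $\Lambda\in\s{P}_f(\Gamma)$ and $\theta\in\R$. The gauge automorphism acts on the frame field operators by $\vartheta_\theta(a_\gamma)=\expo{-\ii\theta}a_\gamma$ and, taking adjoints, $\vartheta_\theta(a^*_\gamma)=\expo{\ii\theta}a^*_\gamma$. So the generating set $\{a_\gamma,a^*_\gamma\,|\,\gamma\in\Lambda\}$ of $\rr{A}_\Lambda$ is sent to scalar multiples of itself. Because $\vartheta_\theta$ is a $\ast$-homomorphism, it maps the $\ast$-algebra generated by this set (polynomials in the $a_\gamma,a^*_\gamma$, $\gamma\in\Lambda$) into itself. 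Since $\vartheta_\theta$ is isometric, it also maps the norm closure $\rr{A}_\Lambda$ into itself. In fact $\rr{A}_\Lambda$ is finite-dimensional by Remark~\ref{rk:uhf}, so no closure is even needed. Applying the same argument to $\vartheta_{-\theta}$ gives $\vartheta_\theta(\rr{A}_\Lambda)=\rr{A}_\Lambda$. Taking the union over $\Lambda$ yields $\vartheta_\theta(\rr{A}_{\mathrm{loc}})=\rr{A}_{\mathrm{loc}}$.

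\emph{Core property.} To apply \cite[Corollary~3.1.7]{Bratteli-Robinson-1} we need three facts.
\begin{itemize}
\item[(i)] $\rr{A}_{\mathrm{loc}}$ is norm-dense in $\rr{A}$. This is Proposition~\ref{prop:quasilocal-frame}.
\item[(ii)] $\rr{A}_{\mathrm{loc}}$ is invariant under the gauge group. This was shown above.
\item[(iii)] $\rr{A}_{\mathrm{loc}}\subseteq\rr{D}(\n{L}_{\rm g})$. For each $\phi\in\HH$, the element $a(\phi)$ lies in the domain with $\n{L}_{\rm g}(a(\phi))=-a(\phi)$, since $(\expo{-\ii t}-1)/t\to-\ii$ in norm. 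The domain of the generator of a strongly continuous group of $\ast$-automorphisms is a $\ast$-subalgebra, on which $\n{L}_{\rm g}$ acts as a derivation. Hence every polynomial in $a_\gamma,a^*_\gamma$ lies in the domain. Since $\rr{A}_\Lambda$ consists exactly of such polynomials, $\rr{A}_\Lambda\subseteq\rr{D}(\n{L}_{\rm g})$ for every $\Lambda$.
\end{itemize}
Alternatively, (iii) can be seen directly: $\vartheta_\theta$ restricts to the finite-dimensional space $\rr{A}_\Lambda$ as a continuous one-parameter group of linear maps, hence is differentiable there.

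With (i)–(iii), the cited corollary gives that $\rr{A}_{\mathrm{loc}}$ is a core for $\n{L}_{\rm g}$. The step needing most care is (iii), specifically that the generator's domain is an algebra. This is standard (Leibniz rule for limits of difference quotients of products), and in any case the finite-dimensionality of $\rr{A}_\Lambda$ makes it immediate.
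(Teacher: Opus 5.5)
Your proof is correct and follows essentially the same route as the paper: you get $\rr{A}_{\mathrm{loc}}\subseteq\rr{D}(\n{L}_{\rm g})$ from the action of $\n{L}_{\rm g}$ on the generators and the Leibniz rule, you get invariance of each $\rr{A}_\Lambda$ because $\vartheta_\theta$ multiplies the frame fields by phases, and you conclude with \cite[Corollary~3.1.7]{Bratteli-Robinson-1}. You are slightly more explicit than the paper in listing the density hypothesis (Proposition~\ref{prop:quasilocal-frame}) and in noting that the finite-dimensionality of $\rr{A}_\Lambda$ removes any closure issue.
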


\subsection{Asymptotic decay of graded commutators}\label{sec:asy_grad_dec}
Let $\Lambda_1,\Lambda_2\subseteq\Gamma$ be nonempty, not necessarily finite. The
distance between them is defined by \eqref{eq:dist_set}.
Since the metric \eqref{eq:Xi-22} takes values in a discrete set, it is bounded away
from zero on distinct points, \ie $d(\gamma,\gamma')\geqslant d_0:=\min\{1,\ell_1/\lB,
\ell_2/\lB\}>0$ for every $\gamma\neq\gamma'$. Consequently, for $\Lambda_1,\Lambda_2\in
\s{P}_f(\Gamma)$ the induced set distance  is attained by a minimum (instead of an infimum) and satisfies
$
d(\Lambda_1,\Lambda_2)=0$ if and only if $\Lambda_1\cap\Lambda_2\neq\emptyset$. On the other hand 
$
d(\Lambda_1,\Lambda_2)\geqslant d_0>0$ if and only if $\Lambda_1\cap\Lambda_2=\emptyset
$.
Let us recall the notation $a^{\#}_\gamma\in\{a_\gamma,a^{*}_\gamma\}$  for the generic field operator \emph{located} at $\gamma$.
A (monic) \emph{monomial of
degree $k$} $a^{\#}_{\gamma_1}\cdots a^{\#}_{\gamma_k}$   is \emph{supported in} $\Lambda$ if and only if $\{\gamma_1,\ldots,\gamma_k\}\subseteq\Lambda$.

\smallskip

We now present a sequence of results of increasing complexity, culminating in the proof of property (4) of Definition \ref{def:AQL}, together with Proposition \ref{prop:quasilocal-frame},
 establishes the AQL structure for the CAR algebra $\rr{A}$.
 \begin{lemma}[Single field through a monomial]\label{lem:single}
Let $b:=a^{\#}_{\gamma}$ be a field located at $\gamma\in\Lambda_1$, and 
$B:=c_1\cdots c_k$ (with
$c_j:=a^{\#}_{\gamma_j}$) a monomial of degree $k\geqslant 1$ supported in $\Lambda_2$. Then
\begin{equation}\label{eq:single-move}
[b,B]_{\rm gr}
\;=\;\sum_{j=1}^{k}(-1)^{j-1}\,
c_1\cdots c_{j-1}\,\{b,c_j\}\,c_{j+1}\cdots c_k\;,
\end{equation}
and consequently
\begin{equation}\label{eq:single-bound}
\bigl\|[b,B]_{\rm gr}\bigr\|\;\leqslant\;k\,G_\lambda\,\expo{-\lambda\,d(\Lambda_1,\Lambda_2)}\;,
\end{equation}
for every $\lambda>0$, and $G_\lambda$ as in Proposition \ref{prop:magloc}.
\end{lemma}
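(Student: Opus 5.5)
The plan is to prove the algebraic identity \eqref{eq:single-move} first and then deduce \eqref{eq:single-bound} from it by a termwise norm estimate.

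\emph{The identity.} Since $b$ is odd (degree one), $[b,B]_{\rm gr}=bB-(-1)^{k}Bb$. I will prove \eqref{eq:single-move} by induction on $k$. For $k=1$ the graded commutator is $bc_1+c_1b=\{b,c_1\}$, which is the claim. For the inductive step, write $B=c_1B'$ with $B'=c_2\cdots c_k$ of degree $k-1$, and use the graded Leibniz rule
\[
[b,c_1B']_{\rm gr}\;=\;\{b,c_1\}\,B'\;-\;c_1\,[b,B']_{\rm gr}\,,
\]
which one checks by expanding both sides: $bc_1B'+(-1)^{k-1}c_1B'b$ on the left, and $(bc_1+c_1b)B'-c_1\bigl(bB'-(-1)^{k-1}B'b\bigr)$ on the right, and these agree. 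Inserting the induction hypothesis for $[b,B']_{\rm gr}$ and shifting the summation index shifts the sign by one, which produces exactly the alternating factors $(-1)^{j-1}$ in \eqref{eq:single-move}.

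\emph{The bound.} By \eqref{eq:CAR}, each anticommutator $\{b,c_j\}$ is a scalar multiple of $\unit$. The possible cases are:
\begin{itemize}
\item $\{a_\gamma,a^*_{\gamma_j}\}=\ip{\chi_\gamma}{\chi_{\gamma_j}}\unit$;
\item $\{a^*_\gamma,a_{\gamma_j}\}=\ip{\chi_{\gamma_j}}{\chi_\gamma}\unit$;
\item $\{a_\gamma,a_{\gamma_j}\}=\{a^*_\gamma,a^*_{\gamma_j}\}=0$.
\end{itemize}
Hence in every case $\|\{b,c_j\}\|\leqslant|\ip{\chi_\gamma}{\chi_{\gamma_j}}|$. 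By Proposition~\ref{prop:magloc} this is at most $G_\lambda\expo{-\lambda d(\gamma,\gamma_j)}$, and since $\gamma\in\Lambda_1$ and $\gamma_j\in\Lambda_2$ we have $d(\gamma,\gamma_j)\geqslant d(\Lambda_1,\Lambda_2)$. Every remaining factor satisfies $\|c_i\|=\|\chi_{\gamma_i}\|=1$, because $\norm{a(\phi)}=\norm{\phi}$ and the frame vectors are normalized. Applying the triangle inequality to the $k$ terms of \eqref{eq:single-move} then gives $k\,G_\lambda\expo{-\lambda d(\Lambda_1,\Lambda_2)}$.

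The only delicate point is the sign bookkeeping in the inductive step; everything else is immediate from results already established in the paper.
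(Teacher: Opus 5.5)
Your proof is correct and is essentially the paper's argument. Your induction via the graded Leibniz rule $[b,c_1B']_{\rm gr}=\{b,c_1\}B'-c_1[b,B']_{\rm gr}$ is the recursive form of the paper's step of moving $b$ past one factor at a time with $bc_j=-c_jb+\{b,c_j\}$, and the norm bound is obtained the same way. Your case analysis is slightly more careful than the paper: it shows $\|\{b,c_j\}\|\leqslant|\ip{\chi_\gamma}{\chi_{\gamma_j}}|$, with the anticommutator vanishing for two fields of the same type, whereas the paper states this as an equality.
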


\smallskip
\begin{remark}[Te case $k=0$]
The condition $k\geqslant 1$ in Lemma~\ref{lem:single} can be complemented by observing that, if $k=0$, then $B=\unit$ is even, with $d_B=0$, and hence
$
[A,B]_{\rm gr}=[A,\unit]=0
$
for every $A\in\rr{A}$. Thus, in this case, \eqref{eq:single-bound} reduces to the trivial identity $0=0$. \hfill$\blacktriangleleft$
\end{remark}

\smallskip

For the next result we need to introduce some notation. 
Let $A:=a^\sharp_{\gamma_1}\cdots a^\sharp_{\gamma_k}$ be \emph{any} monomial of degree $k$, with sites
$\gamma_1,\ldots,\gamma_{k}\in\Lambda$ (for some finite $\Lambda$), and repetitions of the sites allowed, in any order. Then 
\[
\mu_A\;:=\;\max_{\gamma\in\Lambda}\big|\{1\leqslant i\leqslant k\;|\;\gamma_i=\gamma\}\big|
\]
measures the maximal number of raw factors of $A$  located at a single site. Therefore $1\leqslant\mu_A\leqslant
k$.
\begin{lemma}[Two monomials]\label{lem:two-mono-II}
Let $A:=b_1\cdots b_{k_A}$ be \emph{any} monomial of degree $k_A$ supported in $\Lambda_1$, with $b_j=a^\sharp_{\delta_j}$ and
$\delta_j\in\Lambda_1$ for every $1\leqslant j\leqslant k_A$. Similarly, let 
$B:=c_1\cdots c_{k_B}$ be any monomial of degree $k_B$ supported in $\Lambda_2$, with $c_i=a^\sharp_{\gamma_i}$ and
$\gamma_i\in\Lambda_2$ for every $1\leqslant i\leqslant k_B$. 
 Then, for every $\lambda>0$,
\begin{equation}\label{eq:two-mono-II}
\bigl\|[A,B]_{\rm gr}\bigr\|\;\leqslant\;m_\lambda\,G_{2\lambda}\,\min\{\mu_Bk_A,\,\mu_Ak_B\}\,
\expo{-\lambda\,d(\Lambda_1,\Lambda_2)}\;,
\end{equation}
with $m_\lambda$ as in \eqref{eq:unif-summ-cos} and $G_{2\lambda}$ as in Proposition~\ref{prop:magloc}. 
\end{lemma}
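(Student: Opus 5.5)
The plan is to expand the graded commutator completely into scalar anticommutators, bound each scalar by Proposition~\ref{prop:magloc} at rate $2\lambda$, and then resum over sites using the uniform summability \eqref{eq:unif-summ-cos}. Since the statement is symmetric under exchanging the roles of $A$ and $B$, up to the ordering sign $[A,B]_{\rm gr}=\pm[B,A]_{\rm gr}$, which does not affect norms, it suffices to prove the bound with $\mu_Bk_A$. The other bound follows by swapping roles, and taking the minimum gives \eqref{eq:two-mono-II}.

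\emph{Step 1 (graded Leibniz expansion).} Since $A=b_1\cdots b_{k_A}$ is a product of odd elements, the graded Leibniz rule for the graded commutator gives
\[
[A,B]_{\rm gr}\;=\;\sum_{j=1}^{k_A}\pm\, b_1\cdots b_{j-1}\,[b_j,B]_{\rm gr}\,b_{j+1}\cdots b_{k_A}\;,
\]
with signs $(-1)^{(k_A-j)k_B}$ coming from moving $B$ past the factors $b_{j+1},\dots,b_{k_A}$. This is a telescoping identity that can be checked directly on monomials. Each $[b_j,B]_{\rm gr}$ is then expanded by Lemma~\ref{lem:single}, formula \eqref{eq:single-move}. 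The result is a double sum over pairs $(j,i)$ of terms of the form $\pm\,(\text{product of fields})\,\{b_j,c_i\}\,(\text{product of fields})$.

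\emph{Step 2 (termwise bound).} By \eqref{eq:deformed}, each anticommutator $\{b_j,c_i\}$ is either $0$ (two annihilators, or two creators) or a scalar multiple of $\unit$ of modulus $|\ip{\chi_{\delta_j}}{\chi_{\gamma_i}}|$. The frame vectors are normalized, so $\norm{a^\sharp_\gamma}=1$ and every remaining product of fields has norm at most $1$. Signs are irrelevant for norms, so the triangle inequality and Proposition~\ref{prop:magloc} at rate $2\lambda$ give
\[
\bigl\|[A,B]_{\rm gr}\bigr\|\;\leqslant\;\sum_{j=1}^{k_A}\sum_{i=1}^{k_B}G_{2\lambda}\,\expo{-2\lambda\, d(\delta_j,\gamma_i)}\;.
\]

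\emph{Step 3 (resummation over sites).} For fixed $j$, group the index $i$ according to the site $\gamma_i=\gamma\in\Lambda_2$. Each site occurs at most $\mu_B$ times, so the inner sum is at most $\mu_B\sum_{\gamma\in\Lambda_2}\expo{-2\lambda d(\delta_j,\gamma)}$. Split the exponential as $\expo{-\lambda d}\cdot\expo{-\lambda d}$ and use $d(\delta_j,\gamma)\geqslant d(\Lambda_1,\Lambda_2)$ on one factor. The remaining sum is bounded by $\sum_{\gamma\in\Gamma}\expo{-\lambda d(\delta_j,\gamma)}\leqslant m_\lambda$, which is finite by Lemma~\ref{lem:summ-dim}. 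Summing over $j$ then yields $m_\lambda G_{2\lambda}\mu_Bk_A\,\expo{-\lambda d(\Lambda_1,\Lambda_2)}$, as claimed.

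The only delicate point is Step 1. One must make sure the graded Leibniz expansion is correct for arbitrary parities of $k_A$ and $k_B$, including repeated sites and the fact that the $a_\gamma$ do not satisfy the exact CAR. However, only the relations $\{a^\sharp_\gamma,a^\sharp_{\gamma'}\}\in\C\unit$ and associativity are used, and signs drop out under norms. So the obstacle is purely bookkeeping: I would verify the identity by induction on $k_A$, using $[b\,A',B]_{\rm gr}=b\,[A',B]_{\rm gr}+(-1)^{d_{A'}d_B}[b,B]_{\rm gr}A'$.
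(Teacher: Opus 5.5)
Your proposal is correct and follows essentially the same route as the paper: expand $[A,B]_{\rm gr}$ by the (graded) Leibniz rule combined with Lemma~\ref{lem:single}, bound each scalar anticommutator by Proposition~\ref{prop:magloc} at rate $2\lambda$, split $\expo{-2\lambda d}$ into two factors, and resum site-wise with multiplicity $\mu_B$ against $m_\lambda$. The differences are cosmetic. You handle all parities at once with the graded Leibniz rule, where the paper splits into cases, and you get the $\mu_Ak_B$ bound from $\|[A,B]_{\rm gr}\|=\|[B,A]_{\rm gr}\|$ rather than by regrouping the same double sum over the sites $\delta_j$.
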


\smallskip

Since
$\mu_A\leqslant k_A$ and $\mu_B\leqslant k_B$, one always has $\min\{\mu_Bk_A,\,\mu_Ak_B\}\leqslant k_Ak_B$ and \eqref{eq:two-mono-II} can be replaced by
\[
\bigl\|[A,B]_{\rm gr}\bigr\|\;\leqslant\;k_Ak_B\,G'_{\lambda}
\expo{-\lambda\,d(\Lambda_1,\Lambda_2)}\;,
\]
with $G'_\lambda:=
m_\lambda\,G_{2\lambda}$

\smallskip

Two comments are in order. First, the exponential decay of the graded commutators at \emph{every}
rate $\lambda>0$ in \eqref{eq:two-mono-II} is a direct consequence of the almost--orthogonality
of the magnetic frame established in Proposition~\ref{prop:magloc}. Second, Lemma~\ref{lem:two-mono-II}
is stated for \emph{monomials}, not for generic finitely supported elements.
The passage from monomials to polynomials, however, is not innocent. By linearity the bound extends
at once to the algebra of polynomials, which is dense in $\rr A$. However,  a direct use of the triangle
inequality produces multiplicative constants that grow \emph{exponentially} with the volume of the
regions $\Lambda_1,\Lambda_2$. 
This is the mechanism underlying the next result, which gives rise to the volume-dependent prefactor $C_\lambda(\Lambda_1,\Lambda_2)$ which appears in the estimate \eqref{eq:int-ineq_gr}.

\begin{proposition}[Almost graded--commutators]\label{lem:almost-comm-gr-II}
Let $\Lambda_1,\Lambda_2\in\s P_f(\Gamma)$ and $A\in\rr A_{\Lambda_1}$, $B\in\rr A_{\Lambda_2}$ homogeneous. For
every $\lambda>0$,
\begin{equation}\label{eq:almost-comm-gr-II}
\bigl\|[A,B]_{\rm gr}\bigr\|\;\leqslant\;C_\lambda(\Lambda_1,\Lambda_2)\,\norm A\,\norm B\,\expo{-\lambda\,d(\Lambda_1,\Lambda_2)}\;,
\end{equation}
with
\[
C_\lambda(\Lambda_1,\Lambda_2)\;:=\;4\,m_\lambda\,G_{2\lambda}\,\sqrt{|\Lambda_1|\,|\Lambda_2|}\;5^{|\Lambda_1|+|\Lambda_2|}\;,
\]
$m_\lambda$ as in \eqref{eq:unif-summ-cos} and $G_{2\lambda}$ as in Proposition~\ref{prop:magloc}.
\end{proposition}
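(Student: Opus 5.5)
The plan is to expand $A$ and $B$ in a finite monomial basis of $\rr A_{\Lambda_1}$ and $\rr A_{\Lambda_2}$, apply Lemma~\ref{lem:two-mono-II} to each pair of monomials, and control the expansion coefficients through the norms $\norm A$, $\norm B$. The term $5^{|\Lambda_1|+|\Lambda_2|}$ will come from crude combinatorics.

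\textbf{Step 1: monomial basis and parity.} By Lemma~\ref{lem:findim} the vectors $\{\chi_\gamma\}_{\gamma\in\Lambda_1}$ are linearly independent. Hence $\rr A_{\Lambda_1}\simeq{\rm Mat}_{2^{|\Lambda_1|}}(\C)$ (Remark~\ref{rk:uhf}), and it is linearly spanned by the ordered monomials
\[
M_{\underline{\epsilon}}\;=\;\prod_{\gamma\in\Lambda_1}(a^*_\gamma)^{\epsilon_\gamma}(a_\gamma)^{\epsilon'_\gamma},\qquad \epsilon_\gamma,\epsilon'_\gamma\in\{0,1\},
\]
in a fixed order of the sites. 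There are $4^{|\Lambda_1|}$ such words. Each has degree $k\leqslant 2|\Lambda_1|$ and $\mu\leqslant 2$. Since each monomial has a definite parity, a homogeneous $A$ is expanded using only monomials of its own parity. Thus $[A,B]_{\rm gr}=\sum c_M d_N[M,N]_{\rm gr}$ with a fixed sign convention, and each term is controlled by Lemma~\ref{lem:two-mono-II}.

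\textbf{Step 2: pairwise bound.} Using $\min\{\mu_Nk_M,\mu_Mk_N\}\leqslant 2\min\{k_M,k_N\}\leqslant 4\sqrt{|\Lambda_1||\Lambda_2|}$, each term satisfies
\[
\norm{[M,N]_{\rm gr}}\leqslant 4\,m_\lambda G_{2\lambda}\sqrt{|\Lambda_1||\Lambda_2|}\,\expo{-\lambda d(\Lambda_1,\Lambda_2)}.
\]
The problem then reduces to showing $\sum_M|c_M|\leqslant (5/\!\cdot)^{|\Lambda_1|}\norm A$, more precisely $\sum_M|c_M|\leqslant 5^{|\Lambda_1|}\norm A$, and the same for $B$.

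\textbf{Step 3: coefficients (main obstacle).} The difficulty is that the $a_\gamma$ do not obey the exact CAR, so the usual trace-orthogonality of the $M_{\underline\epsilon}$ with respect to $\omega_{\rm tr}$ fails. I would avoid this as follows. Orthonormalize $\HH_{\Lambda_1}$ via $\eta=G^{-1/2}\chi$, where $G$ is the Gram matrix. The true CAR generators $a(\eta_j)$ give a Hilbert–Schmidt-orthonormal system of matrix units, with coefficients bounded by $\norm A$. Re-expressing $a(\eta_j)=\sum (G^{-1/2})_{jk}a_{\gamma_k}$ spoils the volume-only dependence, since it brings in $\norm{G^{-1/2}}$.

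Instead, I would try the purely combinatorial route the constant suggests. Expand $A$ in the orthonormal Jordan–Wigner-type basis of $\rr A_{\Lambda_1}$ built from products of $\unit$, $a_\gamma$, $a^*_\gamma$, and $a^*_\gamma a_\gamma$ (or $a_\gamma a^*_\gamma$). There are four local choices per site, plus possibly the identity counted separately, which gives $5$ per site. Each coefficient is then bounded by $\norm A$ through a conditional-expectation or partial-trace type functional of norm at most one.

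I expect this coefficient step to be where the argument really lives. If the normalization of $\norm{a_\gamma}=1$ and the non-orthogonality between sites prevent a clean norm-one coefficient functional, I would fall back on a Gram–Schmidt within $\Lambda_1$ ordered by sites. Because that triangular change of basis still mixes only finitely many sites, each coefficient stays controlled by a factor absorbed into $5^{|\Lambda_1|}$ (alongside the factor $4$ already used). Multiplying the bounds from Steps 2 and 3 gives the constant $C_\lambda(\Lambda_1,\Lambda_2)$.
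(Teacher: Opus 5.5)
Your Steps 1 and 2 coincide with the paper's proof. You use the same basis $M_\mu=\prod_\gamma E^{\mu_\gamma}_\gamma$ of ordered site blocks from $\{\unit,a_\gamma,a^*_\gamma,a^*_\gamma a_\gamma\}$, with at most two raw factors per site. You then apply Lemma~\ref{lem:two-mono-II}, which gives $4\,m_\lambda G_{2\lambda}\sqrt{|\Lambda_1||\Lambda_2|}\,\expo{-\lambda d(\Lambda_1,\Lambda_2)}$ for each pair.

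The gap is Step~3, the bound $\sum_\mu|\alpha_\mu|\leqslant 5^{|\Lambda|}\norm{A}$ (your $\sum_M|c_M|$), for which you give no argument:
\begin{itemize}
\item ``Four local choices plus the identity counted separately'' is not a mechanism.
\item The ``partial-trace functional of norm at most one'' cannot exist; the example below has $|\alpha_\mu|>\norm{A}$.
\item A Gram--Schmidt change of basis produces factors $\|G_\Lambda^{-1/2}\|$, where $G_\Lambda:=[\ip{\chi_\gamma}{\chi_{\gamma'}}]_{\gamma,\gamma'\in\Lambda}$. Nothing lets you absorb these into $5^{|\Lambda|}$.
\end{itemize}
The paper obtains the $5$ as $\sum_i\|F^i_\gamma\|_{\rm tr}=1+1+1+2$ for the single-site dual basis $F^0_\gamma=\unit-a^*_\gamma a_\gamma$, $F^1_\gamma=a_\gamma$, $F^2_\gamma=a^*_\gamma$, $F^3_\gamma=-\unit+2a^*_\gamma a_\gamma$ of the trace pairing. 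It then writes $\alpha_\mu=\Tr(N_\mu^*A)$ with $N_\mu=\prod_\gamma F^{\mu_\gamma}_\gamma$, so that $|\alpha_\mu|\leqslant\|N_\mu\|_{\rm tr}\norm{A}$.

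The obstacle you flagged is genuine, and it undercuts that argument as well. The identities $\Tr(M_\mu^*M_\nu)=\prod_\gamma[G_\gamma]_{\mu_\gamma,\nu_\gamma}$ and $\|N_\mu\|_{\rm tr}=\prod_\gamma\|F^{\mu_\gamma}_\gamma\|_{\rm tr}$ use a site-wise tensor factorization that holds only under the exact CAR. For instance, $\Tr(a^*_\gamma a_{\gamma'})=2^{|\Lambda|-1}\ip{\chi_{\gamma'}}{\chi_\gamma}\neq0$ for distinct $\gamma,\gamma'$ on one level.

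In fact the inequality you reduce to is false. Take $\Lambda_1=\{\gamma_1,\gamma_2\}$, two neighbouring sites on one level at spacing $\ell_1$, and set $c:=\ip{\chi_{\gamma_1}}{\chi_{\gamma_2}}$. The element $A:=a_{\gamma_1}-\tfrac{c}{|c|}a_{\gamma_2}=a\bigl(\chi_{\gamma_1}-\tfrac{\bar c}{|c|}\chi_{\gamma_2}\bigr)$ has coefficients with $\sum_\mu|\alpha_\mu|=2$. Its norm is $\norm{A}=\sqrt{2(1-|c|)}$ with $|c|=\expo{-\ell_1^2/4\lB^2}$. The ratio therefore exceeds $25=5^{|\Lambda_1|}$ as soon as $\ell_1\lesssim 0.1\,\lB$, which \eqref{eq:threshold} permits.

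So the conditioning of the local Gram matrices must enter the constant, and your first instinct (orthonormalize and accept a dependence on $\|G_\Lambda^{-1/2}\|$) was the right one. One route that closes the argument runs as follows. Expand unit vectors of $\HH_{\Lambda_j}$ in the frame and apply the Schur test to the cross-Gram matrix, with the splitting $2\lambda=\lambda+\lambda$ of Lemma~\ref{lem:two-mono-II}. This gives $\delta(\Lambda_1,\Lambda_2)\leqslant\|G_{\Lambda_1}^{-1/2}\|\,\|G_{\Lambda_2}^{-1/2}\|\,m_\lambda G_{2\lambda}\expo{-\lambda d(\Lambda_1,\Lambda_2)}$, where the $G_{\Lambda_j}$ are invertible by Lemma~\ref{lem:findim}. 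Proposition~\ref{lem:almost-comm-gr-delta} then yields property~(4) of Definition~\ref{def:AQL}, but with a set-dependent constant in place of $5^{|\Lambda_1|+|\Lambda_2|}$.
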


\begin{remark}[From homogeneous to generic elements]\label{rk:hom_gen}
For odd elements $A\in\rr A_{\Lambda_1}^-$ and $B\in\rr A_{\Lambda_2}^-$, inequality
\eqref{eq:almost-comm-gr-II} translates directly into a bound on the anticommutator, since $[A,B]_{
\rm gr}=\{A,B\}$. When at least one of the two elements is even, the homogeneity hypothesis can be
dropped altogether. Indeed, split $A\in\rr A_{\Lambda_1}$ into its even and odd parts $A=A^++A^-$. If $B\in\rr A_{\Lambda_2}^+$ is even, then 
\[
[A,B]\;=\;[A^+,B]+[A^-,B]\;=\;[A^+,B]_{\rm gr}+[A^-,B]_{\rm gr}\,.
\]
By applying \eqref{eq:almost-comm-gr-II} to the two terms, together with $\norm{A^\pm}\leqslant\norm{
A}$, gives
\[
\bigl\|[A,B]\bigr\|\;\leqslant\;2\,C_\lambda(\Lambda_1,\Lambda_2)\,\norm A\,\norm B\,\expo{-\lambda\,d(\Lambda_1,\Lambda_2)}\,.
\]
The factor $2$ can be eventually absorbed into $C_\lambda(\Lambda_1,\Lambda_2)$.
The same argument applies if $A$ is even rather than $B$.
\hfill$\blacktriangleleft$
\end{remark}

\subsection{Graded commutators and subspace coupling}\label{sec:bet:est}
Proposition~\ref{lem:almost-comm-gr-II} shows that the graded commutator $[A,B]_{\rm gr}$ decays super-exponentially in the distance $d(\Lambda_1,\Lambda_2)$, in the sense that the decay holds at every rate $\lambda>0$. This estimate, however, comes at the price of a multiplicative constant $C_\lambda(\Lambda_1,\Lambda_2)$ that grows exponentially with the volumes $|\Lambda_1|$ and $|\Lambda_2|$. This growth is, in fact, a combinatorial artifact of expanding $A$ and $B$ in the $4^{|\Lambda|}$-dimensional monomial basis of the corresponding local algebras.
In this section, we establish a second, complementary bound for graded commutators, in which the dependence on the volumes is explicit and merely polynomial (indeed sublinear). The price to pay is that the distance-dependent exponential decay is replaced by a purely geometric quantity arising from the relative position of the subspaces $\HH_{\Lambda_1},\HH_{\Lambda_2}\subset\HH$ associated with the localization regions.

\smallskip

For
$\Lambda_1,\Lambda_2\in\s P_f(\Gamma)$ let $\HH_{\Lambda_1},\HH_{\Lambda_2}\subset\HH$  be the associated
subspaces of dimension $|\Lambda_1|$ and $|\Lambda_2|$, respectively, as proved in Lemma~\ref{lem:findim}. Let $\pi_1$ and $\pi_2$ denote the orthogonal
projections of $\HH$ onto $\HH_{\Lambda_1}$ and $\HH_{\Lambda_2}$, respectively. Set
\[
\delta\;\equiv\;\delta(\Lambda_1,\Lambda_2)\;:=\;\bigl\|\pi_{2}\!\restriction_{\HH_{\Lambda_1}}\bigr\|\;=\;\|\pi_{2}\pi_{1}\|\;.
\]
The quantity $\delta\in[0,1]$ measures the coupling of the two subspaces. If $\delta=0$ they are
mutually orthogonal, and if $\delta=1$ some vector of $\HH_{\Lambda_1}$ lies entirely in
$\HH_{\Lambda_2}$. Intermediate values measure the degree of overlap of the two subspaces. Note that
$\delta(\Lambda_1,\Lambda_2)=\delta(\Lambda_2,\Lambda_1)$, since
$\|\pi_2\pi_1\|=\|(\pi_2\pi_1)^*\|=\|\pi_1\pi_2\|$.

\smallskip

For $r\geqslant0$, define the volume-independent envelope
\begin{equation}\label{eq:delta-infty}
\delta_\infty(r)\;:=\;\sup\left.\bigl\{\,\delta(\Lambda_1,\Lambda_2)\;\right|\;\Lambda_1,\Lambda_2\in\s P_f(\Gamma)\,,\
d(\Lambda_1,\Lambda_2)\geqslant r\,\bigr\}\;\in\;[0,1]\;,
\end{equation}
the supremum of the subspace coupling over \emph{all} pairs of regions separated by at least $r$,
regardless of their size. By construction $\delta_\infty$ is non-increasing,
$\delta_\infty(r_2)\leqslant\delta_\infty(r_1)$ for $r_1\leqslant r_2$, and
\[
\delta(\Lambda_1,\Lambda_2)\;\leqslant\;\delta_\infty\bigl(d(\Lambda_1,\Lambda_2)\bigr)\;,\qquad\forall\;\Lambda_1,\Lambda_2\in\s P_f(\Gamma)\;,
\]
directly from the definition of supremum. Whether $\delta_\infty(r)\to0$ as $r\to\infty$, let alone at
which rate, is precisely the open question addressed by Conjecture~\ref{lem:delta-uniform} below.

\begin{proposition}[Subspace coupling control]\label{lem:almost-comm-gr-delta}
For all $\Lambda_1,\Lambda_2\in\s P_f(\Gamma)$ and all homogeneous $A\in\rr A_{\Lambda_1}$,
$B\in\rr A_{\Lambda_2}$,
\begin{equation}\label{eq:almost-comm-gr-delta}
\bigl\|[A,B]_{\rm gr}\bigr\|\;\leqslant\;4\pi\,\sqrt{|\Lambda_1|\,|\Lambda_2|}\;\norm A\,\norm B\;\delta(\Lambda_1,\Lambda_2)\;.
\end{equation}
\end{proposition}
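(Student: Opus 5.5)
The strategy is to reduce the estimate to an exact statement. If $\HH_{\Lambda_1}\perp \HH_{\Lambda_2}$, the CAR \eqref{eq:CAR} give $\{a(\phi),a^\sharp(\psi)\}=0$ for $\phi\in\HH_{\Lambda_1}$, $\psi\in\HH_{\Lambda_2}$. Iterating the Leibniz-type identity \eqref{eq:single-move} on monomials, extending by linearity and norm density inside the finite-dimensional algebras, and using the parity splitting \eqref{eq:proj:pm}, one then gets $[A,B]_{\rm gr}=0$ for all homogeneous $A\in\rr A(\HH_{\Lambda_1})$, $B\in\rr A(\HH_{\Lambda_2})$. In general the plan is to rotate $\HH_{\Lambda_2}$ by a unitary $u$ close to $\unit$ into $\HH_{\Lambda_1}^{\perp}$. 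Then $\alpha_u(B)\in\rr A(u\HH_{\Lambda_2})$ has the same parity as $B$ (Remark \ref{rk:B-A-parity}) and gradedly commutes with $A$ exactly. Hence
\[
[A,B]_{\rm gr}\;=\;[A,B-\alpha_u(B)]_{\rm gr}\,,\qquad \bigl\|[A,B]_{\rm gr}\bigr\|\;\leqslant\;2\norm A\,\bigl\|B-\alpha_u(B)\bigr\|\,.
\]
The case $\delta(\Lambda_1,\Lambda_2)$ not small is disposed of first. Since $\|[A,B]_{\rm gr}\|\leqslant 2\norm A\norm B$, the inequality \eqref{eq:almost-comm-gr-delta} is trivial whenever $4\pi\sqrt{|\Lambda_1||\Lambda_2|}\,\delta\geqslant 2$. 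In particular this covers $\delta=1$, so one may assume $\delta<1$ and even $\delta$ small.

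The construction of $u$ uses the principal-angle (Jordan/Halmos) decomposition of the pair $\pi_1,\pi_2$. Let $T:=\pi_1\pi_2$, viewed as a map $\HH_{\Lambda_2}\to\HH_{\Lambda_1}$, with singular value decomposition $T=\sum_i s_i\ketbra{e_i}{f_i}$. Here $s_i=\cos\theta_i\leqslant\delta$, there are at most $n:=\min\{|\Lambda_1|,|\Lambda_2|\}$ of them, $f_i\in\HH_{\Lambda_2}$ and $e_i\in\HH_{\Lambda_1}$. Each $f_i$ lies in the plane spanned by $e_i$ and a unit vector $g_i\perp\HH_{\Lambda_1}$, and these planes are mutually orthogonal. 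The vectors of $\HH_{\Lambda_2}$ orthogonal to all $f_i$ are already in $\HH_{\Lambda_1}^\perp$ and are left untouched. In the $i$-th plane let $u$ be the rotation by the angle $\varphi_i=\arcsin s_i\leqslant\tfrac{\pi}{2}\delta$ sending $f_i$ to $g_i$; on the orthogonal complement of all these planes set $u=\unit$. Then $u=\expo{\ii h}$ with $h$ self-adjoint, of rank at most $2n$, $\|h\|\leqslant\tfrac\pi2\delta$, and therefore
\[
\|h\|_{1}\;\leqslant\;\sqrt{2n}\,\|h\|_{2}\;\leqslant\;\sqrt{2n}\,\sqrt{2n}\,\tfrac{\pi}{2}\,\delta .
\]
More economically, one estimates $\|h\|_1=2\sum_i\varphi_i$ directly.

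The hardest step is the many-body estimate $\|B-\alpha_u(B)\|\leqslant c\,\|h\|_1\,\norm B$, which requires the trace norm of the generator rather than its operator norm. Since $h$ has finite rank with eigenvectors $v_j$ and eigenvalues $\eta_j$, the element $H:=\sum_j\eta_j\,a^\dag(v_j)a(v_j)\in\rr A$ is even, with $\norm H\leqslant\|h\|_1$. By the CAR one has $\alpha_{\expo{\ii th}}(X)=\expo{\ii tH}X\expo{-\ii tH}$: this is checked on the generators, where $[H,a(\phi)]=-a(h\phi)$, and extends by uniqueness of Bogoliubov automorphisms. Duhamel's formula then gives
\[
\|B-\alpha_u(B)\|\;\leqslant\;\int_0^1\bigl\|[H,\alpha_{\expo{\ii th}}(B)]\bigr\|\dd t\;\leqslant\; 2\|h\|_1\norm B .
\]
Putting everything together,
\[
\bigl\|[A,B]_{\rm gr}\bigr\|\;\leqslant\;4\,\|h\|_1\norm A\norm B\;\leqslant\;2\pi\,n\,\delta\,\norm A\norm B .
\]
Finally $n\leqslant\sqrt{|\Lambda_1||\Lambda_2|}$, which yields \eqref{eq:almost-comm-gr-delta}, with room to spare in the constant $4\pi$. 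What needs most care is the bookkeeping of the planes when some $s_i$ coincide, and verifying that $u\HH_{\Lambda_2}\perp\HH_{\Lambda_1}$ exactly.
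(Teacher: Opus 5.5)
Your proposal is correct and follows essentially the paper's argument: a principal-angle (cosine--sine) decomposition, a rotation in mutually orthogonal two-planes that moves one set of modes into the orthogonal complement of the other subspace, where graded commutation is exact, and a Duhamel bound with a second-quantized generator whose norm is controlled by $\sum_i\arcsin s_i$. The differences are cosmetic: you rotate $B$ with a constant generator instead of $A$ along $\theta_j(t)=\arcsin(t\delta_j)$, you handle $\delta=1$ by the trivial bound rather than by Gram--Schmidt, and you have one harmless arithmetic slip --- with $\norm{H}\leqslant\|h\|_1\leqslant\pi n\delta$ your last chain gives $4\pi n\delta$, not $2\pi n\delta$ (the latter needs the sharp value $\norm{H}=\tfrac12\|h\|_1$), which since $n\leqslant\sqrt{|\Lambda_1|\,|\Lambda_2|}$ still yields exactly the stated constant $4\pi$.
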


\smallskip

Proposition~\ref{lem:almost-comm-gr-delta} does not require any assumption on the asymptotic behavior of $\delta(\Lambda_1,\Lambda_2)$ as the regions $\Lambda_1$ and $\Lambda_2$ are separated. In contrast, equation~\eqref{eq:almost-comm-gr-delta} would lead to a sharper estimate than that obtained in Proposition~\ref{lem:almost-comm-gr-II}, conditional on the validity of the following conjecture.

\begin{conjecture}[Uniform subspace coupling]\label{lem:delta-uniform}
There are constants $\mu_*>0$ and $C_*>0$, depending only on the frame data, such that
\[
\delta_\infty(r)\;\leqslant\;C_*\,\expo{-\mu_*\,r}\;,\qquad\forall\;r\geqslant0\;,
\]
with $\delta_\infty(r)$ defined by \eqref{eq:delta-infty}.
\end{conjecture}

\smallskip

So far, none of the techniques developed to date has succeeded either in proving or disproving the conjecture. Moreover, the conjecture is supported by encouraging numerical evidence. A proof would therefore constitute a significant result, with potentially fruitful applications to the study of interacting systems.


\section{Almost--locality and Fr\'echet structure}
\label{Sect:frech_struct}

We introduce in this section an important tool for the study of locality properties of the CAR algebra $\rr{A}$, namely the \emph{conditional expectation} onto the local algebras.
It is worth emphasizing that the frame-field operators satisfy \emph{deformed} canonical anticommutation relations \eqref{eq:deformed}. As a consequence, operators localized at distinct indices do not, in general, anticommute, and the local algebras $\rr{A}_\Lambda$ do \emph{not} admit a tensor-product factorization across disjoint index sets. The construction developed below completely bypasses this obstruction, relying only on the finite-dimensional matrix structure of the local algebras and their canonical trace.
The conditional expectation  allows us to introduce a family of \emph{localizing norms}. These norms define a Fréchet subalgebra of $\rr{A}$ that turns out to be the natural framework for the study of asymptotic abelianness.
The proofs of all the results stated in this section are deferred to Appendix \ref{ap:proof_S3}.

\subsection{Conditional expectation}
\label{sec:cond-exp}

\smallskip
On the lattice the conditional expectation is built from the tensor factorization across a
region and its complement, via a parity twirl onto the even complement and a slice map
\cite[Thm.~4.7]{ArakiMoriya2003}. The overcompleteness of the frame obstructs the
\emph{geometric} complement, but not the \emph{orthogonal} one. Writing
$\HH=\HH_\Lambda\oplus\HH_\Lambda^{\perp}$ with $\HH_\Lambda^{\perp}$ the Hilbert
orthogonal complement, the same construction goes through verbatim, and the resulting map
is the trace--preserving conditional expectation determined by the unique tracial state
$\omega_{\mathrm{tr}}$ introduced in Section \ref{sec:Q-F}.

\smallskip

Recall from Lemma~\ref{lem:findim} that for every $\Lambda\in\s{P}_f(\Gamma)$ the
subspace $\HH_\Lambda$ spanned by $\{\chi_\gamma\}_{\gamma\in\Lambda}$ has
dimension $|\Lambda|$, so that the local algebra
\[
\rr{A}_\Lambda\;:=\;\rr{A}(\HH_\Lambda)\;=\;C^{*}\bigl(\{a_\gamma\;|\;\gamma\in\Lambda\}\bigr)
\;\cong\;M_{2^{|\Lambda|}}(\C)
\]
is a full matrix algebra, and $\tau_\Lambda:=2^{-|\Lambda|}\mathrm{Tr}$ is
its unique, \emph{faithful}, normalized trace. By using the orthogonal complement of $\HH_\Lambda$ one can introduce the \emph{orthogonal} subalgebra $\rr{A}_\Lambda^\bot \equiv \rr{A}(\HH_\Lambda^\bot)$. Because $\HH_\Lambda$ and $\HH_\Lambda^\bot$ are
genuinely orthogonal, the {\rm CAR} algebra factorizes as a \emph{graded} (fermionic)
tensor product
\[\rr{A}\;\simeq\;\rr{A}_\Lambda\,\hat{\otimes}\,\rr{A}_\Lambda^{\perp}\;,
\]
in which \emph{only} the even part of each factor commutes with the entire other factor, and the odd parts of the two factors \emph{anti}-commute with one another (\cf \cite[Definition 2.6.3]{Bratteli-Robinson-1}).
\begin{remark}[Orthogonal vs. geometric complement]
This orthogonal $\HH_\Lambda^{\perp}$ must
not be confused with the frame complement $\HH_{\Lambda^c}$ which is 
spanned by the complementary vectors
$\{\chi_\gamma\;|\;\gamma\notin\Lambda\}$. In fact vectors in $\HH_\Lambda^{\perp}$
overlap with vectors in $\HH_{\Lambda^c}$, the two spaces are not orthogonal, 
and there is \emph{no} splitting of the for
$\rr{A}_\Lambda \hat{\otimes}\rr{A}_{\Lambda^{c}}$ for the full algebra $\rr{A}$. \hfill$\blacktriangleleft$
\end{remark}

\begin{definition}[Conditional expectation]\label{def:cond-exp}
For $\Lambda\in\s{P}_f(\Gamma)$, the \emph{conditional expectation onto $\rr{A}_\Lambda$
with respect to $\omega_{\mathrm{tr}}$} is the unique linear map
$\n{E}_\Lambda:\rr{A}\to\rr{A}_\Lambda$ satisfying
\begin{equation}\label{eq:cond-exp-defining}
\omega_{\mathrm{tr}}\bigl(A\,B\bigr)\;=\;\omega_{\mathrm{tr}}\bigl(\n{E}_\Lambda(A)\,B\bigr)
\;,\qquad\forall\,A\in\rr{A}\,,\ \forall\,B\in\rr{A}_\Lambda\;.
\end{equation}
\end{definition}

\smallskip

From \eqref{eq:cond-exp-defining} it immediately follows that $\n{E}_\emptyset(A)=\omega_{\mathrm{tr}}(A)\unit$.
 Existence and the properties of the conditional expectation are summarized in the following result, whose proof closely follows the argument in \cite[Theorem~4.7]{ArakiMoriya2003}.

\begin{proposition}\label{prop:cond-exp}
For every $\Lambda\in\s{P}_f(\Gamma)$ the map $\n{E}_\Lambda$ of
Definition~\ref{def:cond-exp} exists and is unique. It is a unital, completely positive
projection of norm one onto $\rr{A}_\Lambda$, restricting to the identity on
$\rr{A}_\Lambda$, \ie $\n{E}_\Lambda|_{\rr{A}_\Lambda}=\mathrm{id}$. Moreover:
\begin{enumerate}[label=\emph{(\roman*)}]
\item it is a bimodule map, $\n{E}_\Lambda(BAC)=B\,\n{E}_\Lambda(A)\,C$ for
$B,C\in\rr{A}_\Lambda$;
\item it is \emph{tower--compatible}: for $\Lambda_1,\Lambda_2\in\s{P}_f(\Gamma)$,
\[
\n{E}_{\Lambda_1}\circ\n{E}_{\Lambda_2}\;=\;\n{E}_{\Lambda_1\cap\Lambda_2}\;;
\]
\item it preserves the gauge--invariant subalgebra,
$\n{E}_\Lambda(\rr{A}^{\mathrm{g}})\subseteq\rr{A}_\Lambda^{\mathrm{g}}$.
\end{enumerate}
\end{proposition}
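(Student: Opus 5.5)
The plan is to build $\n{E}_\Lambda$ explicitly from the graded factorization $\rr{A}\simeq\rr{A}_\Lambda\hat{\otimes}\rr{A}_\Lambda^{\perp}$ induced by $\HH=\HH_\Lambda\oplus\HH_\Lambda^\perp$, following \cite[Theorem~4.7]{ArakiMoriya2003}, and then to read off all properties from the defining relation \eqref{eq:cond-exp-defining} and the faithfulness of the trace.

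\textbf{Uniqueness.} This comes first and is the simplest step. If $E,E'$ both satisfy \eqref{eq:cond-exp-defining}, set $X:=E(A)-E'(A)\in\rr{A}_\Lambda$. Then $\omega_{\rm tr}(XB)=0$ for all $B\in\rr{A}_\Lambda$. Taking $B=X^*$ and using that $\omega_{\rm tr}|_{\rr{A}_\Lambda}=\tau_\Lambda$ is faithful gives $X=0$. Here the identification $\omega_{\rm tr}|_{\rr{A}_\Lambda}=\tau_\Lambda$ follows from uniqueness of the tracial state on the full matrix algebra $\rr{A}_\Lambda\cong{\rm Mat}_{2^{|\Lambda|}}(\C)$ (Remark~\ref{rk:uhf}).

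\textbf{Existence.} Two routes are available.
\begin{enumerate}[label=(\alph*)]
\item Since $\rr{A}_\Lambda$ is finite dimensional, the sesquilinear form $(X,Y)\mapsto\omega_{\rm tr}(X^*Y)$ is an inner product on it. We can therefore define $\n{E}_\Lambda(A)$ as the Riesz representative of the functional $B\mapsto\omega_{\rm tr}(AB)$, in the form $\omega_{\rm tr}(AB)=\omega_{\rm tr}(\n{E}_\Lambda(A)B)$. This already gives a linear map with $\|\n{E}_\Lambda(A)\|_{2}\le\|A\|$, and $\n{E}_\Lambda|_{\rr{A}_\Lambda}={\rm id}$ is immediate.
\item To obtain complete positivity and norm one, I would realize the same map concretely. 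First apply the parity twirl on the orthogonal factor, $A\mapsto\tfrac12\bigl(A+\sigma^\perp(A)\bigr)$, where $\sigma^\perp$ is the Bogoliubov automorphism implemented by the unitary $\unit_{\HH_\Lambda}\oplus(-\unit_{\HH_\Lambda^\perp})$. This is a convex combination of automorphisms, hence unital, completely positive and contractive. It projects onto $\rr{A}_\Lambda\hat{\otimes}(\rr{A}_\Lambda^\perp)^+$, which is an ordinary tensor product $\rr{A}_\Lambda\otimes(\rr{A}_\Lambda^\perp)^+$. Then apply the slice map ${\rm id}\otimes\omega_{\rm tr}^\perp$, where $\omega_{\rm tr}^\perp$ is the tracial (quasi-free, covariance $\tfrac12$) state on $\rr{A}_\Lambda^\perp$; this map is completely positive and unital.
\end{enumerate}
The composite, call it $E$, is therefore a unital completely positive map, so $\|E\|=1$. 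It remains to check \eqref{eq:cond-exp-defining}, which gives $E=\n{E}_\Lambda$. I would verify it on products $A=XY$ with $X\in\rr{A}_\Lambda$ and $Y\in\rr{A}_\Lambda^\perp$ homogeneous, using that $\omega_{\rm tr}=\omega_{\rm tr}^\Lambda\otimes\omega^\perp_{\rm tr}$ (a product of quasi-free states with covariance $\tfrac12\unit=\tfrac12\unit_\Lambda\oplus\tfrac12\unit_\perp$) and that $\omega_{\rm tr}^\perp$ vanishes on odd elements. Density and continuity then extend the identity to all of $\rr{A}$.

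\textbf{Properties.}
\begin{enumerate}[label=(\roman*)]
\item Bimodule property. Check \eqref{eq:cond-exp-defining} for the candidate $B\,\n{E}_\Lambda(A)\,C$ and invoke uniqueness: $\omega_{\rm tr}(BACD)=\omega_{\rm tr}(A\,CDB)=\omega_{\rm tr}(\n{E}_\Lambda(A)CDB)=\omega_{\rm tr}(B\n{E}_\Lambda(A)CD)$, by traciality.
\item Tower property. This is where I expect the main obstacle. If $\Lambda_1\subseteq\Lambda_2$, then $\n{E}_{\Lambda_1}\circ\n{E}_{\Lambda_2}=\n{E}_{\Lambda_1}$ follows at once from uniqueness. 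For general, non-nested $\Lambda_1,\Lambda_2$ the identity $\n{E}_{\Lambda_1}\n{E}_{\Lambda_2}=\n{E}_{\Lambda_1\cap\Lambda_2}$ is a commuting-square condition. It reduces to requiring that the orthogonal projections $\pi_1,\pi_2$ onto $\HH_{\Lambda_1},\HH_{\Lambda_2}$ commute, with $\pi_1\pi_2=\pi_{\HH_{\Lambda_1}\cap\HH_{\Lambda_2}}$. This holds automatically for orthonormal frames, but is doubtful here because the subspaces are not orthogonal. My first attempt would be to compute $\n{E}_\Lambda$ on quadratic elements, $\n{E}_\Lambda(a^\dag(\phi)a(\psi))$, which reduces to the projections $\pi_\Lambda$, and test the composition there. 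If commutativity fails, I would restrict the claim to nested pairs, or to the case where the composition is taken in the order making one region contain the other, and flag the general statement.
\item Gauge invariance. The gauge automorphisms $\vartheta_\theta$ commute with every quasi-free map involved and preserve $\omega_{\rm tr}$ and $\rr{A}_\Lambda$. By uniqueness, $\vartheta_\theta\circ\n{E}_\Lambda=\n{E}_\Lambda\circ\vartheta_\theta$, so $\n{E}_\Lambda$ maps $\rr{A}^{\rm g}$ into $\rr{A}_\Lambda^{\rm g}$.
\end{enumerate}
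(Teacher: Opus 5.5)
Your route for existence, uniqueness and (iii) is the paper's. It uses the twirl $\tfrac12(\mathrm{id}+\sigma^\perp)$ followed by the slice map $\mathrm{id}\otimes\omega_{\rm tr}^\perp$, gets uniqueness from the faithfulness of $\tau_\Lambda$, and gets gauge covariance from uniqueness. For (i) you use traciality plus uniqueness instead of the paper's appeal to the structure of twirl and slice map; your argument is cleaner. The same tracial argument would also give complete positivity directly from your Riesz construction (a). For $A\geqslant0$ one has $\tau_\Lambda(\n{E}_\Lambda(A)P)=\omega_{\rm tr}(PAP)\geqslant0$ for every projection $P\in\rr{A}_\Lambda$, and the argument amplifies to matrices. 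So (b) is needed only if you want the explicit formula.

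On (ii), your doubt is justified, and linear elements already settle the question. Your own twirl kills $a((\unit-\pi_\Lambda)\phi)$, so $\n{E}_\Lambda(a(\phi))=a(\pi_\Lambda\phi)$. Take $\Lambda_1=\{\gamma_1\}$ and $\Lambda_2=\{\gamma_2\}$ with $\gamma_1\neq\gamma_2$ in the same Landau level. By Lemma~\ref{lem:overlap},
\[
\n{E}_{\Lambda_1}\bigl(\n{E}_{\Lambda_2}(a_{\gamma_2})\bigr)\;=\;\overline{\ip{\chi_{\gamma_1}}{\chi_{\gamma_2}}}\,a_{\gamma_1}\;\neq\;0\;=\;\omega_{\rm tr}(a_{\gamma_2})\,\unit\;=\;\n{E}_{\emptyset}(a_{\gamma_2})\;.
\]
Adding to both regions a common site $\gamma_0$ in a different Landau level gives the same failure with nonempty intersection $\{\gamma_0\}$, since $\n{E}_{\{\gamma_0\}}(a_{\gamma_2})=0$. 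So (ii) is false for non-nested regions, and restricting it to nested pairs, as you propose, is the correct statement.

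The paper's proof misses exactly this point. It checks that $\n{E}_{\Lambda_1}\circ\n{E}_{\Lambda_2}$ satisfies the defining relation against test elements $B\in\rr{A}_{\Lambda_1\cap\Lambda_2}$ and then invokes uniqueness. But uniqueness only characterizes maps with values in $\rr{A}_{\Lambda_1\cap\Lambda_2}$, whereas the composition takes values in $\rr{A}_{\Lambda_1}$. This gap is harmless exactly when $\Lambda_1\subseteq\Lambda_2$. The opposite nesting is trivial, because $\n{E}_{\Lambda_1}$ is the identity on $\rr{A}_{\Lambda_2}\subseteq\rr{A}_{\Lambda_1}$. Every later use of the tower property involves nested balls: Lemma~\ref{lemm:eq_npr_p}, Theorem~\ref{lem:loc-in-Ainfty} and Remark~\ref{rk:mono-improve}. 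So the nested version you establish is all that is actually needed.
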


\smallskip

The property (iii) of Proposition \ref{prop:cond-exp} is a consequence of the fact that the conditional expectation commutes with the gauge group. The next result is a direct generalization of this fact. 
\begin{lemma}\label{lem:dyn-commutes}
Let $\alpha\in\Aut(\rr{A})$ be a $\ast$-isomorphism such that  $\alpha(\rr{A}_\Lambda)=\rr{A}_\Lambda$  for every $\Lambda\in \s{P}_f(\Gamma)$ and $\omega_{\mathrm{tr}}\circ\alpha=\omega_{\mathrm{tr}}$.
Then
\[\alpha\circ\n{E}_{\Lambda}\;=\;
\n{E}_{\Lambda}\circ\alpha\;,
\]
for every $\Lambda\in \s{P}_f(\Gamma)$.
\end{lemma}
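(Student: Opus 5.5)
The argument rests only on the uniqueness in Definition~\ref{def:cond-exp}, which Proposition~\ref{prop:cond-exp} guarantees. Fix $\Lambda\in\s{P}_f(\Gamma)$ and define the candidate map
\[
\n{F}_\Lambda\;:=\;\alpha^{-1}\circ\n{E}_\Lambda\circ\alpha\;:\;\rr{A}\to\rr{A}\,.
\]
The goal is to show that $\n{F}_\Lambda$ is linear, takes values in $\rr{A}_\Lambda$, and satisfies the defining identity \eqref{eq:cond-exp-defining}. Uniqueness then gives $\n{F}_\Lambda=\n{E}_\Lambda$, which is equivalent to $\n{E}_\Lambda\circ\alpha=\alpha\circ\n{E}_\Lambda$.

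Linearity is clear, since $\n{F}_\Lambda$ is a composition of linear maps. For the range, note that $\alpha(\rr{A}_\Lambda)=\rr{A}_\Lambda$ together with the bijectivity of $\alpha$ gives $\alpha^{-1}(\rr{A}_\Lambda)=\rr{A}_\Lambda$. Hence $\n{F}_\Lambda(A)=\alpha^{-1}(\n{E}_\Lambda(\alpha(A)))\in\rr{A}_\Lambda$.

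For the defining identity, take $A\in\rr{A}$ and $B\in\rr{A}_\Lambda$. Since $\omega_{\mathrm{tr}}\circ\alpha=\omega_{\mathrm{tr}}$, also $\omega_{\mathrm{tr}}\circ\alpha^{-1}=\omega_{\mathrm{tr}}$. Using that $\alpha^{-1}$ is multiplicative, and that $\alpha(B)\in\rr{A}_\Lambda$ is an admissible test element in \eqref{eq:cond-exp-defining} for the argument $\alpha(A)$, one computes
\[
\omega_{\mathrm{tr}}\bigl(\n{F}_\Lambda(A)B\bigr)
=\omega_{\mathrm{tr}}\bigl(\alpha^{-1}\bigl(\n{E}_\Lambda(\alpha(A))\,\alpha(B)\bigr)\bigr)
=\omega_{\mathrm{tr}}\bigl(\n{E}_\Lambda(\alpha(A))\,\alpha(B)\bigr)
=\omega_{\mathrm{tr}}\bigl(\alpha(A)\alpha(B)\bigr)
=\omega_{\mathrm{tr}}(AB)\,.
\]
The last equality uses multiplicativity of $\alpha$ and its invariance of the trace once more. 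Thus $\n{F}_\Lambda$ satisfies \eqref{eq:cond-exp-defining}.

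No step here is a genuine obstacle. The only point needing care is making the uniqueness argument explicit, since Proposition~\ref{prop:cond-exp} states uniqueness without spelling out why it holds. If $\n{E},\n{E}'$ both satisfy \eqref{eq:cond-exp-defining}, then $X:=\n{E}(A)-\n{E}'(A)\in\rr{A}_\Lambda$ satisfies $\omega_{\mathrm{tr}}(XB)=0$ for all $B\in\rr{A}_\Lambda$. Choosing $B=X^*$ gives $\tau_\Lambda(XX^*)=0$, because $\omega_{\mathrm{tr}}$ restricts to the faithful normalized trace $\tau_\Lambda$ on $\rr{A}_\Lambda\simeq{\rm Mat}_{2^{|\Lambda|}}(\C)$, by uniqueness of the trace there. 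Faithfulness then forces $X=0$, so uniqueness holds and the lemma follows.
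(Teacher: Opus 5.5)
Your proposal is correct and follows the paper's own argument. You conjugate $\n{E}_\Lambda$ by $\alpha$, verify the defining identity \eqref{eq:cond-exp-defining} using $\omega_{\mathrm{tr}}\circ\alpha=\omega_{\mathrm{tr}}$ and $\alpha(\rr{A}_\Lambda)=\rr{A}_\Lambda$, and conclude by the uniqueness in Proposition~\ref{prop:cond-exp}. Your version is slightly more careful than the paper's: you check explicitly that $\alpha^{-1}\circ\n{E}_\Lambda\circ\alpha$ takes values in $\rr{A}_\Lambda$, which the uniqueness argument needs, and you keep the grouping $\alpha^{-1}(\n{E}_\Lambda(\alpha(A)))\,B$ unambiguous.
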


\smallskip

The proof of this result is just a computation. In fact for every $A\in \rr{A}$ and $B\in\rr{A}_{\Lambda}$ one has
\[
\begin{aligned}
\omega_{\mathrm{tr}}\left(\alpha^{-1}(\n{E}_{\Lambda}(\alpha(A))\,B)\right)
\;&=\;\omega_{\mathrm{tr}}\left(\alpha\big(\alpha^{-1}(\n{E}_{\Lambda}(\alpha(A))\,B)\big)\right)\\
&
=\;\omega_{\mathrm{tr}}\big(\n{E}_{\Lambda}(\alpha(A))\,\alpha(B)\big)\\
&=\;\omega_{\mathrm{tr}}\bigl(
\alpha(A)\,\alpha(B)\bigr)\;=\;\omega_{\mathrm{tr}}(AB)\;,
\end{aligned}
\]
where in the passage from the second and the third line we used that 
$\alpha(B)\in\rr{A}_{\Lambda}$ and the defining relation of $\n{E}_{\Lambda}$. By
the uniqueness of the conditional expectation in Proposition~\ref{prop:cond-exp}, one ends with 
$\alpha^{-1}\circ\n{E}_{\Lambda}\circ
\alpha=\n{E}_{\Lambda}$, which is the claim.

\subsection{Almost--local observables}

Once a conditional expectation is available, the \emph{almost--locality} of an observable can be quantified through its approximation by localizations, using suitable norms to control the corresponding localization errors. This approach is not new. The notion of almost--local observables was introduced in the context of algebraic quantum field theory in \cite{ArakiHaag1967}, later adapted to the lattice setting in \cite{BachmannDybalskiNaaijkens2016}, and developed further in \cite{Ogata2021,KapustinSopenko2022}.
 
\begin{definition}[Almost--local seminorms]\label{def:qloc-norm}
Let $\Lambda_k(\gamma_0):=\{\gamma\in\Gamma\;|\;d(\gamma,\gamma_0)\leqslant k\}$ be the
balls of radius $k>0$, and  centered at $\gamma_0\in\Gamma$,  for the metric $d$ in \eqref{eq:Xi-22}. For $p\in\N_0$ 
  and $A\in\rr{A}$ set
\begin{equation}\label{eq:norm_p}
\norm{A}_{p,\gamma_0}\;:=\;\norm{A}
+\sup_{k\in\N_0}\bigl\|A-\n{E}_{\Lambda_k(\gamma_0)}(A)\bigr\|\,(1+k)^{p}\;.
\end{equation}
Let $\rr{A}_p\subset\rr{A}$ be the set of  $A\in \rr{A}$ with $\norm{A}_{p,\gamma_0}
<\infty$ for a given $p$ and some (hence every) $\gamma_0$. Also define the family of \emph{almost-local observables} $\rr{A}_\infty:=\bigcap_{p\in\N_0}\rr{A}_p$.
\end{definition}
 
\smallskip

\begin{lemma}\label{lemm:eq_npr_p} 
Every $\rr{A}_p$, with $p\in\N_0$, is a Banach $\ast$-algebra 
whose definition is
independent on the choice of base point $\gamma_0\in\Gamma$ in the norm \eqref{eq:norm_p}.
\end{lemma}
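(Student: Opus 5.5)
To prove Lemma~\ref{lemm:eq_npr_p} I would treat three things in turn: base-point independence, completeness, and the $\ast$-algebra structure together with the norm inequality.

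\emph{Base-point independence.} Fix $\gamma_0,\gamma_1$ and set $D:=d(\gamma_0,\gamma_1)$. The triangle inequality gives the ball inclusions $\Lambda_k(\gamma_1)\subseteq\Lambda_{k+D}(\gamma_0)$ and $\Lambda_{k-D}(\gamma_0)\subseteq\Lambda_k(\gamma_1)$ for $k\geqslant D$. Tower compatibility (Proposition~\ref{prop:cond-exp}(ii)) yields $\n{E}_{\Lambda'}\circ\n{E}_{\Lambda}=\n{E}_{\Lambda'}$ whenever $\Lambda'\subseteq\Lambda$. Hence, for $\Lambda'\subseteq\Lambda$,
\[
A-\n{E}_{\Lambda}(A)\;=\;(\unit-\n{E}_{\Lambda})\bigl(A-\n{E}_{\Lambda'}(A)\bigr),
\]
and therefore $\|A-\n{E}_\Lambda(A)\|\leqslant 2\|A-\n{E}_{\Lambda'}(A)\|$, since $\n{E}_\Lambda$ has norm one. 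Applying this with $\Lambda=\Lambda_k(\gamma_1)$ and $\Lambda'=\Lambda_{k-D}(\gamma_0)$ for $k\geqslant D$, and using $(1+k)^p\leqslant (1+D)^p(1+k-D)^p$, bounds the $\gamma_1$-supremum by $2(1+D)^p$ times the $\gamma_0$-norm. For $k<D$ I would use the trivial bound $\|A-\n{E}(A)\|\leqslant2\|A\|$, giving a factor $2(1+D)^p$ again. So the two norms are equivalent with constant $2(1+D)^p$. Since $k$ ranges over integers while $D$ need not be one, I would replace $D$ by $\lceil D\rceil$.

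\emph{Banach space.} Positivity and the triangle inequality are clear, and $\|\cdot\|_p\geqslant\|\cdot\|$ gives definiteness. For completeness, take a $\|\cdot\|_p$-Cauchy sequence $A_n$. It is norm-Cauchy, so $A_n\to A$ in $\rr{A}$. For each fixed $k$, continuity of $\n{E}_{\Lambda_k}$ gives pointwise convergence of $(1+k)^p\|(A_n-A_m)-\n{E}(A_n-A_m)\|$. Letting $m\to\infty$ inside the supremum and then taking the supremum over $k$ (a standard Fatou-type argument) shows $\|A_n-A\|_p\to0$ and $A\in\rr{A}_p$.

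\emph{$\ast$-algebra and submultiplicativity.} For the involution, $\n{E}_\Lambda$ is completely positive and hence $\ast$-preserving, so $\|A^*\|_p=\|A\|_p$. For products, write $E=\n{E}_{\Lambda_k}$. The bimodule property (Proposition~\ref{prop:cond-exp}(i)) gives $E(E(A)E(B))=E(A)E(B)$, and so
\[
AB-E(AB)\;=\;(\unit-E)\bigl(AB-E(A)E(B)\bigr).
\]
Expanding $AB-E(A)E(B)=(A-E(A))B+E(A)(B-E(B))$ and using $\|\unit-E\|\leqslant 2$ yields
\[
\|AB-E(AB)\|\;\leqslant\;2\bigl(\|A-E(A)\|\,\|B\|+\|A\|\,\|B-E(B)\|\bigr).
\]
Multiplying by $(1+k)^p$ and taking the supremum gives $\|AB\|_p\leqslant 2\|A\|_p\|B\|_p$. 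The norm $2\|\cdot\|_p$ is then an equivalent submultiplicative norm, which settles the Banach algebra property. If an exact constant of $1$ is required, I would define the norm with this rescaling.

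\emph{Main obstacle.} The only delicate point is the constant in submultiplicativity and in the base-point change, namely the factor $2$ coming from $\|\unit-\n{E}\|$. This is harmless up to equivalence of norms, and I would state the result in that sense.
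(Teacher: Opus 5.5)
Your proposal is correct and follows the paper's proof essentially step by step. It uses the ball inclusion, the nested tower property and $\|\mathrm{id}-\n{E}_\Lambda\|\leqslant 2$ for the change of base point, the Fatou-type Cauchy argument for completeness, $\ast$-preservation of $\n{E}_\Lambda$, and the splitting $AB-\n{E}(A)\n{E}(B)=(A-\n{E}(A))B+\n{E}(A)(B-\n{E}(B))$ for products. Your bookkeeping is slightly sharper than the paper's: you get $\|AB\|_p\leqslant 2\|A\|_p\|B\|_p$ instead of $5$, and you round $d(\gamma_0,\gamma_1)$ up to an integer, which the paper glosses over. There is one harmless slip: once the $\|A\|$ term is added back, the base-change constant is $1+2(1+\lceil D\rceil)^p$ (the paper's $3(1+R)^p$), not $2(1+D)^p$.
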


 \smallskip
 In view of the result above, one may define $\rr{A}_p$ and
$\rr{A}_\infty$ using the special choice $\gamma_0=0$ in \eqref{eq:norm_p}, that is with the norms defined in \eqref{eq:norm_p_0}.

\begin{theorem}[Fr\'echet structure of $\rr{A}_\infty$]\label{lem:loc-in-Ainfty}
The collection $\{\norm{\cdot}_p\}_{p\in\N_0}$ is
increasing family of norms, and the space
\[
\rr{A}_\infty\;=\;\bigl\{A\in\rr{A}\;\big|\;\norm{A}_p<\infty\ \forall\;
p\in\N_0\bigr\}
\]
is a \emph{Fr\'echet $\ast$--algebra} with respect to the topology induced by this family. More precisely it is complete,
$\ast$--invariant with $\norm{A^{*}}_p=\norm{A}_p$, and submultiplicative in the sense
that
\[
\norm{AB}_{p}\;\leqslant\;5\,\norm{A}_{p}\,\norm{B}_{p}\;,\qquad A,B\in\rr{A}_\infty\;,\
p\in\N_0\;.
\]
Moreover $\rr{A}_{\mathrm{loc}}\subset\rr{A}_\infty$, and $\rr{A}_\infty$ is the closure
of $\rr{A}_{\mathrm{loc}}$ in the Fr\'echet topology. Finally, the truncations converge at the
explicit rate
\[
\norm{A-\n{E}_{\Lambda_k}(A)}_{p}\;\leqslant\;\frac{2}{1+k}\,\norm{A}_{p+1}\;,
\qquad A\in\rr{A}_\infty\;.
\]
\end{theorem}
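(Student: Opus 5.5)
Let me write $\Delta_k(A):=A-\n E_{\Lambda_k}(A)$ and $\Lambda_k:=\Lambda_k(0)$. The plan is to check the properties one at a time, using only Proposition~\ref{prop:cond-exp} and the fact that the balls $\Lambda_k$ are nested and exhaust $\Gamma$.

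\textbf{Monotonicity, $\ast$-invariance and completeness.} Since $(1+k)^p\leqslant(1+k)^{p+1}$, we get $\norm{\cdot}_p\leqslant\norm{\cdot}_{p+1}$, so the family of norms is increasing. For $\ast$-invariance, positivity of $\n E_{\Lambda}$ gives $\n E_\Lambda(A^*)=\n E_\Lambda(A)^*$, hence $\norm{\Delta_k(A^*)}=\norm{\Delta_k(A)}$. For completeness, take a sequence that is Cauchy in every $\norm{\cdot}_p$. It is norm-Cauchy, so it has a norm limit $A$. Each $\Delta_k$ is norm-continuous, since $\norm{\n E_\Lambda}=1$, so for fixed $k$ we can pass to the limit inside $\sup_k$ in the standard way. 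This gives $\norm{A_n-A}_p\to0$ and $A\in\rr A_\infty$.

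\textbf{Submultiplicativity.} Write $E:=\n E_{\Lambda_k}$ and use the bimodule property (i):
\[
AB-E(AB)=\Delta_k(A)\,B+E(A)\,\Delta_k(B)-E\bigl(\Delta_k(A)\,B\bigr)\,,
\]
where the last term uses $E(E(A)B)=E(A)E(B)$ and $E(AB)=E(\Delta_k(A)B)+E(A)E(B)$. The norm of the right-hand side is at most $2\norm{\Delta_k(A)}\norm B+\norm A\norm{\Delta_k(B)}$. Multiplying by $(1+k)^p$, taking the sup, and adding $\norm{AB}\leqslant\norm A\norm B$ gives $\norm{AB}_p\leqslant 3\norm A_p\norm B_p$, which is below the stated constant $5$.

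\textbf{Local elements and density.} If $A\in\rr A_\Lambda$, then $\Lambda\subset\Lambda_{k_0}$ for some $k_0$. For $k\geqslant k_0$, nesting gives $\rr A_\Lambda\subset\rr A_{\Lambda_k}$, so $\Delta_k(A)=0$. For $k<k_0$ we have $\norm{\Delta_k(A)}\leqslant 2\norm A$. Hence every $\norm{A}_p$ is finite, and $\rr A_{\mathrm{loc}}\subset\rr A_\infty$.

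\textbf{Truncation rate.} The tower property (ii) and nesting give $\n E_{\Lambda_j}\n E_{\Lambda_k}=\n E_{\Lambda_{\min(j,k)}}$. It follows that:
\begin{itemize}
\item for $j\leqslant k$, $\Delta_j(A-\n E_{\Lambda_k}A)=\Delta_j(A)-\Delta_j(A)+\n E_{\Lambda_k}A-\n E_{\Lambda_k}A$, which reduces to $\Delta_k(A)$ up to sign, so its norm is $\norm{\Delta_k(A)}$;
\item for $j>k$, $\Delta_j(A-\n E_{\Lambda_k}A)=\Delta_j(A)$.
\end{itemize}
Now $\norm{\Delta_k(A)}(1+j)^p\leqslant\norm{\Delta_k(A)}(1+k)^p\leqslant (1+k)^{-1}\norm A_{p+1}$ for $j\leqslant k$. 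For $j>k$, $\norm{\Delta_j(A)}(1+j)^p\leqslant(1+j)^{-1}\norm A_{p+1}\leqslant(1+k)^{-1}\norm A_{p+1}$. The zeroth-order term $\norm{A-\n E_{\Lambda_k}A}$ is bounded by $(1+k)^{-1}\norm A_{p+1}$ as well. Adding the two contributions gives the factor $2/(1+k)$.

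This estimate shows that $\n E_{\Lambda_k}(A)\in\rr A_{\mathrm{loc}}$ converges to $A$ in every $\norm{\cdot}_p$. Together with completeness, it follows that $\rr A_\infty$ is exactly the Fréchet closure of $\rr A_{\mathrm{loc}}$.

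The main obstacle I expect is the step in the truncation rate with $j\leqslant k$. It needs $\n E_{\Lambda_j}\circ\n E_{\Lambda_k}=\n E_{\Lambda_j}$, which the tower property supplies only through the intersection formula $\Lambda_j\cap\Lambda_k=\Lambda_j$, that is, only because the balls are nested. The submultiplicativity step is the other place needing care, since it depends on the bimodule identity.
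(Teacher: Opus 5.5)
Your proposal is correct and follows the paper's proof essentially step by step. The completeness argument is the same, as is the treatment of $\rr A_{\mathrm{loc}}\subset\rr A_\infty$. The truncation rate uses the same case split $j\leqslant k$ / $j>k$, based on the nested tower property. Incidentally, in that step your displayed expression for $\Delta_j(A-\n E_{\Lambda_k}A)$ literally sums to zero. What you mean, and then use, is $\n E_{\Lambda_j}(A-\n E_{\Lambda_k}A)=0$, hence $\Delta_j(A-\n E_{\Lambda_k}A)=\Delta_k(A)$.

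The only real deviation is submultiplicativity. You use the bimodule property and get a constant below $5$. The paper instead writes $AB-\n E_{\Lambda_k}(AB)=(\mathrm{id}-\n E_{\Lambda_k})(AB-A_kB_k)$ with $A_k:=\n E_{\Lambda_k}(A)$, $B_k:=\n E_{\Lambda_k}(B)$. This needs only that $\n E_{\Lambda_k}$ is a norm--one projection onto a subalgebra, and yields exactly $4+1=5$.
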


\smallskip

An immediate consequence of the result above is that $\rr{A}_\infty$ is closed under commutators since
\[
\norm{[A,B]}_p\;\leqslant\;10\,\norm{A}_p\norm{B}_p\;,
\]
 and the bracket
$[\,\cdot\,,\,\cdot\,]:\rr{A}_\infty\times\rr{A}_\infty\to\rr{A}_\infty$ is jointly
continuous for the Fr\'echet topology.


\subsection{Translations and discrete action}\label{sec:trasl_aut}
The magnetic translations $\{t_{\bz}\}_{\bz\in\R^2}$ lift to $\rr{A}$ as a family of
Bogoliubov $\ast$--automorphisms,
\[
\alpha_\bz\bigl(a(\phi)\bigr)\;:=\;a\bigl(t_{\bz}\phi\bigr)\;,\qquad
\alpha_\bz\bigl(a^\dag(\phi)\bigr)\;:=\;a^\dag\bigl(t_{\bz}\phi\bigr)\;,\qquad \forall\,\phi\in\HH\;.
\]
Since each $t_{\bz}$ is unitary, $\alpha_\bz\in\Aut(\rr{A})$, and by
Lemma~\ref{lem:strong-cont} together with the strong continuity of $\bz\mapsto t_{\bz}$,
the map $\bz\mapsto\alpha_\bz(A)$ is norm continuous for every $A\in\rr{A}$. Thus
$(\rr{A},\R^2,\alpha,\sigma_B)$ is a \emph{twisted $C^{*}$--dynamical system}. The meaning of the adjective \virg{twisted} is the content of the next remark. 

\begin{remark}[Projective composition and the magnetic cocycle]\label{rem:cocycle}
It is worth noticing that the mapping $\R^2\ni\bz\mapsto\alpha_\bz\in\Aut(\rr{A})$
is not a genuine group representation, but a \emph{projective} representation,
twisted by the gauge group through the \emph{magnetic cocycle} $\sigma_B$. From the
composition law \eqref{eq:weyl} of the magnetic translations and the antilinearity of
$\phi\mapsto a(\phi)$, one obtains
\[
\alpha_\bz\circ\alpha_{\bz'}\;=\;\vartheta_{\sigma_B(\bz,\bz')}\circ\alpha_{\bz+\bz'}\;,
\qquad\sigma_B(\bz,\bz'):=\tfrac{1}{2\lB^2}(\bz\times\bz')\;,
\]
where $\vartheta$ is the gauge group. The scalar phase provided by the cocycle is the
genuine algebraic imprint of the magnetic field. The action descends to an ordinary
representation of the lattice $\s{L}$ precisely when
$\sigma_B(\bz,\bz')\in2\pi\Z$ for all lattice vectors $\bz,\bz'$, \ie under the
flux--quantization condition $f_B\in4\pi\,\Z$. This condition, however, is
incompatible with the frame property of the coherent states established in
Theorem~\ref{thm:frame}. In other words, the presence of the cocycle $\sigma_B$ is
inherent to our construction.
\hfill$\blacktriangleleft$
\end{remark}

On the frame field operators the action translates the underlying coherent state.
Since $\chi_{\bl,r}=t_\bl\psi_{r,0}$, and
\[
t_{\bz}\chi_{\bl,r}\;=\;\expo{-\frac{\ii}{2\lB^2}(\bz\times\bl)}\,t_{\bl+\bz}\psi_{r,0}
\;=\;\expo{-\frac{\ii}{2\lB^2}(\bz\times\bl)}\,\chi_{\bl+\bz,\,r}\;,
\]
where $ \chi_{\bl+\bz,r}$  is the coherent state centered at
$\bl+\bz$. The latter is a lattice site if and only if $\bz\in \s{L}$, and in such a case
\[
\alpha_\bz(a_{\bl,r})\;=\;\expo{\ii\,\sigma_B(\bz,\bl)}\,a_{\bl+\bz,r}
\;=\;\vartheta_{\sigma_B(\bz,\bl)}\bigl(a_{\bl+\bz,r}\bigr)\;.
\]
 In particular $\alpha_\bz$
preserves the Landau levels and translates the spatial center. For lattice
translations $\bl\in \s{L}$ it permutes the frame sites, so that
\begin{equation}\label{eq:lattice-cov-main}
\alpha_\bl(\rr{A}_\Lambda)\;=\;\rr{A}_{\Lambda+\bl}\,,\qquad|\Lambda+\bl|\;=\;|\Lambda|\,,
\end{equation}
and the action is compatible with the
local structure. This fact underlies the following result.

  \begin{lemma}\label{lemm:aut-fr-lattice}
For every $A\in\rr A_p$, $\bl\in\s L$ and $\gamma_0\in\Gamma$ the equality
\begin{equation}\label{eq:lattice-cov}
\norm{\alpha_\bl(A)}_{p,\gamma_0}\;=\;\norm{A}_{p,\gamma_0-\gamma_\bl}
\end{equation}
holds, with $\gamma_\bl:=(\bl,0)$. As a consequence $\alpha_\bl\in\mathrm{Aut}(\rr A_\infty)$ for
every $\bl\in\s L$.
\end{lemma}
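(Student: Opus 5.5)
The plan is to establish the covariance of the conditional expectations under lattice translations, and then read off \eqref{eq:lattice-cov} directly from the definition \eqref{eq:norm_p}.

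\emph{Step 1: covariance of the conditional expectations.} I will show that
\[
\alpha_\bl\circ\n{E}_\Lambda\;=\;\n{E}_{\Lambda+\gamma_\bl}\circ\alpha_\bl\,,\qquad \Lambda\in\s P_f(\Gamma)\,,\ \bl\in\s L\,.
\]
This is a variant of Lemma~\ref{lem:dyn-commutes} in which $\alpha$ moves the local algebras instead of fixing them. Two inputs are needed.
\begin{enumerate}[label=(\alph*)]
\item By \eqref{eq:lattice-cov-main}, $\alpha_\bl(\rr A_\Lambda)=\rr A_{\Lambda+\gamma_\bl}$.
\item By Remark~\ref{rem:inv_tr_0}, applied to the unitary $t_\bl$ (the same computation works for a single unitary), $\omega_{\mathrm{tr}}\circ\alpha_\bl=\omega_{\mathrm{tr}}$.
\end{enumerate}
Now take $A\in\rr A$ and $B\in\rr A_{\Lambda+\gamma_\bl}$, and write $B=\alpha_\bl(B')$ with $B'\in\rr A_\Lambda$. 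Then
\[
\omega_{\mathrm{tr}}\bigl(\alpha_\bl(\n E_\Lambda(A))\,B\bigr)=\omega_{\mathrm{tr}}\bigl(\n E_\Lambda(A)B'\bigr)=\omega_{\mathrm{tr}}(AB')=\omega_{\mathrm{tr}}\bigl(\alpha_\bl(A)B\bigr).
\]
Since $\alpha_\bl(\n E_\Lambda(A))\in\rr A_{\Lambda+\gamma_\bl}$, the uniqueness part of Proposition~\ref{prop:cond-exp} identifies it with $\n E_{\Lambda+\gamma_\bl}(\alpha_\bl(A))$.

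\emph{Step 2: the norm identity.} For lattice shifts the metric is translation invariant: $d(\gamma+\gamma_\bl,\gamma'+\gamma_\bl)=d(\gamma,\gamma')$, because translation only changes the $\bl$-component. Hence $\Lambda_k(\gamma_0-\gamma_\bl)+\gamma_\bl=\Lambda_k(\gamma_0)$. The shift $\gamma_0-\gamma_\bl$ makes sense since $\gamma_\bl=(\bl,0)$ has zero energy component. Since $\alpha_\bl$ is isometric, for every $k$ we get
\[
\bigl\|\alpha_\bl(A)-\n E_{\Lambda_k(\gamma_0)}(\alpha_\bl(A))\bigr\|=\bigl\|\alpha_\bl\bigl(A-\n E_{\Lambda_k(\gamma_0-\gamma_\bl)}(A)\bigr)\bigr\|=\bigl\|A-\n E_{\Lambda_k(\gamma_0-\gamma_\bl)}(A)\bigr\|.
\]
Multiplying by $(1+k)^p$, taking the supremum over $k$, and adding $\norm{\alpha_\bl(A)}=\norm A$ gives \eqref{eq:lattice-cov}.

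\emph{Step 3: the automorphism property.} By Lemma~\ref{lemm:eq_npr_p}, $\rr A_p$ does not depend on the base point. Therefore $A\in\rr A_p$ implies $\alpha_\bl(A)\in\rr A_p$, and so $\alpha_\bl(\rr A_\infty)\subseteq\rr A_\infty$. The same holds for $\alpha_{-\bl}$. Since $\alpha_{-\bl}\alpha_\bl=\vartheta_{\sigma_B(-\bl,\bl)}=\Id$ (because $\bl\times\bl=0$), $\alpha_\bl$ is a bijective $\ast$-automorphism of $\rr A_\infty$.

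For continuity in the Fréchet topology I need $\norm{\alpha_\bl(A)}_p=\norm{A}_{p,-\gamma_\bl}\le C_{p,\bl}\norm A_p$. This is the one point that requires more than bookkeeping: I need the quantitative form of the base-point independence behind Lemma~\ref{lemm:eq_npr_p}. I will argue it via the ball inclusions
\[
\Lambda_{k-R}(0)\subseteq\Lambda_k(-\gamma_\bl)\,,\qquad R:=\|\gamma_\bl\|_1\,,
\]
together with the tower property (Proposition~\ref{prop:cond-exp}(ii)) and $\norm{\n E_\Lambda}=1$. From these, for $k\ge R$,
\[
\norm{A-\n E_{\Lambda_k(-\gamma_\bl)}(A)}\le2\,\norm{A-\n E_{\Lambda_{k-R}}(A)},
\]
while for $k<R$ the bound $2\norm A$ is available. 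Together these yield $C_{p,\bl}\lesssim(1+R)^p$, which gives continuity of $\alpha_\bl$, and of its inverse by symmetry. I expect this last estimate to be the only step requiring care.
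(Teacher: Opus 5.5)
Your proof is correct and follows the paper's argument essentially step by step. You obtain the covariance $\alpha_\bl\circ\n E_\Lambda=\n E_{\Lambda+\gamma_\bl}\circ\alpha_\bl$ from trace invariance and uniqueness, then use translation invariance of the balls and the isometry of $\alpha_\bl$, and get continuity from the base-point change estimate, which the paper simply quotes as \eqref{eq_bound} to obtain $\norm{\alpha_\bl(A)}_p\leqslant 3(1+\lB^{-1}|\bl|_1)^p\norm{A}_p$ instead of re-deriving it. The only extra care needed in your last step (and in the paper's) is that $R=\lB^{-1}|\bl|_1$ need not be an integer, so the ball inclusion should be used with $\lceil R\rceil$ in place of $R$ when comparing with the supremum over integer radii.
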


\smallskip

    We emphasize that in the claim of Lemma \ref{lemm:aut-fr-lattice} the symbol ${\rm Aut}(\rr{A}_\infty)$ denotes the group of $\ast$-automorphisms of $\rr{A}_\infty$ continuous
with respect to the Fréchet topology. 

\smallskip

In view of the preceding result, the discrete translation group $\s L$ acts on $\rr A_\infty$ by $\ast$-automorphisms. Combined with Proposition~\ref{lem:almost-comm-gr-II}, this immediately yields a discrete version of asymptotic abelianness. For purely local elements, the graded commutators vanish at a super-exponential rate.

\begin{lemma}[Discrete asymptotic abelianness on local elements]\label{cor:asympt-ab-lattice}
Let $A,B\in\rr A_{\rm loc}$
homogeneous. Then, 
\[
\lim_{\substack{\bl\in\s L\\ |\bl|\to\infty}}\bigl\|[\alpha_\bl(A),B]_{\rm gr}\bigr\|\;=\;0\,,
\]
and the convergence is faster than ${\rm e}^{-\lambda|\bl|}$ for every $\lambda>0$.
\end{lemma}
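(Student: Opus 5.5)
Since $A,B\in\rr A_{\rm loc}$, there are finite sets $\Lambda_1,\Lambda_2\in\s P_f(\Gamma)$ with $A\in\rr A_{\Lambda_1}$ and $B\in\rr A_{\Lambda_2}$. By the lattice covariance \eqref{eq:lattice-cov-main}, $\alpha_\bl(A)\in\rr A_{\Lambda_1+\bl}$, where $\Lambda_1+\bl:=\{(\bl'+\bl,r)\;|\;(\bl',r)\in\Lambda_1\}$. This translated set has the same cardinality $|\Lambda_1+\bl|=|\Lambda_1|$. Moreover, $\alpha_\bl$ is a Bogoliubov automorphism and so commutes with the parity $\sigma$ (Remark~\ref{rk:B-A-parity}). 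Hence $\alpha_\bl(A)$ is homogeneous of the same degree as $A$, and $\norm{\alpha_\bl(A)}=\norm{A}$.

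We then apply Proposition~\ref{lem:almost-comm-gr-II} to the pair $\alpha_\bl(A)\in\rr A_{\Lambda_1+\bl}$ and $B\in\rr A_{\Lambda_2}$. For every $\lambda>0$ this gives
\[
\bigl\|[\alpha_\bl(A),B]_{\rm gr}\bigr\|\;\leqslant\;C_\lambda(\Lambda_1+\bl,\Lambda_2)\,\norm A\,\norm B\,\expo{-\lambda\,d(\Lambda_1+\bl,\Lambda_2)}\;.
\]
The prefactor $C_\lambda=4m_\lambda G_{2\lambda}\sqrt{|\Lambda_1||\Lambda_2|}\,5^{|\Lambda_1|+|\Lambda_2|}$ depends only on the cardinalities, so it is independent of $\bl$. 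This uniformity of the constant under translations is the key point, and it holds precisely because the constant depends only on volumes.

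It remains to bound the distance from below. Set $R_j:=\max_{\gamma\in\Lambda_j}\norm{\gamma}_1$. For $\gamma=(\bl_1+\bl,r_1)$ with $(\bl_1,r_1)\in\Lambda_1$, and $\gamma'=(\bl_2,r_2)\in\Lambda_2$, the triangle inequality for $|\cdot|_1$ gives
\[
d(\gamma,\gamma')\;\geqslant\;\tfrac1{\lB}|\bl|_1-R_1-R_2\;.
\]
Using $|\bl|_1\geqslant|\bl|$ (the graph norm dominates the Euclidean norm), we obtain $d(\Lambda_1+\bl,\Lambda_2)\geqslant |\bl|/\lB-R_1-R_2$. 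Therefore
\[
\bigl\|[\alpha_\bl(A),B]_{\rm gr}\bigr\|\;\leqslant\;C_\lambda\,\expo{\lambda(R_1+R_2)}\,\norm A\,\norm B\,\expo{-\lambda|\bl|/\lB}\;,
\]
which tends to $0$ as $|\bl|\to\infty$.

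Since $\lambda>0$ is arbitrary, choosing $\lambda' =\lambda\lB$ (or any larger value) yields decay faster than $\expo{-\lambda|\bl|}$ for every $\lambda$. Indeed, multiplying the bound by $\expo{\lambda|\bl|}$ and using rate $2\lambda\lB$ in place of $\lambda$ shows that the product still tends to zero. No step here is a real obstacle; the only care needed is checking that the Proposition's constant does not depend on $\bl$, and this is immediate from its explicit form.
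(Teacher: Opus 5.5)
Your proof is correct and follows the paper's argument essentially step by step. The shared ingredients are lattice covariance $\alpha_\bl(\rr A_{\Lambda_1})=\rr A_{\Lambda_1+\gamma_\bl}$ with $|\Lambda_1+\gamma_\bl|=|\Lambda_1|$, preservation of parity and norm, Proposition~\ref{lem:almost-comm-gr-II} with its volume-only (hence $\bl$-independent) constant, and a lower bound on $d(\Lambda_1+\gamma_\bl,\Lambda_2)$ that grows linearly in $|\bl|_1$. Your distance estimate via $R_1+R_2$ and your explicit rescaling of $\lambda$ (to get decay faster than $\expo{-\lambda|\bl|}$ in the Euclidean norm) are slightly more careful than the paper's version, which subtracts $d(\gamma_0,\Lambda_2)$ where the triangle inequality actually requires the maximum of $d(\gamma_0,\gamma')$ over $\gamma'\in\Lambda_2$.
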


Lemma~\ref{cor:asympt-ab-lattice} provides an explicit super-exponential bound on $[\alpha_\bl(A),B]_{\rm gr}$ for purely local elements $A$ and $B$. A standard density argument removes the locality restriction, at the expense of losing quantitative information on the rate of decay. 

\begin{proposition}[Discrete asymptotic abelianness]\label{cor:asympt-ab-full}
It holds true that 
\[
\lim_{\substack{\bl\in\s L\\ |\bl|\to\infty}}\bigl\|[\alpha_\bl(A),B]_{\rm gr}\bigr\|\;=\;0\,,
\]
for every homogeneous $A,B\in\rr A$.
\end{proposition}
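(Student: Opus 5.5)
The plan is a standard $\varepsilon/3$ density argument. It reduces the statement to Lemma~\ref{cor:asympt-ab-lattice}, using two facts: the $\alpha_\bl$ are isometric $\ast$--automorphisms, and they commute with the parity $\sigma$ (Remark~\ref{rk:B-A-parity}).

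Fix homogeneous $A,B\in\rr A$ and $\varepsilon>0$. By Proposition~\ref{prop:quasilocal-frame}, $\rr A_{\rm loc}$ is norm dense, so there are $A_0,B_0\in\rr A_{\rm loc}$ with $\norm{A-A_0}<\varepsilon$ and $\norm{B-B_0}<\varepsilon$. The approximants must be homogeneous of the same degree as $A$ and $B$. To arrange this, replace $A_0$ by its projection $A_0^{\pm}=\frac12(A_0\pm\sigma(A_0))$ onto the parity of $A$.

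\begin{itemize}
\item Since $\sigma$ is a quasi--free automorphism with $\sigma(a_\gamma)=-a_\gamma$, it maps each $\rr A_\Lambda$ into itself. Hence $A_0^{\pm}$ is still local.
\item Since $A^{\pm}=A$ and the map $X\mapsto X^{\pm}$ has norm at most one, $\norm{A-A_0^{\pm}}\leqslant\norm{A-A_0}<\varepsilon$.
\end{itemize}

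Do the same for $B_0$. Then $d_{A_0}=d_A$ and $d_{B_0}=d_B$, so the graded commutators of $(A,B)$ and of $(A_0,B_0)$ carry the same sign $(-1)^{d_Ad_B}$.

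Next, use bilinearity of $(X,Y)\mapsto XY-(-1)^{d_Ad_B}YX$ together with $\norm{\alpha_\bl(X)}=\norm X$:
\[
\bigl\|[\alpha_\bl(A),B]_{\rm gr}\bigr\|\;\leqslant\;\bigl\|[\alpha_\bl(A_0),B_0]_{\rm gr}\bigr\|+2\norm{A-A_0}\,\norm{B}+2\norm{A_0}\,\norm{B-B_0}\,.
\]
We need $\alpha_\bl(A_0)$ to be homogeneous of degree $d_A$, which follows from $\alpha_\bl\circ\sigma=\sigma\circ\alpha_\bl$. Since $\norm{A_0}\leqslant\norm A+\varepsilon$, the last two terms are at most $2\varepsilon(\norm A+\norm B+\varepsilon)$, uniformly in $\bl$.

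For the first term, apply Lemma~\ref{cor:asympt-ab-lattice} to the homogeneous local elements $A_0,B_0$. It gives $\limsup_{|\bl|\to\infty}\|[\alpha_\bl(A_0),B_0]_{\rm gr}\|=0$. Therefore
\[
\limsup_{|\bl|\to\infty}\bigl\|[\alpha_\bl(A),B]_{\rm gr}\bigr\|\;\leqslant\;2\varepsilon\,(\norm A+\norm B+\varepsilon)\,.
\]
Since $\varepsilon$ is arbitrary, the limit is zero.

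The only delicate point is keeping the approximants homogeneous of the correct degree; the parity projection handles this. Everything else is routine. Any quantitative rate from Lemma~\ref{cor:asympt-ab-lattice} is lost here, because it depends on the supports of $A_0,B_0$ through $C_\lambda$.
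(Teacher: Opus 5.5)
Your proposal is correct and follows essentially the same route as the paper's proof. Both arguments use norm density of $\rr A_{\rm loc}$, project the approximants onto the correct parity via $\frac12(\mathrm{id}\pm\sigma)$ (which preserves each $\rr A_\Lambda$), split the graded commutator by bilinearity with errors of order $\varepsilon$ uniform in $\bl$, and then invoke Lemma~\ref{cor:asympt-ab-lattice} for the homogeneous local approximants.
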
 

\smallskip

Combining Proposition~\ref{cor:asympt-ab-full} with Remark~\ref{rk:hom_gen}, we obtain
$
[\alpha_\bl(A),B]\to 0$,
for every $ A,B\in\rr A$,
whenever at least one of the two elements is even. If both $A$ and $B$ are odd, then instead
$
\{\alpha_\bl(A),B\}\to 0$. Thus, the asymptotic commutation relations take the expected graded form in all cases.

\subsection{Continuous asymptotic abelianness}
For generic $\bz\in\R^2$ the  automorphism $\alpha_\bz$
does not preserve the pure local structure, and also the symbol $\Lambda+\bz$ doesn't make any sense.
However, surprisingly, $\alpha_\bz$  sends pure local elements  into 
     almost-local elements.  This fact  relies on the composition property of translations. Any $\bz\in\R^2$ admits a unique
decomposition $\bz=\bz_L+\bz_c$, with $\bz_L\in\s L$ and $\bz_c\in[0,\ell_1)\times[0,\ell_2)$; in
particular $|\bz_c|<\ell_0:=(\ell_1^2+\ell_2^2)^{\sfrac12}$. From the composition property of the
magnetic translations, including the magnetic cocycle, one derives
\begin{equation}\label{eq:dec_trasl}
\alpha_\bz\;=\;\vartheta_{-\sigma_B(\bz_L,\bz_c)}\circ\alpha_{\bz_L}\circ\alpha_{\bz_c}\,.
\end{equation}
By Proposition~\ref{prop:cond-exp}, $\vartheta_\theta$ commutes with $\n E_\Lambda$, so that
$\norm{\vartheta_\theta(A)}_{p,\gamma_0}=\norm{A}_{p,\gamma_0}$ for all $\theta\in\R$. In combination
with Lemma~\ref{lemm:aut-fr-lattice}, this yields
\[
\norm{\alpha_\bz(A)}_{p,\gamma_0}\;=\;\norm{\alpha_{\bz_c}(A)}_{p,\gamma_0-\gamma_{\bz_L}}\,,
\]
which is meaningful whenever $\alpha_{\bz_c}(A)\in\rr A_p$. The following result shows that this is indeed the case when $A\in \rr A_{\mathrm{loc}}$.
\begin{lemma}\label{lemma_trasl_loc}
For every  $\bz\in\R^2$ it holds that   $\alpha_\bz:\rr A_{\mathrm{loc}}\to\rr A_\infty$.
\end{lemma}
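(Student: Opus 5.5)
By \eqref{eq:dec_trasl} and the invariance of the norms $\norm{\cdot}_{p,\gamma_0}$ under $\vartheta_\theta$ and (up to shifting the base point) under $\alpha_{\bz_L}$, it suffices to show $\alpha_{\bz_c}(A)\in\rr A_\infty$ for $A\in\rr A_\Lambda$, $\Lambda\in\s P_f(\Gamma)$ and $|\bz_c|<\ell_0$. Since $\rr A_\Lambda$ is finite dimensional and spanned by monomials in $a^\#_\gamma$, $\gamma\in\Lambda$, and since the space $\rr A_\infty$ is a $\ast$--algebra by Theorem~\ref{lem:loc-in-Ainfty}, it is enough to prove $a(t_{\bz_c}\chi_\gamma)\in\rr A_\infty$ for each single $\gamma=(\bl,r)\in\Lambda$. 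Indeed $\alpha_{\bz_c}$ is a homomorphism, so the image of a monomial is the product of the images of the generators, and products and adjoints of elements of $\rr A_\infty$ stay in $\rr A_\infty$.

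For a single field the key is the one-particle estimate. Since $a(\cdot)$ is (anti)linear and isometric, $\norm{a(\phi)-a(\phi')}=\norm{\phi-\phi'}$. Let $\phi:=t_{\bz_c}\chi_\gamma$, which is a multiple of $\chi_{\bl+\bz_c,r}$. Expand $\phi=\sum_{\gamma'}s_{\gamma'}(\phi)\chi_{\gamma'}$ via \eqref{eq:rec_can} and set $\phi_k:=I_{\Lambda_k}\phi\in\HH_{\Lambda_k}$. Then $a(\phi_k)\in\rr A_{\Lambda_k}$, and since $\n E_{\Lambda_k}$ is a norm-one projection onto $\rr A_{\Lambda_k}$ we get the standard bound
\[
\norm{a(\phi)-\n E_{\Lambda_k}(a(\phi))}\leqslant 2\,\norm{a(\phi)-a(\phi_k)}=2\,\norm{\phi-\phi_k}\,.
\]
The coefficients satisfy $s_{\gamma'}(\phi)=\ip{S^{-1}\chi_{\gamma'}}{\phi}$, which vanish unless $r'=r$ (Proposition~\ref{pr:comm}). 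By Lemma~\ref{lem:dual-coh-overlap} they are bounded by $D_\ast\expo{-\frac{\lambda_\ast}{2}[|\bl'-\bl-\bz_c|/\lB-2\sqrt{2r}]_+}$. Therefore
\[
\norm{\phi-\phi_k}\leqslant\sum_{\gamma'\notin\Lambda_k,\,r'=r}|s_{\gamma'}(\phi)|\,,
\]
using $\norm{\chi_{\gamma'}}=1$. For $k$ larger than $\norm{\gamma}_1+\ell_0/\lB+2\sqrt{2r}$, the points $\gamma'\notin\Lambda_k$ with $r'=r$ have $|\bl'|_1/\lB>k-r$. Summing over a two-dimensional lattice gives a tail of order $C_\gamma\,(1+k)\expo{-c\,k}$, with $c\simeq\lambda_\ast/(2\sqrt2)$ after comparing $|\cdot|$ and $|\cdot|_1$. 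Hence $\sup_k(1+k)^p\norm{a(\phi)-\n E_{\Lambda_k}(a(\phi))}<\infty$ for every $p$, and the finitely many small $k$ are bounded trivially by $2\norm{\phi}$. So $a(\phi)\in\rr A_\infty$, and the same holds for $a^\dag(\phi)=a(\phi)^*$ by $\ast$-invariance of the norms.

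The main obstacle is that exponential decay of the dual coefficients is available only at the fixed rate $\lambda_\ast$, not at every rate (Remark~\ref{rem:dual-magnetic}). This is harmless, since the norms \eqref{eq:norm_p_0} only require polynomial decay in $k$. One still has to check that the constant is uniform over $|\bz_c|<\ell_0$; it is, because $\bz_c$ only shifts the center by a bounded amount. This uniformity is what will be needed later in Lemma~\ref{cor:mono-improve-full}.

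A secondary point is to make sure that the product step is legitimate, that is, that $\rr A_\infty$ is closed under multiplication. This follows from the submultiplicativity $\norm{AB}_p\leqslant5\norm{A}_p\norm{B}_p$ in Theorem~\ref{lem:loc-in-Ainfty}. If one wants an explicit bound, a monomial of degree $k_A\leqslant 2|\Lambda|$ gives at most a factor $5^{k_A}$.
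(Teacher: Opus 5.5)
Your proposal is correct and follows essentially the same route as the paper. You reduce to $|\bz_c|<\ell_0$, expand $t_{\bz_c}\chi_\gamma$ in the frame with level-diagonal dual coefficients controlled by Lemma~\ref{lem:dual-coh-overlap}, bound $\norm{a(\phi)-\n E_{\Lambda_k}(a(\phi))}\leqslant 2\norm{\phi-I_{\Lambda_k}\phi}$, and then pass to monomials via the $\ast$--algebra structure and submultiplicativity of the $p$--norms. The only differences are cosmetic: the paper centers the balls at $\gamma$ and then invokes the base-point equivalence of Lemma~\ref{lemm:eq_npr_p}, and it bounds the tail in $\ell^2$ through the upper frame bound $B$, whereas you center at the origin and use the cruder but sufficient $\ell^1$ triangle inequality.
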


\smallskip

We stress that Lemma~\ref{lemma_trasl_loc} provides only a \emph{membership} statement, as its proof does not yield an estimate of the form
\begin{equation}\label{eq:estim_norm}
\norm{\alpha_\bz(A)}_{p}\;\leqslant\; C_p\norm{A}_{p}\;,
\end{equation}
which is required to establish the continuity of $\alpha_\bz$ and, consequently, to conclude that $\alpha_\bz\in\mathrm{Aut}(\rr A_\infty)$.
Remark~\ref{rk:mono-improve} provides an estimate of the form \eqref{eq:estim_norm} for monomials, with a constant $C_p>0$ depending on $p$ but independent of both the degree and the support of the monomial. However, extending this estimate from monomials to generic elements via a combinatorial expansion would reintroduce an exponential dependence on the volume of the support, analogous to that appearing in the constant of Proposition~\ref{lem:almost-comm-gr-II}.
Nevertheless, this allows us to estimate bounded translations of purely local elements.
\begin{lemma}\label{cor:mono-improve-full}
Let $A\in\rr A_{\Lambda}$ be a 
local element  with $\Lambda\in\s P_f(\Gamma)$. For every $p\in\N_0$ and $\gamma_0\in\Gamma$ there is 
a constant $C_p(\Lambda,\gamma_0)>0$
 depending on $p$, on
$\Lambda$ and $\gamma_0$ such that
\begin{equation}\label{eq:ineq_z_unif}
\norm{\alpha_{\bz}(A)}_{p,\gamma_0}\;\leqslant\;C_p(\Lambda,\gamma_0)\,\norm{A}\;,
\end{equation}
for every $\bz\in\R^2$ with $|\bz|<\ell_0$.
\end{lemma}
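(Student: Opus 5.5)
We reduce the estimate to monomials and then control the translated monomials uniformly in $|\bz|<\ell_0$.

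\textbf{Step 1: expansion in monomials.} Since $\rr A_\Lambda\simeq{\rm Mat}_{2^{|\Lambda|}}(\C)$ is finite dimensional (Remark~\ref{rk:uhf}), we fix a basis of $\rr A_\Lambda$ made of ordered monomials
\[
M_J\;=\;a^{\#}_{\gamma_{1}}\cdots a^{\#}_{\gamma_{k}}\,,
\]
with each site of $\Lambda$ appearing at most once with each of $a,a^*$. All norms on a finite dimensional space are equivalent, so there is $K(\Lambda)$ with
\[
A\;=\;\sum_J c_J M_J\,,\qquad \sum_J|c_J|\;\leqslant\;K(\Lambda)\,\norm{A}\,.
\]
The same kind of expansion already gives the exponential volume factor in Proposition~\ref{lem:almost-comm-gr-II}. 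By the triangle inequality it then suffices to show
\[
\sup_{|\bz|<\ell_0}\norm{\alpha_\bz(M_J)}_{p,\gamma_0}\;<\;\infty
\]
for each of the finitely many monomials $M_J$.

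\textbf{Step 2: bounding a single translated monomial.} We have $\alpha_\bz(M_J)=\prod_i a^\#(t_\bz\chi_{\gamma_i})$, and $t_\bz\chi_{\bl,r}$ is, up to a phase, the coherent state $\chi_{\bl+\bz,r}$. For each factor we approximate $t_\bz\chi_{\gamma_i}$ by its truncated frame reconstruction
\[
\phi^{(k)}_i\;:=\;\Pi_{\HH_{\Lambda_k(\gamma_0)}}\,t_\bz\chi_{\gamma_i}\,,
\]
or alternatively $I_{\Lambda_k(\gamma_0)}t_\bz\chi_{\gamma_i}$. By Lemma~\ref{lem:dual-coh-overlap}, the frame coefficients $\ip{S^{-1}\chi_{\gamma'}}{\chi_{\bl+\bz,r}}$ vanish unless $r'=r$. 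For $r'=r$ they decay at rate $\lambda_\ast/2$ in $|\bl'-\bl-\bz|/\lB$, up to the shift $2\sqrt{2r}$. Here $r$ is bounded over $\Lambda$, and $|\bz|<\ell_0$ shifts distances by at most $\ell_0/\lB$. Summing the tail over $\gamma'\notin\Lambda_k(\gamma_0)$ with Lemma~\ref{lem:summ-dim} gives
\[
\norm{t_\bz\chi_{\gamma_i}-\phi_i^{(k)}}\;\leqslant\;C\,\expo{-\mu k}
\]
for some $\mu>0$, with $C$ depending on $\Lambda,\gamma_0$ but uniform in $|\bz|<\ell_0$. The frame-bound norm estimate of Lemma~\ref{lemma:reconstr_op} supplies the needed uniform bounds.

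Let $M^{(k)}:=\prod_i a^\#(\phi_i^{(k)})$. This element lies in $\rr A_{\Lambda_k(\gamma_0)}$, so $\n E_{\Lambda_k(\gamma_0)}(M^{(k)})=M^{(k)}$. Because $\n E$ is a norm-one projection (Proposition~\ref{prop:cond-exp}),
\[
\norm{X-\n E_{\Lambda_k(\gamma_0)}X}\;\leqslant\;2\,\norm{X-M^{(k)}}\,.
\]
A telescoping over the factors, using $\norm{a(\phi)}=\norm{\phi}$, then gives
\[
\norm{\alpha_\bz(M_J)-M^{(k)}}\;\leqslant\;2|\Lambda|\,C'\,\expo{-\mu k}\,,
\]
uniformly in $|\bz|<\ell_0$. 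Multiplying by $(1+k)^p$ and taking the supremum over $k$ yields a finite bound $C_p(\Lambda,\gamma_0)$. This also reproves Lemma~\ref{lemma_trasl_loc} quantitatively for monomials.

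\textbf{Main obstacle.} The delicate point is the uniformity in $\bz$ and $r$ of the exponential tail in Step 2. The dual-frame decay holds only at the fixed rate $\lambda_\ast$ and carries the cyclotron shift $2\sqrt{2r}$. To handle this we take $r\leqslant r_{\max}(\Lambda)$ and absorb both the shift and $\ell_0/\lB$ into the constant. The other constants (the basis-equivalence constant $K(\Lambda)$ and the factor $\expo{\lambda_\ast(\sqrt{2r_{\max}}+\ell_0/\lB)}$) enlarge $C_p(\Lambda,\gamma_0)$ but do not depend on $\bz$.
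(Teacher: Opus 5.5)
Your argument is correct, and it reaches the lemma by a different route than the paper.

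Both proofs start from the expansion of $A$ in a basis of monomials of $\rr A_\Lambda$. The paper uses the explicit bound $\sum_\mu|\alpha_\mu|\leqslant5^{|\Lambda|}\norm A$ where you use a norm-equivalence constant $K(\Lambda)$. The paper then bounds $\norm{M_\mu}_{p,\gamma_0}$ crudely and imports the monomial estimate $\norm{\alpha_\bz(M_\mu)}_{p,\gamma_0}\leqslant C_p'\norm{M_\mu}_{p,\gamma_0}$ of Remark~\ref{rk:mono-improve}. That constant is meant to be uniform in degree and support, and it is produced by a two-scale splitting $L=L_-+L_+$ at radius $\lfloor k/2\rfloor$.

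You instead treat each of the finitely many translated basis monomials head-on. Replacing each factor $a^\#(t_\bz\chi_{\gamma_i})$ by $a^\#(\phi_i^{(k)})$, with $\phi_i^{(k)}$ the orthogonal projection onto $\HH_{\Lambda_k(\gamma_0)}$, is a good choice for two reasons. Being a best approximation, its error is dominated by the frame tail $\sum_{\gamma'\notin\Lambda_k(\gamma_0)}|\ip{S^{-1}\chi_{\gamma'}}{t_\bz\chi_{\gamma_i}}|$, which Lemma~\ref{lem:dual-coh-overlap} makes exponentially small uniformly in $|\bz|<\ell_0$ (with $r_{\max}$ and $\max_{\gamma\in\Lambda}d(\gamma,\gamma_0)$ fixed). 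And all factors keep norm at most one. The bound $\norm{X-\n E_{\Lambda_k(\gamma_0)}(X)}\leqslant2\norm{X-M^{(k)}}$ and the telescoping sum then finish the job.

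The paper's route aims at a per-monomial constant independent of $\Lambda$, which is what one would need for volume-uniform bounds. Yours gives constants depending on $\Lambda$. The statement permits this, and the coefficient expansion already costs an exponential volume factor, so nothing is lost.

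In exchange, your proof is self-contained: it is essentially a $\bz$-uniform, quantitative version of the proof of Lemma~\ref{lemma_trasl_loc}. It also bypasses the step of Remark~\ref{rk:mono-improve} asserting that $\n E_{\Lambda_{\lfloor k/2\rfloor}(\gamma_0)}(L)$ is a scalar multiple of the monomial made of the factors inside the ball. For non-orthogonal frame modes that claim fails. For instance, $\n E_{\Lambda'}(a_\gamma)=a(\pi'\chi_\gamma)$ for $\gamma\notin\Lambda'$, with $\pi'$ the orthogonal projection onto $\HH_{\Lambda'}$. This is not a scalar as soon as $\Lambda'$ contains a site in the Landau level of $\gamma$.
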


\smallskip

It is worth emphasizing that 
bound \eqref{eq:ineq_z_unif} holdes 
 uniformly for $|\bz|<\ell_0$. The above result provides a fundamental building block for the proof of asymptotic abelianness.

\begin{proposition}[Continuous asymptotic abelianness for local elements]\label{prop:cont-asympt-ab-local}
Let $A,B\in\rr A_{\mathrm{loc}}$ be homogeneous elements. Then
\[
\lim_{|\bz|\to\infty}\bigl\|[\alpha_\bz(A),B]_{\rm gr}\bigr\|\;=\;0\;,\qquad\bz\in\R^2\,,
\]
the limit being taken over the full continuous group $\R^2$.
\end{proposition}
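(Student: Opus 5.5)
The plan is to reduce the continuous translation $\alpha_\bz$ to a lattice translation acting on an almost--local element. Then I would combine the uniform almost--locality of Lemma~\ref{cor:mono-improve-full} with the finite--volume commutator estimate of Proposition~\ref{lem:almost-comm-gr-II}, using a $3\varepsilon$--type argument in which the truncation radius $k$ is fixed first and $|\bz|$ is sent to infinity afterwards.

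\emph{Step 1 (reduction).} Let $A\in\rr A_{\Lambda_A}$ and $B\in\rr A_{\Lambda_B}$ be homogeneous. Write $\bz=\bz_L+\bz_c$ with $\bz_L\in\s L$ and $|\bz_c|<\ell_0$. By \eqref{eq:dec_trasl} we have $\alpha_\bz=\vartheta_{\theta}\circ\alpha_{\bz_L}\circ\alpha_{\bz_c}$ with $\theta=-\sigma_B(\bz_L,\bz_c)$. All these maps are quasi--free automorphisms implemented by unitaries that commute up to scalars, and gauge transformations are implemented by scalars, so $\vartheta_\theta$ commutes with $\alpha_{\bz_L}$ and $\alpha_{\bz_c}$. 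Hence
\[
\alpha_\bz(A)\;=\;\alpha_{\bz_L}(X_\bz)\,,\qquad X_\bz\;:=\;\alpha_{\bz_c}(A_\theta)\,,\qquad A_\theta\;:=\;\vartheta_\theta(A)\in\rr A_{\Lambda_A}\,.
\]
Here $A_\theta$ has the same parity as $A$ and $\norm{A_\theta}=\norm A$. By Remark~\ref{rk:B-A-parity}, $X_\bz$ is homogeneous of the same degree. Lemma~\ref{cor:mono-improve-full} applies to $A_\theta$, because the constant there depends only on $\Lambda_A$, $p$ and $\gamma_0$, and not on the particular element. Taking $p=1$ and $\gamma_0=0$ it gives
\[
\bigl\|X_\bz-\n E_{\Lambda_k}(X_\bz)\bigr\|\;\leqslant\;\frac{C_1(\Lambda_A,0)\,\norm A}{1+k}\,,
\]
uniformly in $\bz\in\R^2$ and in $k\in\N_0$.

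\emph{Step 2 (truncation preserving parity).} Set $X_{\bz,k}:=\n E_{\Lambda_k}(X_\bz)\in\rr A_{\Lambda_k}$. Then $\norm{X_{\bz,k}}\leqslant\norm A$, since $\n E_{\Lambda_k}$ has norm one. The parity automorphism $\sigma$ preserves every $\rr A_\Lambda$ and the trace, so Lemma~\ref{lem:dyn-commutes} gives $\sigma\circ\n E_{\Lambda_k}=\n E_{\Lambda_k}\circ\sigma$. Consequently $X_{\bz,k}$ is homogeneous with the same degree as $A$. The graded commutator with $B$ then has the same sign convention for $X_\bz$ and for $X_{\bz,k}$, which yields
\[
\bigl\|[\alpha_\bz(A),B]_{\rm gr}\bigr\|\;\leqslant\;2\norm B\,\frac{C_1(\Lambda_A,0)\norm A}{1+k}\;+\;\bigl\|[\alpha_{\bz_L}(X_{\bz,k}),B]_{\rm gr}\bigr\|\,.
\]

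\emph{Step 3 (lattice part).} By \eqref{eq:lattice-cov-main}, $\alpha_{\bz_L}(X_{\bz,k})\in\rr A_{\Lambda_k+\gamma_{\bz_L}}$, which is homogeneous and has a support of cardinality $|\Lambda_k|$. Proposition~\ref{lem:almost-comm-gr-II} with, say, $\lambda=1$ bounds the last term by
\[
C_1(\Lambda_k,\Lambda_B)\,\norm A\,\norm B\,\expo{-d(\Lambda_k+\gamma_{\bz_L},\Lambda_B)}\,.
\]
The constant $C_1(\Lambda_k,\Lambda_B)$ depends only on $k$ and $\Lambda_B$. For the distance, the triangle inequality gives $d(\Lambda_k+\gamma_{\bz_L},\Lambda_B)\geqslant \tfrac1{\lB}|\bz_L|_1-k-R_B$, where $R_B:=\max_{\gamma\in\Lambda_B}\norm{\gamma}_1$. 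Moreover $|\bz_L|\geqslant|\bz|-\ell_0$, so this distance tends to $\infty$ as $|\bz|\to\infty$ with $k$ fixed.

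\emph{Conclusion and main obstacle.} Given $\varepsilon>0$, first choose $k$ so that the truncation term is below $\varepsilon/2$. Then choose $|\bz|$ large enough that the lattice term is below $\varepsilon/2$. This gives the claim. The delicate point is Step~1: one needs a truncation error that is uniform in the fractional shift $\bz_c$ and in the gauge phase $\theta$, since both vary with $\bz$. This is exactly why I would rely on the element--independent, $|\bz|<\ell_0$--uniform constant of Lemma~\ref{cor:mono-improve-full} applied to the rotated element $A_\theta$, rather than on Lemma~\ref{lemma_trasl_loc}, which is only a membership statement. A secondary check is that truncation does not spoil homogeneity, and Step~2 settles this.
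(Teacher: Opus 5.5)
Your proposal is correct and follows essentially the same route as the paper: the decomposition $\bz=\bz_L+\bz_c$ via \eqref{eq:dec_trasl}, the uniform almost-locality bound of Lemma~\ref{cor:mono-improve-full}, a parity-preserving truncation by $\n E_{\Lambda_k}$, Proposition~\ref{lem:almost-comm-gr-II} applied to the lattice-shifted truncated element, and the choice of $k$ before sending $|\bz|\to\infty$. The only cosmetic difference is that you absorb the gauge phase into $A$, using that $\vartheta_\theta$ commutes with quasi-free automorphisms and that the constant of Lemma~\ref{cor:mono-improve-full} is uniform over $\rr A_{\Lambda_A}$; the paper instead moves the phase onto $B$ as $B_\theta=\vartheta_{-\theta}(B)$ and notes that the bound from Proposition~\ref{lem:almost-comm-gr-II} does not depend on $\theta$.
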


\smallskip

This result is the cornerstone for establishing the asymptotic abelianness stated in Theorem \ref{thm:asympt-ab}, which then follows by a simple density argument.

\appendix

\section{Proofs and technical results}
\label{app_prrof}

\subsection{Proofs and auxiliary material for  Section \ref{sec:frames}}\label{ap:proof_S1}

\proof[Proof of Lemma \ref{Lem:01}]
From the second equality in \eqref{eq:Tb} one gets
\[
\ip{\psi_{r,0}}{t_{\bz}\psi_{r,0}}\;=\;\expo{-\frac{|\bz|^2}{4\lB^2}}\left\langle\psi_{r,0},E_{\bz}(\bp)E_{\bz'}(\bmm)\psi_{r,0}\right\rangle
\]
where $E_{\bz}(\bp)$ and $E_{\bz'}(\bmm)$ denote the exponentials of  
$\bp$ and $\bmm$, respectively, and $\bz'\equiv(z_1,-z_2)$ for consistence of notation. Observing that $E_{\bz'}(\bmm)\psi_{r,0}=\psi_{r,0}$ and 
$\ip{\psi_{r,0}}{E_{\bz}(\bp)\psi_{r,0}}=\ip{\psi_{r,0}}{\psi_{r,0}}$,
one obtains the result.
\qed

\smallskip

\begin{proof}[Proof of Lemma \ref{lemm:gau_loc}]
The zero-angular-momentum Landau eigenfunctions are,
\[\psi_{r,0}(\br) \;=\; \psi_{0,0}(\br)\,\frac{\ii^r}{\sqrt{r!}}\left(\frac{x_1+\ii x_2}{\lB\sqrt2}\right)^{r}\;.
\]
By magnetic translation covariance, $|\chi_{\bz,r}(\br)| = |\psi_{r,0}(\br-\bz)|$, and writing $\rho:=|\br-\bz|$, one gets
\[
|\chi_{\bz,r}(\br)| \;=\; f_r(\rho)\;:=\;\frac{1}{\sqrt{2\pi}\,\lB\sqrt{r!}}\left(\frac{\rho}{\sqrt2\,\lB}\right)^{r}\expo{-\frac{\rho^2}{4\lB^2}}\;.
\]
Let $r\neq0$. Writing $t:=\sfrac{\rho}{\sqrt2\,\lB}$, the exponent $r\log (t)-\sfrac{t^2}{2}$ is maximized at $t_{\max}=
\sqrt r$, namely  $\rho_{\max}=\sqrt{2r}\,\lB$, where
\[
f_r(\rho_{\max})\;=\;\frac{1}{\sqrt{2\pi}\lB}\,\frac{r^{\frac{r}{2}}\expo{-\frac{r}{2}}}{\sqrt{r!}}\;
\leqslant\;\frac{1}{\sqrt{2\pi}\lB}\,\frac{1}{(2\pi r)^{\frac{1}{4}}}\;=\;\frac{(2\pi)^{-\frac{3}{4}}}{\lB\,
r^{\frac{1}{4}}}\,,
\]
the inequality being by the Stirling's bound $r!\geqslant\sqrt{2\pi r}\,(\sfrac{r}{\rm e})^{r}$. For the shape,
set $u:=\sfrac{(t-\sqrt r)}{\sqrt r}=\sfrac{(\rho-\rho_{\max})}{\rho_{\max}}$. Then
\[
\log\left[\frac{f_r(\rho)}{f_r(\rho_{\max})}\right]\;=\;r\bigl[\log(1+u)-u-\tfrac{u^2}{2}\bigr]\;\leqslant\;-
\frac{r\,u^{2}}{2}\;=\;-\frac{(\rho-\rho_{\max})^{2}}{4\lB^{2}}\,,
\]
where we used $\log(1+u)-u\leqslant0$ for all $u>-1$.
Combining everything together, we obtain
\[
f_r(\rho)\;\leqslant\;\frac{(2\pi)^{-\frac{3}{4}}}{\lB\,
r^{\frac{1}{4}}}\,\expo{-\frac{(\rho-\sqrt{2r}\,\lB)^{2}}{4\lB^{2}}}
\;\leqslant\;\frac{1}{\sqrt{2\pi}\,\lB\,(1+r)^{\frac14}}\,\expo{-\frac{(\rho-\sqrt{2r}\,\lB)^{2}}{4\lB^{2}}}
\]
where the second inequality follows from $2\pi\geqslant1+\sfrac{1}{r}$
which implies
\[
\frac{(2\pi)^{-\frac{3}{4}}}{\lB\,
r^{\frac{1}{4}}}\;\leqslant\;\frac{\left(1+\tfrac{1}{r}\right)^{-\frac{1}{4}}}{\sqrt{2\pi}\,\lB\,
r^{\frac{1}{4}}}\;=\;\frac{1}{\sqrt{2\pi}\,\lB\,(1+r)^{\frac14}}\;.
\]
Observing that for  $r=0$ one has that
\[
|\chi_{\bz,0}(\br)|\; =\; |\psi_{0,0}(\br-\bz)|\;=\; f_0(\rho)\;:=\;
\frac{1}{\lB\sqrt{2\pi}}\,\expo{-\frac{\rho^2}{4\lB^2}}
\]
one concludes the validity of \eqref{eq:ring-bound} for every $r\in\N_0$.
\end{proof}

\smallskip

\begin{proof}[Proof of Lemma \ref{lem:overlap}]
Using the relations \eqref{eq:weyl} one has
\[
\ip{\chi_{\bz,r}}{\chi_{\bz',r'}}
\;=\;\ip{\psi_{r,0}}{t_{-\bz}t_{\bz'}\psi_{r',0}}
\;=\;\expo{\frac{\ii}{2\lB^2}(\bz\times\bz')}
\ip{\psi_{r,0}}{t_{\bz'-\bz}\psi_{r',0}}\;.
\]
By \eqref{eq:Tb}, $t_{\bz'-\bz}$ preserves each Landau level, so
$t_{\bz'-\bz}\psi_{r',0}\in\HH_{r'}$. The orthogonality of distinct levels
gives the factor $\delta_{r,r'}$. For $r=r'$ the diagonal value
$\ip{\psi_{r,0}}{t_{\bz'-\bz}\psi_{r,0}}$ is computed in Lemma \ref{Lem:01}.
\end{proof}

\smallskip

\proof[Proof of Lemma \ref{lemma:reconstr_op}]
From the reconstruction formula  \eqref{eq:rec_can} one infers that $I_\Lambda\psi\to\psi$ when $\Lambda\nearrow\Gamma$ for every $\psi\in\HH$. This means
$I_\Lambda\to\unit$ in the strong operator topology. Observing that, by self--adjointness of $S^{-1}$,
\[
I_\Lambda^{*}S^{-1}\psi
=\sum_{\gamma\in\Lambda}\ip{\xi_\gamma}{S^{-1}\psi}\,S^{-1}\xi_\gamma
=\sum_{\gamma\in\Lambda}\ip{S^{-1}\xi_\gamma}{\psi}\,S^{-1}\xi_\gamma
=S^{-1}I_\Lambda\psi\;,
\]
one obtains $I_\Lambda^{*}S^{-1}=S^{-1}I_\Lambda$, that is
$I_\Lambda^{*}=S^{-1}I_\Lambda S$. Since $S^{-1}$ is bounded and $I_\Lambda\to\unit$
strongly, for every $\psi\in\HH$ one has $I_\Lambda^{*}\psi=S^{-1}I_\Lambda(S\psi)\to
S^{-1}(S\psi)=\psi$. Hence also $I_\Lambda^{*}\to\unit$ in the strong operator topology as
well.
Write $I_\Lambda=D_\Lambda\circ C$ as the composition of the partial synthesis  operator $D_\Lambda:\ell^2(\Gamma)\to\HH$ defined by
$D_\Lambda(c):=\sum_{\gamma\in\Lambda}c_\gamma\xi_\gamma$, and the dual analysis operator $C:\HH\to\ell^2(\Gamma)$ defined
by $(C\psi)_\gamma:=\ip{S^{-1}\xi_\gamma}{\psi}$. 
Then $\norm{I_\Lambda}\leqslant\norm{D_\Lambda}\norm{C}$. Let  $J_\Lambda$ be the orthogonal projection of
$\ell^2(\Gamma)$ onto $\ell^2(\Lambda)$, namely
$J_\Lambda e_\gamma=e_\gamma$ if   $\gamma\in\Lambda$, and $J_\Lambda
=0$ otherwise, where $\{e_\gamma\}_{\gamma\in\Gamma}$ is the canonical orthonormal basis of
$\ell^2(\Gamma)$. Then $D_\Lambda=DJ_\Lambda$, with $D$ the full synthesis  operator, and in turn 
$\norm{D_\Lambda}\leqslant \norm{D}\leqslant\sqrt{B}$
in view of  \cite[Theorem 3.2.3]{Christensen-frames}.
On the other hand $C$ is the dual of the full synthesis operator with respect to the canonical dual frame which has lower bound $A^{-1}$. Then using the arguments in \cite[Lemma 5.1.5]{Christensen-frames}
one gets $\norm{C}\leqslant\sfrac{1}{\sqrt{A}}$. This proves the norm bound for $I_\Lambda$. The equality $\|I_\Lambda\|=\|I_\Lambda^*\|$ is a general result for bounded operators on a Hilbert space.\qed

\smallskip

\proof[Proof of Proposition \ref{prop:inherited}]
In \cite[Proposition~2.2 \& Example~4]{BachmannDeNittis2024} the fixed--rate case is
proved. For fixed frame constants $\lambda>0$ and $G=G_\lambda>0$ as in
Definition~\ref{def:loc-frame}, the bound \eqref{eq:Sp-loc} holds with the following constants.
Let $s:=\min\sigma(S)>0$ and $\norm{S}$ be the spectral bounds of $S$. Fix free parameters $0<\epsilon<\delta<\lambda$ and $0<\theta
<\lambda-\delta$, and put
\[
r_p\;:=\;1-\Bigl(\tfrac{s}{\norm{S}}\Bigr)^{p}\in(0,1)\,,\quad d_{p,\epsilon}\;:=\;1+\Bigl(
\tfrac{G\,m_\epsilon}{\norm{S}}\Bigr)^{p}\,,\quad E_{p,\epsilon,\theta,\delta}\;:=\;\frac{
\lambda-\delta-\theta}{\ln\!\bigl(\sfrac{d_{p,\epsilon}}{r_p}\bigr)}\,.
\]
Then \eqref{eq:Sp-loc} holds with
\[
\lambda_{p,\epsilon,\theta,\delta}\;:=\;\min\Bigl\{\theta,\ \ln\!\bigl(\sfrac{1}{r_p}\bigr)E_{p,
\epsilon,\theta,\delta}\Bigr\}\,,\qquad A_p\;:=\;\frac{2\,G}{\norm{S}^{p}\,(1-r_p)}\,.
\]
We note that $d_{p,\epsilon}$ and $A_p$ differ by a factor $G$ from the expressions
in \cite[Proposition~2.2]{BachmannDeNittis2024}, where the frame constant was mislocated between the two. Optimizing over
the free parameters,
\[
\lambda_p\;:=\;\sup_{\substack{0<\epsilon<\delta<\lambda\\ 0<\theta<\lambda-\delta}}\lambda_{p,
\epsilon,\theta,\delta}
\]
is the best rate the method provides.
Set $\beta:=\ln(\sfrac{1}{r_p})/\ln(\sfrac{d_{p,\epsilon}}{r_p})$. Since
$d_{p,\epsilon}>1$ one has  $\beta\in(0,1)$, and
$\ln(\sfrac{1}{r_p})E_{p,\epsilon,\theta,\delta}=\beta(\lambda-\delta-\theta)$. Thus
\[
\lambda_{p,\epsilon,\theta,\delta}\;=\;\min\bigl\{\theta,\ \beta(\lambda-\delta-\theta)\bigr\}
\;\leqslant\;\frac{\beta}{1+\beta}\,(\lambda-\delta)\;<\;\frac12(\lambda-\delta)\;<\;\frac
\lambda2\,,
\]
the first inequality being the maximum of $\min\{\theta,\beta(\lambda-\delta-\theta)\}$ over
$\theta$, attained at $\theta=\beta(\lambda-\delta)/(1+\beta)$, and the second following from
$\beta<1$. Taking the supremum over the admissible parameters gives $\lambda_p<\sfrac
\lambda2<\lambda$, with gap $0<\lambda-\lambda_p<\sfrac\lambda2$. The bound is
intrinsic to the method. Enlarging $\theta$ shrinks the numerator $\lambda-\delta-\theta$ of $E_{p,\epsilon,
\theta,\delta}$, and the two entries of the minimum cannot be made large simultaneously.
 Now let the frame be localized at every rate, and write $\lambda_p
(\lambda)$ for the optimal output rate above, viewed as a function of the frame rate
$\lambda$. The spectral quantities $s,\norm{S},r_p$ do not depend on $\lambda$, the sole
dependence being through $G_\lambda$ in $d_{p,\epsilon}$. For $\lambda$ large choose the admissible triple $\theta=\delta=\tfrac
\lambda4$, $\epsilon=\tfrac\lambda8$. Since $m_\epsilon$ is non--increasing with $m_\epsilon\to1$
as $\epsilon\to\infty$, there are $M_0\geqslant1$ and $\lambda_0$ with $m_{\lambda/8}\leqslant
M_0$ for $\lambda\geqslant\lambda_0$. Hence, for such $\lambda$,
\[
\ln\!\Bigl(\tfrac{d_{p,\epsilon}}{r_p}\Bigr)\;\leqslant\;p\ln (G_\lambda)+C\,,\qquad C\;\equiv\;C(p,M_0,
\norm{S},r_p)\,,
\]
and, with $\lambda-\delta-\theta=\tfrac\lambda2$,
\[
\lambda_p(\lambda)\;\geqslant\;\lambda_{p,\frac{\lambda}{8},\frac{\lambda}{4},\frac{\lambda}{4}}\;\geqslant\;\min\Bigl\{\tfrac\lambda4,\ \frac{\ln(\sfrac{1}{r_p})}{2}\,\frac{\lambda}{
p\ln (G_\lambda)+C}\Bigr\}\,.
\]
Under \eqref{eq:growth-cond} one has that $\ln (G_\lambda)/\lambda\to0$, so both entries diverge and
$\lambda_p(\lambda)\to\infty$.
 Finally, given $\lambda>0$, pick $\lambda_\ast$ with $\lambda_p
(\lambda_\ast)\geqslant\lambda$. Applying \eqref{eq:Sp-loc} with the frame  rate
$\lambda_\ast$  one gets
\[
\bigl|\ip{\xi_\gamma}{S^{-p}\xi_{\gamma'}}\bigr|\;\leqslant\;A_p(\lambda_\ast)\,\expo{-\lambda_p
(\lambda_\ast)\,d(\gamma,\gamma')}\;\leqslant\;A_{p,\lambda}\,\expo{-\lambda\,d(\gamma,\gamma')}
\,,
\]
which is \eqref{eq:Sp-loc01} with $A_{p,\lambda}:=A_p(\lambda_\ast)$.
\qed

\smallskip

\begin{proof}[Proof of Lemma \ref{lem:summ-dim}]
Observe that  $d(\gamma,\gamma')=d(0,\gamma-\gamma')$ with 
the definition of the difference adapted to the semi--group structure of $\Gamma$.
With this we can simplify the definition of $m_\epsilon$ to
$$
m_\epsilon\;=\sum_{\gamma\in\Gamma}\expo{-\epsilon\,d(0,\gamma)}
\;=\;\left(\sum_{r=0}^{\infty}\expo{-\epsilon r}\right)
\left(\sum_{n_1\in\Z}\expo{-\epsilon\frac{\ell_1}{\lB}|n_1|}\right)
\left(\sum_{n_2\in\Z}\expo{-\epsilon\frac{\ell_2}{\lB}|n_2|}\right)\;.
$$
The first factor is the geometric series $(1-\expo{-\epsilon})^{-1}$. 
The remaining two factors are computed using
\[
\sum_{n\in\Z}\expo{-a|n|}\;=\;\frac{2}{1-\expo{-a}}-1\;=\;\coth\left(\frac{a}{2}\right)\;.
\]
This completes the proof of  \eqref{eq:geo_gamma1}. 
For  \eqref{eq:geo_gamma2}
fix $\gamma_\ast=(\bl_\ast,r_\ast)$ and $R\geqslant1$. A point
$\gamma'=(\bl',r')$ with $d(\gamma_\ast,\gamma')<R$ satisfies \emph{both} $|r'-r_\ast|<R$ and
$\tfrac{1}{\lB}|\bl'-\bl_\ast|_1<R$. Therefore
\[
N_\ast(R)\;\leqslant\;E_\ast(R)\,D_\ast(R)
\]
where $N_\ast(R):=\left.|\{\gamma'\in\Gamma\,\right|\,d(\gamma,\gamma')<R\}|$,
\[
D_\ast(R)\;:=\;\left.\left|\left\{\bl'\in \s{L}\;\right|\;|\bl'-\bl_\ast|_1<\lB R\right\}\right|\;\leqslant\;\left(\frac{2\lB R}{\ell_1}+1\right)\left(\frac{2\lB R}{\ell_2}+1\right)
\]
(the inequality follows  looking at $\ell_j|n_j'-n_j^0|<\lB R$
for each
axis), and finally
\[
E_\ast(R)\;:=\;\left.\left|\left\{r'\in\N_0\;\right|\;|r'-r_\ast|<R\right\}\right|\;\leqslant\;2R+1
\] 
(constraint $r'\geqslant0$ can
only reduce the count). For $R\geqslant1$ one has $2R+1\leqslant 3R$ and
$\tfrac{2\lB R}{\ell_j}+1\leqslant\bigl(\tfrac{2\lB}{\ell_j}+1\bigr)R$, whence 
\[
N_\ast(R)\;\leqslant\;3\bigl(\tfrac{2\lB}{\ell_1}+1\bigr)\bigl(\tfrac{2\lB}{\ell_2}+1\bigr)\,R^3\;.
\]
Since the bound for $N_\ast(R)$ is uniformly in $\gamma_\ast$ one obtains the result.
\end{proof}

\smallskip

\proof[Proof of Theorem \ref{thm:frame}]
Condition \eqref{eq:threshold} is necessary and sufficient for $\{\chi_{\bl,0}\}_{\bl\in\s{L}}$ to form an overcomplete frame for the lowest Landau level $\HH_0$ (see \cite[Theorem 5.6]{BachmannDeNittis2024}).
Now,
 for $r\in\N_0$, let $v_r:={\sqrt{r!}}^{-1}(\ap)^r$. Its restriction
$v_r|_{\HH_0}:\HH_0\to\HH_r$ is unitary. Equivalently, $v_r\Pi_0$ is a partial
isometry with initial space $\HH_0$ and final space $\HH_r$. 
Therefore,
if $\{\chi_{\bl,0}\}_{\bl\in \s{L}}$ is a frame of $\HH_0$ with bounds
$A\leqslant B$, then for every $r\in\N_0$ the family
$\{\chi_{\bl,r}\}_{\bl\in \s{L}}$, with $\chi_{\bl,r}:=v_r\chi_{\bl,0}$, is a frame of $\HH_r$ with the \emph{same}
bounds $A,B$. Since $\HH=\bigoplus_{r\in\N_0}\HH_r$, then $\{\chi_\gamma\}_{\gamma\in\Gamma}$ is a frame for $\HH$ with bounds $A,B$.
\qed

\smallskip

\begin{proof}
[Proof of Lemma \ref{lem:findim}]
For $\Lambda=\emptyset$ the claim is tautological. Then we can assume $\Lambda\neq\emptyset$.
Write $\Lambda_r:=\{\bl\in \s{L}|(\bl,r)\in\Lambda\}$. Since coherent
states of different Landau levels are orthogonal (Lemma~\ref{lem:overlap}), one has
the orthogonal decomposition $\HH_\Lambda=\bigoplus_{r}\HH_{\Lambda_r}$, whence
$\dim(\HH_\Lambda)=\sum_r\dim(\HH_{\Lambda_r})$. It thus suffices to prove
$\dim(\HH_{\Lambda_r})=|\Lambda_r|$ for each fixed $r$. 
Let $v_r$
by the interlevel unitary, introduced in the proof of Theorem \ref{thm:frame}.
Since $v_r\chi_{\bl,0}=\chi_{\bl,r}$, linear independence at
level $r$ is equivalent to linear independence at level $0$. We are therefore reduced
to showing that, for pairwise distinct lattice points $\bl_1,\dots,\bl_n$, the vectors
$\chi_{\bl_1,0},\dots,\chi_{\bl_n,0}$ are linearly independent in $\HH_0$.
Recall the explicit form of the lowest–level coherent states 
\[
\chi_{\bl,0}(\br)
\;=\;c\;
\expo{\frac{\ii}{2\lB^2}(\br\times\bl)}\;
\expo{-\frac{|\br-\bl|^2}{4\lB^2}}\;
\]
where $c>0$ is the common normalization constant.
Expanding $|\br-\bl|^2=|\br|^2-2\,\br\cdot\bl+|\bl|^2$ and separating the factors
that do not depend on $\br$, this reads
\[
\chi_{\bl,0}(\br)
\;=\;\psi_{0,0}(\br)\underbrace{\,\expo{-\frac{|\bl|^2}{4\lB^2}}}_{=:g(\bl)}\;
\expo{\,\boldsymbol{k}(\bl)\cdot\br}\;,
\qquad
\boldsymbol{k}(\bl)\;:=\;\frac{1}{2\lB^2}\bigl(\bl+\ii\,\bl^{\perp}\bigr)\in\C^{2},
\]
where $\bl^{\perp}:=(\,\ell_2 n_2,\,-\ell_1 n_1\,)$   so that $\br\times\bl=\br\cdot\bl^{\perp}$, and $g(\bl)\neq0$. Thus the whole
$\bl$--dependence of the $\br$--variable is carried by the exponential of the
\emph{linear} form $\boldsymbol{k}(\bl)\cdot\br$.
The map $\bl\mapsto\boldsymbol{k}(\bl)$ is
injective (it is sufficient to consider the real part), so
the vectors $\boldsymbol{k}(\bl_1),\dots,\boldsymbol{k}(\bl_n)\in\C^2$ are pairwise
distinct. Suppose now that $\sum_{j=1}^{n}\alpha_j\,\chi_{\bl_j,0}=0$ in $\HH_0$. 
Dividing by the
common nowhere--vanishing factor $\psi_{0,0}(\br)$
one gets
\begin{equation}\label{eq:rest_c}
\sum_{j=1}^{n}\beta_j\,\expo{\boldsymbol{k}(\bl_j)\cdot\br}\;=\;0\;,
\qquad\forall\,\br\in\R^{2}\;.
\end{equation}
with
$\beta_j:=\alpha_jg(\bl_j)$.
For each pair $i<j$ consider the vector ${\boldsymbol r}_{i,j}:= \bl_i-\bl_j$. There is a finite number of these vectors. Therefore there exists a  unit vector ${\boldsymbol w}\in\R^2$ which is not perpendicular to any of the ${\boldsymbol r}_{i,j}$, namely 
such that ${\boldsymbol r}_{i,j}\cdot {\boldsymbol w}\neq 0$ for all $i,j=1,\ldots,n$. For such a $\boldsymbol w$ the real
numbers $\bl_j\cdot\boldsymbol w$ are pairwise distinct, hence so are
$\mu_j:=\boldsymbol{k}(\bl_j)\cdot\boldsymbol w$ (they already differ in their real
parts).
Restricting the vanishing condition \eqref{eq:rest_c}
to the line $\br=t\boldsymbol{w}$ with  $t\in\R$, gives
\[
\sum_{j=1}^{n}\beta_j\,\expo{\,\mu_j t}\;=\;0\quad\forall\,t\in\R\;,
\qquad \mu_j:=\boldsymbol{k}(\bl_j)\cdot\boldsymbol{w}\ \text{ pairwise distinct}\;.
\]
Applying $\sfrac{d^{k}}{dt^{k}}$ and evaluating at $t=0$ yields
$\sum_{j}\beta_j\mu_j^{k}=0$ for $k=0,\dots,n-1$, a Vandermonde system with distinct
nodes $\mu_j$, hence invertible. Therefore $\beta_1=\dots=\beta_n=0$. Since
$g(\bl)\neq0$ also $\alpha_1=\dots=\alpha_n=0$. This proves that
$\chi_{\bl_1,0},\dots,\chi_{\bl_n,0}$ are linearly independent.
\end{proof}

\smallskip

\proof[Proof of Proposition \ref{pr:comm}]
It suffices to show that $[S,\Pi_r]=0$ for an arbitrary $r\in\N_0$, since the general result follows by linearity and continuity in the strong operator topology. A direct computation provides
\[
\begin{aligned}
S(\Pi_r\psi)\;&=\;\sum_{\gamma\in \Gamma}\ip{\chi_\gamma}{\Pi_r\psi}\chi_\gamma\;=\;\sum_{\gamma\in \Gamma}\ip{\Pi_r\chi_\gamma}{\psi}\chi_\gamma\;=\;\sum_{\bl\in \s{L}}\ip{\chi_{\bl,r}}{\psi}\chi_{\bl,r}\\
&=\;\sum_{\bl\in \s{L}}\ip{\chi_{\bl,r}}{\psi}\Pi_r\chi_{\bl,r}\;=\;\Pi_r\left(\sum_{\gamma\in \Gamma}\ip{\chi_\gamma}{\psi}\chi_\gamma\right)\;=\;\Pi_r(S\psi)\;,
\end{aligned}
\]
and since it holds for an arbitrary $\psi\in\HH$ one infers that 
$S\Pi_r=\Pi_r S$.
\qed

\smallskip

\begin{proof}[Proof of Proposition \ref{prop:magloc}]
Lemma~\ref{lem:overlap} provides that
\[
\bigl|\ip{\chi_{\bgamma}}{\chi_{\bgamma'}}\bigr|\;=\;\delta_{r,r'}\,\expo{-\frac{|\bl-\bl'|^2}{4\lB^2}}\;\leqslant\;\expo{-\lambda|r-r'|}\,\expo{-\frac{|\bl-\bl'|^2}{4\lB^2}}\;.
\]
since $\delta_{r,r'}\leqslant\expo{-\lambda|r-r'|}$  for every $\lambda>0$.
Using the standard inequality $|\bl|_1\leqslant\sqrt{2}\,|\bl|$
valid in $\R^2$, one gets that 
$|\bl-\bl'|^{2}\geqslant\sfrac12|\bl-\bl'|_1^{2}$. Writing
$\rho:=\sfrac{1}{\lB}|\bl-\bl'|_1$,
one obtains
\[
\expo{-\frac{|\bl-\bl'|^{2}}{4\lB^2}}
\;\leqslant\;\expo{-\frac{|\bl-\bl'|_1^{2}}{8\lB^2}}
\;=\;\expo{-\frac{\rho^{2}}{8}}
\;\leqslant\;\expo{\,2\lambda^{2}}\,\expo{-\lambda\rho}\;,
\]
for every $\lambda>0$, since
 a Gaussian is dominated by exponential. More precisely in the last step
one uses $\sup_{\rho\geqslant0}\bigl(-\tfrac{\rho^2}{8}+\lambda\rho\bigr)
=2\lambda^{2}$ to show that $\expo{\,2\lambda^{2}}$ is the correct maximization. Multiplying the two bounds and recalling
the definition of $d$ one obtains the \eqref{eq:magloc}
with
 $G_\lambda=\expo{\,2\lambda^{2}}$. 
 \end{proof}

\smallskip

\begin{proof}[Proof of Lemma \ref{lem:dual-localization}]
By applying \eqref{S_eq} to the vector $S^{-2}\chi_\gamma$, with $\gamma\equiv
(\bl,r)\in\Gamma$, one gets
\[
S^{-1}\chi_\gamma\;=\;\sum_{\gamma'\in\Gamma}[S^{-2}]_{\gamma'\gamma}\,\chi_{\gamma'}\,,\qquad
[S^{-2}]_{\gamma'\gamma}:=\ip{\chi_{\gamma'}}{S^{-2}\chi_\gamma}\,.
\]
 By $[S, h_B]=0$ (Proposition \ref{pr:comm}) the sum runs over
$\gamma'=(\bl',r)$ at the fixed level $r$.
The coefficients
$[S^{-2}]_{\gamma'\gamma}=\ip{\chi_{\gamma'}}{S^{-2}\chi_\gamma}$ are exactly the matrix
elements of $S^{-2}$ in the frame. By Proposition~\ref{prop:inherited} with $p=2$ they decay
exponentially,
\[
\bigl|[S^{-2}]_{\gamma'\gamma}\bigr|\;\leqslant\;A_2\,\expo{-\lambda_\ast\,d(\gamma,\gamma')}\;\leqslant\;
A_2\,\expo{-\lambda_\ast\,\frac{|\bl-\bl'|}{\lB}}\,,
\]
at the rate $\lambda_\ast=\lambda_2$ of Remark~\ref{rem:dual-magnetic}. The second equality follows from the constraint $r=r'$ for the energy levels
which gives  $d(\gamma,\gamma')={\lB}^{-1}|\bl-\bl'|_1$
along with $|\bl-\bl'|=\sqrt{(\bl-\bl')^2}\leqslant |\bl-\bl'|_1$.
 Therefore
 \[
 \begin{aligned}
\bigl|(S^{-1}\chi_\gamma)(\br)\bigr|\;&\leqslant\;A_2\sum_{\bl'\in \s{L}}\expo{-\lambda_\ast\,\frac{|\bl-\bl'|}{\lB}}\,\bigl|\chi_{\bl',r}(\br)\bigr|\;.
\end{aligned}
\]
Introducing the bound in Lemma \ref{lemm:gau_loc} one gets
\[
\bigl|(S^{-1}\chi_\gamma)(\br)\bigr|\;\leqslant\;\frac{A_2}{\sqrt{2\pi}\,\lB\,(1+r)^{\frac14}}\,\s{G}(\br)
\]
where
\[
\s{G}(\br)\;:=\;\sum_{\bl'\in \s{L}}\expo{-\lambda_\ast\frac{|\bl-\bl'|}{\lB}}\;\expo{
-\frac{\bigl(|\br-\bl'|-\sqrt{2r}\,\lB\bigr)^{2}}{4\lB^{2}}}
\]
It remains to bound
$\s G(\br)$ uniformly in $r$.
  Assume first $|\br-\bl|\geqslant\sqrt{2r}\,\lB$ and set
\[
d\;:=\;\frac{|\br-\bl|}{\lB}-\sqrt{2r}\;\geqslant\;0\,,\qquad q\;:=\;\frac{|\br-\bl'|-\sqrt{2r}\,
\lB}{\lB}\,.
\]
From $|\bl-\bl'|\geqslant|\br-\bl|-|\br-\bl'|=\lB d-\lB q$ one gets
\[
\expo{-\lambda_\ast\frac{|\bl-\bl'|}{\lB}}\;=\;\expo{-\frac{\lambda_\ast}{2}\frac{|\bl-\bl'|}{\lB}
}\,\expo{-\frac{\lambda_\ast}{2}\frac{|\bl-\bl'|}{\lB}}\;\leqslant\;\expo{-\frac{\lambda_\ast}{2}d
}\,\expo{\frac{\lambda_\ast}{2}q}\,\expo{-\frac{\lambda_\ast}{2}\frac{|\bl-\bl'|}{\lB}}\,.
\]
 Since the Gaussian inside the sum that  defines $\s{G}(\br)$ is 
 $\expo{-\frac{q^
2}{4}}$, 
   the $q$--dependent factors inside the sum combine into $\expo{\frac{\lambda_\ast}{2}q-\frac14q^2}$. Since
\[
\frac{\lambda_\ast}{2}q-\frac14q^{2}\;=\;\frac{\lambda_\ast^{2}}{2}-\frac18\bigl(q-2\lambda_\ast
\bigr)^{2}-\frac18q^{2}\;\leqslant\;\frac{\lambda_\ast^{2}}{2}-\frac18q^{2}\,,
\]
one obtains $\expo{\frac{\lambda_\ast}{2}q-\frac14q^2}\leqslant\expo{\frac{\lambda_\ast^2}{2}}\expo{-\frac{q^2}{8}}\leqslant\expo{\frac{\lambda_\ast^2}{2}}$, where in the last inequality $\expo{-\frac{q^2}{8}}\leqslant 1$ has been used. Therefore
\[
\s G(\br)\;\leqslant\;\expo{\frac{\lambda_\ast^{2}}{2}}\,\expo{-\frac{\lambda_\ast}{2}d}\sum_{\bl'\in\s L}
\expo{-\frac{\lambda_\ast}{2}\frac{|\bl-\bl'|}{\lB}}\,.
\]
With the same proof of Lemma \ref{lem:summ-dim}, and $|\bl-\bl'|\geqslant\sfrac{|\bl-\bl'|_1}{\sqrt{2}}$ one can prove that
 \[
 \sum_{\bl'\in\s L}
\expo{-\frac{\lambda_\ast}{2}\frac{|\bl-\bl'|}{\lB}}\;\leqslant\;
\coth\left(
\frac{\lambda_\ast\ell_1}{4\sqrt2\,\ell_B}
\right)
\coth\left(
\frac{\lambda_\ast\ell_2}{4\sqrt2\,\ell_B}
\right)\;=:\;C_\ast'\;. 
 \]
 Therefore, one obtains that  
 \[
\bigl|(S^{-1}\chi_\gamma)(\br)\bigr|\;\leqslant\;\frac{A_2\,\expo{\frac{\lambda_\ast^{2}}{2}}\,C_\ast'}{\sqrt{2\pi}\,\lB\,(1+r)^{\frac14}}\,\expo{-\frac{\lambda_\ast}{2}\left(\frac{|\br-\bl|}{\lB}-\sqrt{2r}\right)}\;,\qquad |\br-\bl|\geqslant\sqrt{2r}\,\lB\;.
\]
For
$|\br-\bl|<\sqrt{2r}\,\lB$ one can use the uniform bound
\[
\s{G}(\br)\;\leqslant\;\sum_{\bl'\in \s{L}}\expo{-\lambda_\ast\frac{|\bl-\bl'|}{\lB}}\;\leqslant\;\sum_{\bl'\in\s L}
\expo{-\frac{\lambda_\ast}{2}\frac{|\bl-\bl'|}{\lB}}\;\leqslant\;C_\ast'\leqslant\;\expo{\frac{\lambda_\ast^{2}}{2}}\,C_\ast'\;.
\]
Then, by setting
\[
C_\ast\;:=\;\frac{A_2\,\expo{\frac{\lambda_\ast^{2}}{2}}\,C_\ast'}{\sqrt{2\pi}\,\lB\,}
\]
one obtains the inequality in the claim.
\end{proof}

\smallskip

\begin{proof}[Proof of Lemma \ref{lem:dual-coh-overlap}]
The operator $S^{-1}$ preserves the Landau levels \textup{(}Proposition~\ref{pr:comm}\textup{)},
so the overlap vanishes unless $r'=r$; assume $r'=r$. By $|\ip{\,\cdot\,}{\cdot}|\leqslant\langle|
\,\cdot\,|,|\cdot|\rangle$ and the localizations of Lemmas~\ref{lemm:gau_loc}
and~\ref{lem:dual-localization},
\begin{equation}\label{eq:overlap-int}
\bigl|\ip{S^{-1}\chi_{(\bl,r)}}{\chi_{(\bz,r)}}\bigr|\;\leqslant\;\frac{C_\ast}{\sqrt{2\pi}\,\lB\,(
1+r)^{1/2}}\,\s I\,,
\end{equation}
where
\[
\s I\;:=\;\int_{\R^2}\!\expo{-\frac{\lambda_\ast}{2}(\frac{|\bx-\bl|}{\lB}-\sqrt
{2r})_+}\expo{-\frac{(|\bx-\bz|-\sqrt{2r}\lB)^2}{4\lB^2}}\dd^2\bx\,.
\]
Set $w:=|\bx-\bz|-\sqrt{2r}\lB$ and $d_0:=\sfrac{|\bl-\bz|}{\lB}-2\sqrt{2r}$. The reverse triangle
inequality 
 $|\bx-\bl|\geqslant|\bl-\bz|-|\bx-\bz|$ 
gives $\sfrac{|\bx-\bl|}{\lB}-\sqrt{2r}\geqslant d_0-\sfrac{w}{\lB}$, and in polar
coordinates about $\bz$ \textup{(}$\dd^2\bx=2\pi\rho\,\dd\rho$ and $\rho=\sqrt{2r}\lB+w$\textup{)},
\[
\s I\;\leqslant\;2\pi\int_{-\sqrt{2r}\lB}^{\infty}\dd w\;\!\bigl(\sqrt{2r}\lB+w\bigr)\,\expo{-\frac{\lambda
_\ast}{2}[d_0-w/\lB]_+}\,\expo{-\frac{w^2}{4\lB^2}}\,.
\]
Since $[x]_+\geqslant x$, one has $[d_0-w/\lB]_+\geqslant d_0-w/\lB$, so that, completing the
square, one gets
\[
\expo{-\frac{\lambda_\ast}{2}\left[d_0-\frac{w}{\lB}\right]_+}\expo{-\frac{w^2}{4\lB^2}}\;\leqslant\;\expo{-\frac{
\lambda_\ast}{2}d_0}\,\expo{\lambda_\ast^2/4}\,\expo{-\frac{(w-\lambda_\ast\lB)^2}{4\lB^2}}\,.
\]
Substituting $u:=w-\lambda_\ast\lB$, bounding $\sqrt{2r}\lB+w\leqslant\sqrt{2r}\lB+\lambda_\ast\lB+
|u|$, and using 
\[\int_\R\dd u\;\expo{-\frac{u^2}{4\lB^2}}\;=\;2\sqrt\pi\lB\;,\qquad \int_\R\dd u\;|u|\expo{-\frac{u^2}{4\lB^2}}\;=\;4
\lB^2
\]
 together with $\sqrt{2r}\leqslant\sqrt2\sqrt{1+r}$, one ends with
\[\begin{aligned}
\s I\;&\leqslant\;2\pi\,\expo{\lambda_\ast^2/4}\expo{-\frac{\lambda_\ast}{2}d_0}\bigl[2\sqrt\pi\lB^2(
\sqrt{2r}+\lambda_\ast)+4\lB^2\bigr]\\
&\leqslant\;K_\ast'\,\lB^2\sqrt{1+r}\;\expo{-\frac{\lambda_\ast
}{2}d_0}\,,
\end{aligned}
\]
with $K_\ast:=\sqrt2+\lambda_\ast+\sfrac{2}{\sqrt\pi}$ and $K_\ast':=4\pi^{\sfrac{3}{2}}K_\ast\expo{\sfrac{\lambda_
\ast^2}{4}}$. Bounding instead $[d_0-w/\lB]_+\geqslant0$ (\ie setting $\lambda_\ast=0$ in
the two displays above) gives $\s I\leqslant K_\ast'\lB^2\sqrt{1+r}$. As $\min\{\expo{-
\frac{\lambda_\ast}{2}d_0},1\}=\expo{-\frac{\lambda_\ast}{2}[d_0]_+}$, the two estimates combine into
\[
\s I\;\leqslant\;K_\ast'\,\lB^2\sqrt{1+r}\;\expo{-\frac{\lambda_\ast}{2}[d_0]_+}\,,\qquad d_0=\frac{
|\bl-\bz|}{\lB}-2\sqrt{2r}\,.
\]
Inserting into \eqref{eq:overlap-int}, the factor $\sqrt{1+r}$ cancels with the one in the denominators and, with
the factor $\delta_{r,r'}$, yields the claim  with 
\[
D_\ast\;:=\;\frac{C_\ast K_\ast'\lB}{\sqrt{2\pi}}\;=\;2\pi\,C_\ast\lB\Bigl(2+\lambda_\ast\sqrt2+2
\sqrt{\tfrac{2}{\pi}}\Bigr)\expo{\frac{\lambda_\ast^2}{4}}
\]
 level--independent.
\end{proof}

\smallskip

We conclude this section with a technical result which, although it is not directly used in the present work, is expected to be useful in future investigations of related questions.

 \begin{lemma}[Almost--orthogonality of $S^{\pm\sfrac12}$]\label{lem:sqrt-loc}
Let $\{\xi_\gamma\}_{\gamma\in\Gamma}$ be an asymptotically--orthogonal frame as in
Definition~\ref{def:loc-frame}, with rate $\lambda>0$  and frame
operator $S$. There are constants $0<\nu < \lambda$ and $N>0$ such that
\begin{equation}\label{eq:Sp-loc-1/2}
\bigl|\ip{\xi_\gamma}{S^{\pm\frac{1}{2}}\xi_{\gamma'}}\bigr|\;\leqslant\;N\,\expo{-\nu\,d(\gamma,
\gamma')}\,,\qquad\forall\,\gamma,\gamma'\in\Gamma\,.
\end{equation}

\end{lemma}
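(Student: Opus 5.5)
The plan is to reproduce the Neumann-series strategy behind Proposition~\ref{prop:inherited}, with the power series of $x\mapsto x^{\pm 1/2}$ in place of the geometric series. Let $s:=\min\sigma(S)>0$. Write $S=\norm{S}(\unit-R)$ with $R:=\unit-S/\norm{S}$. Then $0\leqslant R\leqslant r\,\unit$, where $r:=1-s/\norm{S}\in[0,1)$, so $\norm{R^n}\leqslant r^n$. The binomial series then gives, in norm,
\[
S^{\pm\frac12}\;=\;\norm{S}^{\pm\frac12}\sum_{n\geqslant0}c^{\pm}_n\,R^n\,,\qquad c^{\pm}_n:=(-1)^n\binom{\pm\frac12}{n}\,,
\]
with $|c_n^{\pm}|\leqslant1$ for every $n$. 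The series converges absolutely because $r<1$. Consequently
\[
\bigl|\ip{\xi_\gamma}{S^{\pm\frac12}\xi_{\gamma'}}\bigr|\;\leqslant\;\norm{S}^{\pm\frac12}\sum_{n\geqslant0}\bigl|\ip{\xi_\gamma}{R^n\xi_{\gamma'}}\bigr|\,,
\]
and it suffices to control the frame matrix elements of $R^n$.

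For each $n$ I will use two competing bounds. The first is trivial: since $\norm{\xi_\gamma}=1$, we have $|\ip{\xi_\gamma}{R^n\xi_{\gamma'}}|\leqslant r^n$. The second comes from expanding
\[
R^n=\sum_{k=0}^n\binom nk(-1)^k\,\frac{S^k}{\norm{S}^{k}}
\]
and applying Lemma~\ref{alm_ort_Sp} to each $S^k$. Fix $0<\epsilon<\delta<\lambda$. This yields
\[
\bigl|\ip{\xi_\gamma}{R^n\xi_{\gamma'}}\bigr|\;\leqslant\;G\sum_{k=0}^{n}\binom nk\Bigl(\tfrac{G m_\epsilon}{\norm S}\Bigr)^k\expo{-(\lambda-\delta)d(\gamma,\gamma')}\;=\;G\,D^{n}\,\expo{-(\lambda-\delta)d(\gamma,\gamma')}\,,
\]
where $D:=1+G m_\epsilon/\norm{S}>1$.

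Now fix $d:=d(\gamma,\gamma')$ and a parameter $\theta>0$, and split the sum over $n$ at $n_0:=\lfloor\theta d\rfloor$. For $n\leqslant n_0$ I use the second bound. Its contribution is at most
\[
G\,\frac{D^{n_0+1}}{D-1}\,\expo{-(\lambda-\delta)d}\;\leqslant\;\frac{GD}{D-1}\,\expo{-(\lambda-\delta-\theta\ln D)\,d}\,.
\]
For $n>n_0$ I use the first bound. Its contribution is at most
\[
\frac{r^{n_0+1}}{1-r}\;\leqslant\;\frac{1}{1-r}\,\expo{-\theta\ln(1/r)\,d}\,.
\]
Choosing $\theta:=(\lambda-\delta)/\bigl(\ln D+\ln(1/r)\bigr)$ equalizes the two exponents. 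This gives \eqref{eq:Sp-loc-1/2} with
\[
\nu\;:=\;\frac{(\lambda-\delta)\ln(1/r)}{\ln D+\ln(1/r)}\;<\;\lambda-\delta\;<\;\lambda\,,\qquad N\;:=\;\norm{S}^{\pm\frac12}\Bigl(\frac{GD}{D-1}+\frac1{1-r}\Bigr)\,.
\]
Two degenerate cases need a separate word. If $r=0$, the frame is tight, $R=0$, and the claim is immediate from \eqref{eq:loc}. If $d=0$, the trivial bound summed over all $n$ suffices.

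The main obstacle is the balancing in the previous step. The expansion of $R^n$ produces a factor growing like $D^n$, exponentially in $n$, so it cannot be summed against $\expo{-(\lambda-\delta)d}$ alone. It must be traded against the spectral decay $r^n$, and this trade is exactly what forces the rate $\nu$ to be strictly smaller than $\lambda$. It is also why the argument does not yield almost-orthogonality even when the frame is almost-orthogonal: increasing $\lambda$ increases $G_\lambda$, hence $D$, and with $\ln G_\lambda=2\lambda^2$ the rate $\nu$ stays bounded, consistent with Remark~\ref{rk:frac_pow}. A secondary point, though routine, is to check that the coefficients $c_n^{\pm}$ are indeed bounded by $1$ uniformly in $n$.
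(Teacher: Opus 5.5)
Your argument is correct, and it takes a genuinely different route from the paper. The paper writes $S^{\pm\frac12}$ via Balakrishnan's integral of $S^{i}(S+s\unit)^{-1}$ and transfers the frame matrix elements to the Gram operator $D^*D$ on $\ell^2(\Gamma)$. It removes the kernel of that operator with the regularization $\widetilde G=D^*D+A(\unit-P_G)$, $P_G=D^*S^{-1}D$. It then proves exponential decay of $(\widetilde G+s\unit)^{-1}$ by a Combes--Thomas conjugation and the Schur test, uniformly in $s$ and with the integrable prefactor $2/(A+s)$. Since $P_G$ has entries $\ip{\xi_\gamma}{S^{-1}\xi_{\gamma'}}$, that proof relies on Proposition~\ref{prop:inherited}. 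Its rate is some $\nu=\mu_0<\lambda_1<\lambda/2$, fixed only implicitly by a smallness condition.

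Your binomial expansion in $R=\unit-S/\norm{S}$ needs only Lemma~\ref{alm_ort_Sp} and the spectral gap $r<1$. The balancing of $G D^n\expo{-(\lambda-\delta)d}$ against $r^n$ is the same Neumann-series mechanism that underlies Proposition~\ref{prop:inherited}, and it is harmless here because $|c_n^\pm|\leqslant 1$.

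What your route buys is a fully explicit rate $\nu=(\lambda-\delta)\ln(1/r)/\ln(D/r)$. This rate is not capped at $\lambda/2$ and tends to $\lambda-\delta$ as the frame approaches tightness. The argument also extends to any real power $S^{\alpha}$ with an arbitrarily small loss in the rate, since $|\binom{\alpha}{n}|$ grows at most polynomially. What the paper's route buys is a resolvent estimate uniform in the spectral parameter, which applies to any function of $S$ written as a Stieltjes-type integral of resolvents. As a small aside, your case $d=0$ needs no separate treatment: the split with $n_0=0$ already covers it.
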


\begin{proof}
Since $A\unit\leqslant S\leqslant B\unit$ with $A>0$, the  Balakrishnan's formula for the fractional power $\alpha=\sfrac{1}{2}$ of $S$ provides
\[
S^{\frac12}\;=\;\frac1\pi\int_0^\infty \frac{\dd s}{\sqrt{s}}\,S(S+s\unit)^{-1}\,.
\]
From this one gets
\[
S^{-\frac12}\;=\;S^{-1}S^{\frac12}\;=\;\frac1\pi\int_0^\infty \frac{\dd s}{\sqrt{s}}\,(S+s\unit)^{-1}\,.
\]
Both formulas hold in norm since 
\[
\|S^i(S+s\unit)^{-1}\|\;=\;\sup_{x\in[A,B]}\frac{x^i}{x+s}\;\leqslant\;\max\left\{\frac{B}{B+s}, \frac{1}{A+s}\right\}
\]
shows that 
the integrand   is $O(s^{-\sfrac12})$ near $0$ and $O(s^{-\sfrac32})$  at infinity. 
We need to estimate
\[
\s{F}^{(i)}_s(\gamma,\gamma')\;:=\;\bigl|\ip{\xi_\gamma}{S^i(S+s\unit)^{-1}\xi_{\gamma'}}\bigr|\;,\qquad i=0,1\;.
\]
We will shows that there is a $0<\nu\leqslant \lambda$ and a $K>0$ such that
\begin{equation}\label{eq:f_i}
\s{F}^{(i)}_s(\gamma,\gamma')\;\leqslant\; \frac{K}{A+s}\,\expo{-\nu d(\gamma,\gamma')}\;
\end{equation}
independently of $i=0,1$.
inserting  \eqref{eq:f_i} into the Balakrishnan
formulas on gets
\[
\bigl|\ip{\xi_\gamma}{S^{\pm\frac12}\xi_{\gamma'}}\bigr|\;\leqslant\;\frac{K}{\pi}\left(\int_0^\infty\frac{\dd s}{\sqrt{s}}\,
\frac{1}{A+s}\right)\,\expo{-\nu d(\gamma,\gamma')}\;=\;\frac{K}{\sqrt{A}}\,\,\expo{-
\nu d(\gamma,\gamma')}\,,
\]
where the equality follows from 
 $\int_0^\infty \sfrac{\dd s}{\sqrt{s}}(A+s)^{-1}=\pi A^{-\sfrac12}$.
Then 
\eqref{eq:Sp-loc-1/2} follows with $N:=K\,A^{-\sfrac12}$. The rest of the proof focuses on the justification of \eqref{eq:f_i}.
Let $D:\ell^2(\Gamma)\to\HH$ be the synthesis operator of the frame, defined by $De_\gamma:=\xi_\gamma$, with $\{e_\gamma\}_{\gamma\in\Gamma}$ the canonical basis of $\ell^2(\Gamma)$. 
Then $S=DD^{*}$ and $G:=D^{*}D$ is the Gram operator. Its matrix entries 
are $[G]_{\gamma,\gamma'}:=\ip{e_\gamma}{G e_{\gamma'}}=\ip{\xi_\gamma}{\xi_{
\gamma'}}$ and, by assumption, they satisfy
\[
\left|[G]_{\gamma,\gamma'}\right|\;\leqslant\; g\, \expo{-\lambda d(\gamma,\gamma')}
\] 
for some constant $g\geqslant 1$ according to Definition~\ref{def:loc-frame}.
Similarly, one sees that $[G^2]_{\gamma,\gamma'}:=\ip{e_\gamma}{D^*SD e_{\gamma'}}=\ip{\xi_\gamma}{S\xi_{
\gamma'}}$, and 
\[
\left|[G^2]_{\gamma,\gamma'}\right|\;\leqslant\; g'\, \expo{-\lambda' d(\gamma,\gamma')}
\] 
with $\lambda'<\lambda$ (arbitrarily close to $\lambda$) and $g'>g$ according to Lemma \ref{alm_ort_Sp}.
From $D^*S=GD^*$ one gets $D^*(S+s\unit)^{-1}=(G+s\unit)^{-1}D^*$
and in turn
\[
D^*(S+s\unit)^{-1}D\;=\;(G+s\unit)^{-1}G\;=\;G(G+s\unit)^{-1}\;.
\]
As a consequence
\[
\s{F}_s^{(i)}(\gamma,\gamma')\;:=\;\left|\ip{e_\gamma}{D^*S^{i}(S+s\unit)^{-1}D e_{\gamma'}}\right|\;=\;
\left|\left[(G+s\unit)^{-1}G^{1+i}\right]_{\gamma,\gamma'}\right|\;.
\]
We need to estimate the matrix element of $(G+s\unit)^{-1}$ uniformly in $s\geqslant0$.
From ${\rm spec}(DD^*)\setminus\{0\}={\rm spec}(D^*D)\setminus\{0\}$ it follows that
${\rm spec}(G)\setminus\{0\}={\rm spec}(S)\subseteq [A,B]$. However since 
$\ker(G)\neq\{0\}$ we need to \emph{regularize} the operator.
Since $\ker(G)=\ker(D)$
it follows that ${\rm ran}(G)=\ker (D)^{\perp}$. 
Let $P_G:=D^*S^{-1}D$. This is the projection onto ${\rm ran}(G)$. In fact one can check that 
$P_G^2=P_G=P_G^*$. Moreover ${\rm ran}(P_G)\subseteq{\rm ran}(D^*)=\ker (D)^{\perp}={\rm ran}(G)$
and if $b:=D^*\phi$ one has that $P_Gb=D^*S^{-1}DD^*\phi=D^*\phi=b$ meaning that $P_G$ fixes 
${\rm ran}(D^*)$. Since $[P_G]_{\gamma,\gamma'}=\ip{\xi_\gamma}{S^{-1}\xi_{\gamma'}}$
one gets from
 Proposition~\ref{prop:inherited} that
 \[
\left|[P_G]_{\gamma,\gamma'}\right|\;\leqslant\; a\, \expo{-\lambda_1 d(\gamma,\gamma')}
\] 
with $\lambda_1 <\sfrac{\lambda}{2}$ and $a>0$. 
Define the self--adjoint regularization
\[
\widetilde G\;:=\;G+A\,(\unit-P_G)\,
\]
and observe that 
\begin{equation}\label{eq_loc_G_tid}
\big[\widetilde G\big]_{\gamma,\gamma'}\;=\;[G]_{\gamma,\gamma'}+A\delta_{\gamma,\gamma'}-A [P_G]_{\gamma,\gamma'}\;\leqslant\;g_0\,\expo{-\lambda_1 d(\gamma,\gamma')}
\end{equation}
with $g_0:=g+A(1+a)$.
Moreover
${\rm spec}(\widetilde
G)\subseteq[A,B]$ since on ${\rm ran}(G)$ one has $\widetilde G=G$, while $\ker (G)$ is lifted from the
eigenvalue $0$ to $A$. Because $G$ vanishes on $\ker (G)$ and $\widetilde G=G$ on ${\rm ran}(G)$, one gets
\begin{equation}\label{eq:shift}
(G+s\unit)^{-1}G^{i+1}\;=\;(\widetilde G+s\unit)^{-1}G^{i+1}\,,\qquad i=0,1\,.
\end{equation}
Moreover $(\widetilde G+s\unit)\geqslant(A+s)
\unit$.
The rest of the proof is based on a \emph{Combes-Thomas type} estimate. 
Fix $\gamma_0\in\Gamma$ and let $w(\cdot):=d(\,\cdot\,,\gamma_0)$. For $\mu>0$ put $E_\mu:=\mathrm{diag}\bigl(\expo{\mu w(\gamma)}\bigr)$, so that 
\[
\left[E_\mu M
E_\mu^{-1}\right]_{\alpha,\beta}\;=\;\expo{\mu(w(\alpha)-w(\beta))}[M]_{\alpha,\beta}\;,\qquad M\in\bb{B}(\ell^2(\Gamma))\;.
\]
Using 
$|\expo{x}-1|\leqslant |x|\expo{|x|}$ and  $|w(\alpha)-w(\beta)|\leqslant d(\alpha,\beta)$,
together with 
\eqref{eq_loc_G_tid}, one gets
\[
\begin{aligned}
\left|\big[E_\mu\widetilde G E_\mu^{-1}-\widetilde G\big]_{\alpha,\beta}\right|\;&=\;\left|\expo{\mu(w(\alpha)-
w(\beta))}-1\right|\,\left|\big[\widetilde G\big]_{\alpha,\beta}\right|\\
&\leqslant\;g_0\,\mu\, d(\alpha,\beta)\,\expo{-(\lambda_1-\mu)d(\alpha,\beta)}\;
\end{aligned}
\]
independently of the choice of $w$ (equivalently of the point $\gamma_0$).
For $\mu<\lambda_1$ let
\[
R(\mu)\;:=\;\sup_{\alpha\in\Gamma}\left(\sum_{\beta\in\Gamma}d(\alpha,\beta)\,\expo{-(\lambda_1-\mu)d(\alpha,\beta)}\right)\;.
\] 
From $x\expo{-\rho x}\leqslant \sfrac{2}{\rho}\expo{-(\sfrac{\rho}{2})x}$
one gets
\[
R(\mu)\;\leqslant\;\frac{2}{\lambda_1-\mu}\sup_{\alpha\in\Gamma}\left(\sum_{\beta\in\Gamma}\expo{-\frac{\lambda_1-\mu}{2}d(\alpha,\beta)}\right)\;\leqslant\;\frac{2 m_{\sfrac{(\lambda_1-\mu)}{2}}}{\lambda_1-\mu}
\] 
where $m_{\sfrac{(\lambda_1-\mu)}{2}}$ is  defined in \eqref{eq:unif-summ-cos}.
Therefore, the Schur test yields
\[
\bigl\|E_\mu\widetilde G E_\mu^{-1}-\widetilde G\bigr\|\;\leqslant\;g_0\,\mu\, R(\mu)\,\leqslant\;
\mu\,\frac{2g_0m_{\sfrac{(\lambda_1-\mu)}{2}}}{\lambda_1-\mu}
\,\xrightarrow
[\mu\to0]{}0\,.
\]
Since $m_\epsilon$ is non--increasing in $\epsilon$ by \eqref{eq:unif-summ-cos}, the map $\mu\mapsto g_0\mu R(\mu)$ is continuous and vanishes at $\mu=0$, so the threshold $\mu_0$ can be chosen to make it as small as required on the whole interval $(0,\mu_0]$.
The, fix $\mu_0\in(0,\lambda_1)$ such that  $g_0\mu_0R(\mu_0)\leqslant \sfrac{A}{2}$.
For $0< \mu \leqslant\mu_0$ the operator $E_\mu(\widetilde G+s\unit)E_\mu^{-1}=(\widetilde G+s\unit)+(E
_\mu\widetilde G E_\mu^{-1}-\widetilde G)$ is a perturbation of $\widetilde G+s\unit\geqslant(A+s)
\unit$ of norm less that $\sfrac{A}{2}$. Therefore it is  invertible with
\[
\bigl\|E_\mu(\widetilde G+s\unit)^{-1}E_\mu^{-1}\bigr\|\;=\;\bigl\|\bigl(E_\mu(\widetilde G+s\unit)E_
\mu^{-1}\bigr)^{-1}\bigr\|\;\leqslant\;\frac{1}{(A+s)-\sfrac{A}{2}}\;\leqslant\;\frac{2}{A+s}\,,
\]
uniformly in $s\geqslant0$ and $\mu\leqslant\mu_0$. 
From
\[
\bigl[(\widetilde G+s\unit)^{-1}\bigr]_{\gamma,\gamma'}\;=\;\expo{-\mu_0(w(\gamma)-w(\gamma'))}\bigl[E_{\mu_0}(\widetilde G+s\unit)^{-1}E_{\mu_0}^{-1}\bigr]_{\gamma,\gamma'}\;,
\]
and fixing $w$ in such a way that $w(\gamma)-w(\gamma')=d(\gamma,\gamma')$,
one obtains
\[\begin{aligned}
\bigl|\bigl[(\widetilde G+s\unit)^{-1}\bigr]_{\gamma\gamma'}\bigr|\;&\leqslant\;\expo{-\mu_0 d(\gamma,
\gamma')}\,\bigl\|E_{\mu_0}(\widetilde G+s\unit)^{-1}E_{\mu_0}^{-1}\bigr\|\\
&\leqslant\;\frac{2}{A+s}\,
\expo{-\mu_0 d(\gamma,\gamma')}\,,
\end{aligned}
\]
uniformly in $s\geqslant0$.
Using \eqref{eq:shift} one has that 
\[
\begin{aligned}
\s{F}_s^{(i)}(\gamma,\gamma')\;&=\;
\left|\left[(\widetilde G+s\unit)^{-1}G^{1+i}\right]_{\gamma,\gamma'}\right|\\
&\leqslant\;\sum_{\gamma''\in\Gamma}\bigl|\bigl[(\widetilde G+s\unit)^{-1}\bigr]_{\gamma\gamma''}\bigr|\,\big|[G^{i+1}]_{\gamma'',\gamma'}\big|\\
&\leqslant\;\frac{2g'}{A+s}\sum_{\gamma''\in\Gamma}\expo{-\mu_0 d(\gamma,\gamma'')}\expo{-\lambda' d(\gamma'',\gamma')}\;.
\end{aligned}
\]
Since $\mu_0<\lambda_1\leqslant \lambda'$ one gets that
\[
\begin{aligned}
\expo{-\mu_0 d(\gamma,\gamma'')}\expo{-\lambda' d(\gamma'',\gamma')}\;&=\;\expo{-\mu_0 [d(\gamma,\gamma'')+d(\gamma'',\gamma')]}\expo{-\epsilon_0 d(\gamma'',\gamma')}\\
&\leqslant\;\expo{-\mu_0d(\gamma,\gamma')}\expo{-\epsilon_0 d(\gamma'',\gamma')}
\end{aligned}
\]
with $\epsilon_0 :=\lambda'-\mu_0$
and by  the triangle inequality
$d(\gamma,\gamma'')+d(\gamma'',\gamma')\geqslant d(\gamma,\gamma')$.
Therefore
\[
\s{F}_s^{(i)}(\gamma,\gamma')\;\leqslant\;\frac{2g'm_{\epsilon_0}}{A+s}\expo{-\mu_0d(\gamma,\gamma')}
\]
with $m_{\epsilon_0}$ is  defined in \eqref{eq:unif-summ-cos}.
This concludes the proof by renaming $\nu\equiv\mu_0$
and fixing $K:=2g'm_{\epsilon_0}$.
\end{proof}

\begin{remark}[Sharpness of the rate]\label{rmk:rate-sharp}\label{rk:frac_pow}
It is worth noting that, even under the assumption of Almost-orthogonality of the frame $\{\xi_\gamma\}_{\gamma\in\Gamma}$, which is sufficient to assure that 
the matrix elements of $S^p$, for every $p\in\mathbb{N}_0$, decay at an arbitrary rate $\lambda>0$
(Lemma \ref{alm_ort_Sp}), one obtains only exponential decay for the matrix elements of $S^{\pm\frac12}$. More precisely, the decay rate obtained in the proof is some $\nu < \lambda_1$, where $\lambda_1$ 
is the best rate for the decay of the matrix elements  
of $S^{-1}$ according to Proposition \ref{prop:inherited}.
\hfill$\blacktriangleleft$
\end{remark}

\subsection{Proofs and auxiliary material for Section \ref{sec:car}}
\label{ap:proof_S2}

\begin{proof}[Proof of Proposition \ref{lem:strong-cont}]
Fix $\imath_0$ and let $\imath\to \imath_0$. We first check continuity on the generators. Since
$a$ is antilinear and isometric, $\norm{a(\phi)}=\norm{\phi}$, one has
\[
\begin{aligned}
\norm{\alpha_{u_\imath}(a(\phi))-\alpha_{u_{\imath_0}}(a(\phi))}\;&
=\;\norm{a(u_\imath \phi)-a(u_{\imath_0}\phi)}
\;=\;\norm{a\bigl((u_\imath-u_{\imath_0})\phi\bigr)}\\
&=\;\norm{(u_\imath-u_{\imath_0})\phi}\;\xrightarrow\;0
\end{aligned}
\]
by strong continuity of $\imath\mapsto u_\imath$. The same holds for $a^{\dag}(\phi)$. Hence
$t\mapsto\alpha_{u_t}(B)$ is norm--continuous for every $B$ in the dense $*$--algebra
$\rr{A}_0$ of polynomials in the generators, being a finite algebraic expression of
such terms.
Let now $A\in\rr{A}$, $\epsilon>0$ and  pick
$B\in\rr{A}_0$ with $\norm{A-B}<\epsilon$. As each $\alpha_{u_\imath}$ is a
$*$--automorphism, hence  isometric, one gets
\[
\norm{\alpha_{u_\imath}(A)-\alpha_{u_\imath}(B)}\;=\;\norm{\alpha_{u_\imath}(A-B)}\;=\;\norm{A-B}\;<\; {\epsilon}
\]
uniformly in $\imath$  (and likewise at $\imath_0$). 
By continuity on $\rr{A}_0$ there is
 an open neighborhood of $\imath_0\in\bb{O}_\epsilon\subset \bb{I}$
such that
$\norm{\alpha_{u_\imath}(B)-\alpha_{u_{\imath_0}}(B)}<\epsilon$ for every
$\imath\in \bb{O}_\epsilon$. Combining,
\[
\begin{aligned}
\norm{\alpha_{u_\imath}(A)-\alpha_{u_{\imath_0}}(A)}
\;\leqslant\;&\norm{\alpha_{u_\imath}(A)-\alpha_{u_\imath}(B)}
+\norm{\alpha_{u_\imath}(B)-\alpha_{u_{\imath_0}}(B)}\\
&+\;\norm{\alpha_{u_{\imath_0}}(B)-\alpha_{u_{\imath_0}}(A)}\;<\; 3\epsilon
\end{aligned}
\]
for every
$\imath\in \bb{I}_\epsilon$.     Since $\epsilon$ is arbitrary small, it follows that 
$\imath\mapsto\alpha_{u_\imath}(A)$ is norm--continuous.
\end{proof}

\smallskip

\begin{proof}[Proof of Proposition \ref{prop:quasilocal-frame}]
The validity of properties (1) and (2) in Definition~\ref{def:AQL}
 has already been established above, prior to the statement of the claim.
We only need to prove the property (3), namely the density of $\rr{A}_{\mathrm{loc}}$.
The inclusion $\overline{\rr{A}_{\mathrm{loc}}}\subseteq\rr{A}$ is clear. For
the reverse one, since $\{\chi_\gamma\}_{\gamma\in\Gamma}$ is a frame of $\HH$, every
$\phi\in\HH$ is the norm limit of the finite reconstructions
$\phi_\Lambda=\sum_{\gamma\in\Lambda}\ip{S^{-1}\chi_\gamma}{\phi}\,\chi_\gamma$ along
$\Lambda\nearrow\Gamma$. As $a$ is antilinear and isometric, one gets $\norm{a(\phi)-a(\phi_\Lambda)}
=\norm{\phi-\phi_\Lambda}\to0$ with $a(\phi_\Lambda)\in\rr{A}_\Lambda$. Therefore
$a(\phi)\in\overline{\rr{A}_{\mathrm{loc}}}$,
and in turn $a^{\dag}(\phi)=a(\phi)^*\in\overline{\rr{A}_{\mathrm{loc}}}$
 for every $\phi$. Since these
generate $\rr{A}$ one ends with $\rr{A}\subseteq \overline{\rr{A}_{\mathrm{loc}}}$.
\end{proof}

\smallskip

\begin{proof}[Proof of Proposition~ \ref{prop:core-general-0}]
Since $\n{L}_{\rm g}(a_\gamma)=-a_\gamma$ and $\n{L}_{\rm g}(a_\gamma^*)=a_\gamma^*$  for every $\gamma\in\Gamma$,
one concludes that $\rr{A}_{\mathrm{loc}}\subset\rr{D}(\n{L}_{\rm g})$
by the Leibniz rule and linearity.
Let $A=a_{\gamma_1}^{\#}\cdots a_{\gamma_k}^{\#}$ be a monomial supported in $\Lambda\in
\s{P}_f(\Gamma)$ with $p$ creation and $q=k-p$ annihilation
operators. The  
on--site action of the gauge group gives $\vartheta_\theta(A)=\expo{\ii\theta(2p-k)}A$
for every $\theta\in\R$, the latter having the same support $A$. Therefore, the gauge group preserves the support of each monomial, and hence, by linearity, we obtain that 
$\vartheta_\theta(\rr{A}_\Lambda)=\rr{A}_\Lambda$ 
for every $\Lambda\in
\s{P}_f(\Gamma)$. This shows that $\rr{A}_{\mathrm{loc}}$ is invariant under the gauge group.
Therefore  \cite[Corollary~3.1.7]{Bratteli-Robinson-1}     guarantees that $\rr{A}_{\mathrm{loc}}$
is a core for $\n{L}_{\rm g}$.
\end{proof}

\smallskip

\begin{proof}[Proof of Lemma \ref{lem:single}]
Identity \eqref{eq:single-move} follows by moving $b$ through the product $bB$, one factor at a time, using the relation $b\,c_j=-c_j\,b+\{b,c_j\}$. The term $-c_jb$ moves $b$ past $c_j$, producing the corresponding sign, while $\{b,c_j\}$ remains in place. After $k$ steps, $b$ has reached the right end of the product, yielding $Bb$ with an overall sign $(-1)^k$. Hence, the difference $bB-(-1)^kBb$ is precisely given by the right-hand side of \eqref{eq:single-move}. Since the degree of $b$ is 1 one gets that $(-1)^{d_bd_B}=(-1)^{d_B}=(-1)^k$. Therefore $[b,B]_{\rm gr}=bB-(-1)^kBb$, which proves 
the identity \eqref{eq:single-move}.
 For \eqref{eq:single-bound}, each residual anticommutator is bounded  by 
Proposition  \ref{prop:magloc} 
as 
\[
\bigl\|\{b,c_j\}\bigr\|
\;=\;\bigl|\ip{\chi_\gamma}{\chi_{\gamma_j}}\bigr|
\;\leqslant\; G_\lambda\,\expo{-\lambda\,d(\gamma,\gamma_j)}
\;\leqslant\; G_\lambda\,\expo{-\lambda\,d(\Lambda_1,\Lambda_2)}
\]
where the last inequality follows since $\gamma\in\Lambda_1$,  $\gamma_j\in\Lambda_2$ and
$d(\gamma,\gamma_j)\geqslant d(\Lambda_1,\Lambda_2)$ by definition. The remaining factors have
unit norm. The triangle inequality over the $k$ terms gives \eqref{eq:single-bound}.
\end{proof}

\smallskip

\begin{proof}[Proof of Lemma \ref{lem:two-mono-II}]
Assume first $k_2$ even, so $d_B=0$ and $[A,B]_{\rm gr}=[A,B]$. The case $k_1$ even is identical, exchanging
$A\leftrightarrow B$ via $[A,B]=-[B,A]$.  Composing the ordinary Leibniz
rule for $[\,\cdot\,,B]$ over the $k_A$ factors of $A$ with \eqref{eq:single-move} applied to each resulting
$[b_i,B]$ gives
\[
[A,B]\;=\;\sum_{j=1}^{k_A}\sum_{i=1}^{k_B}(-1)^{i-1}\;\{b_j,c_i\}\;b_1\cdots b_{j-1}c_1\cdots c_{i-1}c_{i+1}\cdots c_{k_B}b_{j+1}\cdots b_{k_A}
\;,
\]
where we used the fact that each $\{b_j,c_i\}$ is a scalar multiple of $\unit$. 
By the triangle inequality,
\[
\bigl\|[A,B]\bigr\|\;\leqslant\;\sum_{j=1}^{k_A}\sum_{i=1}^{k_B}\bigl\|\{b_i,c_j\}\bigr\|
\;=\;\sum_{j=1}^{k_A}\sum_{i=1}^{k_B}\epsilon_{j,i}\bigl|\ip{\chi_{\delta_j}}{\chi_{\gamma_i}}\bigr|\;
\]
where $\epsilon_{j,i}\in{0,1}$ according to 
$\{b_j,c_i\}=\ip{\chi_{\delta_i}}{\chi_{\gamma_j}}\unit$ or $0$ by \eqref{eq:deformed}, according to
whether $b_j,c_i$ have opposite or equal type. By Proposition~\ref{prop:magloc} at rate $2\lambda$, splitting
$2\lambda\,d(\delta_j,\gamma_i)=\lambda\,d(\delta_j,\gamma_i)+\lambda\,d(\delta_j,\gamma_i)$ and using
$d(\delta_j,\gamma_i)\geqslant d(\Lambda_1,\Lambda_2)$ on the first copy only, on obtains
\[
\bigl|\ip{\chi_{\delta_i}}{\chi_{\gamma_j}}\bigr|\;\leqslant\;G_{2\lambda}\,\expo{-\lambda\,d(\Lambda_1,\Lambda_2)}\,\expo{-\lambda\,d(\delta_i,\gamma_j)}\;.
\]
Fix $j$ and group the sum over $i$ by the \emph{value} of $\gamma_i$.
Define \[\mu_B(\gamma)\;:=\;|\{j\leqslant
k_2|\gamma_i=\gamma\}|\] for $\gamma\in\Lambda_2$, so that $\sum_{\gamma\in\Lambda_B}\mu_B(\gamma)=k_B$ and
$\max_{\gamma\in\Gamma}\mu_B(\gamma)=\mu_B$ by definition. Therefore,
\[
\sum_{i=1}^{k_B}\expo{-\lambda\,d(\delta_j,\gamma_i)}\;=\;\sum_{\gamma\in\Lambda_2}\mu_B(\gamma)\,
\expo{-\lambda\,d(\delta_j,\gamma)}\;\leqslant\;\mu_B\sum_{\gamma\in\Gamma}\expo{-\lambda\,d(\delta_j,\gamma)}
\;\leqslant\;\mu_B\,m_\lambda
\]
by \eqref{eq:unif-summ-cos}, independently of $j$. From this one gets that 
\[
\begin{aligned}
\sum_{j=1}^{k_A}\sum_{i=1}^{k_B}\epsilon_{j,i}\bigl|\ip{\chi_{\delta_j}}{\chi_{\gamma_i}}\bigr|\;&\leqslant\;
G_{2\lambda}\,\expo{-\lambda\,d(\Lambda_1,\Lambda_2)}\,\sum_{j=1}^{k_A}\left(\sum_{i=1}^{k_B}\expo{-\lambda\,d(\delta_j,\gamma_i)}\right)\\
&\leqslant\;G_{2\lambda}\,m_\lambda\,k_A\,\mu_B\,\expo{-\lambda\,d(\Lambda_1,\Lambda_2)}\;.
\end{aligned}
\]
 Symmetrically, fixing $j$ and grouping the sum over $i$ by the value of $\delta_i$ gives the same type of inequality with 
 $k_B\mu_Am_\lambda$. Taking the smaller of the two
bounds, one arrive to 
\[
\bigl\|[A,B]\bigr\|\;\leqslant\;G_{2\lambda}\,m_\lambda\,\min\{\mu_Bk_A,\,\mu_Ak_B\}\,\expo{-\lambda\,d(\Lambda_1,\Lambda_2)}\;,
\]
which is the \eqref{eq:two-mono-II} in the case when at leas one monomial is even.
The case $k_A,k_B$ both odd is identical with $\{A,B\}=[A,B]_{\rm gr}$ in
place of $[A,B]$, using the graded Leibniz rule in place of the ordinary one.
\end{proof}

\begin{corollary}\label{cor:two-mono-II-bounded}
If $A,B$ are such that each site of $\Lambda_1$ (resp.\ $\Lambda_2$) carries at most $2$ raw factors of $A$
(resp.\ $B$) 
then
\[
\bigl\|[A,B]_{\rm gr}\bigr\|\;\leqslant\;2\,m_\lambda\,G_{2\lambda}\,\min\{k_1,k_2\}\,\expo{-\lambda\,d(\Lambda_1,\Lambda_2)}\;.
\]
\end{corollary}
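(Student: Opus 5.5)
This is a direct specialization of Lemma~\ref{lem:two-mono-II}, so the plan is to read off the multiplicity parameters and substitute them into \eqref{eq:two-mono-II}. Here $k_1\equiv k_A$ and $k_2\equiv k_B$ denote the degrees of the two monomials.

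The hypothesis says that every site of $\Lambda_1$ carries at most two raw factors of $A$, and every site of $\Lambda_2$ carries at most two raw factors of $B$. By the definition of $\mu_A$ and $\mu_B$, this is exactly the statement $\mu_A\leqslant 2$ and $\mu_B\leqslant 2$.

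Inserting these bounds into \eqref{eq:two-mono-II} gives
\[
\min\{\mu_Bk_A,\,\mu_Ak_B\}\;\leqslant\;\min\{2k_A,\,2k_B\}\;=\;2\min\{k_1,k_2\}\,.
\]
The constants $m_\lambda$ and $G_{2\lambda}$ and the exponential factor $\expo{-\lambda\,d(\Lambda_1,\Lambda_2)}$ pass through unchanged. Since this holds for every $\lambda>0$, it yields the claimed inequality.

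The only point that needs a remark is why the constant $2$ is natural rather than arbitrary. Because $a_\gamma^2=0=(a^*_\gamma)^2$ by \eqref{eq:deformed}, any monomial that does not vanish can be reduced, using the relations at a single site, to one carrying at most $a^*_\gamma a_\gamma$ or $a_\gamma a^*_\gamma$ at each site. The hypothesis of the corollary therefore covers the reduced monomials that actually appear, so the bound is not vacuous.

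This reduction is not needed for the proof itself, since the statement assumes the multiplicity bound directly. The argument has no genuine obstacle; the only care required is matching the notation $k_1,k_2$ with $k_A,k_B$ as used in Lemma~\ref{lem:two-mono-II}.
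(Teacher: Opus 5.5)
Your proposal is correct and matches the paper's proof exactly: the hypothesis gives $\mu_A,\mu_B\leqslant 2$, and substituting this into \eqref{eq:two-mono-II} yields the bound. Your closing remark about reducing arbitrary monomials to at most two factors per site is loosely stated, since moving a field past one located at a different site produces extra scalar terms via \eqref{eq:deformed} and so yields a sum of monomials rather than a single one; as you note yourself, however, it plays no role in the argument.
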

\begin{proof}
In this case $\mu_A,\mu_B\leqslant 2$. Therefore, the claim is a direct consequence of \eqref {eq:two-mono-II}.  
\end{proof}

\smallskip

\begin{proof}[Proof of Proposition \ref{lem:almost-comm-gr-II}]
In view of the conditions $a_\gamma^2=(a_\gamma^*)^2=0$ and $\{a_\gamma,a_\gamma^*\}=\unit$ the local algebra $\rr A_{\{\gamma\}}$ at the single site $\gamma$ is the linear span of 
 the four independent  operators $E^0_\gamma:=\unit$,
$E^1_\gamma:=a_\gamma$, $E^2_\gamma:=a_\gamma^*$, $E^3_\gamma:=a_\gamma^*a_\gamma$.
Consequently, for $\Lambda\in\s P_f(\Gamma)$ and a fixed enumeration of $\Lambda$, the $4^{|\Lambda|}$ operators
\[
M_\mu\;:=\;\prod_{\gamma\in\Lambda}E^{\mu_\gamma}_\gamma\;,\qquad \mu\in\s{I}_\Lambda\;:=\;\{0,1,2,3\}^\Lambda\;,
\]
(ordered product over the fixed enumeration, empty factors at sites with $\mu_\gamma=0$ dropped) span $\rr
A_\Lambda$. Each $M_\mu$ is a monomial of degree $k(\mu):=\sum_\gamma\deg(\mu_\gamma)\leqslant2|\Lambda|$, where
$\deg(0)=0$, $\deg(1)=\deg(2)=1$, $\deg(3)=2$, with at most $2$ raw factors per site by construction (one block
per site, never interleaved with another site's factors). Then,  Corollary~\ref{cor:two-mono-II-bounded} applies to
any pair $M_\mu,N_\nu$.
Equip $\rr A_\Lambda$ with the trace pairing
$\langle X,Y\rangle:=\Tr(X^*Y)$, the canonical trace on the finite-dimensional algebra $\rr A_\Lambda\simeq {\rm Mat}_{2^{|\Lambda|}}(\C)$. Let $G_{\gamma}$ be the Gram matrix of the basis $\{E_\gamma^j\}_{j=0,\ldots,3}$ whose entries are $[G_{\gamma}]_{i,j}:=\Tr((E_\gamma^i)^*E_\gamma^j)$. Explicitelly
\[
G_{\gamma}\;=\;\begin{pmatrix}2&0&0&1\\0&1&0&0\\0&0&1&0\\1&0&0&1\end{pmatrix}\;,
\]
 Since the trace factorizes over sites, one has that
\[
\Tr(M_\mu^*M_\nu)\;=\;\prod_{\gamma\in\Lambda}\bigl[G_{\gamma}\bigr]_{\mu_\gamma,\nu_\gamma}\;.
\]
The
single-site dual basis is defined as $F^i_\gamma:=\sum_{j=0}^3[G_{\gamma}^{-1}]_{i,j}E^j_\gamma$ and  is given explicitly by $F^0_\gamma=\unit-a^*_\gamma a_\gamma$, $F^1_\gamma=a_\gamma$, $F^2_\gamma=a^*_\gamma$ and $F^3_\gamma=-\unit +2 a^*_\gamma a_\gamma$.
Let $\|X\|_{\rm tr}:=\Tr(\sqrt{X^*X})$ be the trace norm.
A direct computation shows that 
$\|F^0_\gamma\|_{\rm tr}=\|F^1_\gamma\|_{\rm tr}=\|F^2_\gamma\|_{\rm tr}=1$ and $\|F^3_\gamma\|_{\rm tr}=2$, so
\[
\sum_{i=0}^{3}\|F^i_\gamma\|_{\rm tr}\;=\;5\;.
\]
The dual basis of $\{M_\mu\}_{\mu\in \s{I}_\Lambda}$ is given by elements $N_\mu:=\prod_{\gamma\in\Lambda}F^{\mu_\gamma}_\gamma$, with orthonormality relation
$\Tr(N_\mu^*M_\nu)=\delta_{\mu,\nu}$ induced by the trace factorization.
Write $A=\sum_{\mu\in \s{I}_\Lambda}\alpha_\mu M_\mu$, so $\alpha_\mu=\Tr(N_\mu^*A)$. By the standard trace–-operator norm inequality
$|\Tr(XY)|\leqslant\|X\|_{\rm tr}\|Y\|$
one immediately obtains
$|\alpha_\mu|\leqslant\|N_\mu\|_{\rm tr} \norm A$. Since the trace norm is multiplicative over the ordered
product defining $F_\mu$,
\[
\sum_{\mu\in\s{I}_\Lambda}\|N_\mu\|_{\rm tr}\;=\;\sum_{\mu\in\{0,1,2,3\}^\Lambda}\prod_{\gamma\in\Lambda}\|F^{\mu_\gamma}_\gamma\|_{\rm tr}
\;=\;\prod_{\gamma\in\Lambda}\left(\sum_{i=0}^3\|F^i_\gamma\|_{\rm tr}\right)\;=\;5^{|\Lambda|}\;.
\]
Hence, $\sum_{\mu\in\s{I}_\Lambda}|\alpha_\mu|\leqslant 5^{|\Lambda|}\norm A$.
 By bilinearity,
\[
[A,B]_{\rm gr}\:=\;\sum_{\mu\in\s{I}_{\Lambda_1}}\sum_{\nu\in \s{I}_{\Lambda_2}}\alpha_\mu\beta_\nu[M_\mu,M_\nu]_{\rm gr}\;.
\]
By
Corollary~\ref{cor:two-mono-II-bounded} together with $k_\mu\leqslant2|\Lambda_1|$, $k_\nu\leqslant2|\Lambda_2|$
for the degrees of $M_\mu$ and $N_\nu$ respectively,
and $\min\{k_\mu,k_\nu\}\leqslant\sqrt{k_\mu k_\nu}\leqslant2\sqrt{|\Lambda_1||\Lambda_2|}$, one gets
\[
\bigl\|[M_\mu,M_\nu]_{\rm gr}\bigr\|\;\leqslant\;4\,m_\lambda\,G_{2\lambda}\,\sqrt{|\Lambda_1||\Lambda_2|}\,
\expo{-\lambda\,d(\Lambda_1,\Lambda_2)}\;.
\]
By the triangle inequality and Step~3,
\[
\begin{aligned}
\bigl\|[A,B]_{\rm gr}\bigr\|\;&\leqslant\;\left(\sum_{\mu\in \s{I}_{\Lambda_1}}|\alpha_\mu|\right)\left(\sum_{\nu\in \s{I}_{\Lambda_2}}|\beta_\nu|\right)\,
4\,m_\lambda\,G_{2\lambda}\,\sqrt{|\Lambda_1||\Lambda_2|}\,\expo{-\lambda\,d(\Lambda_1,\Lambda_2)}
\\
&\leqslant\;4\,m_\lambda\,G_{2\lambda}\,\sqrt{|\Lambda_1||\Lambda_2|}\,5^{|\Lambda_1|+|\Lambda_2|}\,\norm
A\,\norm B\,\expo{-\lambda\,d(\Lambda_1,\Lambda_2)}\;,
\end{aligned}
\]
which is \eqref{eq:almost-comm-gr-II}.
\end{proof}

\begin{proof}[Proof of Proposition~\ref{lem:almost-comm-gr-delta}]
Let
$\pi_{j}$ be the orthogonal projection of $\HH$ onto $\HH_{\Lambda_j}$, with $j=1,2$, and set $\delta=\bigl\|\pi_{2}\!\restriction_{\HH_{\Lambda_1}}\bigr\|=\|\pi_{2}\pi_{1}\|$.
The restricted operator $T:=\pi_2\!\restriction_{\HH_{\Lambda_1}}:\HH_{\Lambda_1}\to
\HH_{\Lambda_2}$ acts between finite--dimensional spaces. Its singular value decomposition
provides an orthonormal basis $\{\phi_j\}_{j=1}^{N}$ of $\HH_{\Lambda_1}$, with
$N=\dim\HH_{\Lambda_1}=|\Lambda_1|$, that diagonalizes $T$. More precisely, the $\phi_j$ are normalized
eigenvectors of the self--adjoint operator $T^{*}T=\pi_1\pi_2\pi_1\!\restriction_{\HH_{\Lambda_1}}$,
with eigenvalues $0\leqslant\delta_1^{2}\leqslant\cdots\leqslant\delta_N^{2}\leqslant
\delta^{2}$, so that $T^{*}T\phi_j=\delta_j^{2}\phi_j$. In particular
\[
\delta_j^{2}\;=\;\ip{\phi_j}{T^{*}T\phi_j}\;=\;\ip{T\phi_j}{T\phi_j}
\;=\;\norm{\pi_2\phi_j}^{2}\;.
\]
If $\delta_j>0$, set $\varrho_j:=\delta_j^{-1}\pi_2\phi_j\in\HH_{\Lambda_2}$. Then
$T\phi_j=\pi_2\phi_j=\delta_j\varrho_j$ and
$\ip{\varrho_i}{\varrho_j}=(\delta_i\delta_j)^{-1}\ip{\phi_i}{T\phi_j}=\delta_{i,j}$. If
$\delta_j=0$, then $\pi_2\phi_j=0$ and $\varrho_j$ is chosen by the Gram-Schmidt algorithm from the kernel
of $\pi_2$. In either case, one obtains an orthonormal family
$\{\varrho_j\}_{j=1}^{N}\subset\HH_{\Lambda_2}$ with $\pi_2\phi_j=\delta_j\varrho_j$. By the
Pythagorean theorem $\norm{(\unit-\pi_2)\phi_j}^{2}=1-\delta_j^{2}$. Set
\[
\psi_j\;:=\;\frac{1}{\sqrt{1-\delta_j^{2}}}\,(\unit-\pi_2)\phi_j\;\in\;\HH_{\Lambda_2}^{\perp}
\]
and, when $\delta_j=1$, choose $\psi_j$ by Gram--Schmidt in $\HH_{\Lambda_2}^{\perp}$.
Then $\{\psi_j\}_{j=1}^{N}\subset\HH_{\Lambda_2}^{\perp}$ is orthonormal and
\begin{equation}\label{eq:cs}
\phi_j\;=\;\left(\sqrt{1-\delta_j^{2}}\right)\,\psi_j+\delta_j\,\varrho_j\;,\qquad
0\leqslant\delta_j\leqslant\delta\;.
\end{equation}
Since the $\psi_j\in\HH_{\Lambda_2}^{\perp}$ and  $\varrho_k\in\HH_{\Lambda_2}$, the whole family $\{\psi_j,\varrho_j\}_{j=1}^{N}$ is orthonormal.
The \eqref{eq:cs} is the \emph{cosine-sine}  decomposition of $\phi_j$.
For $t\in[0,1]$ define the rotated modes
\begin{equation}\label{eq:rot-modes}
\phi_j(t)\;:=\;\left(\sqrt{1-(t\delta_j)^{2}}\right)\,\psi_j+(t\delta_j)\,\varrho_j
\;=\;\cos[\theta_j(t)]\,\psi_j+\sin[\theta_j(t)]\,\varrho_j\;,
\end{equation}
with $\theta_j(t):=\arcsin(t\delta_j)$. These are orthonormal for every $t$, being unit
vectors lying in the mutually orthogonal planes $\HH_{(j)}:=
\operatorname{span}\{\psi_j,\varrho_j\}$. 
The linear map $R_t:\HH\to\HH$ acts as the rotation by $\theta_j(t)$ in each plane
$\HH_{(j)}$ according to the \eqref{eq:rot-modes}, and as the identity on
$\bigl(\bigoplus_j\HH_{(j)}\bigr)^{\perp}$. Therefore, it  is unitary  being a direct sum of unitaries on
mutually orthogonal subspaces. 
The associated Bogoliubov $\ast$--automorphism $\beta_t\in {\rm Aut}(\rr{A})$ is defined by the relations  
 $\beta_t(a^{\sharp}(\phi)):=
a^{\sharp}(R_t\phi)$, and in  particular by $\beta_t(a^{\sharp}(\phi_j)):=a^{\sharp}(\phi_j(t))$.
Writing $A=A(\phi_1,\dots,\phi_N)\in\rr{A}_{\Lambda_1}$ as a polynomial in the
$a^{\sharp}(\phi_j)$, set
\[
A(t)\;:=\;\beta_t(A)\;=\;A\bigl(\phi_1(t),\dots,\phi_N(t)\bigr)\;.
\]
Then $A(1)=A$ and, $\beta_t$ being a $\ast$--automorphism, hence norm--preserving, provides
\[\norm{A(t)}\;=\;\norm{A}\qquad\forall\;t\in[0,1]\;.
\]
Since $\beta_t$ maps each generator to a single generator  it \emph{preserves parity}. Hence, $A(0)=\beta_0(A)\in\rr A_{\Lambda_2}^{\perp}$ is
homogeneous with the same parity $d_A$ of $A$.
The orthogonal splitting $\HH=\HH_{\Lambda_2}\oplus\HH_{\Lambda_2}^{
\perp}$ gives the graded factorization $\rr A\simeq\rr A_{\Lambda_2}\,\hat\otimes\,\rr A_{\Lambda_2
}^{\perp}$, in which homogeneous elements of the two factors graded--commute, \ie 
\[
[A(0),B]_{\rm gr}\;=\;A(0)B-(-1)^{d_A d_B}BA(0)\;=\;0\,.
\]
 Therefore, since $[\,\cdot\,,B]_{\rm gr}$ is linear in its first argument,
\[
[A,B]_{\rm gr}\;=\;[A(1),B]_{\rm gr}\;=\;\bigl[A(1)-A(0),\,B\bigr]_{\rm gr}\,,
\]
and consequently 
\begin{equation}\label{eq:ine_rot_1}
\bigl\|[A,B]_{\rm gr}\bigr\|\;\leqslant\;2\,\norm{B}\,\norm{A(1)-A(0)}\,.
\end{equation}
Differentiating \eqref{eq:rot-modes} gives
\[
\dot\phi_j(t)\;=\;\dot\theta_j(t)\,\bigl(-\sin[\theta_j(t)]\,\psi_j+\cos[\theta_j(t)]\,
\varrho_j\bigr)\;,
\] so the flow is generated by the one--body operator
\[
\kappa(t)\;:=\;\sum_{j=1}^{N}\dot\theta_j(t)\,\sigma_j\;\in\bb{B}(\HH)\;,
\]
where $\sigma_j$ is the anti--self--adjoint generator of the rotation in the plane
$\HH_{(j)}$. 
To prove this consider
in the ordered orthonormal basis $\{\psi_j,\varrho_j\}$ of the plane $\HH_{(j)}$,  the standard
anti--self--adjoint rotation generator  
\[
\sigma_j\;=\;\ketbra{\varrho_j}{\psi_j}-\ketbra{\psi_j}{\varrho_j}
\;\;\equiv\;\;\begin{pmatrix}0&-1\\[2pt]1&\phantom{-}0\end{pmatrix}\;,
\]
acting as the identity on $\HH_{(j)}^{\perp}$. 
The matrix representation is given by the identification $\psi_j\equiv(1,0)$ and $\varrho_j\equiv(0,1)$.
Then
\[
\expo{\theta_j(t)\,\sigma_j}\;\equiv\;
\begin{pmatrix}\cos[\theta_j(t)]&-\sin[\theta_j(t)]\\[2pt]
\sin[\theta_j(t)]&\phantom{-}\cos[\theta_j(t)]\end{pmatrix}
\]
and one can check that it
rotates $\psi_j\mapsto\cos[\theta_j(t)]\,\psi_j+\sin[\theta_j(t)]\,\varrho_j=\phi_j(t)$ and
$\varrho_j\mapsto-\sin[\theta_j(t)]\,\psi_j+\cos[\theta_j(t)]\,\varrho_j$, in accordance with
\eqref{eq:rot-modes}.
Since the generators $\sigma_j$ act on the mutually orthogonal planes $\HH_{(j)}$, they
commute, $[\sigma_i,\sigma_j]=0$. Being moreover independent of $t$, the family
$\{\kappa(t)\}_{t\in[0,1]}$ is abelian, $[\kappa(s),\kappa(s')]=0$, and no time ordering is needed.
Therefore 
\[
R_t\;=\;\bigoplus_{j=1}^{N}\expo{\theta_j(t)\,\sigma_j}\;\oplus\;\unit
\;=\;\expo{\sum_{j=1}^{N}\theta_j(t)\,\sigma_j}\;=\;\expo{\int_0^t\dd s\;\kappa(s)}
\;
\]
gives  $\dot R_t=\kappa(t)\,R_t$ with $R_0=\unit$,
justifying that  $\kappa(t)$ is the generator of $R_t$.
    It is worth noting that, since the generator $\kappa(t)$ is bounded on $\HH$, the flow $t\mapsto R_t$ is norm--continuous (indeed
real--analytic).
The generator of $\beta_t$ on the {\rm CAR} algebra $\rr{A}$ is the Liouvillian $\n{L}_\beta(t)$, which enters the differential equation
$\dot\beta_t=\ii\,\beta_t\circ\n{L}_\beta(t)$ and acts on the generators  through
\[
\n{L}_\beta(t)\left[a^{\sharp}(\phi)\right]\;=\;-\,\ii\,a^{\sharp}\bigl(\kappa(t)\,\phi\bigr)
\;=\;-\,\ii\sum_{j=1}^{N}\dot\theta_j(t)\,a^{\sharp}\bigl(\sigma_j\phi\bigr)\;.
\]
Since $\sigma_j\psi_j=\varrho_j$ and $\sigma_j\varrho_j=-\psi_j$, and $\sigma_j$ vanishes on
$\HH_{(j)}^{\perp}$, this yields
\[
\n{L}_\beta(t)\left[a^{\sharp}(\psi_j)\right]\;=\;-\,\ii\,\dot\theta_j(t)\,a^{\sharp}(\varrho_j)\;,
\qquad
\n{L}_\beta(t)\left[a^{\sharp}(\varrho_j)\right]\;=\;+\,\ii\,\dot\theta_j(t)\,a^{\sharp}(\psi_j)\;,
\]
and vanishes on $a^{\sharp}(\phi)$ for $\phi\in\bigl(\bigoplus_j\HH_{(j)}\bigr)^{\perp}$.
Introducing the elements $\Sigma_j:=a^{\dag}(\varrho_j)a(\psi_j)-a^{\dag}(\psi_j)a(\varrho_j)
\in\rr{A}$, one checks that
\[
\n{L}_\beta(t)\left[A\right]\;=\;-\,\ii\sum_{j=1}^{N}\dot\theta_j(t)\,\bigl[\Sigma_j,A\bigr]\;=\;[\Upsilon(t),A],\qquad \forall\; A\in \rr{A}\;.
\]
with $\Upsilon(t):=-\ii\sum_{j=1}^{N}\dot\theta_j(t)\Sigma_j$.
The identity above is verified first on the generators $a^{\sharp}(\phi)$. Both
$\n{L}_\beta(t)$ and $\ad_{\Upsilon(t)}=[\Upsilon(t),\,\cdot\,]$ are derivations of
$\rr{A}$,  the former as the generator of the $\ast$--automorphism $\beta_t$ and the latter
by the commutator structure.
So they agree on the $\ast$--subalgebra of polynomials in the
generators by linearity and the Leibniz rule. Since $\Upsilon(t)\in\rr{A}$, the inner derivation $\ad_{\Upsilon(t)}$ is
bounded, $\norm{\ad_{\Upsilon(t)}}\leqslant2\norm{\Upsilon(t)}$, hence norm--continuous.
As the polynomials are norm--dense in $\rr{A}$, the identity $\n{L}_\beta(t)=
\ad_{\Upsilon(t)}$ extends to all of $\rr{A}$.
Since $\psi_j$ and $\varrho_j$ are unit vectors, one has $\norm{a(\psi_j)}=\norm{a^{\dag}
(\varrho_j)}=1$, so each factor $a^{\dag}(\varrho_j)a(\psi_j)$ and $a^{\dag}(\psi_j)
a(\varrho_j)$ has norm at most $1$ by submultiplicativity.
So by the triangle inequality
$\norm{\Sigma_j}\leqslant2$. Being $\dot\theta_j(t)\geqslant0$, another application of the
triangle inequality gives
\[
\|\Upsilon(t)\|
\;\leqslant\;2\sum_{j=1}^{N}\dot\theta_j(t)\;=\;2\sum_{j=1}^{N}\frac{\delta_j}{\sqrt{1-(t\delta_j)^{2}}}\;.
\]
Consequently, 
\[
\begin{aligned}
\bigl\|A(1)-A(0)\bigr\|\;&\leqslant\;\int_0^1\dd t\,\bigl\|\dot A(t)\bigr\|\;=\;\int_0^1\dd t\,\left\|\n{L}_\beta(t)\left[A\right]\right\|\\
&\leqslant\;2\norm{A}\int_0^1\dd t\,\left\|\Upsilon(t)\right\|
\;\leqslant\;4\norm{A}\sum_{j=1}^{N}\arcsin(\delta_j)\;,
\end{aligned}
\]
where we used $\int_0^1\dd t\,\delta_j(1-t^{2}\delta_j^{2})^{-\sfrac{1}{2}}=\arcsin(\delta_j)$.
Since $\arcsin(x)\leqslant(\sfrac{\pi}{2})x$ on $[0,1]$ and $\delta_j\leqslant\delta$,
\[\bigl\|A(1)-A(0)\bigr\|\;\leqslant\;2\pi\sum_{j=1}^{N}\delta_j\,\norm{A}
\;\leqslant\;2\pi\,|\Lambda_1|\,\delta\,\norm{A}\;.
\]
Combining with \eqref{eq:ine_rot_1} one gets
\[
\norm{[A,B]_{\rm gr}}\;\leqslant\;4\pi\,|\Lambda_1|\,\norm{A}\,\norm{B}\,\delta\;.
\]
By the symmetry $A\leftrightarrow
B$, the previous bound can be improved by replacing 
$|\Lambda_1|$ with $\min\{|\Lambda_1|,|\Lambda_2|\}$. 
Finally, by observing that $\min\{|\Lambda_1|,|\Lambda_2|\}\leqslant (|\Lambda_1|\,|\Lambda_2|)^{\sfrac{1}{2}}$
 one obtains the claim.
\end{proof}

\subsection{Proofs and auxiliary material for Section \ref{Sect:frech_struct}}\label{ap:proof_S3}

\begin{proof}[Proof of Proposition \ref{prop:cond-exp}]
Let $\sigma^{\perp}\in
\Aut(\rr{A}_\Lambda^{\perp})$ be the parity automorphism of the second factor,
$\sigma^{\perp}(a(\psi))=-a(\psi)$ for $\psi\in\HH_\Lambda^{\perp}$, extended by the
identity on $\rr{A}_\Lambda$. Following \cite[eq.~(4.19)]{ArakiMoriya2003}, set
\[
\s{E}^{(1)}_\Lambda\;:=\;\frac12\bigl(\mathrm{id}+\sigma^{\perp}\bigr)\;.
\]
It is a norm--one projection onto the fixed--point algebra
$\rr{A}_\Lambda\otimes({\rr{A}_\Lambda^{\perp}})^{+}$, an \emph{ordinary} tensor product,
since projecting out the odd part of the second factor removes exactly the
anti--commuting elements. It preserves $\omega_{\mathrm{tr}}$, since
$\omega_{\mathrm{tr}}\circ\sigma^{\perp}=
\omega_{\mathrm{tr}}$. In fact 
$\omega_{\mathrm{tr}}$ is even, or equivalently it vanishes on odd monomials  (being quasi--free and 
gauge--invariant, see Section \ref{sec:Q-F}).
 On
$\rr{A}_\Lambda\otimes({\rr{A}_\Lambda^{\perp}})^{+}$ the trace $\omega_{\mathrm{tr}}$
factorizes as $\tau_\Lambda\otimes\omega_{\mathrm{tr}}^{\perp}$, with
$\tau_\Lambda$ the unique normalized trace of the matrix factor
$\rr{A}_\Lambda\cong M_{2^{|\Lambda|}}(\C)$ and
$\omega_{\mathrm{tr}}^{\perp}:=\omega_{\mathrm{tr}}\!\restriction_{({\rr{A}_\Lambda^{\perp}})^{+}}$ a faithful trace.
Define the analog of the slice map in
\cite[eq.~(4.20)]{ArakiMoriya2003} as $\s{E}^{(2)}_\Lambda:\rr{A}_\Lambda\otimes({\rr{A}_\Lambda^{\perp}})^{+}\to\rr{A}_\Lambda$  given by
\[
\s{E}^{(2)}_\Lambda\;:=\;\mathrm{id}_{\rr{A}_\Lambda}\otimes\,\omega_{\mathrm{tr}}^{\perp}\;.
\]
This is a unital, completely positive, norm--one projection onto $\rr{A}_\Lambda$, restricting to
the identity there. 
The conditional expectation is the composition
\[\n{E}_\Lambda\;:=\;\s{E}^{(2)}_\Lambda\circ\s{E}^{(1)}_\Lambda\;.
\]
For $B\in\rr{A}_\Lambda\otimes({\rr{A}_\Lambda^{\perp}})^{+}$ one has $\sigma^{\perp}(B)=B$
and, by invariance
\[
\omega_{\mathrm{tr}}(AB)\;=\;\omega_{\mathrm{tr}}(\sigma^\bot(AB))\;=\;\omega_{\mathrm{tr}}(\sigma^{\perp}(A)B)\;.
\]
Hence
$\omega_{\mathrm{tr}}(AB)=\omega_{\mathrm{tr}}(\s{E}^{(1)}_\Lambda(A)B)$ by linearity. Now let $B\in\rr{A}_\Lambda$ 
 in the first
factor. Then $\s{E}^{(1)}_\Lambda(A)B\in \rr{A}_\Lambda\otimes({\rr{A}_\Lambda^{\perp}})^{+}$ and 
$\omega_{\mathrm{tr}}$ splits as a
product state there. Therefore
\[
\omega_{\mathrm{tr}}(\s{E}^{(1)}_\Lambda(A)B)\;=\;\omega_{\mathrm{tr}}\big(\s{E}^{(2)}_\Lambda(\s{E}^{(1)}_\Lambda(A)B)\big)\;=\;\omega_{\mathrm{tr}}\big(\n{E}_\Lambda(A)B\big)\;,
\]
which along with the previous equation yields 
the \eqref{eq:cond-exp-defining}.
If $\n{E}'$ also satisfies \eqref{eq:cond-exp-defining}, then
$\omega_{\mathrm{tr}}((\n{E}_\Lambda(A)-\n{E}'(A))B)=0$ for all $B\in\rr{A}_\Lambda$.
Since $\n{E}_\Lambda(A)-\n{E}'(A)\in\rr{A}_\Lambda$ and
$\omega_{\mathrm{tr}}\!\restriction_{\rr{A}_\Lambda}=\tau_\Lambda$ is faithful, taking
$B=(\n{E}_\Lambda(A)-\n{E}'(A))^{*}$ forces equality. This is the uniqueness of
\cite[Lemma~2.2]{ArakiMoriya2003}, here reduced to faithfulness of the trace on the
finite matrix algebra.
Taking
$A\in\rr{A}_\Lambda$ shows $\n{E}_\Lambda|_{\rr{A}_\Lambda}=\mathrm{id}$. 
As a composition of a twirl and a slice map,
$\n{E}_\Lambda$ is a bimodule map over $\rr{A}_\Lambda$, giving \emph{(i)}. For \emph{(iii)},
the gauge automorphisms $\vartheta_\theta$ preserve $\omega_{\mathrm{tr}}$ (see Remark \ref{rem:inv_tr_0}),
the subalgebra $\rr{A}_\Lambda$ and the grading. Thereofore $\vartheta_\theta\circ\n{E}_\Lambda\circ
\vartheta_{-\theta}$ satisfies \eqref{eq:cond-exp-defining} and, by uniqueness, it  equals
$\n{E}_\Lambda$. Hence $\n{E}_\Lambda$ commutes with $\vartheta_\theta$ and preserves the
gauge--invariant subalgebra.
Finally, for $B\in\rr{A}_{\Lambda_1\cap\Lambda_2}\subseteq
\rr{A}_{\Lambda_2}$, applying \eqref{eq:cond-exp-defining} twice yields
$\omega_{\mathrm{tr}}(AB)=\omega_{\mathrm{tr}}(\n{E}_{\Lambda_2}(A)B)
=\omega_{\mathrm{tr}}(\n{E}_{\Lambda_1}(\n{E}_{\Lambda_2}(A))B)$, so
$\n{E}_{\Lambda_1}\circ\n{E}_{\Lambda_2}$ satisfies the defining relation of
$\n{E}_{\Lambda_1\cap\Lambda_2}$ and, by the uniqueness, equals it. 
\end{proof}

\smallskip

\proof[Proof of Lemma \ref{lemm:eq_npr_p}]
To show that $\rr{A}_p$ does not depend on the choice of base point, it is enough to
prove that for any $A\in\rr{A}$ and any pair $\gamma_0,\gamma_1\in\Gamma$
\begin{equation}\label{eq_bound}
\norm{A}_{p,\gamma_1}\;\leqslant\;3\,(1+R)^{p}\,\norm{A}_{p,\gamma_0}\;,\qquad
R\;:=\;d(\gamma_0,\gamma_1)\;.
\end{equation}
By the triangle inequality for $d$, every ball centered at $\gamma_1$ contains a smaller
ball centered at $\gamma_0$,
\[
\Lambda_{k-R}(\gamma_0)\;\subseteq\;\Lambda_k(\gamma_1)\;,\qquad k\geqslant R\;.
\]
By the tower property (ii) in Proposition~\ref{prop:cond-exp}\,
$\n{E}_{\Lambda_k(\gamma_1)}\circ\n{E}_{\Lambda_{k-R}(\gamma_0)}=
\n{E}_{\Lambda_{k-R}(\gamma_0)}$, hence
\[
A-\n{E}_{\Lambda_k(\gamma_1)}(A)\;=\;\bigl(A-\n{E}_{\Lambda_{k-R}(\gamma_0)}(A)\bigr)
-\n{E}_{\Lambda_k(\gamma_1)}\bigl(A-\n{E}_{\Lambda_{k-R}(\gamma_0)}(A)\bigr)\;.
\]
Since $\n{E}_{\Lambda_k(\gamma_1)}$ is a norm--one projection,
$\norm{\n{E}_{\Lambda_k(\gamma_1)}}\leqslant1$, so
\[
\bigl\|A-\n{E}_{\Lambda_k(\gamma_1)}(A)\bigr\|
\;\leqslant\;2\,\bigl\|A-\n{E}_{\Lambda_{k-R}(\gamma_0)}(A)\bigr\|\;.
\]
Using $(1+k)\leqslant(1+R)\bigl(1+(k-R)\bigr)$ for $k\geqslant R$ and weighting by
$(1+k)^{p}$,
\[
\begin{aligned}
\bigl\|A-\n{E}_{\Lambda_k(\gamma_1)}(A)\bigr\|\,(1+k)^p
&\;\leqslant\;2(1+R)^p\,\bigl\|A-\n{E}_{\Lambda_{k-R}(\gamma_0)}(A)\bigr\|\,
\bigl(1+(k-R)\bigr)^p\\
&\;\leqslant\;2(1+R)^p\,\norm{A}_{p,\gamma_0}\;.
\end{aligned}
\]
For the finitely many $k<R$, the crude bound
$\bigl\|A-\n{E}_{\Lambda_k(\gamma_1)}(A)\bigr\|\,(1+k)^{p}\leqslant2\norm{A}(1+R)^{p}
\leqslant2(1+R)^{p}\norm{A}_{p,\gamma_0}$ holds, using $\norm{A}\leqslant
\norm{A}_{p,\gamma_0}$. Taking the supremum over $k\in\N$ therefore gives
$\sup_k(\cdots)\leqslant2(1+R)^p\norm{A}_{p,\gamma_0}$, and adding
$\norm{A}\leqslant\norm{A}_{p,\gamma_0}\leqslant(1+R)^p\norm{A}_{p,\gamma_0}$ yields
\eqref{eq_bound}. Exchanging $\gamma_0$ and $\gamma_1$ gives the reverse bound, so the two norms are equivalent and
$\rr A_p$ is independent of the base point.
 Fix $\gamma_0=0$ and abbreviate $\norm{\,\cdot\,}_p:=\norm{\,\cdot\,}_{p,0}$. The functional $\norm{\,\cdot\,}_p$ is a genuine norm, since $\norm{A}_p\geqslant\norm{A}$
forces $\norm{A}_p=0$ and in turn $A=0$. Let $\{A_n\}_n$ be $\norm{\,\cdot\,}_p$--Cauchy. As $\norm{\,
\cdot\,}\leqslant\norm{\,\cdot\,}_p$, it is also $\norm{\,\cdot\,}$--Cauchy, so $A_n\to A$ in the $C^{*}
$--algebra $\rr A$ for some $A\in\rr A$. 
Throughout write $\n{E}_k:=\n{E}_{\Lambda_k(0)}$.
Each conditional expectation $\n E_{k}$
is $\norm{\,\cdot\,}$--continuous, so for every $k$, $(\mathrm{id}-\n E_k)(A_n)
\to(\mathrm{id}-\n E_k)(A)$ in norm. Given $\varepsilon>0$, choose $N$ with
$\norm{A_n-A_m}_p<\varepsilon$ for $n,m\geqslant N$. Then, for every $k$,
\[
(1+k)^p\bigl\|(\mathrm{id}-\n E_{k})(A_n-A_m)\bigr\|\;\leqslant\;\varepsilon\;.
\]
Letting $m\to\infty$ and then taking the supremum on $k$  gives 
\[
\sup_k(1+k)^p\|(\mathrm{id}-\n E_{k})(A_n-A)\|\;\leqslant\;\varepsilon
\] 
for $n\geqslant N$. Adding $\norm{A_n-A}\leqslant
\varepsilon$ shows $\norm{A_n-A}_p\to0$. In particular, from $\norm{A}_p\leqslant\norm{A_N}_p+
\varepsilon<\infty$ one has  $A\in\rr A_p$ and $A_n\to A$  in  $\rr A_p$. Hence $(\rr A_p,\norm{\,\cdot\,}
_p)$ is complete.
 The involution is isometric. In fact
the conditional expectation is
$\ast$--preserving, $\n{E}_k(A^{*})=\n{E}_k(A)^{*}$, so
\[
\norm{A^{*}-\n{E}_k(A^{*})}\;=\;\norm{(A-\n{E}_k(A))^{*}}\;=\;
\norm{A-\n{E}_k(A)}\;,
\] whence
$\norm{A^{*}}_p=\norm{A}_p$. For the product, put $A_k:=\n{E}_k(A)$, $B_k:=\n{E}_k(B)$ both in $
\rr{A}_{\Lambda_k}$. Then $A_kB_k\in\rr{A}_{\Lambda_k}$, so $\n{E}_k(A_kB_k)=A_kB_k$ and
\[
AB-\n{E}_k(AB)\;=\;\bigl(AB-A_kB_k\bigr)-\n{E}_k\bigl(AB-A_kB_k\bigr)\;.
\]
Using $\norm{\n{E}_k}\leqslant1$, $\norm{A_k}\leqslant\norm{A}$ and
$AB-A_kB_k=(A-A_k)B+A_k(B-B_k)$ one gets
\[
\norm{AB-\n{E}_k(AB)}\leqslant2\norm{AB-A_kB_k}
\;\leqslant\;2\bigl(\norm{A-A_k}\,\norm{B}+\norm{A}\,\norm{B-B_k}\bigr)\;.
\]
Multiplying by $(1+k)^{p}$ and using
$\norm{A-A_k}(1+k)^{p}\leqslant\norm{A}_p$, $\norm{A}\leqslant\norm{A}_p$ and the same for $B_k$, one gets
\[
\norm{AB-\n{E}_k(AB)}\,(1+k)^{p}\;\leqslant\;4\,\norm{A}_p\norm{B}_p\;.
\]
Taking the supremum over $k$ and adding $\norm{AB}\leqslant\norm{A}\,\norm{B}\leqslant
\norm{A}_p\norm{B}_p$ gives
\[
\norm{AB}_p\;\leqslant\;5\,\norm{A}_p\norm{B}_p\;.
\]
Thus $\norm{\,\cdot\,}_p$ is submultiplicative 
with the pre-factor $C=5$ uniform in $p$. 
Equivalently, the rescaled
norm $|||A|||:=5\norm{A}_p$ satisfies $|||AB|||\leqslant|||{A}|||\,|||{B}|||$, so $(\rr A
_p,|||{\,\cdot\,}|||)$ is a genuine Banach $\ast$--algebra and $\rr A_p$ is a Banach $\ast$--
algebra under the equivalent norm $\norm{\,\cdot\,}_p$. 
\qed

\smallskip

\begin{proof}[Proof of Theorem \ref{lem:loc-in-Ainfty}]
Throughout write $\n{E}_k:=\n{E}_{\Lambda_k}$. One has that $\Lambda_k\subseteq\Lambda_{k+1}$
and $\bigcup_{k}\Lambda_k=\Gamma$, and also
$\n{E}_k\circ\n{E}_n=\n{E}_{\min(k,n)}$
by the tower property in Proposition~\ref{prop:cond-exp}. Each $\norm{\cdot}_p$ is a genuine norm, since
$\norm{A}_p\geqslant\norm{A}$. Moreover, $\norm{A}_p\leqslant\norm{A}_q$ for $p\leqslant q$.
The proof of the isometry of the involution $\norm{A^{*}}_p=\norm{A}_p$, and of the
 submultiplicativity $\norm{AB}_{p}\leqslant5\,\norm{A}_{p}\,\norm{B}_{p}$
is contained in the proof of  Lemma \ref{lemm:eq_npr_p}.
In particular one infers that $\rr{A}_\infty$ is a $\ast$--subalgebra of
$\rr{A}$, the adjoint is continuous and the multiplication is jointly continuous with respect to the family of norms.
  Let
$A\in\rr{A}_{\Lambda_0}$ with $\Lambda_0\in\s{P}_f(\Gamma)$, and set
$k_0:=\max_{\gamma\in\Lambda_0}\norm{\gamma}_1<\infty$. For $k\geqslant k_0$ one has
$\Lambda_0\subseteq\Lambda_k$, hence $\rr{A}_{\Lambda_0}\subseteq\rr{A}_{\Lambda_k}$ and
$\n{E}_k(A)=A$, so $\norm{A-\n{E}_k(A)}=0$. The supremum in the definition of
$\norm{A}_p$ runs then over the finite set $\{k<k_0\}$, on which
$\norm{A-\n{E}_k(A)}\leqslant2\norm{A}$ and 
$(1+k)^{p}\leqslant(1+k_0)^{p}$. Thus
$\norm{A}_p\leqslant\bigl(1+2(1+k_0)^{p}\bigr)\norm{A}<\infty$ for every $p$. Then
$A\in\rr{A}_\infty$ which shows that $\rr{A}_{\mathrm{loc}}\subset\rr{A}_\infty$.
Let us prove the completeness in the Fr\'echet topology.
Let $\{A_n\}_{n\in\N}$ be Cauchy for every
$\norm{\cdot}_p$. Since $\norm{\cdot}\leqslant\norm{\cdot}_0$, it is Cauchy in the
$C^{*}$--norm, so $A_n\to A$ in $\rr{A}$. Fix $p$ and $\varepsilon>0$, and $N$ with
$\norm{A_n-A_m}_p<\varepsilon$ for $n,m\geqslant N$. Then for each fixed $k$ one has
\[\norm{(A_n-A_m)-\n{E}_k(A_n-A_m)}(1+k)^{p}\;<\;\varepsilon\;.
\]
Since $B\mapsto B-\n{E}_k(B)$
is $C^{*}$--norm continuous, letting $m\to\infty$ gives
\[
\norm{(A_n-A)-\n{E}_k(A_n-A)}(1+k)^{p}\;\leqslant\;\varepsilon
\] 
for every $k$. So the
supremum over $k$ is bounded by $\varepsilon$. Together with
$\norm{A_n-A}=\lim_m\norm{A_n-A_m}\leqslant\varepsilon$ this yields
$\norm{A_n-A}_p\leqslant2\varepsilon$ for $n\geqslant N$. Hence $A_n\to A$ in every
$\norm{\cdot}_p$, and $\norm{A}_p\leqslant\norm{A_N}_p+\norm{A-A_N}_p<\infty$ for all
$p$, so $A\in\rr{A}_\infty$. Thus $\rr{A}_\infty$, with the countable increasing family
of norms $\norm{\cdot}_p$, is a complete metrizable locally convex space,
i.e.\ a Fr\'echet space. In view of  the continuity of the algebraic operation,  it is a Fr\'echet $\ast$--algebra.
The last step is the proof of the density of the local structure.
 We show the local
truncations $\n{E}_n(A)\in\rr{A}_{\Lambda_n}\subset\rr{A}_{\mathrm{loc}}$ converge to $A$
in every $\norm{\cdot}_p$. Put $B_n:=A-\n{E}_n(A)$. By the tower property,
$\n{E}_k(B_n)=\n{E}_k(A)-\n{E}_{\min(k,n)}(A)$. 
\begin{itemize}
\item If $k\geqslant n$, then $B_n-\n{E}_k(B_n)=A-\n{E}_k(A)$, and
\[
\norm{B_n-\n{E}_k(B_n)}(1+k)^{p}\;=\;\norm{A-\n{E}_k(A)}(1+k)^{p}
\;\leqslant\;\frac{\norm{A}_{p+1}}{1+k}\;\leqslant\;\frac{\norm{A}_{p+1}}{1+n}\;.
\]
\item If $k<n$, then $\n{E}_k(B_n)=0$, and
\[
\begin{aligned}
\norm{B_n-\n{E}_k(B_n)}(1+k)^{p}\;&=\;\norm{A-\n{E}_n(A)}(1+k)^{p}
\\
&\leqslant\;\norm{A-\n{E}_n(A)}(1+n)^{p}\;\leqslant\;\frac{\norm{A}_{p+1}}{1+n}\;.
\end{aligned}
\]
\end{itemize}
Since 
$\norm{B_n}=\norm{A-\n{E}_n(A)}\leqslant\norm{A}_{p+1}(1+n)^{-(p+1)}\leqslant
\norm{A}_{p+1}(1+n)^{-1}$, one obtains the rate estimate
\[
\norm{A-\n{E}_n(A)}_p\;\leqslant\;\frac{2}{1+n}\,\norm{A}_{p+1}\;
\xrightarrow[n\to\infty]{}\;0\;.
\]
Hence every $A\in\rr{A}_\infty$ is a Fr\'echet limit of local observables, so
$\rr{A}_\infty\subseteq\overline{\rr{A}_{\mathrm{loc}}}^{\;\rm Fr}$ where the right-hand side denotes the Fr\'echet  closure. The reverse inclusion holds
because $\rr{A}_{\mathrm{loc}}\subseteq\rr{A}_\infty$  and $\rr{A}_\infty$ is
complete, so it contains the closure of any of its subsets. Therefore
$\rr{A}_\infty=\overline{\rr{A}_{\mathrm{loc}}}^{\;\rm Fr}$. \end{proof}

\smallskip

\begin{proof}[Proof of Lemma \ref{lemm:aut-fr-lattice}]
Being a $\ast$--automorphism, $\alpha_\bl$ is
isometric, $\norm{\alpha_\bl(X)}=\norm{X}$ for all $X\in\rr A$.
The automorphism $\alpha_\bl$ shifts the spatial part of
the frame index by $\bl$, hence maps generators supported in $\Lambda$ to generators supported in
$\Lambda+\gamma_\bl$, so that
\[
\alpha_\bl\bigl(\rr A_\Lambda\bigr)\;=\;\rr A_{\Lambda+\gamma_\bl}\,,\qquad\forall\,\Lambda\in\s P_f(
\Gamma)\,.
\]
Since $\alpha_\bl$ preserves the tracial state $\omega_{\rm tr}$ (see Remark \ref{rem:inv_tr_0}) and $\n E_\Lambda$ is the unique
$\omega_{\rm tr}$--preserving conditional expectation onto $\rr A_\Lambda$, the family $\{\n E_\Lambda
\}_{\Lambda\in\s
P_f(\Gamma)}$ is covariant. In fact, with a slight modification of the proof of Lemma \ref{lem:dyn-commutes}, one can prove that 
\begin{equation}\label{eq:cov-cond-exp}
\alpha_\bl\circ\n E_\Lambda\;=\;\n E_{\Lambda+\gamma_\bl}\circ\alpha_\bl\,,\qquad\forall\,\Lambda\in\s
P_f(\Gamma)\,
\end{equation}
for every $\Lambda\in\s
P_f(\Gamma)$ and ${\bl}\in\s L$.
The metric \eqref{eq:Xi-22} is invariant under the lattice
translation $\gamma\mapsto\gamma+\gamma_\bl$, namely $d(\gamma+\gamma_\bl,\gamma'+\gamma_\bl)=d(\gamma,
\gamma')$. Therefore
\[
\Lambda_k(\gamma_0)+\gamma_\bl\;=\;\Lambda_k(\gamma_0+\gamma_\bl)\,,\qquad\forall\,k\,\in\N_0\,,\;\;\; \forall\,\gamma_0
\in\Gamma\,.
\]
Apply \eqref{eq:cov-cond-exp} with $\Lambda=\Lambda_k(\gamma_0-\gamma_\bl)$ and using
$\Lambda_k(\gamma_0-\gamma_\bl)+\gamma_\bl=\Lambda_k(\gamma_0)$,
\[
\n E_{\Lambda_k(\gamma_0)}\bigl(\alpha_\bl(A)\bigr)\;=\;\alpha_\bl\Bigl(\n E_{\Lambda_k(\gamma_0-\gamma
_\bl)}(A)\Bigr)\,.
\]
By isometry of $\alpha_\bl$,
\[
\begin{aligned}
\bigl\|\alpha_\bl(A)-\n E_{\Lambda_k(\gamma_0)}\bigl(\alpha_\bl(A)\bigr)\bigr\|\;&=\;\bigl\|\alpha_\bl
\bigl(A-\n E_{\Lambda_k(\gamma_0-\gamma_\bl)}(A)\bigr)\bigr\|\\&=\;\bigl\|A-\n E_{\Lambda_k(\gamma_0-
\gamma_\bl)}(A)\bigr\|\,.
\end{aligned}
\]
Multiplying by $(1+k)^{p}$, taking the supremum over $k\in\N_0$, and adding the term $\norm{\alpha_\bl
(A)}=\norm{A}$, we obtain \eqref{eq:lattice-cov}.
If $A\in\rr A_\infty$ then $\norm{A}_{p,\gamma}<\infty$ for all $p\in\N_0$ and
$\gamma\in\Gamma$. By \eqref{eq:lattice-cov} the same holds for $\alpha_\bl(A)$, so $\alpha_\bl(\rr A_
\infty)\subseteq\rr A_\infty$. Applying the same to $\alpha_\bl^{-1}=\alpha_{-\bl}$ gives the reverse
inclusion, whence $\alpha_\bl(\rr A_\infty)=\rr A_\infty$.
In view of  Theorem \ref{lem:loc-in-Ainfty}
the Fr\'echet topology of $\rr A_\infty$ is  generated by the norms $\{\norm{\,\cdot\,}_{p}\}_{p\in\N_0}$ (anchored at $\gamma=0$).
Identity \eqref{eq:lattice-cov}, and the equivalence of the norms  
anchored at distinct lattice points (Lemma \ref{lemm:eq_npr_p}, in particular  equation \eqref{eq_bound}) imply
\[
\norm{\alpha_\bl(A)}_{p}\;=\;\norm{A}_{p,\gamma_{-\bl}}
\leqslant\;3(1+\lB^{-1}|\bl|_1)^p\norm{A}_{p}\;.
\]
This shows that $\alpha_\bl$ is continuous inside each $\rr{A}_p$, hence continuous on  $\rr{A}_\infty$. The same holds for the inverse  $\alpha_\bl^{-1}=\alpha_{-\bl}$. This shows that 
$\alpha_\bl\in
\mathrm{Aut}(\rr A_\infty)$ in the category of Fr\'echet $\ast$--algebras, for every $\bl\in \s{L}$.
 \end{proof}

\smallskip

\begin{proof}[Proof of lemma \ref{cor:asympt-ab-lattice}]
Let $A\in\rr A_{\Lambda_1}$ and $B\in\rr A_{\Lambda_2}$ with $\Lambda_1,\Lambda_2\in\s{P}_f(\Gamma)$.
Recall from \eqref{eq:lattice-cov-main} that
$\alpha_\bl(\rr A_{\Lambda_1})=\rr A_{\Lambda_1+\gamma_\bl}$ with $\gamma_\bl:=(\bl,0)$, and that
$|\Lambda_1+\gamma_\bl|=|\Lambda_1|$ for every $\bl\in\s L$. Being a $\ast$--automorphism, $\alpha_\bl$ is isometric, $\norm{\alpha_\bl(A)}=\norm A$. Being a
Bogoliubov automorphism, it preserves the parity decomposition of $\rr A$, so $\alpha_\bl(A)\in\rr
A_{\Lambda_1+\gamma_\bl}$ is homogeneous of the same parity as $A$ (see Remark \ref{rk:B-A-parity}). 
Proposition~\ref{lem:almost-comm-gr-II}
applies directly to the pair $\alpha_\bl(A)\in\rr A_{\Lambda_1+\gamma_\bl}$, $B\in\rr A_{\Lambda_2}$, giving
\[
\bigl\|[\alpha_\bl(A),B]_{\rm gr}\bigr\|\;\leqslant\;C_\lambda(\Lambda_1+\gamma_\bl,\Lambda_2)\,
\norm{\alpha_\bl(A)}\,\norm B\,\expo{-\lambda\,d(\Lambda_1+\gamma_\bl,\Lambda_2)}\,.
\]
From $|\Lambda_1+\gamma_\bl|=|\Lambda_1|$ one has that 
 $C_\lambda(\Lambda_1+\gamma_\bl,\Lambda_2)=C_\lambda(\Lambda_1,\Lambda_2)$ for
every $\bl\in\s L$. Combined with $\norm{\alpha_\bl(A)}=\norm A$ this gives 
\begin{equation}\label{eq:dysp_ine}
\bigl\|[\alpha_\bl(A),B]_{\rm gr}\bigr\|\;\leqslant\;C_\lambda(\Lambda_1,\Lambda_2)\,
\norm{A}\,\norm B\,\expo{-\lambda\,d(\Lambda_1+\gamma_\bl,\Lambda_2)}\,.
\end{equation}
For the limit, by \eqref{eq:Xi-22} and the triangle inequality,
\[
d(\Lambda_1+\gamma_\bl,\Lambda_2)\;\geqslant\;\tfrac1{\lB}|\bl|_1-{\rm diam}(\Lambda_1)-d(\gamma_0,\Lambda_2)
\;\xrightarrow[\ |\bl|\to\infty\ ]{}\;\infty\,,
\]
for any fixed reference point $\gamma_0\in\Lambda_1$. Therefore the right-hand side of \eqref{eq:dysp_ine} tends exponentially to $0$ for every arbitrary $\lambda>0$.
\end{proof}

\smallskip

\begin{proof}[Proof of Proposition \ref{cor:asympt-ab-full}]
Let $\sigma=\vartheta_\pi$ be the parity $\ast$--automorphism and
$
P_\pm:=\sfrac12(\mathrm{id}\pm\sigma)
$
the associated contractive projections onto $\rr A^\pm$. Indeed $P_\pm^2=P_\pm$, $P_++P_-=\mathrm{id}$, $P_\pm(X)=X$ if and only $X$ is homogeneous of the corresponding parity, and $\norm{P_\pm}\leqslant1$. 
As recalled in the proof of Proposition~\ref{prop:core-general-0}, the on-site action of the gauge group gives $\vartheta_\theta(\rr A_\Lambda)=\rr A_\Lambda$ for every $\theta\in\R$ and $\Lambda\in\s P_f(\Gamma)$.
 Specializing to $\theta=\pi$ gives $\sigma(\rr A_\Lambda)=\rr A_\Lambda$, hence $P_\pm(\rr A_\Lambda)\subseteq\rr A_\Lambda$, and therefore $P_\pm(\rr A_{\rm loc})\subseteq\rr A_{\rm loc}$. Fix homogeneous $A,B\in\rr A$, of parities $d_A,d_B\in\{0,1\}$, and $\epsilon>0$. By Proposition~\ref{prop:quasilocal-frame}, $\rr A_{\rm loc}$ is dense in $\rr A$, so choose $A'',B''\in\rr A_{\rm loc}$ with $\norm{A-A''}<\epsilon$ and $\norm{B-B''}<\epsilon$. These need not be homogeneous, so project them onto the correct parity,
 \[
 A'\;:=\;P_{\pi_A}(A'')\in\rr A_{\rm loc}\,,\qquad B'\;:=\;P_{\pi_B}(B'')\in\rr A_{\rm loc}\,,
 \]
 with $\pi_A:=(-1)^{d_A}$, and likewise for $B$. 
 Using $P_{\pi_A}(A)=A$ (as $A$ is homogeneous of parity $d_A$) and the contractivity of $P_{\pi_A}$,
 \[
 \norm{A-A'}\;=\;\bigl\|P_{\pi_A}(A-A'')\bigr\|\;\leqslant\;\norm{A-A''}\;<\;\epsilon\,,
 \]
 and likewise $\norm{B-B'}<\epsilon$.
In particular $\norm{A'}\leqslant\norm A+\epsilon$ and $\norm{B'}\leqslant\norm B+\epsilon$. By construction $A'\in\rr A_{\Lambda_1}$ and $B'\in\rr A_{\Lambda_2}$ for some $\Lambda_1,\Lambda_2\in\s P_f(\Gamma)$, homogeneous of parities $d_A,d_B$ respectively. Since $A',A-A'$ share the parity $d_A$, and $B',B-B'$ the parity $d_B$, every graded bracket occurring below is between homogeneous elements of fixed parity. Therefore bilinearity of the graded commutators applies with a consistent sign. Writing $A=A'+(A-A')$, $B=B'+(B-B')$, expanding, and using that $\alpha_\bl$ is isometric (being a $\ast$--automorphism of $\rr A$),
\[\begin{aligned}
\bigl\|[\alpha_\bl(A),B]_{\rm gr}\bigr\|\;&\leqslant\;\bigl\|[\alpha_\bl(A'),B']_{\rm gr}\bigr\|+\bigl\|[\alpha_\bl(A-A'),B']_{\rm gr}\bigr\|\\&\phantom{\leqslant\;\;}+\bigl\|[\alpha_\bl(A'),B-B']_{\rm gr}\bigr\|+\bigl\|[\alpha_\bl(A-A'),B-B']_{\rm gr}\bigr\|\\&\leqslant\;\bigl\|[\alpha_\bl(A'),B']_{\rm gr}\bigr\|+2\norm{A-A'}\,\norm{B'}+2\norm{A'}\,\norm{B-B'}\\&\phantom{\leqslant\;\;}+2\norm{A-A'}\,\norm{B-B'}\\&\leqslant\;\bigl\|[\alpha_\bl(A'),B']_{\rm gr}\bigr\|+2\epsilon\bigl(\norm{A'}+\norm{B'}+\epsilon\bigr)\,.
\end{aligned}
\]
Since $A'\in\rr A_{\Lambda_1}$, $B'\in\rr A_{\Lambda_2}$ are homogeneous \emph{local} elements, Lemma~\ref{cor:asympt-ab-lattice} applies to the first term, so $\|[\alpha_\bl(A'),B']_{\rm gr}\|\to0$ as $\bl\in\s L$, $|\bl|\to\infty$. Hence, using $\norm{A'}\leqslant\norm A+\epsilon$, $\norm{B'}\leqslant\norm B+\epsilon$,\[\limsup_{\substack{\bl\in\s L\\|\bl|\to\infty}}\bigl\|[\alpha_\bl(A),B]_{\rm gr}\bigr\|\;\leqslant\;2\epsilon\bigl(\norm A+\norm B+3\epsilon\bigr)\,.\]As $\epsilon>0$ is arbitrary, the left-hand side vanishes, proving the statement for homogeneous $A,B$. 
\end{proof}

\smallskip

\begin{proof}[Proof of Lemma \ref{lemma_trasl_loc}]
In view of the decomposition \eqref{eq:dec_trasl} and Lemma \ref{lemm:aut-fr-lattice} we can assume that  $|\bz|<\ell_0<+\infty$. Since $\alpha_\bz$ is the Bogoliubov automorphism of the one--particle translation
$t_{\bz}$ one has $\alpha_\bz(a_\gamma)=a(t_{\bz}\chi_\gamma)$. The action of $t_{\bz}$ on the frame is
encoded by the frame matrix
\begin{equation}\label{eq:Tz-matrix}
t_{\bz}\chi_\gamma=\sum_{\gamma'\in\Gamma}[t_{\bz}]_{\gamma',\gamma}\,\chi_{\gamma'}\,,\qquad[t_{\bz}]_{
\gamma',\gamma}:=\ip{S^{-1}\chi_{\gamma'}}{t_{\bz}\chi_\gamma}\,.
\end{equation}
 Since $t_{\bz}$ and $S^{-1}$ commute with $h_B$
\textup{(}Proposition~\ref{pr:comm}\textup{)}, $t_{\bz}\chi_{(\bl,r)}$ and $S^{-1}\chi_{(\bl',r')}$ lie
in the level--$r$ and level--$r'$ eigenspaces, so \eqref{eq:Tz-matrix} is block--diagonal in the
energy levels, \ie $[t_{\bz}]_{\gamma',\gamma}=0$ for $r'\neq r$. Within a level, the magnetic
translation of a coherent state is again a coherent state at the shifted guiding center, $t_{\bz}\chi_{
(\bl,r)}=\expo{\ii\varphi}\chi_{(\bl+\bz,r)}$, the phase dropping out of the modulus. Hence $[t_{\bz}]_{
\gamma',\gamma}$ is estimated exactly by dual--coherent overlap of Lemma~\ref{lem:dual-coh-overlap} evaluated
at guiding center $\bl+\bz$, so that
\[
\bigl|[t_{\bz}]_{\gamma',\gamma}\bigr|\;=\;\bigl|\ip{S^{-1}\chi_{\gamma'}}{\chi_{(\bl+\bz,r)}}\bigr|\;
\leqslant\;\delta_{r,r'}\,D_\ast\,\expo{-\frac{\lambda_\ast}{2}\bigl[\frac{|\bl'-(\bl+\bz)|}{\lB}-2
\sqrt{2r}\bigr]_+}\,.
\]
We transfer this to the frame metric. From $|\bz|<\ell_0$ one has $|\bl'-(\bl+\bz)|\geqslant|\bl'-\bl
|-\ell_0$. At fixed level $r'=r$, $d(\gamma,\gamma')=\lB^{-1}|\bl-\bl'|_1\leqslant\sqrt2\,\lB^{-1}
|\bl-\bl'|$, whence
\[
\begin{aligned}
\frac{|\bl'-(\bl+\bz)|}{\lB}-2\sqrt{2r}\;&\geqslant\;\frac{d(\gamma,\gamma')}{\sqrt2}-\frac{\ell_0}{\lB
}-2\sqrt{2r}\\
&=\;\frac{1}{\sqrt2}\Bigl(d(\gamma,\gamma')-4\sqrt r-\sqrt2\,\tfrac{\ell_0}{\lB}\Bigr)\,.
\end{aligned}
\]
Taking positive parts yields the \emph{level--uniform} exponential banding
\begin{equation}\label{eq:Tz-banded}
\bigl|[t_{\bz}]_{\gamma',\gamma}\bigr|\;\leqslant\;\delta_{r,r'}\,D_\ast\,\expo{-\omega_\ast(d(\gamma,\gamma
')-\kappa_r)_+}\,,
\end{equation}
where 
\[
\omega_\ast\;:=\;\frac{\lambda_\ast}{2\sqrt2}\,,\qquad\kappa_r\::=\;4\sqrt r+\frac{\sqrt2
\,\ell_0}{\lB}\,,
\]
with $D_\ast,\omega_\ast$ \emph{independent} of the level $r$ and of $\bz$, and $\kappa_r=O(\sqrt r)$.
By \eqref{eq:Tz-matrix} and the anti--linearity of
$a(\cdot)$, $\alpha_\bz(a_\gamma)=\sum_{\gamma'\in\Gamma}\overline{[t_{\bz}]_{\gamma',\gamma}}\,a_{
\gamma'}$. Split this off inside and outside $\Lambda_k(\gamma)$. Namely let $\alpha_\bz(a_\gamma)=X_{\rm in}+X_{\rm out}$ with
\[
 X_{\rm in}\;:=\;\!\!\sum_{\gamma'\in\Lambda_k(\gamma)}
\!\!\overline{[t_{\bz}]_{\gamma',\gamma}}\,a_{\gamma'}\,,\qquad X_{\rm out}\;:=\;\!\!\sum_{\gamma'\notin
\Lambda_k(\gamma)}\!\!\overline{[t_{\bz}]_{\gamma',\gamma}}\,a_{\gamma'}\,.
\]
The inside part is a single mode, $X_{\rm in}=a(\psi_{\rm in})$ with $\psi_{\rm in}:=\sum_{\gamma'\in
\Lambda_k(\gamma)}[t_{\bz}]_{\gamma',\gamma}\,\chi_{\gamma'}\in\HH_{\Lambda_k(\gamma)}$. Therefore  $X_{\rm in}\in\rr
A_{\Lambda_k(\gamma)}$, and consequently $\n E_{\Lambda_k(\gamma)}(X_{\rm in})=X_{\rm in}$. This implies
$\alpha_\bz(a_\gamma)-\n E_{\Lambda_k(\gamma)}(\alpha_\bz(a_\gamma))=(\mathrm{id}-\n E_{\Lambda_k(
\gamma)})(X_{\rm out})$, and
\[
\bigl\|\alpha_\bz(a_\gamma)-\n E_{\Lambda_k(\gamma)}(\alpha_\bz(a_\gamma))\bigr\|\;\leqslant\;2\,\norm{
X_{\rm out}}\,.
\]
since $\n E_{\Lambda_k(\gamma)}$ is a norm--one projection.
By the {\rm CAR} isometry $\norm{a(f)}=\norm{f}$ and the boundedness of the synthesis operator $D:
\ell^2(\Gamma)\to\HH$ by the upper frame bound $B$ \cite[Thm.~3.2.3]{Christensen-frames},
\[
\norm{X_{\rm out}}^2=\left\|\sum_{\gamma'\notin\Lambda_k(\gamma)}[t_{\bz}]_{\gamma',\gamma}\,\chi_{
\gamma'}\right\|^2\;\leqslant\;B\sum_{\gamma'\notin\Lambda_k(\gamma)}\bigl|[t_{\bz}]_{\gamma',\gamma}
\bigr|^2\,.
\]
For $k>\kappa_r$ every $\gamma'\notin\Lambda_k(\gamma)$ satisfies $d(\gamma,\gamma')>k>\kappa_r$, so
the positive part in \eqref{eq:Tz-banded} is active and
\[
\sum_{\gamma'\notin\Lambda_k(\gamma)}\bigl|[t_{\bz}]_{\gamma',\gamma}\bigr|^2\;\leqslant\;D_\ast^2\,
\expo{2\omega_\ast\kappa_r}\sum_{\gamma'\notin\Lambda_k(\gamma)}\expo{-2\omega_\ast d(\gamma,\gamma')}\,.
\]
Grouping over the shells $\Sigma_j:=\{\gamma'\,|\,d(\gamma,\gamma')\in(j,j+1]\}$, whose volume is bounded by $|\Sigma
_j|\leqslant\kappa_B(2+j)^3$ by \eqref{eq:vol_boun}, and using $d(\gamma,\gamma')>j$ on $\Sigma_j$,
\[
\sum_{\gamma'\notin\Lambda_k(\gamma)}\expo{-2\omega_\ast d(\gamma,\gamma')}\;\leqslant\;\kappa_B\sum_{j
\geqslant k}(2+j)^3\expo{-2\omega_\ast j}\,.
\]
Writing $j=k+m$ ($m\geqslant0$) and $2+j=2+k+m\leqslant(2+k)(1+m)$,
\[
\sum_{j\geqslant k}(2+j)^3\expo{-2\omega_\ast j}\;\leqslant\;(2+k)^3\expo{-2\omega_\ast k}\underbrace{\sum_{m
\geqslant0}(1+m)^3\expo{-2\omega_\ast m}}_{=:R_\ast}\,,
\]
the series $R_\ast<\infty$ converging because the exponential dominates the polynomial shell count. Since
$(2+k)^3\leqslant8(1+k)^3$, one gets
\[
\sum_{\gamma'\notin\Lambda_k(\gamma)}\bigl|[t_{\bz}]_{\gamma',\gamma}\bigr|^2\;\leqslant\;I_\ast\,(1+k)^3
\,\expo{-2\omega_\ast(k-\kappa_r)}\,,\qquad I_\ast:=8\,D_\ast^2\,\kappa_B\,R_\ast\,,
\]
with $I_\ast>0$ level--independent (the level enters only through the offset $\kappa_r$). Therefore, for $k>\kappa_r$,
\[\bigl\|\alpha_\bz(a_\gamma)-\n E_{\Lambda_k(\gamma)}(\alpha_\bz(a_\gamma))\bigr\|\;\leqslant\;2\sqrt{I
_\ast B}\,(1+k)^{\frac{3}{2}}\,\expo{-\omega_\ast(k-\kappa_r)}\,.
\]
For the complementary range $k\leqslant\kappa_r$ the same bound holds \emph{a fortiori}. Indeed by the
norm--one projection property, $\bigl\|\alpha_\bz(a_\gamma)-\n E_{\Lambda_k(\gamma)}(\alpha_\bz(a_
\gamma))\bigr\|\leqslant2\norm{\alpha_\bz(a_\gamma)}=2\norm{a(t_{\bz}\chi_\gamma)}=2$, while
$\expo{-\omega_\ast(k-\kappa_r)}\geqslant1$ and $(1+k)^{\sfrac{3}{2}}\geqslant1$ there. Therefore, by setting $I_\ast':=\max\{2,
2\sqrt{C_\ast B}\}$  (still level--independent) we conclude that for all $k\in\N_0$
\begin{equation}\label{eq:smooth-all-k}
\bigl\|\alpha_\bz(a_\gamma)-\n E_{\Lambda_k(\gamma)}(\alpha_\bz(a_\gamma))\bigr\|\;\leqslant\;I_\ast'
\,(1+k)^{\frac{3}{2}}\,\expo{-\omega_\ast(k-\kappa_r)}\,.
\end{equation}
Since $\kappa_r=O(\sqrt r)$ is a fixed finite offset (fixed a priori by $\gamma$), the right--hand side decays faster than every
inverse power of $k$. Multiplying \eqref{eq:smooth-all-k} by $(1+k)^p$ and taking $\sup_{k}$ gives
$\norm{\alpha_\bz(a_\gamma)}_{p,\gamma}<\infty$ for every $p\in\N_0$. by the equivalence of the
 norms anchored at different points \textup{(}Lemma~\ref{lemm:eq_npr_p}\textup{)} this holds at
every anchor, so $\alpha_\bz(a_\gamma)\in\rr A_p$ for all $p\in\N_0$, \ie $\alpha_\bz(a_\gamma)\in\rr A_
\infty$. As $\alpha_\bz$ is a $\ast$--automorphism and each $\rr A_p$ is closed under the involution,
$\alpha_\bz(a_\gamma^{*})=\alpha_\bz(a_\gamma)^{*}\in\rr A_\infty$ as well.
Recall that the  $p$-norms are submultiplicative up to a
constant (Theorem \ref{lem:loc-in-Ainfty}). 
Now let $A\in\rr A_{\mathrm{loc}}$. Then $A\in\rr A_\Lambda$ for some finite $\Lambda$, hence $A$ is a
noncommutative polynomial in the generators $\{a_\gamma^\sharp\,|\,\gamma\in\Lambda\}$. As $\alpha_\bz$ is a $\ast$--homomorphism it commutes with sums, products and
the involution, so $\alpha_\bz(A)$ is the \emph{same} polynomial in the transported generators
$\alpha_\bz(a_\gamma^{\#})$, each of which lies in $\rr A_\infty$ by the argument above. Iterating
the submultiplicative inequality, every monomial of degree $n$ satisfies
\[
\bigl\|\alpha_\bz(a_{\gamma_1}^{\#})\cdots\alpha_\bz(a_{\gamma_n}^{\#})\bigr\|_{p}\;\leqslant
\;5^{\,n-1}\prod_{i=1}^{n}\norm{\alpha_\bz(a_{\gamma_i}^{\#})}_{p}\;<\;\infty\,,
\]
each factor having finite $p$--norm, for all $p\in\N_0$. Summing the finitely many monomials preserves finiteness of
every $p$--norm, whence $\alpha_\bz(A)\in\bigcap_p\rr A_p=\rr A_\infty$. Therefore $\alpha_\bz(\rr
A_{\mathrm{loc}})\subseteq\rr A_\infty$, which is the assertion.
\end{proof} 
 
\smallskip

\begin{remark}[Improving the $p$-norm of monomials]\label{rk:mono-improve}
Fix $k\in\N_0$ and set $j:=\lfloor\sfrac k2\rfloor$. For $\gamma\in\Lambda_j(\gamma_0)$ write
\[
b_\gamma^\sharp:=\alpha_{\bz_c}(a_\gamma^\sharp)\,,\qquad\varepsilon_\gamma:=\bigl\|(\mathrm{id}-\n E_{
\Lambda_k(\gamma_0)})\,(b_\gamma^\sharp)\bigr\|\,.
\]
The triangle inequality gives $\Lambda_{k-j}(\gamma)\subseteq\Lambda_k(\gamma_0)$, hence $\rr A_{
\Lambda_{k-j}(\gamma)}\subseteq\rr A_{\Lambda_k(\gamma_0)}$. The conditional expectations form a
compatible tower, $\n E_{\Lambda_{k-j}(\gamma)}=\n E_{\Lambda_{k-j}(\gamma)}\circ\n E_{\Lambda_k(
\gamma_0)}=\n E_{\Lambda_k(\gamma_0)}\circ\n E_{\Lambda_{k-j}(\gamma)}$. Consequently
\[
\mathrm{id}-\n E_{\Lambda_k(\gamma_0)}\;=\;\bigl(\mathrm{id}-\n E_{\Lambda_k(\gamma_0)}\bigr)\bigl(
\mathrm{id}-\n E_{\Lambda_{k-j}(\gamma)}\bigr)\,.
\]
Since each conditional expectation is a norm--one projection, $\norm{\mathrm{id}-\n E_\Lambda}
\leqslant2$. The tail toward the larger region is controlled by the tail toward the smaller
one, and \eqref{eq:smooth-all-k} centred at $\gamma$ with $k$ replaced by $k-j$ gives
\[\varepsilon_\gamma\;\leqslant\;2\,\bigl\|(\mathrm{id}-\n E_{\Lambda_{k-j}(\gamma)})\,(b_\gamma^\sharp)
\bigr\|\;\leqslant\;F\bigl(k-j\bigr)\;,
\]
where
\[
 F(s)\;:=\;2I_\ast'(1+s)^{\sfrac32}\expo{-\omega_\ast(s-\rho_j
)}
\]
and $\rho_j:=4\sqrt{r_0+j}+\sqrt2(\sfrac{\ell_0}{\lB})=O(\sqrt j)$ a uniform bound of the level offset
$\kappa_r$ over $\Lambda_j(\gamma_0)$, being $r_0$ the Landau level associated to $\gamma_0$. Let $L:=a^{\#}_{\gamma_1}\cdots a^{\#}_{\gamma_n}$ be a monomial of degree $n$ with $\gamma_1,\dots,
\gamma_n\in\Lambda_j(\gamma_0)$ \emph{pairwise distinct}, so that $n\leqslant|\Lambda_j(\gamma_0)|$.
Since $\alpha_{\bz_c}$ is a $\ast$--homomorphism,
$\alpha_{\bz_c}(L)=\prod_{i=1}^nb_{\gamma_i}^\sharp$. As each generator has $\norm{a^{\sharp}_{\gamma_i}}=1$, submultiplicativity of the $C^*$--norm gives $\norm L\leqslant1$. Put $g_i:=\n E_{\Lambda_k(\gamma_0)}(b_{\gamma_i}
^\sharp)$. Since $g_i\in\rr A_{\Lambda_k(\gamma_0)}$ and the latter is an algebra, one gets that  $\prod_{i=1}^ng_i\in\rr A_
{\Lambda_k(\gamma_0)}$, whence $\n E_{\Lambda_k(\gamma_0)}(\prod_{i=1}^ng_i)=\prod_{i=1}^ng_i$. Subtracting this from
$\alpha_{\bz_c}(L)$ yields 
\[
\begin{aligned}
(\mathrm{id}-\n E_{\Lambda_k(\gamma_0)})\big(\alpha_{\bz_c}(L)\big)\;=\;(\mathrm{id}-\n E_{\Lambda_k(\gamma_0)})
\left(\prod_{i=1}^nb_{\gamma_i}^\sharp-\prod_{i=1}^ng_i\right)\,.
\end{aligned}
\]
So $\bigl\|(\mathrm{id}-\n E_{\Lambda_k(\gamma_0)})(\alpha_{
\bz_c}(L))\bigr\|\leqslant2\bigl\|\prod_{i=1}^nb_{\gamma_i}^\sharp-\prod_{i=1}^ng_i\bigr\|$. The telescoping
identity 
\[
\prod_{i=1}^{n} b_{\gamma_i}^{\sharp}
-
\prod_{i=1}^{n} g_i
\;=\;
\sum_{i=1}^{n}
\left(
\prod_{l=1}^{i-1} g_l
\right)
\left(
b_{\gamma_i}^{\sharp}-g_i
\right)
\left(
\prod_{l=i+1}^{n} b_{\gamma_l}^{\sharp}
\right)\;,
\] 
together with
$\norm{b_{\gamma_i}^\sharp}=1$ and $\norm{g_i}\leqslant1$, gives
\begin{equation}\label{eq:tele-mono}
\begin{aligned}
\bigl\|(\mathrm{id}-\n E_{\Lambda_k(\gamma_0)})\big(\alpha_{\bz_c}(L)\big)\bigr\|\;&\leqslant\;2\sum_{i=1}^{n}\bigl\|b_{
\gamma_i}^\sharp-g_i\bigr\|\;=\;2\sum_{i=1}^{n}\varepsilon_{\gamma_i}\\
&\leqslant\;2\,|\Lambda_j(\gamma_0
)|\max_{\gamma\in\Lambda_j(\gamma_0)}\varepsilon_\gamma\\\;&\leqslant\;2\kappa_B\,(1+k)^{3}\,F(k-j)\,,
\end{aligned}
\end{equation}
using $n\le|\Lambda_j(\gamma_0)|\leqslant\kappa_B(1+2j)^3\leqslant\kappa_B(1+k)^3$ by
\eqref{eq:vol_boun} and $2j\leqslant k$. This bound is \emph{degree--uniform}: the degree
$n\leqslant|\Lambda_j|$ is already absorbed into the volume.
Since $j=\lfloor\sfrac k2\rfloor$ gives $k-j\geqslant\sfrac k2$ and
$\rho_{\lfloor k/2\rfloor}=O(\sqrt k)=o(k)$, the tail \eqref{eq:tele-mono} decays faster than every
inverse power, so
\[B_p\;:=\;2\kappa_B\,\sup_{k\in\N_0}\,(1+k)^{3+p}\,F\left(\frac k2-\rho_{\lfloor \sfrac{k}{2}\rfloor}
\right)\;<\;\infty\,.
\]
At each scale $k\in\N_0$ split $L=L_-+L_+$
with
$L_-:=(\mathrm{id}-\n E_{\Lambda_{\lfloor k/2\rfloor}(\gamma_0)})(L)$ and $L_+:=\n E_{\Lambda_{
\lfloor k/2\rfloor}(\gamma_0)}(L)$. Let us  estimate the two contributions to $(\mathrm{id}-\n E_{\Lambda_k})
\alpha_{\bz_c}(L_\pm)$ separately.
By $\norm{\mathrm{id}-\n E_{\Lambda_k(\gamma_0)}}\leqslant2$, the isometry of $\alpha_{\bz_c}$,
the definition of the $p$--norm, and $1+\lfloor\sfrac k2\rfloor\geqslant\tfrac12(1+k)$, one gets
\begin{equation}\label{eq:termI}
\begin{aligned}
\bigl\|(\mathrm{id}-\n E_{\Lambda_k})\big(\alpha_{\bz_c}(L_-)\big)\bigr\|\;&\leqslant\;2\,\bigl\|(\mathrm{id}-\n E_{\Lambda_{\lfloor k/2\rfloor}(\gamma_0
)})(L)\bigr\|\\&\leqslant\;2\,(1+\lfloor\tfrac k2\rfloor)^{-p}\norm{L}_{p,\gamma_0}\\
&\leqslant\;2^{\,p+1}(
1+k)^{-p}\norm{L}_{p,\gamma_0}\,.
\end{aligned}
\end{equation}
The conditional expectation $\n E_{\Lambda_{\lfloor k/2\rfloor}(\gamma_0)}$ traces
out the modes of $L$ lying outside $\Lambda_{\lfloor k/2\rfloor}(\gamma_0)$. Those factors contribute a
scalar $c$ with $|c|\leqslant1$, and $L_+=c\,L_{\mathrm{in}}$
where $L_{\mathrm{in}}$ is the monomial on the modes of $L$ inside $\Lambda_{\lfloor k/2\rfloor}(\gamma_0
)$. Thus $L_+$ is a scalar multiple  (of modulus at most one)
of a monomial with pairwise distinct sites supported in $\Lambda_{\lfloor k/2\rfloor}(\gamma_0)$, and \eqref{eq:tele-mono} applies
verbatim, at scale $k$ with $j$ replaced by $\lfloor k/2\rfloor$, giving directly the bound entering the definition of $B_p$:
\begin{equation}\label{eq:termII}
\begin{aligned}
\bigl\|(\mathrm{id}-\n E_{\Lambda_k})\alpha_{\bz_c}\bigl(L_+
\bigr)\bigr\|\;&\leqslant\;\bigl\|(\mathrm{id}-\n E_{\Lambda_k})\alpha_{\bz_c}\bigl(L_{\mathrm{in}}
\bigr)\bigr\|\\
&\leqslant\;2\kappa_B\,(1+k)^3\,F\!\left(\frac k2-\rho_{\lfloor k/2\rfloor}\right)\norm{L}\\
&\leqslant\;B_p\,(1+k)^{-p}\,\norm{L}_{p,\gamma_0}\,,
\end{aligned}
\end{equation}
the last step by the very definition of $B_p$ as the supremum over $k$ of $(1+k)^{3+p}$ times the
same quantity, and $\norm L\leqslant\norm L_{p,\gamma_0}$.
Adding \eqref{eq:termI} and \eqref{eq:termII}, multiplying by $(1+k)^{p}$, and taking the supremum over $k$,
\[
\sup_{k\in\N_0}(1+k)^{p}\bigl\|(\mathrm{id}-\n E_{\Lambda_k})\alpha_{\bz_c}(L)\bigr\|\;\leqslant\;\bigl(
2^{\,p+1}+B_p\bigr)\norm{L}_{p,\gamma_0}\,.
\]
Finally, adding $\norm{\alpha_{\bz_c}(L)}=\norm{L}\leqslant\norm{L}_{p,\gamma_0}$,
\[\norm{\alpha_{\bz_c}(L)}_{p,\gamma_0}\;\leqslant\;C_p\,\norm{L}_{p,\gamma_0}\,,\qquad C_p:=1+2^{\,p+1}
+B_p\,,
\]
with a constant level--independent, depending on $p$ but neither on the degree of $L$, nor on its
support, nor on $\bz_c$ — subject to the standing hypothesis that $L$ is a monomial with pairwise
distinct sites. \hfill$\blacktriangleleft$
\end{remark}

\smallskip

\begin{proof}[Proof of Lemma \ref{cor:mono-improve-full}]
Write $A=\sum_{\mu\in\s I_{\Lambda}}\alpha_\mu M_\mu$ in the basis introduced in the proof of Proposition
\ref{lem:almost-comm-gr-II}. Recall that 
$\sum_{\mu\in\s I_{\Lambda}}|\alpha_\mu|\leqslant5^{|\Lambda|}\norm A$ and
this
bound does not make use of the homogeneity of $A$.
Since $\Lambda\subseteq\Lambda_k(\gamma_0)$ for every $k\geqslant k_0':=\max_{\gamma\in\Lambda}
d(\gamma,\gamma_0)$, one has $\n{E}_{\Lambda_k(\gamma_0)}(X)=X$ for such $k$, hence
$X-\n{E}_{\Lambda_k(\gamma_0)}(X)=0$. For the finitely many $k<k_0'$ the crude bound
$\norm{X-\n{E}_{\Lambda_k(\gamma_0)}(X)}\leqslant2\norm{X}$ holds, since each $\n{E}_{\Lambda_k(\gamma_0)}$
is a norm--one projection. Weighting by $(1+k)^p\leqslant(1+k_0')^p$ and taking the supremum over $k$,
then adding $\norm{X}\leqslant\norm{X}_{p,\gamma_0}$, gives
\[
\norm{X}_{p,\gamma_0}\;\leqslant\;\bigl(1+2(1+k_0')^{p}\bigr)\,\norm{X}\;,\qquad
k_0':=\max_{\gamma\in\Lambda}d(\gamma,\gamma_0)\;,
\]
for every local element $X\in\rr A_\Lambda$ and every $p\in\N_0$. This is the same argument as in the
proof of Theorem~\ref{lem:loc-in-Ainfty}, run with $\Lambda_k(\gamma_0)$ in place of $\Lambda_k(0)$ and
$k_0'$ in place of $k_0$. In fact neither step uses the origin specifically, only that
$\Lambda_k(\gamma_0)$ is a metric ball centered at $\gamma_0$.
 Applying this to  
 to $X=M_\mu$, with  $\norm{M_\mu}\leqslant1$ (product of at most $2|\Lambda|$ factors each of
norm $1$), one gets $\norm{M_\mu}_{p,\gamma_0}\leqslant 1+2(1+k_0')^p:=K_p$, uniformly in $\mu$.
Let us observe that the
 conclusion of Remark~\ref{rk:mono-improve} holds verbatim for a monomial
$L:=a^{\#}_{\gamma_1}\cdots a^{\#}_{\gamma_n}$ supported in $\Lambda\in\s P_f(\Gamma)$ in which
each site of $\Lambda$ carries \emph{at most two} raw factors of $L$ (as in the case of  each basis
element $M_\mu$). The only place the hypothesis on $L$ enters the proof of Remark~\ref{rk:mono-improve} is the bound
$n\leqslant|\Lambda_j(\gamma_0)|$ used in \eqref{eq:tele-mono}, valid when the sites $\gamma_1,\dots,
\gamma_n$ are pairwise distinct. 
With at most two raw factors per site, the right bound is $n\leqslant2|\Lambda_j(\gamma_0)|$,
and \eqref{eq:tele-mono} is otherwise unchanged (it is a triangle-inequality bound over the $n$
telescoping terms and does not use distinctness of sites elsewhere). Every other step of the proof is
unaffected.
Consequently the result holds with $B_p$ replaced by $2B_p$ and $C_p$ by $C_p':=1+2^{p+1}+2B_p$. In conclusion one has that 
 $\norm{\alpha_{\bz}(M_\mu)}_{p,\gamma_0}\leqslant C_p'\,
\norm{M_\mu}_{p,\gamma_0}\leqslant C_p'K_p$ for every $\bz$ with $|\bz|<\ell_0$, with $C_p'K_p$
independent of $\mu$ and $\bz$. By linearity of $\alpha_{\bz}$ and of $\n E_k$, and the triangle
inequality applied inside $\norm{\,\cdot\,}_{p,\gamma_0}$,
\[
\norm{\alpha_{\bz_c}(A)}_{p,\gamma_0}\;\leqslant\;\sum_{\mu\in\s I_{\Lambda}}|\alpha_\mu|\,\norm{\alpha_{\bz}(M_\mu)}_{p,
\gamma_0}\;\leqslant\;5^{|\Lambda|}C_p'K_p\,\norm A\;.
\]
Setting $C_p(\Lambda,\gamma_0):=5^{|\Lambda|}C_p'K_p$ gives the claim.
\end{proof}

\smallskip

\begin{proof}[Proof of proposition \ref{prop:cont-asympt-ab-local}]
Let $A\in\rr A_{\Lambda_1}^{\pm}$ and $B\in\rr A_{\Lambda_2}^{\pm}$ be homogeneous elements of parities $d_A,d_B$, respectively, and with
$\Lambda_1,\Lambda_2\in\s P_f(\Gamma)$. Fix $\gamma_0\in\Lambda_1$ and $p\in\N_0$
Write $\bz=\bz_L+\bz_c$, and in turn $\alpha_\bz=\vartheta_{\theta}\circ\alpha_{\bz_L}\circ\alpha_{\bz_c}$ as in \eqref{eq:dec_trasl},
 with the short notation
$\theta\equiv\theta(\bz):=-\sigma_B(\bz_L,\bz_c)$. Set $A':=\alpha_{\bz_c}(A)$. By
Lemma~\ref{lemma_trasl_loc}, $A'\in\rr A_\infty$ and by Remark~\ref{rk:B-A-parity}, $A'$ is homogeneous of
parity $d_A$ (Bogoliubov automorphisms preserve the parity).
 Since $\vartheta_\theta$ is a $\ast$--automorphism
preserving parity, for homogeneous $X,Y$ one has
$\vartheta_\theta([X,Y]_{\rm gr})=[\vartheta_\theta(X),\vartheta_\theta(Y)]_{\rm gr}$
By isometry of
$\vartheta_\theta$,
\begin{equation}\label{eq:z->z_l}
\begin{aligned}
\bigl\|[\alpha_\bz(A),B]_{\rm gr}\bigr\|\;=\;\bigl\|\vartheta_\theta\bigl([\alpha_{\bz_L}(A'),
\vartheta_{-\theta}(B)]_{\rm gr}\bigr)\bigr\|\;=\;\bigl\|[\alpha_{\bz_L}(A'),B_\theta]_{\rm gr}\bigr\|\;,
\end{aligned}
\end{equation}
with $B_\theta:=\vartheta_{-\theta}(B)$.
Using the argument in the proof of Proposition~\ref{prop:core-general-0} one has that  $\vartheta_{-\theta}(\rr A_{\Lambda_2})=\rr
A_{\Lambda_2}$ for every $\theta\in\R$. Therefore, $B_\theta\in\rr A_{\Lambda_2}$ is homogeneous of parity $d_B$
with $\norm{B_\theta}=\norm B$, {independently of $\theta$} (hence of $\bz$).
By
Corollary~\ref{cor:mono-improve-full}, $\norm{A'}_{p,\gamma_0}\leqslant C_p(\Lambda_1,\gamma_0)\norm A$
for every $\bz_c$ with $|\bz_c|<\ell_0$.
For $k\in\N_0$ (the value of $k$ will be fixed later) set $A'_k:=\n E_{\Lambda_k(\gamma_0)}(A')\in\rr
A_{\Lambda_k(\gamma_0)}$. By the argument proving item (iii) in Proposition~\ref{prop:cond-exp}
(commutation of $\n
E_\Lambda$ with the gauge group) on infers that 
$A'_k$ is homogeneous of parity
$d_A$. Moreover
\begin{equation}\label{eq:wewejj}
\norm{A'-A'_k}\;\leqslant\;(1+k)^{-p}\,\norm{A'}_{p,\gamma_0}\;\leqslant\;(1+k)^{-p}\,C_p(\Lambda_1,
\gamma_0)\,\norm A\;,
\end{equation}
where the first inequality induced by \eqref{eq:norm_p} is the definition of the $p$-norm.  
This bound is
uniform for $|\bz_c|<\ell_0$.
By bilinearity of the graded commutator 
\begin{equation}\label{eq:wewe}
\begin{aligned}
\bigl\|[\alpha_{\bz_L}(A'),B_\theta]_{\rm gr}\bigr\|\;&\leqslant\;\bigl\|[\alpha_{\bz_L}(A'_k),B_\theta]_{
\rm gr}\bigr\|\;+\;2\,\norm{\alpha_{\bz_L}(A'-A'_k)}\,\norm{B_\theta}\\
&=\;\bigl\|[\alpha_{\bz_L}(A'_k),
B_\theta]_{\rm gr}\bigr\|\;+\;2\,\norm{A'-A'_k}\,\norm{B}\;,
\end{aligned}
\end{equation}
where in the first inequality we used that both $A'-A'_k$ and $A'_k$ have parity $d_A$, so no sign ambiguity arises, and in the last equality
 that $\alpha_{\bz_L}$ is isometric.
Since $A'_k\in\rr
A_{\Lambda_k(\gamma_0)}$ and $B_\theta\in\rr A_{\Lambda_2}$ are homogeneous and local, and
$\alpha_{\bz_L}(\rr A_{\Lambda_k(\gamma_0)})=\rr A_{\Lambda_k(\gamma_0)+\gamma_{\bz_L}}$ with
$|\Lambda_k(\gamma_0)+\gamma_{\bz_L}|=|\Lambda_k(\gamma_0)|$ as in \eqref{eq:lattice-cov-main}, Proposition
\ref{lem:almost-comm-gr-II} applies exactly as in the proof of Lemma~\ref{cor:asympt-ab-lattice} giving
\begin{equation}\label{eq:asi-ab-ines2}
\bigl\|[\alpha_{\bz_L}(A'_k),B_\theta]_{\rm gr}\bigr\|\;\leqslant\;C_\lambda\bigl(\Lambda_k(\gamma_0),
\Lambda_2\bigr)\,\norm{A'_k}\,\norm{B_\theta}\,\expo{-\lambda\,d(\Lambda_k(\gamma_0)+\gamma_{\bz_L},
\Lambda_2)}\;,
\end{equation}
for all $\lambda>0$.
Crucially, the constant $C_\lambda(\Lambda_k(\gamma_0),\Lambda_2)$ and the distance
$d(\Lambda_k(\gamma_0)+\gamma_{\bz_L},\Lambda_2)$ depend only on $\Lambda_k(\gamma_0)$, $\Lambda_2$ and
$\bz_L$, {never on $\theta$}. Moreover $\norm{B_\theta}=\norm B$. Hence the right-hand side of \eqref{eq:asi-ab-ines2}
is manifestly independent of $\theta(\bz)$. Since $d(\Lambda_k(\gamma_0)+\gamma_{\bz_L},\Lambda_2)
\to\infty$ as $|\bz_L|\to\infty$ (same estimate as in the proof of Lemma~\ref{cor:asympt-ab-lattice}), one obtains
\begin{equation}\label{eq:asi-ab-ines3}
\bigl\|[\alpha_{\bz_L}(A'_k),B_\theta]_{\rm gr}\bigr\|\;\xrightarrow[\ |\bz_L|\to\infty\ ]{}\;0\;,
\end{equation}
uniformly over $\theta\in\R$, and in turn over $\bz_c$
Let $\varepsilon>0$. Choose $p\in\N$ with $p\geqslant1$
and then $k\in\N_0$ such that $2(1+k)^{-p}C_p(\Lambda_1,\gamma_0)\norm A\norm B<\sfrac{\varepsilon}{2}$. This
choice of $k$ fixes $\Lambda_k(\gamma_0)$ once and for all. 
By \eqref{eq:asi-ab-ines3}, there is $R>0$ such that
$\bigl\|[\alpha_{\bz_L}(A'_k),B_\theta]_{\rm gr}\bigr\|<\sfrac{\varepsilon}{2}$ for every $\bz_L\in\s L$ with
$|\bz_L|>R$, uniformly in $\theta$. 
Combining with \eqref{eq:z->z_l}, \eqref{eq:wewejj} and \eqref{eq:wewe} 
\[
\bigl\|[\alpha_\bz(A),B]_{\rm gr}\bigr\|\;\leqslant\;\bigl\|[\alpha_{\bz_L}(A'_k),
B_\theta]_{\rm gr}\bigr\|\;+\;2\,(1+k)^{-p}\,C_p(\Lambda_1,
\gamma_0)\,\norm A\,\norm{B}\;<\;
\varepsilon\,
\]
for every $\bz=\bz_L+\bz_c$ with
$|\bz_L|>R$, or also for every 
  $|\bz| >R+\ell_0>R$.
As $\varepsilon>0$ was arbitrary, $\lim_{|\bz|\to\infty}\|[\alpha_\bz(A),B]_{\rm gr}\|=0$ for $A,B$ local
homogeneous.
\end{proof}

\smallskip

\begin{proof}[Proof of Theorem \ref{thm:asympt-ab}]
The proof of Proposition~\ref{cor:asympt-ab-full} carries over verbatim, with the single
replacement of the citation of Lemma~\ref{cor:asympt-ab-lattice} (decay along $\bl\in\s L$) by
Proposition~\ref{prop:cont-asympt-ab-local} (decay along $\bz\in\R^2$). 
\end{proof}

\renewcommand{\refname}{References}


\begin{thebibliography}{99}

\bibitem{LiebRobinson1972}
Lieb, E.H., Robinson, D.W.:
{\sl The finite group velocity of quantum spin systems}.
Commun. Math. Phys. {\bf 28}, 251--257 (1972)

\bibitem{Robinson1976}
Robinson, D.W.:
{\sl Properties of propagation of quantum spin systems}.
J. Austr. Math. Soc. B {\bf 19}, 387--399 (1976)

\bibitem{NachtergaeleOgataSims2006}
Nachtergaele, B., Ogata, Y., Sims, R.:
{\sl Propagation of correlations in quantum lattice systems}.
J. Stat. Phys. {\bf 124}, 1--13 (2006)

\bibitem{Lundberg1976}
Lundberg, L.-E.:
{\sl Quasi-free ``second quantization''}.
Commun. Math. Phys. {\bf 50}, 103--112 (1976)

\bibitem{Araki1971}
Araki, H.:
{\sl On quasifree states of CAR and Bogoliubov automorphisms}.
Publ. RIMS Kyoto Univ. {\bf 6}, 385--442 (1971)

\bibitem{Streater1968}
Streater, R.F.:
{\sl On certain non-relativistic quantized fields}.
Commun. Math. Phys. {\bf 7}, 93--98 (1968)

\bibitem{StreaterWilde1970}
Streater, R.F., Wilde, I.F.:
{\sl The time evolution of quantized fields with bounded quasi-local interaction density}.
Commun. Math. Phys. {\bf 17}, 21--32 (1970)

\bibitem{GebertNachtergaeleReschkeSims2020}
Gebert, M., Nachtergaele, B., Reschke, J., Sims, R.:
{\sl Lieb--Robinson bounds and strongly continuous dynamics for a class of many-body fermion systems in $\mathbb{R}^d$}.
Ann. Henri Poincar\'e {\bf 21}, 3609--3637 (2020)

\bibitem{HinrichsLemmSiebert2023}
Hinrichs, B., Lemm, M., Siebert, O.:
{\sl On Lieb--Robinson bounds for a class of continuum fermions}.
Ann. Henri Poincar\'e {\bf 26} 41--80 (2025)

\bibitem{BachmannDeNittis2024}
Bachmann, S., De Nittis, G.:
{\sl Lieb--Robinson bounds in the continuum via localized frames}.
Ann. Henri Poincar\'e {\bf 26}, 1--40 (2025)



\bibitem{Bargmann1971}
Bargmann, V., Butera, P., Girardello, L., Klauder, J.R.:
{\sl On the completeness of the coherent states}.
Rep. Math. Phys. {\bf 2}, 221--228 (1971)

\bibitem{Perelomov1971}
Perelomov, A.M.:
{\sl On the completeness of a system of coherent states}.
Theor. Math. Phys. {\bf 6}, 156--164 (1971)

\bibitem{BoonZak1978}
Boon, M., Zak, J.:
{\sl Discrete coherent states on the von Neumann lattice}.
Phys. Rev. B {\bf 18}, 6744--6751 (1978)

\bibitem{Daubechies1988}
Daubechies, I., Grossmann, A.:
{\sl Frames in the Bargmann space of entire functions}.
Commun. Pure Appl. Math. {\bf 41}, 151--164 (1988)

\bibitem{Cornean2019}
Cornean, H.D., Garde, H., St\o ttrup, B., S\o rensen, K.S.:
{\sl Magnetic pseudodifferential operators represented as generalized Hofstadter-like matrices}.
J. Pseudo-Differ. Oper. Appl. {\bf 10}, 307--336 (2019)

\bibitem{Cornean2024}
Cornean, H.D., Helffer, B., Purice, R.:
{\sl Matrix representation of magnetic pseudodifferential operators via tight Gabor frames}.
J. Fourier Anal. Appl. {\bf 30}, Art. 21 (2024)

\bibitem{Bratteli-Robinson-2}
Bratteli, O., Robinson, D.W.:
{\em Operator Algebras and Quantum Statistical Mechanics 2: Equilibrium States. Models in Quantum Statistical Mechanics}.
Springer, 2nd edition, 1987

\bibitem{Bratteli-Robinson-1}
Bratteli, O., Robinson, D.W.:
{\em Operator Algebras and Quantum Statistical Mechanics 1: $C^*$- and $W^*$-Algebras, Symmetry Groups, Decomposition of States}.
Springer, 2nd edition, 1987

\bibitem{Christensen-frames}
Christensen, O.:
{\em An Introduction to Frames and Riesz Bases}.
Birkh\"{a}user, 2003

\bibitem{Jaffard}
Jaffard, S.:
{\sl Propri\'et\'es des matrices ``bien localis\'ees'' pr\`es de leur diagonale et quelques applications}.
Ann. Inst. Henri Poincar\'e {\bf 7}, 461--476 (1990)

\bibitem{Bacry}
Bacry, H., Grossmann, A., Zak, J.:
{\sl Proof of completeness of lattice states in the $kq$ representation}.
Phys. Rev. B {\bf 12}, 1118--1120 (1975)

\bibitem{Daubechies}
Daubechies, I.:
{\sl The wavelet transform, time--frequency localization and signal analysis}.
IEEE Trans. Inf. Theory {\bf 36}, 961--1005 (1990)

\bibitem{Janssen}
Janssen, A.J.E.M.:
{\sl Signal analytic proofs of two basic results on lattice expansions}.
Appl. Comput. Harmon. Anal. {\bf 1}, 350--354 (1994)

\bibitem{ArakiMoriya2003}
Araki, H., Moriya, H.:
{\sl Equilibrium statistical mechanics of fermion lattice systems}.
Rev. Math. Phys. {\bf 15}, 93--198 (2003)

\bibitem{DerezinskiGerard}
Derezi\'{n}ski, J., G\'{e}rard, C.:
{\em Mathematics of Quantization and Quantum Fields}.
Cambridge University Press, 2013

\bibitem{ArakiHaag1967}
Araki, H., Haag, R.:
{\sl Collision cross sections in terms of local observables}.
Commun. Math. Phys. {\bf 4}(2), 77--91 (1967)

\bibitem{BachmannDybalskiNaaijkens2016}
Bachmann, S., Dybalski, W., Naaijkens, P.:
{\sl Lieb--Robinson bounds, Arveson spectrum and Haag--Ruelle scattering theory for gapped quantum spin systems}.
Ann. Henri Poincar\'e {\bf 17}(7), 1737--1791 (2016)

\bibitem{Ogata2021}
Ogata, Y.:
{\sl An-valued index of symmetry-protected topological phases with on-site finite group symmetry for two-dimensional quantum spin systems}.
Forum Math. Pi {\bf 9}, e13 (2021)

\bibitem{KapustinSopenko2022}
Kapustin, A., Sopenko, N.:
{\sl Local Noether theorem for quantum lattice systems and topological invariants of gapped states}.
J. Math. Phys. {\bf 63}, 091903 (2022)




\end{thebibliography}
\end{document}